\documentclass[]{article}
% \documentclass[
% 	ruledheaders=chapter,
% 	class=report,
% 	thesis={type=drfinal,dr=rernat},
% 	accentcolor=1b,
% 	custommargins=true,
% 	marginpar=false,
% 	fontsize=11pt 
%     ]{tudapub}

\usepackage[english]{babel}

\usepackage[utf8]{inputenc} % allow utf-8 input
\usepackage[autostyle]{csquotes}% Anführungszeichen vereinfacht
\usepackage[dvipsnames]{xcolor}
\usepackage{booktabs}       % professional-quality tables

\usepackage{microtype}      % microtypography
\usepackage{lipsum}
\usepackage[colorinlistoftodos]{todonotes}

\usepackage{afterpage} % \afterpage{ (all this stays in one page) }
\usepackage{etex}% extended tex, killt den Fehler \newdim

\usepackage[stable]{footmisc} 
\usepackage[section]{placeins} %verbindet mit dem Befehl \section{ } immer auch ein \FloatBarrier

\usepackage{graphicx} %Grafikpaket laden
\usepackage{multirow} % for making multiple rows in tabulars
\usepackage{longtable}
\usepackage{xltabular} % Mehrseitige Tabellen mit anpassarer Breite

\usepackage{bigints}
\usepackage[left=25.4mm, right=25.4mm, top=25.4mm, bottom=25.4mm]{geometry} 

\usepackage{yfonts} % for Fraktur \textfrak{Text}
\usepackage{tikz}
\usepackage{tikz-cd}

\tikzset{State/.style={circle, draw, very thick, scale = 0.8}}
\tikzset{NodeRectangle/.style={rectangle, draw, very thick, scale = 0.8}}
\tikzset{Link/.style={-{Stealth[scale=1]},ultra thick}}
\tikzset{QuantumTrajektory/.style={Circle-Diamond[open], ultra thick,dashed}}
\usetikzlibrary{decorations.pathreplacing,calligraphy} 

% \usetikzlibrary{graphs,graphdrawing,arrows.meta}
\usetikzlibrary{arrows.meta}

\usetikzlibrary[arrows]

\makeatletter
\tikzset{MyLoopClockWise_2/.style =  {to path={
  \pgfextra{}
  [looseness=12,min distance=5mm,max distance=5mm,out=300,in=240]
  \tikz@to@curve@path},font=\sffamily\small
  }} 
\tikzset{MyLoopCounterClockWise_2/.style =  {to path={
  \pgfextra{}
  [looseness=12,min distance=5mm,max distance=5mm,out=240,in=300]
  \tikz@to@curve@path},font=\sffamily\small
  }}
 %%%%%%%%%%%%%%%%%%%%%%%%%%%%%%%%%%%%%
\tikzset{MyLoopClockWise_4/.style =  {to path={
  \pgfextra{}
  [looseness=12,min distance=3mm,max distance=3mm,out=300,in=240]
  \tikz@to@curve@path},font=\sffamily\small
  }} 
\tikzset{MyLoopCounterClockWise_4/.style =  {to path={
  \pgfextra{}
  [looseness=12,min distance=3mm,max distance=3mm,out=240,in=300]
  \tikz@to@curve@path},font=\sffamily\small
  }}
   %%%%%%%%%%%%%%%%%%%%%%%%%%%%%%%%%%%%%
\tikzset{MyLoopClockWise_3/.style =  {to path={
  \pgfextra{}
  [looseness=12,min distance=4mm,max distance=4mm,out=300,in=240]
  \tikz@to@curve@path},font=\sffamily\small
  }} 
\tikzset{MyLoopCounterClockWise_3/.style =  {to path={
  \pgfextra{}
  [looseness=12,min distance=4mm,max distance=4mm,out=240,in=300]
  \tikz@to@curve@path},font=\sffamily\small
  }}
\tikzset{MyLoop_Special_1/.style =  {-{Stealth[scale=1]},ultra thick, to path={
  [looseness=1, min distance=8mm,max distance=8mm,out=180,in=180]
  \tikz@to@curve@path},font=\sffamily\small
  }}
 \tikzset{MyLoop_Special_2/.style =  {-{Stealth[scale=1]},ultra thick, to path={
  [looseness=1, min distance=8mm,max distance=8mm,out=0,in=0]
  \tikz@to@curve@path},font=\sffamily\small
  }}
 \tikzset{MyLink_Special_3/.style =  {-{Stealth[scale=1]},ultra thick, to path={
  [looseness=0, min distance=2mm,max distance=2mm,out=45,in=135]
  \tikz@to@path},font=\sffamily\small
  }}
 %%%
 \tikzset{MyLoop_UpperLeft/.style =  {-{Stealth[scale=1]},ultra thick, to path={
  [looseness=1, min distance=4mm,max distance=4mm,out=105,in=165]
  \tikz@to@path},font=\sffamily\small
  }}
 \tikzset{MyLoop_LowerLeft/.style =  {-{Stealth[scale=1]},ultra thick, to path={
  [looseness=1, min distance=4mm,max distance=4mm,out=195,in=255]
  \tikz@to@path},font=\sffamily\small
  }}
 \tikzset{MyLoop_LowerRight/.style =  {-{Stealth[scale=1]},ultra thick, to path={
  [looseness=1, min distance=4mm,max distance=4mm,out=-15,in=-75]
  \tikz@to@path},font=\sffamily\small
  }}
 \tikzset{MyLoop_UpperRight/.style =  {-{Stealth[scale=1]},ultra thick, to path={
  [looseness=1, min distance=4mm,max distance=4mm,out=15,in=75]
  \tikz@to@path},font=\sffamily\small
  }}
 %%%
 \tikzset{MyLoop_UpperLeft_1/.style =  {-{Stealth[scale=1]},ultra thick, to path={
  [looseness=1, min distance=4mm,max distance=4mm,out=60,in=120]
  \tikz@to@path},font=\sffamily\small
  }}
\tikzset{MyLoop_UpperLeft_2/.style =  {-{Stealth[scale=1]},ultra thick, to path={
  [looseness=1, min distance=4mm,max distance=4mm,out=150,in=210]
  \tikz@to@path},font=\sffamily\small
  }}
 \tikzset{MyLoop_LowerRight_1/.style =  {-{Stealth[scale=1]},ultra thick, to path={
  [looseness=1, min distance=4mm,max distance=4mm,out=-60,in=-120]
  \tikz@to@path},font=\sffamily\small
  }}
 \tikzset{MyLoop_LowerRight_2/.style =  {-{Stealth[scale=1]},ultra thick, to path={
  [looseness=1, min distance=4mm,max distance=4mm,out=-30,in=30]
  \tikz@to@path},font=\sffamily\small
  }}
\makeatletter

\usepackage{cancel}

\usepackage{amsmath}
\usepackage{amssymb}
\usepackage{amstext}
\usepackage{amsthm}
\usepackage{amsfonts}       % blackboard math symbols
\usepackage{nicefrac}       % compact symbols for 1/2, etc.
\usepackage{enumerate}
\usepackage{bbold}

\usepackage{extarrows}
\usepackage{esvect}
\usepackage{MnSymbol}

\usepackage{relsize}
\usepackage{array}
\usepackage{dsfont}

\usepackage{changes}
\usepackage{caption}
\usepackage{subcaption}	
\usepackage{float}

% \usepackage{pifont}% Zapf-Dingbats Symbole
% \newcommand*{\FeatureTrue}{\ding{52}}
% \newcommand*{\FeatureFalse}{\ding{56}}

%%%%%%%%%%%%%%%%%%%
%Literaturverzeichnis
%%%%%%%%%%%%%%%%%%%

% \usepackage[style=alphabetic]{biblatex}   
\usepackage{pdfpages} 

\usepackage[style=alphabetic,backend=biber]{biblatex}
\addbibresource{literature.bib}

\usepackage{hyperref}       % hyperlinks

\hypersetup{colorlinks=true,linkcolor=blue,citecolor=blue, linkbordercolor=blue, citebordercolor=blue, linktoc=all}

\usepackage{nomencl}
\makenomenclature

\newtheorem{thm}{Theorem} 
\newtheorem{rem}[thm]{Remark} %[thm] sorgt für fortlaufende Nummerierung
\newtheorem*{rem*}{Remark}
\newtheorem{MyDef}[thm]{Definition} % [thm] sorgt für fortlaufende Nummerierung
  % [thm] sorgt für 
\newtheorem{lemma}  [thm]{Lemma}  % [thm] sorgt für fortlaufende Nummerierung

\newtheorem{claim}[thm]{Claim}

\definechangesauthor[name=Bernd, color=red]{Bernd}
\setlength{\parindent}{0pt}

% new command 

\newcommand{\G}{\Gamma} % 
\newcommand{\g}{\gamma} % 
\newcommand{\e}{\mathrm{e}} % 
 % 

% \mathbb 
\newcommand{\N}{\mathbb{N}} % natuerliche Zahlen
\newcommand{\Z}{\mathbb{Z}} % ganze Zahlen
 % rationale Zahlen
\newcommand{\R}{\mathbb{R}} % reelle Zahlen
\newcommand{\C}{\mathbb{C}} % komplexe
\newcommand{\Id}{\mathbb{1}} % Identity

%\mathcal
\newcommand{\BL}{\mathcal{BL}} % 
\newcommand{\System}{\mathcal{S}} % 
\newcommand{\CMean}{\mathcal{C}} % 
\newcommand{\Edge}{\mathcal{E}} % 
\newcommand{\LL}{\mathcal{L}} % 
\newcommand{\B}{\mathcal{B}} % 

\newcommand{\Prob}{\mathcal{P}}
\newcommand{\ProbStates}{\mathcal{PV}} 

\newcommand{\Zen}{\mathcal{Z}}

\newcommand{\MM}{\mathcal{M}}

\newcommand{\Null}{\mathcal{N}} % 
\newcommand{\NP}{\mathcal{N}_{\bP} } % 
\newcommand{\NPzero}{\mathcal{N}_{\bP_0} } % 

\newcommand{\Order}{\mathcal{O}}
\newcommand\smallO{
  \mathchoice
    {{\scriptstyle\mathcal{O}}}% \displaystyle
    {{\scriptstyle\mathcal{O}}}% \textstyle
    {{\scriptscriptstyle\mathcal{O}}}% \scriptstyle
    {\scalebox{.7}{$\scriptscriptstyle\mathcal{O}$}}%\scriptscriptstyle
  }

%\text
\renewcommand{\d}{\text{ d}} % 

\newcommand{\Span}{\text{span }}

\newcommand{\Kern}{\text{kern }}

\renewcommand{\Re}{\text{Re }}
\renewcommand{\Im}{\text{Im }}
\newcommand{\Image}{\text{image }}

\newcommand{\Fin}{\text{Fin}}

% \boldsymbol

\newcommand{\bc}{\boldsymbol{c}}
\newcommand{\bd}{\boldsymbol{d}}

\newcommand{\bw}{\boldsymbol{w}}

\newcommand{\bX}{\boldsymbol{X}}
\newcommand{\bY}{\boldsymbol{Y}}
\newcommand{\bZ}{\boldsymbol{Z}}

\newcommand{\bp}{\boldsymbol{p}}
\newcommand{\bP}{\boldsymbol{P}}

\newcommand{\be}{\boldsymbol{e}}
\newcommand{\bE}{\boldsymbol{E}}

\newcommand{\bN}{\boldsymbol{N}}
\newcommand{\bT}{\boldsymbol{T}}

\newcommand{\bzero}{\boldsymbol{0}}
\newcommand{\bone}{\boldsymbol{1}}

\newcommand{\bPF}{\bP^{[F]}}
\newcommand{\bXF}{\bX^{[F]}}
\newcommand{\bPM}{\bP^{[M]}}

\newcommand{\PF}{P^{[F]}}
\newcommand{\GF}{\G^{[F]}}

\newcommand{\Sgap}{S_g \left(\GF \right)}
\newcommand{\JGF}{J\left(\GF\right)}

\newcommand{\bPt}{\bP\left(t \,\bigl|\, \bP_0\right)}

\newcommand{\bPFt}{\bP^{F}\left(t \,\bigl|\, \bP_0^{[F]}\right)}

\newcommand{\bPInftyF}{\bP_\infty^{F}\left(\bP_0^{[F]} \right)}
\newcommand{\state}[1]{#1}

\newcommand{\lme}{l^1_\text{m.e.}(\Omega, Q)}
\newcommand{\ltb}{l^1_\text{l.t.b.}(\Omega, \G)}

\newcommand{\NN}[1]{N^{(\state{#1}) } }
\renewcommand{\t}[1]{T^{(\state{#1}) } }

% tikz
%%%%%%%%%%%%%%%%%%%%%%%%%%%%%%%%%%%%%%%%%%%%%%%%%%%%%%%%%%%%%%%%%%%%%%%%%%%%%%%%%%%%%%%%%%%%%%
\tikzset{Vertex/.style={circle, draw, very thick, scale = 0.7}}

\tikzset{State/.style={circle, draw, very thick, scale = 0.8}}
\tikzset{ProbVector/.style={draw, color=white}}
\tikzset{NodeRectangle/.style={rectangle, draw, very thick, scale = 0.8}}
\tikzset{Link/.style={-{Stealth[scale=1]},ultra thick}}

\tikzstyle{arrow} = [thick,->,>=stealth]
\tikzstyle{Edge} = [ultra thick,-]
%%%%%%%%%%%%%%%%%%%%%%%%%%%%%%%%%%%%%%%%%%%%%%%%%%%%%%%%%%%%%%%%%%%%%%%%%%%%%%%%%%%%%%%%%%%%%%

%%%%%%%%%%%%%%%%%%%%%%%%%%%%%%%%%%%%%%%%%%%%%%%%%%%%%%%%%%%%%%%%%%%%%%%%%%%%%%%%%%%%%%%%%%%%%%
   % Kommentare in lila von Wolfram
 % Kommentare in grün von Bernd

%%%%%%%%%%%%%%%%%%%%%%%%%%%%%%%%%%%%%%%%%%%%%%%%%%%%%%%%%%%%%%%%%%%%%%%%%%%%%%%%%%%%%%%%%%%%%%

%\newcommand{\cunderline}[2]{\textcolor{#1}\underline{{\textcolor{black}{#2}}}}

% create a column vector with command \colvec{5}{a}{b}{c}{d}{e}
%%%%%%%%%%%%%%%%%%%%%%%%%%%%%%%%%%%%%%%%%%%%%%%%%%%%%%%%%%%%%%%%%%%%%%%%%%%%%%%%%%%%%%%%%%%%%%
\newcount\colveccount
\newcommand*\colvec[1]{
        \global\colveccount#1
        \begin{pmatrix}
        \colvecnext
}
\def\colvecnext#1{
        #1
        \global\advance\colveccount-1
        \ifnum\colveccount>0
                \\
                \expandafter\colvecnext
        \else
                \end{pmatrix}
        \fi
}

\definecolor{MyYellow}{rgb}{1.0, 0.75, 0.0}

\definecolor{CrimsonGlory}{rgb}{0.75, 0.0, 0.2}
\definecolor{DarkRed}{rgb}{0.55, 0.0, 0.0}

\definecolor{MyBlue}{RGB}{0, 91, 180}   % blau b
\definecolor{MyRed}{RGB}{230, 0.0, 30}

\definecolor{ElectricPurple}{rgb}{0.75, 0.0, 1.0}
\definecolor{DarkGreen}{rgb}{0.0, 0.4, 0.0} %MyGreen
\definecolor{MyOrange}{RGB}{235, 100, 2}

\definecolor{MyGreen}{rgb}{0.0, 0.5, 0.0} %MyGreen
\definecolor{MyPurple}{RGB}{141, 0, 255} %MyPurple 
\definecolor{MyCyan}{rgb}{0, 0.67, 0.67} %MyCyan 

\definecolor{TUDa_0c}{RGB}{137, 137, 137}
\definecolor{TUDa_0d}{RGB}{83, 83, 83}

\definecolor{TUDa_1a}{RGB}{93, 133, 195} % blau a
\definecolor{TUDa_1b}{RGB}{0, 90, 169}   % blau b
\definecolor{TUDa_1c}{RGB}{0, 78, 138}   % blau c
\definecolor{TUDa_1d}{RGB}{36, 53, 114}  % blau d

\definecolor{TUDa_2a}{RGB}{0, 156, 218}
\definecolor{TUDa_2b}{RGB}{0, 131, 204}
\definecolor{TUDa_2c}{RGB}{0, 104, 157}
\definecolor{TUDa_2d}{RGB}{0, 78, 115}

\definecolor{TUDa_3a}{RGB}{80, 182, 140} 

\definecolor{TUDa_4a}{RGB}{175, 204, 80} % grün a
\definecolor{TUDa_4b}{RGB}{153, 192, 0} % grün b
\definecolor{TUDa_4c}{RGB}{127, 171, 22} % grün c
\definecolor{TUDa_4d}{RGB}{106, 139, 55} % grün d

\definecolor{TUDa_8a}{RGB}{238, 122, 52} % orange a 
\definecolor{TUDa_8b}{RGB}{236, 101, 0}  % orange b 
\definecolor{TUDa_8c}{RGB}{204, 76, 3}   % orange c 
\definecolor{TUDa_8d}{RGB}{169, 73, 19}  % orange d 

\definecolor{TUDa_9a}{RGB}{233, 80, 62} % rot a 
\definecolor{TUDa_9b}{RGB}{230, 0, 26} % rot b
\definecolor{TUDa_9c}{RGB}{185, 15, 34} % rot c

\definecolor{TUDa_10a}{RGB}{201, 48, 142} %pink a 
\definecolor{TUDa_10b}{RGB}{166, 0, 132} %pink b
\definecolor{TUDa_10c}{RGB}{149, 17, 105}  %pink c
\definecolor{TUDa_10d}{RGB}{115, 32, 84}  %pink d

\definecolor{TUDa_11a}{RGB}{128, 69, 151} %lila a 
\definecolor{TUDa_11b}{RGB}{114, 16, 133} %lila b
\definecolor{TUDa_11c}{RGB}{97, 28, 115}  %lila c
\definecolor{TUDa_11d}{RGB}{76, 34, 106}  %lila d

\title{Long-term behavior of the master equation on a countable network and approximation methods of the (stationary) solutions via finite subsystems in the thermodynamic limit}
% \author{Bernd Michael Fernengel, Wolfram Just, Thilo Gross }
\author{$\text{Bernd Michael Fernengel}^{(1)}, \text{Thilo Gross}^{(1)},  \text{Wolfram Just}^{(2)} $  \\ 
(1) HIFMB Oldenburg,  \\ 
(2) Institute of Mathematics University of Rostock
}
% \institute{(1) HIFMB Oldenburg,  \\ 
% (2) Institute of Mathematics 
% University of Rostock}
\date{\today}

\begin{document}

\maketitle
\tableofcontents
%#################################################################################################################################################################################
%#################################################################################################################################################################################
%#################################################################################################################################################################################
\section{Abstract} \label{Chapter_Abstract}

The Master equation on directed networks - also called the differential Chapman-Kolmogorov equation - is a linear differential equation, which describes the probability evolution in a discrete system. While this is well understood, if the underlying  graph is finite, the mathematics required for the treatment of  a network with countable many nodes is way more complicated and advanced. 
In this paper we provide criteria for the rates of the system, which makes it possible to \emph{approximate} the solution by \emph{finite subsystems} in the \emph{thermodynamic limit}. 
By writing the phase space as a direct sum of \emph{stationary} states and states which vanish in the time limit, we give a new proof of when the time limit for an countable, infinite dimensional system exists and when it can be interchanged with the limit of large systems. We give sufficient criteria, when these two limits commute and demonstrate on various examples, what happens when these criteria are violated and only one of these limits exists.

\section{ Introduction} \label{Chapter_SettingUpTheSystemAndTheNeedOfFiniteTruncation}

Master equations describe the time evolution of probabilities in a system. For a discrete state space, this dynamic comes from the transition between different states, which means that the system can be modeled as a directed, weighted graph, where the directed links model the transitions and the non-negative weights the transition rate. 

The master equation can be viewed a a differential version of the Chapman-Kolmogorov equation and is usually formulated as the following initial value problem

%-----------------------------------------------------------------------
\begin{equation}  \label{Equation_MasterEquation_components} \tag{MEq1}
\begin{aligned}
\frac{d}{d t} P^{(i)}(t) 
&=
\mathlarger{\sum}\limits_{{j\in \Omega}\atop {j\neq i}} \; \big(\,  P^{(j)}(t) \, \gamma_{j\to i} - P^{(i)}(t) \, \gamma_{i \to j} \, \big) \\
P^{(i)}(t=0) &= P^{(i)}_0, 
\end{aligned}
\end{equation}
%-----------------------------------------------------------------------

or in matrix-vector notation 

%-----------------------------------------------------------------------
\begin{equation} \label{MasterEquation} \tag{MEq2}
\begin{aligned}
\dot{\bP}(t) &= \,\G \, \bP(t)  \\
\bP(t=0) &= \bP_0 
\end{aligned}
\end{equation}
%-----------------------------------------------------------------------

with  the so-called \emph{generator matrix}

%-----------------------------------------------------------------------
\begin{equation} \label{MasterEquation_Generator} \tag{Gen}
\begin{aligned}
 \G^{(i, j)}&= \, \begin{cases}
 \g_{j\to i}  &\text{  , for } i \neq j \\
  -\mathlarger{\sum}\limits_{k \in \Omega}^{} \g_{j\to k} &\text{  , for } i = j\, .
\end{cases}
\end{aligned}
\end{equation}
%-----------------------------------------------------------------------

From a mathematical point of view, natural questions to ask are whether unique solutions of the master equations exist, if the solutions possesses the desired properties of a probability distribution (being non-negative and normalized) and under which condition the long-term behavior $t \to \infty $ exists.

For a finite system, these questions can be answered: Not only does the solution exist for all times and satisfies all properties of a probability distribution, but it is also possible to compute not only the transition matrix $\e^{t \, \G} $ via its power series, but also the stationary solution, for which an explicit expression in terms of the transition rates exists \cite{mirzaev2013laplacian, haag2017modelling, fernengel2022obtaining}. 

Whereas for finite dimensional systems one can rely on any standard textbook of ordinary differential equation (see for example \cite{konigsberger2006analysis, fischer2004mathematik, van1992stochastic, honerkamp2012statistical}), the literature is considerably more sparse in the countable, infinite dimensional case, where results in the physical literature are often merely states without a formal proof (see \cite{grimmett2020probability,norris1998markov}). This is due to the fact that one has to deal with operators on the non-reflexive Banach space $ l^1(\N) $.

Since equation \eqref{Equation_MasterEquation_components} contains a sum over the state space $ \Omega $ as an index set, it is crucial to require $ \Omega $ to be \emph{countable} infinite, which is implicitly assumed throughout this paper.

This paper is meant to close this gap, or at least part of it. In our research we tried to combine aspects of different theories, such as Markov chains \cite{bremaud2013markov, douc2018markov, privault2013understanding}, operator theory \cite{fischer2004mathematik, werner2011funktionalanalysis}, C0-semigroups \cite{engel2000one, batkai2017positive}, and infinite graphs \cite{diestel2024graph}. 

In addition to the questions about master equations listed above, we are interested in how the countable, infinite dimensional case is linked to the (well understood) finite dimensional one, that is, whether there is a \emph{finite subsystem} capturing the essential dynamics of the infinite system. This is not only handy when transferring the formulas for the stationary solution from the \emph{finite} dimensional -  to the \emph{infinite} dimensional case, but also when it comes to numerical approximations. This means that we have two different limits, the \emph{time limit} and the limit of the \emph{system size}, which we call \emph{thermodynamic} limit. Or more more formally: Given a countable, infinite system $ \Omega $ and a finite subsystem $F \subseteq \Omega$, $ |F| < \infty $, there are four possible limits one can consider

\begin{itemize}
\item[-] The time limit $ \lim\limits_{t \to \infty } \bP^{F}\left(t \right)  $
\item[-] The limit of the system size $ \lim\limits_{|F| \to \infty } \bP^{F}\left(t \right)  $
\item[-] The time limit, followed by the limit of the system size:  $ \lim\limits_{|F| \to \infty }\lim\limits_{t \to \infty }  \bP^{F}\left(t \right)  $
\item[-] The limit of the system size, followed by the time limit:  $ \lim\limits_{t \to \infty } \lim\limits_{|F| \to \infty }   \bP^{F} \left(t \right)  $ 
\end{itemize}

The following figure illustrates serves as an illustration:

 % General Map
%----------------------------------------------
\begin{figure}[H]
\begin{center}
    %----------------------------------------------
    \scalebox{0.6}{
    \begin{tikzpicture} [scale = 1]
    \node (tLessThanInfty)      at (0,1.5)  { \Large $ t< \infty $} ;
    \node (OmegaLessThanInfty)  at (-3.1,0) { \huge $    F \subseteq \Omega  \atop |F| < \infty  $} ;
    \node (OmegaEqualsInfty)    at (-3.1,-3.3)   { \Large $ |\Omega|= \infty $} ;
    \node (tEqualsInfty)        at (4.5,1.5)   {\Large $ ``t= \infty" $} ;
    %%%
    \draw[->, ultra thick]  (-2.0 , 1.8) -- (-2.0,-4.3) ; % down 
    \draw[->, ultra thick]  (-4.7 , 1.0) -- ( 6.0, 1.0);  % right
    \draw( 7.0 , 1.0) node[font=\large] {time limit} ;
    \draw(-2.1 ,-4.6) node[font=\large] {thermodynamic } ;
    \draw(-2.1 ,-5.0) node[font=\large] {limit} ;
    %
    %%%
    \node (bPFt)    at (0,0)  {$\Large \bP^{F}(t) $} ;
    \node(bPFinfty) at (4.7,0)  {$ \Large \bP^{F}_\infty $};
    \node(bPt)      at (0,-3.4) {$ \Large\bP(t)  $};
    \node(bPinfty)  at (4.3,-3.4) {$ \Large  \bP_\infty  $};
    \node(bP)       at (4.7,-2.7) {$ \Large \bP_*   $};
    %%%
    \draw[very thick] (3.0, -2.6) -- (5.9, -3.25) ; 
    %%%
    \path[Link, DarkGreen] (bPFt) edge node[above]  {$ t \to \infty $} (bPFinfty);
    \path[Link, MyPurple] (bPt) edge node[above]  {$ t \to \infty $} (bPinfty);
    \path[Link, MyBlue] (bPFt) edge node[left]  {$ |F| \to \infty $} (bPt);
    \path[Link, MyOrange] (bPFinfty) edge node[right]  {$ |F| \to \infty $} (bP);
    \end{tikzpicture}
    }
\caption{Illustrating the time limit and the thermodynamic limit.  }
\label{GeneralRoadMapToPaper} 
\end{center}
\end{figure}
%----------------------------------------------

Note that it is apriori not clear if all of the limits are well defined and if this diagram commutes. 

Before defining in detail, what we mean by the notation $ \bP^{F} $ or the thermodynamic limit $|F| \to \infty $ we want to illustrate on two simple examples, that the diagram depicted in figure \ref{GeneralRoadMapToPaper} is not necessarily commutative. 

We start by looking at figure \ref{Fig_Network_FlowingInACircle}, where the probability flows from left to right, but is `reshuffled' . It can be shown - and we hope the readers believe us at this point, details are in chapter \ref{Chapter_ExampleOfANetworkWithoutDetailedBalance} - that for suitable rates the time limit for the countable, infinite dimensional system exists, and can be approximated by the sequence of increasing subsets $F_n = \{-n, \dots, n\} $. However, when a different sequence of increasing subsets in chosen, e.g. $\widetilde{F}_n = \{-n, \dots n+1\} $ is chosen, then this limit does not exist, since $\lim\limits_{n \to \infty} \lim\limits_{t \to \infty} \bP^{F}(t) = \lim\limits_{n \to \infty} \bE_{n+1} = \lightning $ .

\begin{figure}[H]

\begin{center}
    %------------------------------------------------------------------------
    \scalebox{0.7}{
        \begin{tikzpicture}[scale=2]
        \centering
        \node[State](-4)  at (-4,0) [circle,draw] {$\dots$} ;
        \node[State](-3)  at (-3,0) [circle,draw] {$-3$} ;
        \node[State](-2)  at (-2,0) [circle,draw] {$-2$} ;
        \node[State](-1)  at (-1,0) [circle,draw] {$-1$} ;
        \node[State](0)   at (0,0) [circle,draw] {$0$} ;
        \node[State](1)   at (1,0) [circle,draw] {$1$} ;
        \node[State](2)   at (2,0) [circle,draw] {$2$} ;
        \node[State](3)   at (3,0) [circle,draw] {$3$} ;
        \node[State](4)   at (4,0) [circle,draw] {$\dots$} ;
        \path[Link] (-4) edge node[above] {} (-3);
        \path[Link] (-3) edge node[above] {} (-2);
        \path[Link] (-2) edge node[above] {} (-1);
        \path[Link] (-1) edge node[above] {} (0);
        \path[Link] (0) edge node[above] {} (1);
        \path[Link] (1) edge node[above] {} (2);
        \path[Link] (2) edge node[above] {} (3);
        \path[Link] (3) edge node[above] {} (4);
        \path[Link, bend left] (1) edge node[above] {} (-1);
        \path[Link, bend left] (2) edge node[above] {} (-2);
        \path[Link, bend left] (3) edge node[above] {} (-3);
        \path[Link, bend left] (4) edge node[above] {} (-4);
        \end{tikzpicture}
    }
    %------------------------------------------------------------------------
    \caption*{full network } 
\end{center}

\begin{minipage}{0.49\textwidth}
\begin{center}
    %------------------------------------------------------------------------
    \scalebox{0.6}{
    \begin{tikzpicture}[scale=2]
    % \centering
    % \node[State](-3)   at (-3,0) [circle,draw] {$-3$} ;
    % \node[State](-2)   at (-2,0) [circle,draw] {$-2$} ;
    \node[State](-1)   at (-1,0) [circle,draw] {$-1$} ;
    \node[State](0)   at (0,0) [circle,draw] {$0$} ;
    \node[State](1)   at (1,0) [circle,draw] {$1$} ;
    % \node[State](2)   at (2,0) [circle,draw] {$2$} ;
    % \node[State](3)   at (3,0) [circle,draw] {$3$} ;
    % \node[State](4)   at (4,0) [circle,draw] {4} ;
    %
    % \path[Link] (-3) edge node[above] {} (-2);
    % \path[Link] (-2) edge node[above] {} (-1);
    \path[Link] (-1) edge node[above] {} (0);
    \path[Link] (0) edge node[above] {} (1);
    % \path[Link] (1) edge node[above] {} (2);
    % \path[Link] (2) edge node[above] {} (3);
    %
    \path[Link, bend left] (1) edge node[above] {} (-1);
    % \path[Link, bend left] (2) edge node[above] {} (-2);
    % \path[Link, bend left] (3) edge node[above] {} (-3);
    \end{tikzpicture}
    }
    \caption*{subnetwork  $F_1$} 
    %------------------------------------------------------------------------
\end{center}
\end{minipage}\begin{minipage}{0.49\textwidth}
\begin{center}
    %------------------------------------------------------------------------
    \scalebox{0.6}{
    \begin{tikzpicture}[scale=2]
    % \centering
    % \node[State](-3)   at (-3,0) [circle,draw] {$-3$} ;
    % \node[State](-2)   at (-2,0) [circle,draw] {$-2$} ;
    \node[State](-1)   at (-1,0) [circle,draw] {$-1$} ;
    \node[State](0)   at (0,0) [circle,draw] {$0$} ;
    \node[State](1)   at (1,0) [circle,draw] {$1$} ;
    \node[State](2)   at (2,0) [circle,draw] {$2$} ;
    % \node[State](3)   at (3,0) [circle,draw] {$3$} ;
    % \node[State](4)   at (4,0) [circle,draw] {4} ;
    %
    % \path[Link] (-3) edge node[above] {} (-2);
    % \path[Link] (-2) edge node[above] {} (-1);
    \path[Link] (-1) edge node[above] {} (0);
    \path[Link] (0) edge node[above] {} (1);
    \path[Link] (1) edge node[above] {} (2);
    % \path[Link] (2) edge node[above] {} (3);
    %
    \path[Link, bend left] (1) edge node[above] {} (-1);
    % \path[Link, bend left] (2) edge node[above] {} (-2);
    % \path[Link, bend left] (3) edge node[above] {} (-3);
    \end{tikzpicture}
    }
    \caption*{subnetwork  $\widetilde{F}_1$} 
    %------------------------------------------------------------------------
\end{center}
\end{minipage}
%%%%%%%%%%%%%%%%%%%%%%%%%%%%%%%%%%%%%%%%%%%%%%%%%%%%%%%%%%%%%%%%%%%%%%%%%%%%%%%%%%%%%%%
\begin{minipage}{0.49\textwidth}
\begin{center}
    %------------------------------------------------------------------------
    \scalebox{0.6}{
    \begin{tikzpicture}[scale=2]
    % \centering
    % \node[State](-3)   at (-3,0) [circle,draw] {$-3$} ;
    \node[State](-2)   at (-2,0) [circle,draw] {$-2$} ;
    \node[State](-1)   at (-1,0) [circle,draw] {$-1$} ;
    \node[State](0)   at (0,0) [circle,draw] {$0$} ;
    \node[State](1)   at (1,0) [circle,draw] {$1$} ;
    \node[State](2)   at (2,0) [circle,draw] {$2$} ;
    % \node[State](3)   at (3,0) [circle,draw] {$3$} ;
    % \node[State](4)   at (4,0) [circle,draw] {4} ;
    %
    % \path[Link] (-3) edge node[above] {} (-2);
    \path[Link] (-2) edge node[above] {} (-1);
    \path[Link] (-1) edge node[above] {} (0);
    \path[Link] (0) edge node[above] {} (1);
    \path[Link] (1) edge node[above] {} (2);
    % \path[Link] (2) edge node[above] {} (3);
    %
    \path[Link, bend left] (1) edge node[above] {} (-1);
    \path[Link, bend left] (2) edge node[above] {} (-2);
    % \path[Link, bend left] (3) edge node[above] {} (-3);
    \end{tikzpicture}
    }
    \caption*{subnetwork  $F_2$} 
    %------------------------------------------------------------------------
\end{center}
\end{minipage}\begin{minipage}{0.49\textwidth}
\begin{center}
    %------------------------------------------------------------------------
    \scalebox{0.6}{
    \begin{tikzpicture}[scale=2]
    % \centering
    % \node[State](-3)   at (-3,0) [circle,draw] {$-3$} ;
    \node[State](-2)   at (-2,0) [circle,draw] {$-2$} ;
    \node[State](-1)   at (-1,0) [circle,draw] {$-1$} ;
    \node[State](0)   at (0,0) [circle,draw] {$0$} ;
    \node[State](1)   at (1,0) [circle,draw] {$1$} ;
    \node[State](2)   at (2,0) [circle,draw] {$2$} ;
    \node[State](3)   at (3,0) [circle,draw] {$3$} ;
    % \node[State](4)   at (4,0) [circle,draw] {4} ;
    %
    % \path[Link] (-3) edge node[above] {} (-2);
    \path[Link] (-2) edge node[above] {} (-1);
    \path[Link] (-1) edge node[above] {} (0);
    \path[Link] (0) edge node[above] {} (1);
    \path[Link] (1) edge node[above] {} (2);
    \path[Link] (2) edge node[above] {} (3);
    \path[Link, bend left] (1) edge node[above] {} (-1);
    \path[Link, bend left] (2) edge node[above] {} (-2);
    % \path[Link, bend left] (3) edge node[above] {} (-3);
    \end{tikzpicture}
    }
    \caption*{subnetwork  $\widetilde{F}_2$} 
    %------------------------------------------------------------------------
\end{center}
\end{minipage}
%%%%%%%%%%%%%%%%%%%%%%%%%%%%%%%%%%%%%%%%%%%%%%%%%%%%%%%%%%%%%%%%%%%%%%%%%%%%%%%%%%%%%%%
\begin{minipage}{0.49\textwidth}
\begin{center}
    %------------------------------------------------------------------------
    \scalebox{0.6}{
    \begin{tikzpicture}[scale=2]
    % \centering
    \node[State](-3)   at (-3,0) [circle,draw] {$-3$} ;
    \node[State](-2)   at (-2,0) [circle,draw] {$-2$} ;
    \node[State](-1)   at (-1,0) [circle,draw] {$-1$} ;
    \node[State](0)   at (0,0) [circle,draw] {$0$} ;
    \node[State](1)   at (1,0) [circle,draw] {$1$} ;
    \node[State](2)   at (2,0) [circle,draw] {$2$} ;
    \node[State](3)   at (3,0) [circle,draw] {$3$} ;
    \path[Link] (-3) edge node[above] {} (-2);
    \path[Link] (-2) edge node[above] {} (-1);
    \path[Link] (-1) edge node[above] {} (0);
    \path[Link] (0) edge node[above] {} (1);
    \path[Link] (1) edge node[above] {} (2);
    \path[Link] (2) edge node[above] {} (3);
    \path[Link, bend left] (1) edge node[above] {} (-1);
    \path[Link, bend left] (2) edge node[above] {} (-2);
    \path[Link, bend left] (3) edge node[above] {} (-3);
    \end{tikzpicture}
    }
    \caption*{subnetwork  $F_3$} 
    %------------------------------------------------------------------------
\end{center}
\end{minipage}\begin{minipage}{0.49\textwidth}
\begin{center}
    %------------------------------------------------------------------------
    \scalebox{0.6}{
    \begin{tikzpicture}[scale=2]
    % \centering
    \node[State](-3)   at (-3,0) [circle,draw] {$-3$} ;
    \node[State](-2)   at (-2,0) [circle,draw] {$-2$} ;
    \node[State](-1)   at (-1,0) [circle,draw] {$-1$} ;
    \node[State](0)   at (0,0) [circle,draw] {$0$} ;
    \node[State](1)   at (1,0) [circle,draw] {$1$} ;
    \node[State](2)   at (2,0) [circle,draw] {$2$} ;
    \node[State](3)   at (3,0) [circle,draw] {$3$} ;
    \node[State](4)   at (4,0) [circle,draw] {4} ;
    \path[Link] (-3) edge node[above] {} (-2);
    \path[Link] (-2) edge node[above] {} (-1);
    \path[Link] (-1) edge node[above] {} (0);
    \path[Link] (0) edge node[above] {} (1);
    \path[Link] (1) edge node[above] {} (2);
    \path[Link] (2) edge node[above] {} (3);
    \path[Link] (3) edge node[above] {} (4);
    \path[Link, bend left] (1) edge node[above] {} (-1);
    \path[Link, bend left] (2) edge node[above] {} (-2);
    \path[Link, bend left] (3) edge node[above] {} (-3);
    \end{tikzpicture}
    }
    \caption*{subnetwork  $\widetilde{F}_3$} 
    %------------------------------------------------------------------------
\end{center}
\end{minipage}

\caption{Example of a network, where the time limit of the \emph{infinite} dimensional system exists, but not the thermodynamic limit of the stationary solutions of \emph{finite} subsystems. }
\label{Fig_Network_FlowingInACircle}
\end{figure}
%%%%%%%%%%%%%%%%%%%%%%%%%%%%%%%%%%%%%%%%%%%%%%%%%%%%%%%%%%%%%%%%%%%%%%%%%%%%%%%%%%%%%%%%%%%%%%%%%%%%%%%%%%%%%%%%%%%%

While the previous argument was based on the topology of the network, the other case is based on a suitable choice of rates. We consider the network depicted in figure \ref{Fig_Network_LinearChainOnN0_OneTrappingState}, which is a linear chain with one open end on one side, and a trapping state on the other. For $p \in (\frac{1}{2}, 1 ) $, every state $ n \in \N_{\geq 1} $ can be shown to be transient, that is there is a non-zero chance that a trajectory stating at a state $n\in \N$ does not return. Informally speaking, there are two attracting `forces', one is the state $0$, the other is infinity. When we consider now a finite network $F_n = \{0, \dots, n\} $, the attracting force at infinity vanishes, meaning that the limiting solution of the finite subnetwork is $\bE_0$. This mean that the limit of stationary solutions of \emph{finite} systems exists $ \lim\limits_{n \to \infty} \lim\limits_{t \to \infty} \bP^{F_n}(t) =  \lim\limits_{n \to \infty} \bE_0 = \bE_0 $, while the \emph{time} limit of the \emph{infinite} dimensional systems does \emph{not} exist, only the pointwise limit $\bP(t) \xlongrightarrow[\text{pointwise}]{t \to \infty} (p_*, 0, 0, 0, \dots) $, for some $p_* \in (0, 1)$, which is not a probability measure.

%%%%%%%%%%%%%%%%%%%%%%%%%%%%%%%%%%%%%%%%%%%%%%%%%%%%%%%%%%%%%%%%%%%%%%%%%%%%%%%%%%%%%%%%%%%%%%%%%%%%%%%%%%%%%%%%%%%%
\begin{figure}[H]
\begin{center}
% Linear Chain with one trapping state 
%----------------------------------------------
\begin{tikzpicture}[scale=2]
\node[State](0)   at (0,0) [circle,draw] {0} ;
\node[State](1)   at (1,0) [circle,draw] {1} ; 
\node[State](2)   at (2,0) [circle,draw] {2} ;
\node[State](3)   at (3,0) [circle,draw] {3} ;
\node[State](4) at (4,0) [circle,draw] {4} ;
\node[State](5)   at (5,0) [circle,draw] {$\dots$} ;
% \node[State](n+1) at (6,0) [circle,draw] {$n+1$} ;
% \node[State](n+2) at (7,0) [circle,draw] {$\dots$} ;
%
% \path[Link,bend left] (0) edge node[above] {$\g_{0 \to 1}$} (1);
\path[Link,bend left] (1) edge node[above] {$p \, q $} (2);
\path[Link,bend left] (2) edge node[above] {$p \, q^2$} (3);
\path[Link,bend left] (3) edge node[above] {$p \, q^3 $} (4);
\path[Link,bend left] (3) edge node[above] {} (4);
\path[Link,bend left] (4) edge node[above] {} (5);
% \path[Link,bend left] (n-1) edge node[above] {$\g_{n-1 \to n}$} (n);
% \path[Link,bend left] (n) edge node[above] {$\g_{n \to n+1}$} (n+1);
% \path[Link,bend left] (n+1) edge node[above] {} (n+2);
%
\path[Link,bend left] (5) edge node[below] {} (4);
\path[Link,bend left] (4) edge node[below] {$ (1-p) \, q^4  $} (3);
\path[Link,bend left] (3) edge node[below] {$ (1-p) \, q^3  $} (2);
\path[Link,bend left] (2) edge node[below] {$ (1-p) \, q^2  $} (1);
\path[Link] (1) edge node[below] {$(1-p) \, q $} (0);
% \path[Link,bend left] (n+2) edge node[below] {} (n+1);
% \path[Link,bend left] (n+1) edge node[below] {$\g_{n+1 \to n}$} (n);
% \path[Link,bend left] (n) edge node[below] {$\g_{n \to n-1}$} (n-1);
% \path[Link,bend left] (n-1) edge node[above] {} (3);
\end{tikzpicture}
%----------------------------------------------
\caption{The linear chain with an open end on one side, and a trapping state on the other \\ for $q \in (0, 1)$ and $p \in (\frac{1}{2}, 1)$.  }
\label{Fig_Network_LinearChainOnN0_OneTrappingState} 
\end{center}
\end{figure}
%%%%%%%%%%%%%%%%%%%%%%%%%%%%%%%%%%%%%%%%%%%%%%%%%%%%%%%%%%%%%%%%%%%%%%%%%%%%%%%%%%%%%%%%%%%%%%%%%%%%%%%%%%%%%%%%%%%%

We hope that these - hand waving - arguments are enough to convince the reader that the interchanging of the two limits is no trivial procedure, but requires careful handling. 

The outline of the paper is the following: 

While section \ref{Chaper_FromMarkovChainsToMasterEquations} is a reminder of the connection between master equations and Markov chains, we prove the existence and uniqueness of the solution of the \emph{infinite} dimensional master equation $\bP(t )$ in section \ref{Chaper_TheSolutionOfTheInfiniteDimensionalMasterEquation}.

After the definition of a \emph{finite subsystem} in section \ref{Chapter_TheThermodynamicLimitOfTheMasterEquations}, we show that the solution of the countable, infinite dimensional master equation can be approximated by the solution  $ \bP^{F}(t) $ of finite subsystems in the thermodynamic limit $ \bP^{F}(t ) {\color{TUDa_1b} \xlongrightarrow{|F| \to \infty } } \bP(t ) $. 

Chapter \ref{Chapter_MarkovChains} summarizes equivalent conditions for steady states of - both discrete-time and continuous-time - Markov chains, in the language that we require later on, in particular showing that the existence of stationary solutions is closely related to the Markov chain being \emph{positive recurrent}. 

In section \ref{Chapter_TheLongTermBehaviorOfAnInfiniteDimensionalMasterEquation} we show that the solution of the countable, infinite dimensional master equation converges to a stationary probability distribution, if the associated system is both irreducible and positive recurrent. 

While neither the time limit  nor the limit of the system size is necessarily uniform, it is shown in section \ref{Chapter_TheThermodynamicLimitOfTheStationarySolutionsOfTheMasterEquations} that if the stationary solutions of the finite subsystems converges, it will converge to a stationary solution of the countable, infinite dimensional system. Moreover, if the stationary solutions are of a special form, we can show that the thermodynamic limit of the stationary solution for finite systems exists $ \bP^{F}_\infty(\bPF_0) {\color{TUDa_8b} \xlongrightarrow{|F| \to \infty } } \bP_\infty(\bP_0) $ and hence the the diagram of figure \ref{GeneralRoadMapToPaper} commutes. In particular, the long-term behavior of a countable, infinite dimensional master equation of an irreducible system is not only well defined, but the explicit form of this stationary solution can also be recovered by looking at stationary solutions of corresponding finite subsystems and letting the system size tend to infinity. 

In each of the following sections \ref{Chapter_AFirstNonTrivialExample_TheLinearChainOnN}, \ref{Chapter_ASecondExample_TheLinearChainOnZ}, \ref{Chapter_TheInfiniteDimensionalHypercube} and \ref{Chapter_ExampleOfANetworkWithoutDetailedBalance} an example is being discussed in order to demonstrate the implications of our theorems. Chapter \ref{Chapter_DiscussionAndConclusion} summarizes our work and points to possible generalizations.

    % \input{2_FromMarkovChainsToMasterEquations}
%#################################################################################################################################################################################
%#################################################################################################################################################################################
%#################################################################################################################################################################################

\section{The solution of the countable, infinite dimensional master equation}\label{Chaper_TheSolutionOfTheInfiniteDimensionalMasterEquation}

In the following section we study the solution of the countable, infinite dimensional master equation, by first showing that the existence of a unique solution in the Banach space $l^1(\Omega) $, which is something that is not guaranteed for a infinite, countable state space $\Omega$. We continue by showing that the solution has the desired properties of a probability vector (namely being non-negative and normalized to one) for all times $t \geq 0$. 

For \emph{irreducible} networks the solution operator can be shown to be - element-wise - strictly positive, resulting in - at most - one stationary solution. 

%%%%%%%%%%%%%%%%%%%%%%%%%%%%%%%%%%%%%%%%%%%%%%%%%%%%%%%%%%%%%%%%%%%%%%%%%%%

We note, that the content of this section is already known in the literature. Countable systems of differential equations are dealt with in \cite{bellman1947boundedness} and \cite{batkai2017positive}, while the existence of uniform semigroups and bounded generators is stated in \cite{grimmett2020probability} (chapter 6.10 "uniform semigroups, Theorem (10) ") without a proof. 

We decided to include this section nevertheless, both for the convenience of the reader and to have a self-contained description. %Any reader who is already familiar with these results is welcome to skip this section.  

%%%%%%%%%%%%%%%%%%%%%%%%%%%%%%%%%%%%%%%%%%%%%%%%%%%%%%%%%%%%%%%%%%%%%%%%%%%

%#################################################################################################################################################################################
\subsection{Existence and uniqueness of the solution %\cite{bellman1947boundedness, batkai2017positive}  
}\label{Section_ExistenceAndUniquenessOfTheSolution}

In this section we will prove that the master equation \eqref{Eq_MasterEq} has a unique solution in $l^1(\Omega) $, by following the arguments of \cite{bellman1947boundedness} and \cite{batkai2017positive}. Let the generator matrix $\G \in \R^{\Omega \times \Omega } $ be as in equation \eqref{Eq_GeneratorMatrix}. The master equation itself can be written in two equivalent ways, namely a \emph{differential}- and an \emph{integral} form:

%%%%%%%%%%%%%%%%%%%%%%%%%%%%%%%%%%%%%%%%%%%%%%%%%%%%%%%%%%%%%%%%%%%%
\begin{minipage}{0.35\textwidth}
\begin{equation*} 
\begin{aligned}
\frac{d}{d t} \bP(t) 
&=
\G \, \bP(t) \\ 
%%%%%%%%%%%%%%%%%%
\bP(t=0) 
&=
\bP_0
\end{aligned}
\end{equation*}
\end{minipage} $ \hspace*{-10mm} \iff $ \begin{minipage}{0.65\textwidth}
\begin{equation}\label{Eq_MasterEquation_IntegralForm} \tag{master eq. integral form}
\begin{aligned}
\bP(t) 
=
\bP_0 + \int_0^t \G \, \bP(\tau) \d \tau. 
\end{aligned}
\end{equation}
\end{minipage}
%%%%%%%%%%%%%%%%%%%%%%%%%%%%%%%%%%%%%%%%%%%%%%%%%%%%%%%%%%%%%%%%%%%%

The \emph{existence} and \emph{uniqueness} of the solution is shown in details in lemma \ref{Lemma_ExistenceAndUniquenessOfTheMasterEquation}. In the following, we outline the basic structure of the proof, using and adaptive version of the theorem of Picard-Lindelöf. 

% In order to show the existence and uniqueness of the solution, we use an adaptive version of the theorem of Picard-Lindelöf. 

First, we define a sequence iteratively, namely

%----------------------------------------------------------------------------------
\begin{equation*}
\begin{aligned}
\bP_0 
&:=
\bP(0) 
%%%%%%%%%%%%%%%%%%%%%%%%%%%%%%%%%%
\\
%%%%%%%%%%%%%%%%%%%%%%%%%%%%%%%%%%
\bP_{n+1}(t) 
&:=
\bP_0 + \int_0^t \G \,  \bP_n(\tau) \d \tau
\end{aligned}
\end{equation*}
%----------------------------------------------------------------------------------

We proceed by proving that this sequence is bounded (claim \ref{Lemma_ExistenceAndUniquenessOfTheMasterEquation_BoundednessOfTheSolution}), in particular

%----------------------------------------------------------------------------------
\begin{equation*}
\begin{aligned}
\|\bP_{n+1}(t) \|_1
&\leq 
\|\bP_{0} \|_1 \, \cdot \, \exp\left(
\int_0^t \| \G \|_{1}^\text{(op)} \d \tau
\right), 
\end{aligned}
\end{equation*}
%----------------------------------------------------------------------------------
 and that we can estimate the norm of the difference of two consecutive members of the sequence (claim \ref{Lemma_ExistenceAndUniquenessOfTheMasterEquation_EstimatingDifferenceBetweenTwoConsecutiveElements}): 
 
%----------------------------------------------------------------------------------
\begin{equation*}
\begin{aligned}
\| \left( \bP_{n} - \bP_{n-1}\right) (t) \|_1
&\leq
 \frac{\left(\int_0^t \| \G \|_{1}^\text{(op)} \d \tau  \right)^n}{n!}
\end{aligned}
\end{equation*}
%----------------------------------------------------------------------------------

From this we can deduce that this sequence is a Cauchy sequence in $ l^1(\Omega) $ (claim \ref{Lemma_ExistenceAndUniquenessOfTheMasterEquation_CauchySequence}), that is 

%----------------------------------------------------------------------------------
\begin{equation*}
\begin{aligned}
\| \left( \bP_{n+m} - \bP_{n}\right) (t) \|_1
&\leq
\underbrace{
\sum\limits_{k=n+1}^{n+m} \frac{\left(\int_0^t\| \G \|_{1}^\text{(op)} \d \tau  \right)^k}{k!}
}_{
< \, \infty 
}
\xlongrightarrow{m, n\to \infty} 0, 
\end{aligned}
\end{equation*}
%----------------------------------------------------------------------------------

Due to completeness of the space $l^1(\Omega)$, the sequence converges with respect to the $ \| \cdot \|_1$-norm to some function 
%----------------------------------------------------------------------------------
\begin{equation*}
\begin{aligned}
\bP \, :\, 
\R_{\geq 0} &\to l^1(\Omega)  \\
%%%%%%%%%%%%%%%%%%%%%%%%%5
t &\mapsto \bP(t)
\end{aligned}
\end{equation*}
%----------------------------------------------------------------------------------

This function $\bP(t)$ solves the master equation. To see this, we note that it suffices to show, that the sequence $ \left(\int_0^t \G \, \bP_n(\tau) \d \tau \right)_{n \in \N}  \subseteq l^1(\Omega) $ converges in $l^1(\Omega)$ for $n \to \infty $ to  $\int_0^t \G \, \bP(\tau) \d \tau$, since then we have: 

%----------------------------------------------------------------------------------
\begin{equation*}
\begin{aligned}
\bP_n(t)
&=
\bP_0
+
\int_0^t \G \, \bP_n \hspace*{5mm} \Bigl| \lim\limits_{n \to \infty}
%%%%%%%%%%%%%%%%%%%%%%%%%%%%%%%%%%%%%
\\ 
%%%%%%%%%%%%%%%%%%%%%%%%%%%%%%%%%%%%%
\underbrace{
\lim\limits_{n \to \infty}\bP_n(t)
}_{
\bP(t)
}
&=
\bP_0
+
\underbrace{
\lim\limits_{n \to \infty}\int_0^t \G \, \bP_n 
}_{
\int_0^t \G \, \bP
}, 
\end{aligned}
\end{equation*}
%----------------------------------------------------------------------------------

which is just the integral form of the master equation \eqref{Eq_MasterEquation_IntegralForm}.

To show the convergence of  $ \left(\int_0^t \G \, \bP_n(\tau) \d \tau \right)_{n \in \N} $, we can compute the following:

%----------------------------------------------------------------------------------
\begin{equation*}
\begin{aligned}
\left\|
\int_0^t \G \, \bP_n(\tau) \d \tau - \int_0^t \G \, \bP(\tau) \d \tau
\right\|_1
&\leq
\int_0^t
\|
\G 
\bigl(
\bP_n(\tau)
-
\bP(\tau)
\bigr)
\|_1
\d \tau
%%%%%%%%%%%%%%%%%%%%%%%%%%%%%%%%%%%%%
\\ & \leq
%%%%%%%%%%%%%%%%%%%%%%%%%%%%%%%%%%%%%
t 
\, \cdot \, 
\sup\limits_{\tau \in [0, t]}
\| \G \|_{1}^\text{(op)}
\|
\bigl(
\bP_n
-
\bP
\bigr)(\tau)
\|_{1}
%%%%%%%%%%%%%%%%%%%%%%%%%%%%%%%%%%%%%
\\ &\xlongequal{\text{some } \xi \in [0, t]}
%%%%%%%%%%%%%%%%%%%%%%%%%%%%%%%%%%%%%
t 
\, \cdot \, 
\| \G \|_{1}^\text{(op)}
\, 
\underbrace{
\|
\bigl(
\bP_n(\xi)
-
\bP
\bigr)(\xi)
\|_{1}
}_{
\xlongrightarrow{n \to \infty} 0
}
\xlongrightarrow{n \to \infty}
0, 
\end{aligned}
\end{equation*}
%----------------------------------------------------------------------------------

where we used the fact that a continuous function attains its supremum on a compact set. 

In order to show uniqueness of the solution, we assume that there is another function

%----------------------------------------------------------------------------------
\begin{equation*}
\begin{aligned}
\tilde{\bP} : \R_{\geq \, 0} &\to l^1(\Omega) \\
t &\mapsto \tilde{\bP}(t)
\end{aligned}
\end{equation*}
%----------------------------------------------------------------------------------

solving the master equation. Then we can make the following estimation: 
%----------------------------------------------------------------------------------
\begin{equation*}
\begin{aligned}
\|
\bP(t)
- \tilde{\bP}(t)
\|_1
&=
\left\|
\int_0^t
\G
\bigl(
\bP(\tau)
- \tilde{\bP}(\tau)
\bigr)
\d \tau 
\right\|_1
%%%%%%%%%%%%%%%%%%%%%%%%%%%%%%%%%%%%%
\\ &\leq
%%%%%%%%%%%%%%%%%%%%%%%%%%%%%%%%%%%%%
\int_0^t 
\underbrace{
\| \G \|_{1}^\text{(op)}
}_{
\geq \, 0
}
\, \cdot \,
\underbrace{
\| \bigl(
\bP(\tau)
- \tilde{\bP}(\tau)
\bigr)
\|_1
}_{
\geq \, 0
}
\d \tau . 
\end{aligned}
\end{equation*}
%----------------------------------------------------------------------------------

By the integral version of Grönwall's lemma \ref{Lemma_Groenwall_IntegralVersion}, this implies that $ \| 
\bP(t)
- \tilde{\bP}(t)
\|_1 = 0$, which concludes the proof.

%----------------------------------------------------------------------------------
% We further show that this function  $\bP $ solves  master equation (claim \ref{Lemma_ExistenceAndUniquenessOfTheMasterEquation_SolvingTheMasterEquation}) and is unique in $l^1(\Omega)$ (claim \ref{Lemma_ExistenceAndUniquenessOfTheMasterEquation_Uniqueness}). If $\G$ is a generator and $\bP_0 $ a probability vector, then the solution $\bP(t) $ will be a probability vector for all time $t >0$ (lemma \ref{Lemma_SolutionRemainsAProbabilitySequence}). 

In the next section we will see, that the fact that $\G$ is a generator matrix - and therefore of the form described in equation \eqref{Eq_GeneratorMatrix} - and $\bP_0 \in \ProbStates$ a probability vector, guarantees that the solution $\bP(t) $ will also be a probability vector for all time $t >0$.

% At every time point $t>0$, the solution $\bP(t) $ is a probability vector, provided that $ \bP_0 $ is a probability vector, that is $ \bP_0 \in \ProbStates, \| \G \|_{1}^\text{(op)} < \infty \Rightarrow \bP(t) \in \ProbStates $ for all $t>0$. 
%#################################################################################################################################################################################

%#################################################################################################################################################################################
\subsection{The solution remains a probability vector}\label{Section_SolutionRemainsAProbabilitySequence}

When $ \bP_0 $ is a probability vector (that is $\bP_0 \in \ProbStates$), then the solution of the master equation $\bP(t \,|\, \bP_0) := \e^{\G \, t} \, \bP_0 $ is also a probability vector for all times $t \geq \, 0 $, provided that $ \| \G \|_{1}^\text{(op)} < \infty $. We divide the proof into three parts:

%%%%%%%%%%%%%%%%%%%%%%%%%%%%%%%%%%%%%%%%%%%%%%%%%%%%%%%%%%%%%%%%%%%%%%%
\begin{lemma}[finite probability flow] \label{Lemma_FiniteProbabilityFlow}

The probability flow is finite for all times $t \geq 0$: 

%----------------------------------------------------------------------------------
\begin{equation*}
\begin{aligned}
\sum\limits_{n, m \in \Omega} 
P^{(n)}(t) \, \g_{n \to m}
&=
\sum\limits_{n \in \Omega} \, 
P^{(n)}(t) \, 
\underbrace{
\sum\limits_{m \in \Omega} \, \g_{n \to m}
}_{
\g_{n \to }
}
=
\sum\limits_{n \in \Omega} \,
P^{(n)}(t) \, \g_{n \to }
%%%%%%%%%%%%%%%%%%%%%%%%%%%%%%%%%%%%%%%%
\\ &\overset{{Hoelder}}{\leq}
%%%%%%%%%%%%%%%%%%%%%%%%%%%%%%%%%%%%%%%%
\underbrace{
\left\| 
\left(P^{(n)}(t) \right)_{n \in \Omega} 
\right \|_1
}_{1}
\, \cdot \,
\underbrace{
\left\| 
\left( \g_{n \to } \right)_{n \in \Omega} 
\right\|_\infty
}_{
= 
\sup\{
\left( \g_{n \to } \right)_{n \in \Omega} 
\}
}
%%%%%%%%%%%%%%%%%%%%%%%%%%%%%%%%%%%%%%%%
\\ &\overset{\text{}}{\leq}
%%%%%%%%%%%%%%%%%%%%%%%%%%%%%%%%%%%%%%%%
\frac{1}{2}
\, 
\| \G \|_{1}^\text{(op)}
\, < \, \infty. 
\end{aligned}
\end{equation*}
%----------------------------------------------------------------------------------

\end{lemma}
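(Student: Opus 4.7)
The plan is to justify each of the three steps in the inequality chain displayed in the statement: the interchange of summation, the Hölder estimate, and the final factor of $\frac{1}{2}$.

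First, since $P^{(n)}(t) \geq 0$ and $\gamma_{n \to m} \geq 0$, the double series $\sum_{n,m} P^{(n)}(t)\, \gamma_{n \to m}$ consists of non-negative terms. Tonelli's theorem for counting measures therefore allows the iterated rewriting
\[
\sum_{n,m \in \Omega} P^{(n)}(t)\, \gamma_{n \to m} = \sum_{n \in \Omega} P^{(n)}(t) \sum_{m \in \Omega} \gamma_{n \to m} = \sum_{n \in \Omega} P^{(n)}(t)\, \gamma_{n \to},
\]
where $\gamma_{n \to} := \sum_{m \in \Omega} \gamma_{n \to m}$, and this identity holds as an equality in $[0, \infty]$ without any a priori finiteness assumption.

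Second, by the arguments of section \ref{Section_SolutionRemainsAProbabilitySequence} the solution satisfies $\|P(t)\|_1 = 1$ for every $t \geq 0$ (or, more cautiously, one has the finite bound $\|P(t)\|_1 \leq \|P_0\|_1 \exp(t\,\|\Gamma\|_1^{\text{op}})$ from claim \ref{Lemma_ExistenceAndUniquenessOfTheMasterEquation_BoundednessOfTheSolution}, which is all that is really needed below). Hölder's inequality for the dual pairing $\bigl(l^1(\Omega), l^\infty(\Omega)\bigr)$ then yields
\[
\sum_{n \in \Omega} P^{(n)}(t)\, \gamma_{n \to} \;\leq\; \|P(t)\|_1 \cdot \sup_{n \in \Omega} \gamma_{n \to}.
\]

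Third, I would identify $\sup_n \gamma_{n \to}$ with $\tfrac{1}{2} \|\Gamma\|_1^{\text{op}}$. The $l^1$-operator norm of a matrix equals the supremum over columns of the sum of absolute entries, i.e. $\|\Gamma\|_1^{\text{op}} = \sup_j \sum_i |\Gamma^{(i, j)}|$. From the structure \eqref{Eq_GeneratorMatrix}, the $j$-th column of $\Gamma$ has non-negative off-diagonal entries $\gamma_{j \to i}$ (for $i \neq j$) summing to $\gamma_{j \to}$, together with the single diagonal entry $-\gamma_{j \to}$; hence $\sum_i |\Gamma^{(i, j)}| = 2\gamma_{j \to}$, and taking the supremum over $j$ gives $\sup_j \gamma_{j \to} = \tfrac{1}{2} \|\Gamma\|_1^{\text{op}}$. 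Assembling the three steps delivers the stated bound, with finiteness following from the standing hypothesis $\|\Gamma\|_1^{\text{op}} < \infty$.

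The only conceptual obstacle is a possible circularity: if this lemma is itself a prerequisite for proving $\|P(t)\|_1 = 1$ (because probability conservation requires exchanging $\tfrac{d}{dt}$ with an infinite sum, for which one typically needs precisely the finite-flow estimate proved here), then one cannot invoke unit mass at the Hölder step. In that case the exponential bound from claim \ref{Lemma_ExistenceAndUniquenessOfTheMasterEquation_BoundednessOfTheSolution} still delivers finiteness, so the lemma survives in the slightly weaker but logically clean form with constant $\|P_0\|_1 \exp(t\,\|\Gamma\|_1^{\text{op}}) \cdot \tfrac{1}{2}\|\Gamma\|_1^{\text{op}}$, which is enough for all subsequent uses.
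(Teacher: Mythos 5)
Your proof follows essentially the same route as the paper: interchange the non-negative double sum, apply the H\"older (dual pairing) estimate with $\|\bP(t)\|_1$, and identify $\sup_n \g_{n\to}$ with $\tfrac{1}{2}\|\G\|_1^{\text{(op)}}$ via the column-sum formula for the operator $1$-norm (cf.\ lemma \ref{Lemma_Comparing_1NormOfMatrices_with_Vector1NormOfMatrices}). Your remark on the potential circularity with the normalization $\|\bP(t)\|_1=1$, and the fallback to the exponential bound of claim \ref{Lemma_ExistenceAndUniquenessOfTheMasterEquation_BoundednessOfTheSolution}, is a sensible precaution that the paper glosses over, but it does not change the argument in substance.
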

%%%%%%%%%%%%%%%%%%%%%%%%%%%%%%%%%%%%%%%%%%%%%%%%%%%%%%%%%%%%%%%%%%%%%%%

%%%%%%%%%%%%%%%%%%%%%%%%%%%%%%%%%%%%%%%%%%%%%%%%%%%%%%%%%%%%%%%%%%%%%%%
\begin{lemma}[Boundedness of the solution] \label{Lemma_BoundednessOfTheSolution}

Every entry of the solution of the master equation remains in the interval $ [0,1] $, that is $\bP(t) \in [0,1]^{\Omega} $

\end{lemma}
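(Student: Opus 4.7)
The plan is to prove the two inclusions $P^{(i)}(t) \geq 0$ and $P^{(i)}(t) \leq 1$ separately: non-negativity will come from a shift-and-exponentiate trick that converts the generator into a non-negative matrix, and the upper bound will then follow from conservation of the $\ell^1$-mass combined with non-negativity.

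For entry-wise non-negativity I would write $\Gamma = M - c \, \Id$ with $c := \|\Gamma\|_{1}^{\mathrm{op}} < \infty$ and $M := \Gamma + c \, \Id$. Inspecting \eqref{Eq_GeneratorMatrix}, the off-diagonal entries of $M$ coincide with the non-negative rates $\gamma_{j \to i}$, while the diagonal entries are $c - \gamma_{j \to} \geq 0$ since $\gamma_{j \to} \leq \|\Gamma\|_{1}^{\mathrm{op}} = c$. Because $M$ and $c\,\Id$ commute, the series \eqref{SolutionOperator} factors as
$$
e^{t \, \Gamma} \;=\; e^{-c \, t} \sum_{k \in \N_0} \frac{t^k \, M^k}{k!},
$$
and every $M^k$ is entry-wise non-negative as a product (or rather iterated bounded-operator composition on $\ell^1$) of non-negative matrices, with convergence guaranteed by $\|M\|_{1}^{\mathrm{op}} \leq 2 c < \infty$. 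Hence $e^{t \, \Gamma}$ has non-negative entries, and applying it to the non-negative vector $\bP_0$ yields $\bP(t) \geq 0$ entry-wise for all $t \geq 0$.

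For the upper bound I would show that $\|\bP(t)\|_{1} = \|\bP_0\|_{1} = 1$; combined with non-negativity this immediately gives $P^{(i)}(t) \leq \sum_{j \in \Omega} P^{(j)}(t) = 1$. Summing the integral form \eqref{Eq_MasterEquation_IntegralForm} over $i \in \Omega$ produces
$$
\sum_{i \in \Omega} P^{(i)}(t) \;=\; 1 \;+\; \int_0^t \sum_{i \in \Omega} \sum_{j \in \Omega} \Gamma^{(i,j)} \, P^{(j)}(\tau) \, \d \tau ,
$$
and the aim is to swap the inner sums so that the column-sum identity $\sum_{i \in \Omega} \Gamma^{(i,j)} = 0$, built into the definition of $\Gamma$, kills the integrand. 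The main obstacle is precisely to justify this interchange together with the exchange of $\int_0^t$ and $\sum_{i}$: this is exactly what lemma \ref{Lemma_FiniteProbabilityFlow} delivers, since the non-negative double sum $\sum_{n, m} P^{(n)}(\tau) \, \gamma_{n \to m}$ is bounded uniformly in $\tau \in [0,t]$ by $\tfrac{1}{2} \|\Gamma\|_{1}^{\mathrm{op}}$. Splitting $\Gamma$ into its (non-negative) off-diagonal part and its diagonal part $-\gamma_{j \to}$, this uniform bound provides the absolute summability required for Fubini's theorem, the reordering becomes legitimate, and $\sum_i \Gamma^{(i,j)} = 0$ completes the proof.
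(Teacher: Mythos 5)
Your proof is correct, but it follows a genuinely different route from the paper. The paper argues directly on the ODE: if some entry $P^{(i)}(t)$ hits $1$ (resp.\ $0$), the master equation forces $\frac{\d}{\d t}P^{(i)}(t)\leq 0$ (resp.\ $\geq 0$), and a first-order expansion of $\bP(t+\epsilon)$ is used to conclude that no entry can leave $[0,1]$; the normalization $\|\bP(t)\|_1=1$ is then proved \emph{afterwards} in a separate lemma. You instead establish entry-wise non-negativity by the uniformization trick $\G = M - c\,\Id$ with $M=\G+c\,\Id\geq 0$ entry-wise and $e^{t\,\G}=e^{-ct}e^{tM}$, and then obtain the upper bound from mass conservation, which you derive from the integral form via Fubini (justified by the finiteness of the probability flow) and the vanishing column sums of $\G$ — in effect you prove the paper's subsequent ``solution remains normalized'' lemma as part of this one. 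What your route buys is rigor: the paper's boundary-derivative argument is heuristic as written (forward invariance does not follow from the sign of the derivative at the boundary alone without a continuity/first-exit-time argument, and the case $P^{(i)}(t)=1$ silently uses $P^{(k)}(t)=0$ for $k\neq i$, i.e.\ non-negativity and normalization, which is close to assuming the conclusion), whereas positivity of the shifted series and conservation of the $\ell^1$-mass are airtight given $\|\G\|_1^{\text{(op)}}<\infty$ and $\bP(t)=e^{t\,\G}\bP_0$. What the paper's route buys is elementarity and intuition at the level of the differential equation. One small touch-up to yours: lemma \ref{Lemma_FiniteProbabilityFlow} as stated uses $\|\bP(\tau)\|_1=1$, which you have not yet established at that point; for Fubini you only need finiteness, so bound the double sum by $\sup_{\tau\in[0,t]}\|\bP(\tau)\|_1\cdot\tfrac{1}{2}\|\G\|_1^{\text{(op)}}$, which is finite by continuity of $\tau\mapsto\bP(\tau)$ (or by the a priori bound from the Picard iteration), avoiding any circularity.
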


\begin{proof}

When for some state $i \in\Omega $ and some time $t\geq 0$, the i-th entry of $\bP(t)$ equals one (zero), then the time derivative $ \frac{\d}{\d t} P^{(i)}(t) $ of that entry is negative (positive), since
    
%----------------------------------------------------------------------------------
\begin{equation}
\begin{aligned}
P^{(i)}(t) 
=
1 \Longrightarrow \frac{\d}{\d t} P^{(i)}(t) 
&\xlongequal{\eqref{Equation_MasterEquation_components}}
\mathlarger{\sum}\limits_{k \in \Omega \atop k\neq i}^{} \; \big(\,  \underbrace{P^{(k)}(t)}_{0} \, \gamma_{k\to i} - \underbrace{P^{(i)}(t)}_{1} \, \gamma_{i \to k} \, \big)
%%%%%%%%%%%%%%%%%%%%%%%%%%%%%%%%%%%%%%%%%%%%%%%%%%
% \\
%%%%%%%%%%%%%%%%%%%%%%%%%%%%%%%%%%%%%%%%%%%%%%%%%%
%
=
\,  -\sum\limits_{k \in \Omega}^{}  \gamma_{i \to k} 
\in (-\infty, 0]
%%%%%%%%%%%%%%%%%%%%%%%%%%%%%%%%%%%%%%%%%%%%%%%%%%
\\
%%%%%%%%%%%%%%%%%%%%%%%%%%%%%%%%%%%%%%%%%%%%%%%%%%
P^{(j)}(t) 
=
\, 0 \Longrightarrow 
\frac{\d}{\d t}  P^{(j)}(t) 
&\xlongequal{\eqref{Equation_MasterEquation_components}}
\, \mathlarger{\sum}\limits_{k \in \Omega \atop k\neq j}^{} \; \big(\,  P^{(k)}(t) \, \gamma_{k\to j} - \underbrace{P^{(j)}(t)}_{0} \, \gamma_{j \to k} \, \big) 
%%%%%%%%%%%%%%%%%%%%%%%%%%%%%%%%%%%%%%%%%%%%%%%%%%
% \\
%%%%%%%%%%%%%%%%%%%%%%%%%%%%%%%%%%%%%%%%%%%%%%%%%%
%
=
\, \sum\limits_{k \in \Omega}^{} P^{(k)}(t) \, \gamma_{k\to j}
\overset{\text{lemma } \ref{Lemma_FiniteProbabilityFlow}}{\in} [0, \infty).    
\end{aligned}
\end{equation}
%----------------------------------------------------------------------------------

This ensures that every entry of $\bP(t+\epsilon) $ remains in the interval $[0, 1]$: 

%----------------------------------------------------------------------------------
\begin{equation*}
\begin{aligned}
P^{(i)}(t) 
=
1 
&\Longrightarrow
\bigl( 
\bP(t+ \epsilon)
\bigr)^{(j)}
=
\bigl( 
\bP(t)
+
\epsilon \, \G \, \bP(t) 
+ \smallO(\epsilon^2)
\bigr)^{(j)}
%%%%%%%%%%%%%%%%%%%%%%%%%%%%%%%%%%%%%%%%%%%%%%%%%%
\\
%%%%%%%%%%%%%%%%%%%%%%%%%%%%%%%%%%%%%%%%%%%%%%%%%%
%
&=
\delta_{i, j}
+
\epsilon \,
\left[
(1-\delta_{i, j}) 
\,
\g_{i \to j}
- \delta_{i,j} \, 
\g_{j \to}
\right]
+ \smallO(\epsilon)
%%%%%%%%%%%%%%%%%%%%%%%%%%%%%%%%%%%%%%%%%%%%%%%%%%
\\ & \xlongequal{j = i}
%%%%%%%%%%%%%%%%%%%%%%%%%%%%%%%%%%%%%%%%%%%%%%%%%%
1 - \epsilon \, \g_{i \to} + \smallO(\epsilon)
\leq
1 \text{   and }
%%%%%%%%%%%%%%%%%%%%%%%%%%%%%%%%%%%%%%%%%%%%%%%%%%
\\ & \xlongequal{j \neq i}
%%%%%%%%%%%%%%%%%%%%%%%%%%%%%%%%%%%%%%%%%%%%%%%%%%
\epsilon\,  \g_{i \to j }
+ \smallO(\epsilon)
\geq
0. 
%%%%%%%%%%%%%%%%%%%%%%%%%%%%%%%%%%%%%%%%%%%%%%%%%%
\\ 
%%%%%%%%%%%%%%%%%%%%%%%%%%%%%%%%%%%%%%%%%%%%%%%%%%
P^{(i)}(t) 
=
0 
&\Longrightarrow
P^{(i)}(t+\epsilon)
=
\bigl( 
\bP(t+ \epsilon)
\bigr)^{(i)}
%%%%%%%%%%%%%%%%%%%%%%%%%%%%%%%%%%%%%%%%%%%%%%%%%%
\\ 
%%%%%%%%%%%%%%%%%%%%%%%%%%%%%%%%%%%%%%%%%%%%%%%%%%
%
&=
\epsilon \,
\sum\limits_{k \in \Omega \atop k\neq i}^{} P^{(k)}(t) \, \gamma_{k\to i}
+ \smallO(\epsilon^2)
\geq%
0. 
\end{aligned}
\end{equation*}
%----------------------------------------------------------------------------------

\end{proof}

%%%%%%%%%%%%%%%%%%%%%%%%%%%%%%%%%%%%%%%%%%%%%%%%%%%%%%%%%%%%%%%%%%%%%%%

%%%%%%%%%%%%%%%%%%%%%%%%%%%%%%%%%%%%%%%%%%%%%%%%%%%%%%%%%%%%%%%%%%%%%%%
\begin{lemma}[Interchanging time derivative and index summation] \label{Lemma_InterchangingTimeDerivativeAndIndexSummation} $ $ \\ 

The time derivative of the total probability mass equals the sum of the time derivatives of the individual probabilities:

%----------------------------------------------------------------------------------
\begin{equation*}
\begin{aligned}
\frac{\d}{\d t} \sum\limits_{n \in \Omega} P^{(n)}(t)
&=
\sum\limits_{n \in \Omega} \frac{\d}{\d t}  P^{(n)}(t). 
\end{aligned}
\end{equation*}
%----------------------------------------------------------------------------------

While this statement statement follows directly from the linearity of the derivative for a finite state space, it involves interchanging limits for the countable, infinite dimensional case.

\end{lemma}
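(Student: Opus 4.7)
The plan is to exploit the fact, established in Section~\ref{Section_ExistenceAndUniquenessOfTheSolution}, that the solution $\bP(t)$ is a curve in the Banach space $l^1(\Omega)$ which is continuously differentiable in the $\|\cdot\|_1$-norm, with $l^1$-derivative $\dot\bP(t) = \G\,\bP(t)$. This in turn relies on the fact that $\G$ is a bounded operator on $l^1(\Omega)$ (cf.\ Lemma~\ref{Lemma_FiniteProbabilityFlow}), so that $\|\dot\bP(t)\|_1 \leq \|\G\|_1^{\text{op}} \, \|\bP(t)\|_1 < \infty$. Once this Banach-space differentiability is in hand, the desired interchange will follow from the general principle that bounded linear functionals commute with differentiation of Banach-valued curves, applied to the total-mass functional.

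Concretely, introduce the summation functional
\[
\phi : l^1(\Omega) \to \R, \qquad \phi(\bx) := \sum_{n \in \Omega} x^{(n)},
\]
which is bounded linear with operator norm one, since $|\phi(\bx)| \leq \|\bx\|_1$ by the triangle inequality. Because $\tfrac{1}{h}\bigl(\bP(t+h) - \bP(t)\bigr) \to \dot\bP(t)$ with respect to $\|\cdot\|_1$, the continuity of $\phi$ yields
\[
\frac{d}{dt}\, \phi(\bP(t)) = \lim_{h \to 0} \phi\!\left( \frac{\bP(t+h) - \bP(t)}{h} \right) = \phi\bigl(\dot\bP(t)\bigr).
\]
Unfolding the two sides produces $\frac{d}{dt} \sum_{n \in \Omega} P^{(n)}(t) = \sum_{n \in \Omega} \dot P^{(n)}(t)$, which is exactly the claim; absolute convergence of the right-hand series comes for free from the identity $\sum_{n \in \Omega} |\dot P^{(n)}(t)| = \|\dot\bP(t)\|_1 < \infty$.

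The only point that deserves a short check is that the coordinate-wise derivative $\dot P^{(n)}(t)$ appearing on the right of the lemma statement really coincides with the $n$-th component of the $l^1$-derivative $\dot\bP(t) = \G\,\bP(t)$. This however is automatic: convergence in $\|\cdot\|_1$ of the difference quotient implies convergence of each entry separately, because $|x^{(n)}| \leq \|\bx\|_1$ for every $\bx \in l^1(\Omega)$ and every $n \in \Omega$. Accordingly no substantial obstacle arises; the lemma is essentially a repackaging of the $\|\cdot\|_1$-theory from Section~\ref{Section_ExistenceAndUniquenessOfTheSolution} into a coordinate-wise form that will be convenient for the subsequent step of verifying conservation of total probability mass, $\sum_{n \in \Omega} P^{(n)}(t) \equiv 1$, via the column-sum property of $\G$.
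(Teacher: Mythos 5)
Your proof is correct, but it takes a genuinely different route from the paper's. The paper argues "by hand": it writes the difference quotient of the total mass minus the sum of the coordinate derivatives, inserts the integral representation $P^{(n)}(t+h)-P^{(n)}(t)=\int_t^{t+h}(\G\bP)^{(n)}(\tau)\,\d\tau$, interchanges the sum over $n$ with the integral via monotone convergence, and then bounds everything by $\|\G\|_{1}^\text{(op)}\sup_{\tau\in[t,t+h]}\|\bP(\tau)-\bP(t)\|_1$, which tends to zero by continuity of $\bP$. You instead package the same underlying facts ($\G$ bounded, $\bP$ continuous, hence $t\mapsto\bP(t)=\e^{t\,\G}\bP_0$ norm-differentiable with $\dot\bP(t)=\G\,\bP(t)$, since the semigroup of a bounded generator is uniformly continuous) into an abstract statement: the summation functional $\phi=\bone'$ is a bounded linear functional on $l^1(\Omega)$, and bounded functionals commute with limits of difference quotients taken in $\|\cdot\|_1$. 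Your extra check that the coordinatewise derivatives $\dot P^{(n)}(t)$ agree with the components of the $l^1$-derivative (because coordinate evaluation is also $\|\cdot\|_1$-bounded) is exactly the point needed to match the lemma as stated, so there is no gap. The trade-off: your argument is shorter and exposes the structural reason the interchange works, but it leans on the norm-differentiability of the solution as an already-established fact; the paper's estimate is more elementary and self-contained, using only the integral form of the master equation and monotone convergence, which is why it avoids invoking any Banach-valued calculus explicitly.
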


\begin{proof}
%----------------------------------------------------------------------------------
\begin{equation*}
\begin{aligned}
&\left|
\frac{
\sum\limits_{n \in \Omega}
P^{(n)}(t+h)
-
\sum\limits_{n \in \Omega}
P^{(n)}(t)
}{
h
}
-
\sum\limits_{n \in \Omega}
\underbrace{
\frac{\d}{\d t}  P^{(n)}(t)
}_{
(\G \bP)^{(n)}(t)
}
\right|
=
\left|
\frac{1}{h}
\sum\limits_{n \in \Omega}
\left(
\underbrace{
\left( 
P^{(n)}(t+h)
-
P^{(n)}(t)
\right)
}_{
\int_t^{t+h}
(\G \, \bP)^{(n)}(\tau)
}
-
\underbrace{
h \, (\G \bP)^{(n)}(t)
}_{
\int_t^{t+h} \d \tau
 (\G \bP)^{(n)}(t)
}
\right)
\right|
%%%%%%%%%%%%%%%%%%%%%%%%%%%%%%%%%%%%%%%%
\\ &\leq 
%%%%%%%%%%%%%%%%%%%%%%%%%%%%%%%%%%%%%%%%
\sum\limits_{n \in \Omega}
\frac{1}{h}
\int_t^{t+h} \d \tau 
\left|
\bigl(
\G \bP(\tau )
-
\G \bP(t )
\bigr)^{(n)}
\right|
\xlongequal[\text{convergence}]{\text{monotone}}
\frac{1}{h}
\int_t^{t+h} \d \tau 
\underbrace{
\sum\limits_{n \in \Omega}
\left|
\left(
\G \bP(\tau )
-
\G \bP(t )
\right)^{(n)}
\right|
}_{
\| 
\G \bP(\tau )
-
\G \bP(t )
\|_1
}
%%%%%%%%%%%%%%%%%%%%%%%%%%%%%%%%%%%%%%%%
\\ &\leq 
%%%%%%%%%%%%%%%%%%%%%%%%%%%%%%%%%%%%%%%%
\| \G \|_{1}^\text{(op)} \, 
\sup\limits_{\tau \in [t, t + h]}
\| 
\bP(\tau )
-
\bP(t)
\|_1
\xlongrightarrow[\bP \text{  continuous}]{h \to \infty}
0. 
\end{aligned}
\end{equation*}
%----------------------------------------------------------------------------------

\end{proof}

%%%%%%%%%%%%%%%%%%%%%%%%%%%%%%%%%%%%%%%%%%%%%%%%%%%%%%%%%%%%%%%%%%%%%%%

%%%%%%%%%%%%%%%%%%%%%%%%%%%%%%%%%%%%%%%%%%%%%%%%%%%%%%%%%%%%%%%%%%%%%%%

\begin{lemma}[The solution remains normalized] $ $ \\

The norm of the solution of the master equation remains constant for all times, that is 

%----------------------------------------------------------------------------------
\begin{equation*}
\begin{aligned}
\| \bP(t) \|_1 = \| 
\underbrace{
\bP(t=0)
}_{
\bP_0
} \|_1
\xlongequal{\bP_0 \in \ProbStates}
1.      
\end{aligned}
\end{equation*}
%----------------------------------------------------------------------------------

\end{lemma}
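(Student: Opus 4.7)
The plan is to compute $\frac{d}{dt}\|\bP(t)\|_1$ directly and show it vanishes. By the boundedness lemma \ref{Lemma_BoundednessOfTheSolution} every entry $P^{(n)}(t)\in[0,1]$, so $\|\bP(t)\|_1 = \sum_{n\in\Omega} P^{(n)}(t)$ (no absolute values are needed). Differentiate, apply lemma \ref{Lemma_InterchangingTimeDerivativeAndIndexSummation} to pull the time derivative inside the sum, then substitute the component form of the master equation \eqref{Equation_MasterEquation_components}. This yields
\begin{equation*}
\frac{d}{dt} \sum_{n\in\Omega} P^{(n)}(t)
= \sum_{i\in\Omega} \sum_{j\in\Omega \atop j\neq i} \big( P^{(j)}(t)\,\gamma_{j\to i} - P^{(i)}(t)\,\gamma_{i\to j}\big).
\end{equation*}

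The next step is to split this into two double sums and show they are equal, so that they cancel. Swapping the roles of $i$ and $j$ in one of them turns $\sum_i \sum_{j\neq i} P^{(j)}\gamma_{j\to i}$ into $\sum_j \sum_{i\neq j} P^{(i)}\gamma_{i\to j}$, which, after merely relabeling the dummy indices, is identical to $\sum_i \sum_{j\neq i} P^{(i)}\gamma_{i\to j}$. Hence the bracketed expression sums to zero, so $\frac{d}{dt}\|\bP(t)\|_1 = 0$, which gives $\|\bP(t)\|_1 = \|\bP_0\|_1 = 1$ for all $t\geq 0$.

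The main obstacle is that the index-swap above is only valid if the double series is absolutely summable, since for infinite sums Fubini-style reordering requires either absolute convergence or non-negativity of \emph{all} terms (which fails here because of the minus sign). This is precisely where lemma \ref{Lemma_FiniteProbabilityFlow} does the essential work: it guarantees
\begin{equation*}
\sum_{i\in\Omega}\sum_{j\in\Omega \atop j\neq i} P^{(j)}(t)\,\gamma_{j\to i}
\leq \sum_{n\in\Omega} P^{(n)}(t)\,\gamma_{n\to} \leq \tfrac{1}{2}\|\G\|_{1}^{\text{(op)}} < \infty,
\end{equation*}
and the analogous bound holds for $\sum_i\sum_{j\neq i} P^{(i)}\gamma_{i\to j}$. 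Therefore both double sums are finite, each can be reordered freely (Fubini for counting measure), and their difference is well-defined and equals zero. Once this justification is in place the conclusion is immediate.
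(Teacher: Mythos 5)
Your proposal is correct and follows essentially the same route as the paper: non-negativity from lemma \ref{Lemma_BoundednessOfTheSolution} to identify the norm with the plain sum, lemma \ref{Lemma_InterchangingTimeDerivativeAndIndexSummation} to differentiate term-wise, and lemma \ref{Lemma_FiniteProbabilityFlow} to ensure both double sums are finite so that the split and index relabeling are legitimate and the two sums cancel. Your explicit Fubini justification just spells out what the paper leaves implicit when it marks each double sum as finite.
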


\begin{proof}

%----------------------------------------------------------------------------------
\begin{equation*}
\begin{aligned}
\frac{\d}{\d t} \| \bP(t) \|_1
&\xlongequal[\bP(t) \in \left(\R_{\geq \, 0}\right)^\Omega]{\text{lemma }\ref{Lemma_BoundednessOfTheSolution}}
\frac{\d}{\d t} \sum\limits_{n \in \Omega} P^{(n)}(t)
\xlongequal{\text{lemma  }\ref{Lemma_InterchangingTimeDerivativeAndIndexSummation}}
\sum\limits_{n \in \Omega}
\underbrace{
\frac{\d}{\d t}  P^{(n)}(t)
}_{
\sum\limits_{m \in \Omega}  
P^{(m)}(t) \, \g_{m \to n}
- 
P^{(n)}(t) \, \g_{n \to m}
}
%%%%%%%%%%%%%%%%%%%%%%%%
\\ &=
%%%%%%%%%%%%%%%%%%%%%%%%
\sum\limits_{n, m \in \Omega} 
\left( 
P^{(m)}(t) \, \g_{m \to n}
- 
P^{(n)}(t) \, \g_{n \to m}
\right)
=
\underbrace{
\left( 
\sum\limits_{n, m \in \Omega} 
P^{(m)}(t) \, \g_{m \to n}
\right)
}_{
< \, \infty
}
-
\underbrace{
\left( 
\sum\limits_{n, m \in \Omega} 
P^{(n)}(t) \, \g_{n \to m}
\right)
}_{
< \, \infty 
}
%%%%%%%%%%%%%%%%%%%%%%%%
\\ &\xlongequal{\text{lemma  } \ref{Lemma_FiniteProbabilityFlow}}
%%%%%%%%%%%%%%%%%%%%%%%%
0.  
\end{aligned}
\end{equation*}
%----------------------------------------------------------------------------------

\end{proof}
%%%%%%%%%%%%%%%%%%%%%%%%%%%%%%%%%%%%%%%%%%%%%%%%%%%%%%%%%%%%%%%%%%%%%%%

%#################################################################################################################################################################################

%#################################################################################################################################################################################
\subsection{The positivity of the solution operator for a strongly connected network }\label{Section_ThePositivityOfTheSolutionOperatorForAStronglyConnectedNetwork}

% \begin{lemma}
% [\textbf{Positivity of entries of the solution operator and the reachability of the corresponding states}]
% \label{e_Gamme_i_j_LowerBound}
% \end{lemma}
Given a network $\System = (\Omega, \Edge, \g)$ with two states $i, j \in \Omega$ such that state $i$ is reachable from state $j$, $j \rightsquigarrow i$, then the $i$-$j$-th entry of the solution operator is strictly positive, $ \left(\e^{t \, \G}\right)_{ij}>0 $ for all $t>0$. 

In particular, for a strongly connected network $\System$, the solution operator has only strictly positive entries for all times, making it irreducible, that is $\e^{t \, \G} \in (\R_{>0})^{\Omega \times \Omega}$ for all $t>0$.

% In particular, for a strongly connected network $\System$, the 
% semigroup $(\e^{t \, \G})_{t \geq \, 0}$ is \emph{irreducible}, that is  $\e^{t \, \G} \in (\R_{>0})^{\Omega \times \Omega}$ for all $t>0$. 
%%%%%%%%%%%%%%%%%%%%%%%%%%%%%%%%%%%%%%%%%%%%%%%%%%%%%%%%%%%%%%%%%%%%%%%%

\begin{proof}
\end{proof}

Let $i,j \in \Omega$ be states in the network, such that state $i$ is reachable from state $j$, $j \rightsquigarrow i$. Since we know that the limit $\e^{t \, \G} = \lim\limits_{n \to \infty} (\Id + \frac{t \, \G}{n})^n$ exists in the \emph{norm} topology, it also exists pointwise, which means, it suffices to show that that the sequence 

\begin{equation}\label{Def_e_Hoch_t_gamma}
\begin{aligned}
\left(\left(\left(\Id + \frac{t \, \G}{n}\right)^n \right)^{(i,j)}\right)_{n \in \N} 
=
\sum\limits_{k_1 \in \Omega }^{} \cdots \sum\limits_{k_{n-1}\in \Omega}^{ } \left(\Id + \frac{\G \, t}{n} \right)^{(i, \, k_1)} \dotsc \; \left(\Id + \frac{\G \, t}{n} \right)^{(k_{n-1},\, j)}
\end{aligned}
\end{equation}

 has a strictly positive lower bound.

%-------------------------------------------------------------------------------------------
\begin{figure}[H]
\begin{center}
\begin{subfigure}{0.49\textwidth}
 \subcaption{Original network $\System$}
\label{ExampleOfNetwork_vs_AssociatedNetworkOfId+tGamme_n_1}
\vspace*{10mm}
% Original network S 
%----------------------------------
\begin{tikzpicture}[scale=2]
\node[State]   (1) at (0, 0) { $1$ } ;
\node[State]   (2) at (0,-1) { $2$ } ;
\node[State]   (3) at (1.618,-1) { $3$ } ;
\node[State]   (4) at (1.618, 0) { $4$ } ;
\path[Link]             (1) edge node[left ]        {$ {  \g_{1 \to 2} }$} (2);
\path[Link]             (1) edge node[above ]        {$ {  \g_{1 \to 4} }$} (4);
\path[Link]             (2) edge node[below ]        {$ {  \g_{2 \to 3} }$} (3);
\path[Link]             (3) edge node[right ]        {$ {  \g_{3 \to 4} }$} (4);
\path[Link]             (3) edge node[above, sloped ]        {$ {  \g_{3 \to 1} }$} (1);
\end{tikzpicture}
%----------------------------------
\vspace*{9mm}
\end{subfigure} 
\begin{subfigure}{0.49\textwidth}
  \subcaption{Network $\tilde{\System}_n$ associated to $\Id + \frac{t \, \G}{n}$}
  \label{ExampleOfNetwork_vs_AssociatedNetworkOfId+tGamme_n_2}
% Network S_n associated to Id + t \G/n 
%----------------------------------
\begin{tikzpicture}[scale=2]
\node[State]   (1) at (0, 0) { $1$ } ;
\node[State]   (2) at (0,-1) { $2$ } ;
\node[State]   (3) at (1.618,-1) { $3$ } ;
\node[State]   (4) at (1.618, 0) { $4$ } ;
\path[Link]             (1) edge node[left ]            {$ { \frac{t}{n} \, \g_{1 \to 2} }$} (2);
\path[Link]             (1) edge node[above ]           {$ { \frac{t}{n} \, \g_{1 \to 4} }$} (4);
\path[Link]             (2) edge node[below ]           {$ { \frac{t}{n} \, \g_{2 \to 3} }$} (3);
\path[Link]             (3) edge node[right ]           {$ { \frac{t}{n} \, \g_{3 \to 4} }$} (4);
\path[Link]             (3) edge node[above, sloped ]   {$ { \frac{t}{n} \, \g_{3 \to 1} }$} (1);
\path[MyLoop_UpperLeft]        (1) edge node[ left  ]    {$ { 1 - \frac{t}{n} \, \g_{1 \to}  }$} (1);
\path[MyLoop_LowerLeft]        (2) edge node[ left  ]    {$ { 1 - \frac{t}{n} \, \g_{2 \to} }$} (2);
\path[MyLoop_LowerRight]       (3) edge node[ right ]    {$ { 1 - \frac{t}{n} \, \g_{3 \to} }$} (3);
\path[MyLoop_UpperRight]       (4) edge node[ right ]    {$ { 1 - \frac{t}{n} \, \g_{4 \to} }$} (4);
\end{tikzpicture}
%----------------------------------

\end{subfigure}
\caption{Illustrating the difference between the original network $\System$ and network $\tilde{S}_n$ associated to the matrix $\Id + \frac{t \, \G}{n}$, with the modified link strength and additional self-loops.  \\ 
The walk $(1,1,2,3,3,4) \in \mathcal{WP} (1\xlongrightarrow{5} 4, \, \tilde{S}_n)\subseteq \, \mathcal{W} (1\xlongrightarrow{5} 4, \tilde{S}_n) $, after removing the self-loops, becomes a \emph{path}, where as the walk $(1,2,3,1,2,3,4) \in \mathcal{W} (1\xlongrightarrow{6} 4, \tilde{S}_n) \backslash \mathcal{WP} (1\xlongrightarrow{6} 4, \tilde{S}_n)$ does not. 
}
\label{ExampleOfNetwork_vs_AssociatedNetworkOfId+tGamme_n} 
\end{center}
\end{figure}
%-------------------------------------------------------------------------------------------

We call $\tilde{\System}_n:= (\Omega, \tilde{\Edge}, \tilde{\g})$ the network associated to the matrix $\Id + \frac{t \, \G}{n}$, provided that $n \in \N$ is large enough, such that all entries are non-negative and $ 1 - \frac{t}{n} \, \g_{i \to}  $ is positive for all $ i \in \Omega $. 
It can be recovered from the original network $\System$, by modifying the links between different states with a factor of $\frac{t}{n}$ and adding a self-loop at every state with a weight of $1 + \frac{t}{n}  \, \G^{(s,s)} $ for all $s \in \Omega$.

We need the following definitions: 

%-------------------------------------------------------------------------------------------
\begin{equation}
\begin{aligned}
\g_{\to}
:=&
\sup\limits_{i \in \Omega } \g_{i \to } \\ 
%%%%%%%%%%%%%%%%%%%%%%%%%%%%%%%%%%%%%%%%%%
\mathcal{W} (j\xlongrightarrow{n}i, \, \System)
:=
&\{\text{ all \emph{walks} $ \omega \in \Omega^{n+1} $ of length $n\in \N$ from state $j\in \Omega$ }\\
&\text{ to state $i\in \Omega$ in the network $\System$}  \}\\ 
%%%%%%%%%%%%%%%%%%%%%%%%%%%%%%%%%%5
\Prob (j\xlongrightarrow{n}i, \, \System)
:=&
\{\text{ all \emph{paths} $ \omega \in \Omega^{n+1} $ of length $n\in \N$ from state $j\in \Omega$ } \\
&\text{ to state $i\in \Omega$ in the network $\System$} \} \\ 
%%%%%%%%%%%%%%%%%%%%%%%%%%%%%%%%%%5
\mathcal{WP} (j\xlongrightarrow{n} i, \, \System)
:=
&\{\text{ all \emph{walks} $ \omega \in \Omega^{n+1} $ of length $n\in \N$ from state $j\in \Omega$} \\
&\text{ to state $i\in \Omega$ in the network $\System$, }\\
&\text{ \emph{which become paths} after removing the self-loops} \} \\ 
%%%%%%%%%%%%%%%%%%%%%%%%%%%%%%%%%%5
\Prob (j\rightsquigarrow i, \, \System)
:=&
\bigcup\limits_{n \in \N} \Prob (j\xlongrightarrow{n}i, \, \System) \\
=&\{\text{ all paths of arbitrary length from state $j\in \Omega$} \\
&\text{\hspace*{5mm}                               to state $i\in \Omega$ in the network $\System$ }
\end{aligned}
\end{equation}
%-------------------------------------------------------------------------------------------

Clearly, we have 
%-------------------------------------------------------------------------------------------
\begin{equation*}
\begin{aligned}
\Prob (j\xlongrightarrow{n}i, \, \System) \subseteq \mathcal{WP} (j\xlongrightarrow{n}i, \, \System) \subseteq \mathcal{W} (j\xlongrightarrow{n}i, \, \System).   
\end{aligned}
\end{equation*}
%-------------------------------------------------------------------------------------------

A non-zero summand in the right hand side of Equation~\eqref{Def_e_Hoch_t_gamma} can be interpreted as the weight $\g_{\tilde{\omega}}$ of a walk $\tilde{\omega} \in \, \mathcal{W} (j\xlongrightarrow{n}i, \, \tilde{\System}_n)$ of length $|\tilde{\omega}|=n$ from $j$ to $i$.

When we look only at the weight of a special walk 
%-------------------------------------------------------------------------------------------
\begin{equation*}
\begin{aligned}
 \tilde{\omega} \in \mathcal{WP}(j \xlongrightarrow{|\tilde{\omega}|}i, \tilde{\System}_n) \subseteq \mathcal{W}(j \xlongrightarrow{|\tilde{\omega}|}i, \tilde{\System}_n), 
\end{aligned}
\end{equation*}
%-------------------------------------------------------------------------------------------

then this weight can be separated into the weight of the corresponding \emph{path}  \\
$\omega \in \Prob (j\xlongrightarrow{|\omega|}i, \, \System) $ times a scaling factor $\left( \frac{t}{n}\right)^{|\omega|}$ times the weight of the self-loops: 

%-------------------------------------------------------------------------------------------
\begin{equation}\label{Estimation_gamma_omega_tilde}
\begin{aligned}
\g_{\tilde{\omega}}
&=
\prod_{\alpha=1}^{n} \g_{\tilde{\omega}_{\alpha } \to \,  \tilde{\omega}_{\alpha +1} }
=
\underbrace{
\prod\limits_{k=1}^{|\omega|} \frac{t}{n} \, \g_{\omega_{k} \to \omega_{k+1} }
}_{
= \, \left(\frac{t}{n}\right)^{|\omega|}  \, \g_{\omega} \,
}
\cdot
\underbrace{
\prod\limits_{s \in \{\text{self-loops}(\tilde{\omega})\}  } \left( 1 + \frac{t \, \G_{(s, s)}}{n} \right)
}_{
\geq \, \left(
1 - \frac{t}{n}\, \g_{\to}  
\right)^{|\tilde{\omega}|-|\omega|}
} \geq \\
&\geq
\g_{\omega} \,
\left(\frac{t}{n}\right)^{|\omega|} \, 
\left(
1 - \frac{t}{n}\, \g_{\to}  
\right)^{|\tilde{\omega}|-|\omega|}, 
\end{aligned}
\end{equation}
%-------------------------------------------------------------------------------------------

where the product is taken over all self-loops of $\tilde{\omega}$.

%-------------------------------------------------------------------------------------------
\begin{figure}[H]
\begin{center}
\begin{subfigure}{0.49\textwidth}
  \subcaption{The \emph{walk} \\  $\begin{color}{TUDa_10b}\tilde{\omega} = (1,1,2,3,3,4)\end{color} \in 
  \mathcal{WP}(1\xlongrightarrow{|\tilde{\omega}|=5}4, \tilde{\System}_{n}) $ }
      \label{Fig_1_a}
      \vspace*{10mm}
% \includegraphics[width=1.0\columnwidth]{DividingAWPIntoAPathAndSelfLoops_1.png} 
% DividingAWPIntoAPathAndSelfLoops_1
%----------------------------------
\begin{tikzpicture}[scale=2]
\node[State]   (1) at (0, 0)     { $1$ } ;
\node[State]   (2) at (0,-1)     { $2$ } ;
\node[State]   (3) at (1.618,-1) { $3$ } ;
\node[State]   (4) at (1.618, 0) { $4$ } ;
\path[color=TUDa_11b,Link]                 (1) edge node[left ]           {$ { \color{TUDa_11b} \frac{t}{n} \, \g_{1 \to 2} }$} (2);
\path[color=black   ,Link]                 (1) edge node[above ]          {$ { \color{black}\frac{t}{n} \, \g_{1 \to 4} }$} (4);
\path[color=TUDa_11b,Link]                 (2) edge node[below ]          {$ { \color{TUDa_11b}\frac{t}{n} \, \g_{2 \to 3} }$} (3);
\path[color=TUDa_11b,Link]                 (3) edge node[right ]          {$ { \color{TUDa_11b}\frac{t}{n} \, \g_{3 \to 4} }$} (4);
\path[color=black   ,Link]                 (3) edge node[above, sloped ]  {$ { \color{black}\frac{t}{n} \, \g_{3 \to 1} }$} (1);
\path[color=TUDa_11b,MyLoop_UpperLeft]   (1) edge node[ left  ]         {$ { \color{TUDa_11b}1 + \frac{t}{n} \, \G^{(1,1)} }$} (1);
\path[color=black,MyLoop_LowerLeft]     (2) edge node[ below  ]         {$ { \color{black}1 + \frac{t}{n} \, \G^{(2,2)} }$} (2);
\path[color=TUDa_11b,MyLoop_LowerRight]  (3) edge node[ right ]         {$ { \color{TUDa_11b}1 + \frac{t}{n} \, \G^{(3,3)} }$} (3);
\path[color=black,MyLoop_UpperRight]    (4) edge node[ above ]         {$ { \color{black}1 + \frac{t}{n} \, \G^{(4,4)} }$} (4);
\end{tikzpicture}
%----------------------------------
\end{subfigure}
\begin{subfigure}{0.49\textwidth}
  \subcaption{The \emph{path} \\  $\begin{color}{TUDa_9b}{\omega} = (1,2,3,4)\end{color} \in  \Prob(1\xlongrightarrow{|\omega| = 3}4, \tilde{\System}_n)$  \\and \begin{color}{TUDa_1b} self-loops\end{color}.  }
  \label{Fig_1_b}
  \vspace*{5.5mm}
% \includegraphics[width=1.0\columnwidth]{DividingAWPIntoAPathAndSelfLoops_2.png}
% DividingAWPIntoAPathAndSelfLoops_2
%----------------------------------
\begin{tikzpicture}[scale=2]
\node[State]   (1) at (0, 0)     { $1$ } ;
\node[State]   (2) at (0,-1)     { $2$ } ;
\node[State]   (3) at (1.618,-1) { $3$ } ;
\node[State]   (4) at (1.618, 0) { $4$ } ;
\path[color=TUDa_9b,Link]                 (1) edge node[left ]           {$ { \color{TUDa_9b} \frac{t}{n} \, \g_{1 \to 2} }$} (2);
\path[color=black   ,Link]                 (1) edge node[above ]          {$ { \color{black}\frac{t}{n} \, \g_{1 \to 4} }$} (4);
\path[color=TUDa_9b,Link]                 (2) edge node[below ]          {$ { \color{TUDa_9b}\frac{t}{n} \, \g_{2 \to 3} }$} (3);
\path[color=TUDa_9b,Link]                 (3) edge node[right ]          {$ { \color{TUDa_9b}\frac{t}{n} \, \g_{3 \to 4} }$} (4);
\path[color=black   ,Link]                 (3) edge node[above, sloped ]  {$ { \color{black}\frac{t}{n} \, \g_{3 \to 1} }$} (1);
\path[color=TUDa_1b,MyLoop_UpperLeft]   (1) edge node[ left  ]         {$ { \color{TUDa_1b}1 + \frac{t}{n} \, \G^{(1,1)} }$} (1);
\path[color=black,MyLoop_LowerLeft]     (2) edge node[ below  ]         {$ { \color{black}1 + \frac{t}{n} \, \G^{(2,2)} }$} (2);
\path[color=TUDa_1b,MyLoop_LowerRight]  (3) edge node[ right ]         {$ { \color{TUDa_1b}1 + \frac{t}{n} \, \G^{(3,3)} }$} (3);
\path[color=black,MyLoop_UpperRight]    (4) edge node[ above ]         {$ { \color{black}1 + \frac{t}{n} \, \G^{(4,4)} }$} (4);
\end{tikzpicture}
%----------------------------------
\end{subfigure}
\caption{Illustrating the separation of the weight of a \emph{walk} $\tilde{\omega}\in \mathcal{WP}(1\xlongrightarrow{|\tilde{\omega}|=5}4, \tilde{\System}_n)$ into the weight of a \emph{path} $\omega\in \Prob(1\xlongrightarrow{|\omega| = 3}4, \tilde{\System}_n)$ and a product of the weights of self-loops.  \\
For a fixed path $\omega = (1,2,3,4) \in \Prob(1\xlongrightarrow{|\omega|=3}4, \tilde{\System}_{n}) $, there are $\binom{|\tilde{\omega}|=5}{|\omega|=3}=10$ many $\tilde{\omega} \in \mathcal{WP}(1\xlongrightarrow{|\tilde{\omega}|=5}4, \tilde{\System}_{n})$ that include the original path, with only additional self-loops, namely  \\
$\tilde{\omega}_1 = (1,1,1,2,3,4)$, \, 
$\tilde{\omega}_2 = (1,1,2,2,3,4)$, \, 
$\tilde{\omega}_3 = (1,1,2,3,3,4)$, \\ 
$\tilde{\omega}_4 = (1,1,2,3,4,4)$, \,
$\tilde{\omega}_5 = (1,2,2,2,3,4)$, \,
$\tilde{\omega}_6 = (1,2,2,3,3,4)$, \\
$\tilde{\omega}_7 = (1,2,2,3,4,4)$, \,
$\tilde{\omega}_8 = (1,2,3,3,3,4)$, \,
$\tilde{\omega}_9 = (1,2,3,3,4,4)$, \\ and 
$\tilde{\omega}_{10} = (1,2,3,4,4,4)$.  
}
\label{} 
\end{center}
\end{figure}
%-------------------------------------------------------------------------------------------

For a given path $\omega \in \Prob(j \xlongrightarrow{|\omega|}i) $ and some fixed natural number $|\tilde{\omega}| $ greater or equal to $|\omega|$, $ |\tilde{\omega}| \in \N_{\geq \, |\omega|}$, there are $\binom{|\tilde{\omega}|}{|\omega|}$ many walks $\tilde{\omega} \in \mathcal{WP}(j \xlongrightarrow{|\tilde{\omega}|}i) $, which include the original path, but have additional self-loops (out of $ |\tilde{\omega}| $ many transitions, choose the positions of $ |\omega| $ non-trivial ones). 

Now take $n \in \N $ large enough, such that there exists a path $\omega_0 \in \Prob(j \rightsquigarrow i, \System) $ with $ |\omega_0| \leq n $. 
Since we have: 

%----------------------------------------------------------------------------------
\begin{equation} \label{Eq_SideCalculation_1}
\begin{aligned}
\frac{n!}{(n-|\omega_0|)! \cdot n^{|\omega_0|}} 
&=
\prod\limits_{k=1}^{|\omega_0|-1} 
\left(
1 - \frac{k}{n}
\right)
\xlongrightarrow{n \to \infty}
1 \text{ , hence }
\frac{n!}{(n-|\omega_0|)! \cdot n^{|\omega_0|}}
\leq 
\frac{1}{2} \text{ , for $n$ large enough } \text{  and } 
%%%%%%%%%%%%%%%%%%%%%%
\\ \\ 
%%%%%%%%%%%%%%%%%%%%%%
\left( 1 - \frac{t}{n}\, \g_{\to}   \right)^{n-|\omega_0|}
&\xlongrightarrow{n \to \infty}
\e^{-t \, \g_{\to}} \text{ , hence }
\left( 1 - \frac{t}{n}\, \g_{\to}   \right)^{n-|\omega_0|}
\leq 
\frac{\e^{-t \, \g_{\to}}}{2} \text{ , for $n$ large enough,  }  
\end{aligned}
\end{equation}
%----------------------------------------------------------------------------------
we can make the following estimation:

%----------------------------------------------------------------------------------
\begin{equation}
\begin{aligned}
\left(\Bigl(\Id + \frac{t \, \G}{n}\Bigr)^n \right)^{(i, j)}
&\text{} \hspace*{4mm}=
\sum\limits_{k_1 \in \Omega }^{} \cdots \sum\limits_{k_{n-1} \in \Omega}^{} \left(\Id + \frac{\G \, t}{n} \right)^{(i, \, k_1)} \cdot \dotsc \cdot \; \left(\Id + \frac{\G \, t}{n} \right)^{(k_{n-1},\, j)} \\
&\text{} \hspace*{4mm}=
\sum\limits_{\tilde{\omega} \in \,  \mathcal{W}\left(j \, \xlongrightarrow{n} \, i, \, \tilde{\System}_n\right)} \, \g_{\tilde{\omega}} \\ 
%%%%%%%%%%%%%%%%%%%%%%%%%%%%%%%%%%%%%%%%%%%%%%%%%%%%%
%
&\overset{\mathcal{W} \, \supseteq \, \mathcal{WP}} \geq
\sum\limits_{\tilde{\omega} \in \,  \mathcal{WP}\left(j \, \xlongrightarrow{n} \, i, \, \tilde{\System}_n\right)} \,
\underbrace{
\hspace*{10mm} \g_{\tilde{\omega}} \hspace*{10mm}
}_{
\hspace*{4mm} \geq \, 
\g_\omega \, \left( \frac{t}{n}\right)^{|\omega|} \, \left( 1 - \frac{t}{n} \, \g_{\to}\right)^{n-|\omega|}
} 
%%%%%%%%%%%%%%%%%%%%%%%%%%%%%%%%%%%%%%%%%%%%%%%%%%%
\\
%%%%%%%%%%%%%%%%%%%%%%%%%%%%%%%%%%%%%%%%%%%%%%%%%%%
%
&\text{} \hspace*{2mm} \overset{\eqref{Estimation_gamma_omega_tilde}} \geq 
\sum\limits_{
{
\omega \in \,  \Prob\left(j \, \rightsquigarrow \, i, \, \System \right)
}\atop{
|\omega| \leq n
} 
}
\, \binom{n}{|\omega|} \,  \g_\omega \, \left( \frac{t}{n}\right)^{|\omega|} \, \left( 1 - \frac{t}{n} \, \g_{\to}\right)^{n-|\omega|} 
%%%%%%%%%%%%%%%%%%%%%%%%%%%%%%%%%%%%%%%%%%%%%%%%%%%%%
\\ &\text{} \hspace*{4mm}\xlongequal{n \geq |\omega_0 |}
%%%%%%%%%%%%%%%%%%%%%%%%%%%%%%%%%%%%%%%%%%%%%%%%%%%%%
\, \binom{n}{|\omega_0|} \,  \g_{\omega_0} \, \left( \frac{t}{n}\right)^{|\omega_0|} \, \left( 1 - \frac{t}{n} \, \g_{\to}\right)^{n-|\omega_0|} 
%%%%%%%%%%%%%%%%%%%%%%%%%%%%%%%%%%%%%%%%%%%%%%%%%%%%%
\\ 
%%%%%%%%%%%%%%%%%%%%%%%%%%%%%%%%%%%%%%%%%%%%%%%%%%%%%
%
&\text{} \hspace*{4mm}\xlongequal{}
\underbrace{
\frac{n!}{(n-|\omega_0|)! \, n^{|\omega_0|}}
}_{
\geq \,\frac{1}{2}
} 
\, 
 \,  \frac{\g_{\omega_0} \, t^{|\omega_0|}}{|\omega_0|!} \,
\underbrace{ 
 \left( 1 - \frac{t}{n} \, \g_{\to}\right)^{n-|\omega_0|}
 }_{
\geq \, \frac{1}{2} \, \e^{-t \, \g_{\to} }
 }
 %%%%%%%%%%%%%%%%%%%%%%%%%%%%%%%%%%%%%%%%%
 \\
%%%%%%%%%%%%%%%%%%%%%%%%%%%%%%%%%%%%%%%%%%%
&\text{} \hspace*{2mm} \overset{\eqref{Eq_SideCalculation_1}}\geq \, 
\frac{\e^{-t \, \g_{\to} }}{4}  \, 
\frac{\g_{\omega_0} \, t^{|\omega_0|}}{|\omega_0|!} \\
&\text{} \hspace*{4mm}> \, 0. 
\end{aligned}
\end{equation}
%----------------------------------------------------------------------------------

In the fourth step, we used both the estimate of Equation~\eqref{Estimation_gamma_omega_tilde} and the fact that every `walk-path' can be separated into a path and self-loops.

This concludes the proof.

%#################################################################################################################################################################################

%#################################################################################################################################################################################
\subsection{Uniqueness and strict positivity of stationary solutions for irreducible networks }\label{Section_UniquenessAndStrictPositivityOfStationarySolutionsForIrreducibleNetworks}

\begin{lemma}[The kernel of the generator of an irreducible network]\label{Lemma_StrictPositivityForStationarySolutionsOfStronglyConnectedNetworks} $ $ \\ 

Let $\G$ be as in equation \eqref{Eq_GeneratorMatrix},  $\| \G \|_{1}^\text{(op)} < \infty$ and let the semigroup  $(\e^{t \, \G})_{t \geq 0} $ be irreducible. Then the dimension of the kernel of the generator is at most one-dimensional, that is $\text{dim} (\Kern(\G)) \leq 1$. If the kernel is exactly one-dimensional, it is spanned by an - element-wise - strictly positive eigenvector, that is $ \Kern(\G) = \Span(\bX) $ with $\bX \in (\R_{>0})^{\Omega}$.

\end{lemma}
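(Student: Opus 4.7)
The plan is to exploit two structural facts already established in the paper: first, the solution semigroup $(e^{t \, \G})_{t \geq 0}$ preserves the $l^1$-norm on the non-negative cone (Section~\ref{Section_SolutionRemainsAProbabilitySequence}); second, for an irreducible network each entry $(e^{t \, \G})^{(i,j)}$ is \emph{strictly} positive for every $t>0$ (Section~\ref{Section_ThePositivityOfTheSolutionOperatorForAStronglyConnectedNetwork}). Since $\| \G \|_{1}^\text{(op)} < \infty$, the exponential series from \eqref{SolutionOperator} yields $\G \, \bX = 0$ if and only if $e^{t \, \G} \, \bX = \bX$ for every $t \geq 0$ (power-series expansion in one direction, differentiation at $t=0$ in the other), so I will work with fixed points of the semigroup throughout.

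The first step will be to show that any $\bX \in \Kern(\G)$ is \emph{sign-definite}, that is, either $\bX \geq 0$ or $\bX \leq 0$ componentwise. Decomposing $\bX = \bX_+ - \bX_-$ into its positive and negative parts (with disjoint supports) and using positivity of $e^{t \, \G}$, the triangle inequality gives, pointwise in $i$,
\begin{equation*}
|\bX^{(i)}| \;=\; \Big|\sum_j (e^{t \, \G})^{(i,j)} \, \bX^{(j)}\Big| \;\leq\; \sum_j (e^{t \, \G})^{(i,j)} \, |\bX^{(j)}| \;=\; \big(e^{t \, \G} \, |\bX|\big)^{(i)}.
\end{equation*}
Summing over $i$ and invoking $l^1$-norm preservation on $|\bX| \geq 0$ forces this inequality to be an equality in every row. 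The main obstacle lies here: one must upgrade the saturation of the triangle inequality to sign-definiteness. Because $(e^{t \, \G})^{(i,j)} > 0$ for \emph{all} $j$, equality in the bound above is possible only if all $\bX^{(j)}$ share a single sign; as this holds for every $i$, the vector $\bX$ has one global sign.

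The remaining claims follow quickly. Given any non-trivial $\bX \in \Kern(\G)$, sign-definiteness lets me assume $\bX \geq 0$ after a possible flip. Then for $t>0$ the identity $\bX^{(i)} = \sum_j (e^{t \, \G})^{(i,j)} \bX^{(j)}$ expresses each component as a strictly positive combination of non-negative numbers that are not all zero, yielding $\bX^{(i)} > 0$ for every $i \in \Omega$, which is the claimed strict positivity. For the dimension bound, take two non-negative $\bX, \bY \in \Kern(\G)$ normalised to $\|\bX\|_1 = \|\bY\|_1 = 1$; then $\bZ := \bX - \bY \in \Kern(\G)$ is again sign-definite, and whichever sign it carries,
\begin{equation*}
\|\bZ\|_1 \;=\; \Big|\sum_i \bZ^{(i)}\Big| \;=\; \big| \|\bX\|_1 - \|\bY\|_1 \big| \;=\; 0,
\end{equation*}
so $\bX = \bY$ and $\dim \Kern(\G) \leq 1$, completing the proof.
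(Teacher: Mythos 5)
Your proof is correct. Its first and decisive step --- saturating the pointwise triangle inequality $|X^{(i)}| \leq \sum_j (\e^{t\,\G})^{(i,j)}\,|X^{(j)}|$ by combining $l^1$-norm preservation on the non-negative cone with strict positivity of every entry of $\e^{t\,\G}$, to conclude that every kernel element is sign-definite --- is exactly the mechanism of the paper's step $\color{ElectricPurple}(*)$ in equation \eqref{Eq_OneDimensionalityOfTheKernel}. Where you genuinely diverge is in the two remaining steps. For the dimension bound the paper argues topologically: the kernel lies in $(\R_{\geq 0})^{\Omega} \cup (\R_{\leq 0})^{\Omega}$, and since this union with the origin removed is not path-connected, the kernel can be at most one-dimensional; you instead normalise two non-negative kernel vectors and observe that their difference is again a sign-definite kernel vector whose entries sum to zero, hence vanishes --- a more elementary, purely metric cancellation argument. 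For strict positivity the paper argues by contradiction at the level of the generator: the zero set $S$ of a stationary probability vector forces $\g_{j \to s} = 0$ for all $j \notin S$, producing a block-triangular (reducible) structure that contradicts irreducibility; you instead read strict positivity directly off the fixed-point identity $X^{(i)} = \sum_j (\e^{t\,\G})^{(i,j)} X^{(j)}$, since all coefficients are strictly positive and $\bX \geq 0$ is nonzero. Your variants are shorter and stay entirely at the semigroup level, whereas the paper's reducibility argument additionally exhibits the structural consequence of a vanishing entry (an invariant subspace of the network, tied to the graph topology); both routes ultimately rest on the same two inputs, namely norm preservation and strict positivity of $\e^{t\,\G}$.
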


\begin{proof}
Let $\bX \in l^1(\Omega) \cap \Kern(\G - \lambda \, \Id)$ be an eigenvector of $\G$ to an eigenvalue $\lambda \in \C$, that is $\G \, \bX = \lambda \, \bX $. Then we have

%----------------------------------------------------------------------------------
\begin{equation}\label{Eq_OneDimensionalityOfTheKernel}
\begin{aligned}
|\e^{\lambda \, t}| \,  \| \bX \|_1 
&=
\| \e^{\lambda \, t} \bX \|_1 
=
\|\e^{t \, \G} \bX \|_1 
=
\sum\limits_{i \in \Omega }^{} \left|\;  \sum\limits_{j \in \Omega}^{} \left( \e^{t \, \G} \right)^{(i, j)} \, X^{(j)} \; \right| 
\leq
 \\
&\overset{{\color{ElectricPurple} (*)}} \leq
\sum\limits_{i,\, j \in \Omega }^{}
 \left| \, 
 \underbrace{
 \left( \e^{t \, \G} \right)^{(i, j)}
 }_{
 \geq \, 0
 } \, \right| \cdot |X^{(j)}|  
 =
 \sum\limits_{j \in \Omega }^{} |X^{(j)}| \; 
 \underbrace{
 \left(\sum\limits_{i \in \Omega}^{} \left( \e^{t \, \G} \right)^{(i, j)} \right)
 }_{
 =\,1
 } 
 =
 \| \bX \|_1. 
\end{aligned}
\end{equation}
%----------------------------------------------------------------------------------

If $\lambda=0$ we have equality in \eqref{Eq_OneDimensionalityOfTheKernel}, but on the other hand, we have equality in step $ \color{ElectricPurple} (*) $, if and only if for all $ i \in \Omega $

%----------------------------------------------------------------------------------
\begin{equation} \label{Eq_NonNegativityOfACertainSequence}
\begin{aligned}
\Bigl(\bigl(\e^{t \, \G }\bigr)^{(i, j)} \, X^{(j)} \Bigr)_{j \in \Omega}
&\in
\bigl(\R_{\geq 0} \bigr)^{\Omega} \cup \bigl(\R_{\leq 0} \bigr)^{\Omega}
\end{aligned}
\end{equation}
%----------------------------------------------------------------------------------

For an irreducible system, every element of the solution operator $\e^{t \, \G} $ is strictly positive, so we can deduce from equation \eqref{Eq_NonNegativityOfACertainSequence} that 

%----------------------------------------------------------------------------------
\begin{equation} \label{Eq_NonNegativityOfElementOfKernel}
\begin{aligned}
\bX
&=
(X^{(j)})_{j \in \Omega }
\in
\bigl(\R_{\geq 0} \bigr)^{\Omega} \cup \bigl(\R_{\leq 0} \bigr)^{\Omega}, 
\end{aligned}
\end{equation}
%----------------------------------------------------------------------------------

hence $\Kern(\G) \subseteq \bigl(\R_{\geq 0} \bigr)^{\Omega} \cup \bigl(\R_{\leq 0} \bigr)^{\Omega} $. 
Since $ \left( \bigl(\R_{\geq 0} \bigr)^{\Omega} \cup \bigl(\R_{\leq 0} \bigr)^{\Omega} \right) \backslash \{\bzero\} $ is not path-connected, we can deduce that the dimension of $\Kern(\G)$ must be (at most) one. 

What is left to show, is that every element of $ \bP_* \in \Kern(\G) \cap \ProbStates $ is \emph{strictly positive}, so we have to rule out the case that there are vanishing entries. As we will see, this violates the assumption that the network is irreducible: 

Let us define $S := \{n \in \Omega \,:\, P_*^{(n)} = 0  \} \subseteq \NP(\Omega) $. Then we have for all $s \in S $

%----------------------------------------------------------------------------------
\begin{equation} \label{Eq_StrictPositivityOfElementOfKernel}
\begin{aligned}
0
&=
(\G \, \bP)^{(s)}
=
\sum\limits_{j \in \Omega} P^{(j)} \, \g_{j \to s} - 
\underbrace{
P^{(s)}
}_{0}
\, \g_{s \to j}
\\ & =
\sum\limits_{j \in \Omega \backslash S} 
\underbrace{
P^{(j)}
}_{
> \, 0
}
\, \g_{j \to s}
\end{aligned}
\end{equation}
%----------------------------------------------------------------------------------

This means that for all $s \in S$ and for all $j \in \Omega \backslash S$ we have $ \g_{j \to s} = 0$. This makes the linear subspace associated to $S^C$ an invariant subspace, since the generator can - after a suitable permutation - be brought into a triangular matrix. This implies $\G \, \bP_0 \in S^C$ and hence $\e^{t \, \G} \, \bP_0 \in S^C$ whenever $\bP_0 \in S^C $.

%%%%%%%%%%%%%%%%%%%%%%%%%%%%%%%%%%%%%%%%%%%%%%%%%%%%%%%%%%%%%%%%%%%%%%%%%%%%%%%
\begin{figure}[H]
\begin{center}

\begin{minipage}{0.49\textwidth}
\begin{center}
%----------------------------------------------
\begin{tikzpicture} 
\node[State](S) at (-1,0) [circle,draw] {$S$} ;

\node[State](SC) at (1,0) [circle,draw] {$S^C$} ;
\path[Link, dotted] (S) edge node[above] {} (SC);
\end{tikzpicture}
%----------------------------------------------
\end{center}
\end{minipage}\begin{minipage}{0.49\textwidth}
%----------------------------
\begin{equation*}
\begin{aligned}
\G^n
&=
\begin{pmatrix}
* & * \\
\bzero & *
\end{pmatrix}, \, 
%%%%%%%%%%%%%%%%%%
% \\
%%%%%%%%%%%%%%%%%%
\e^{t \, \G}
=
\begin{pmatrix}
* & * \\
\bzero & *
\end{pmatrix}
%%%%%%%%%%%%%%%%%%
\\
%%%%%%%%%%%%%%%%%%
\bP\left(t \,|\, \bP_0 = \colvec{2}{*}{\bzero}\right)
&=
\underbrace{
\e^{t \, \G}
}_{
\begin{pmatrix}
* & * \\
\bzero & *
\end{pmatrix}
}
\, 
\underbrace{
\bP_0
}_{
\colvec{2}{*}{\bzero}
}
=
\colvec{2}{*}{\bzero}. 
\end{aligned}
\end{equation*}
%----------------------------
\end{minipage}
\label{Illustrating_ReducibilityIfOneOfTheEntriesVanishes} 
\caption{The network $ \Omega = S \dot{\cup} S^C $ can be decomposed into two disjoint parts $S$ and $S^C$, where there is no link from $S^C$ to $S$. This results in a block-triangular form of both the generator $\G$, as well as the solution operator $\e^{t \, \G} $, making the network reducible.   }
\end{center}
\end{figure}
%%%%%%%%%%%%%%%%%%%%%%%%%%%%%%%%%%%%%%%%%%%%%%%%%%%%%%%%%%%%%%%%%%%%%%%%%%%%%%%

\end{proof}

% \Bernd{check: irreducibility of linear chain}
% follows from stong connectivity 
% \Bernd{check: positive recurrence of linear chain}
% follows from the existence of a stationary solution 
% \Bernd{Go through the proof of positivity again and see where it fails for $|\Omega| = \infty $}

%#################################################################################################################################################################################

%#################################################################################################################################################################################

% \subsection{Is the solution defined on the whole real line $\R_{\geq 0} $ ? }

%#################################################################################################################################################################################

%#################################################################################################################################################################################
%#################################################################################################################################################################################
%#################################################################################################################################################################################
\section{The thermodynamic limit for master equations} \label{Chapter_TheThermodynamicLimitOfTheMasterEquations}

%%%%%%%%%%%%%%%%%%%%%%%%%%%%%%%%%%%%%%%%%%%%%%%%%%%%%%%%%%%%%%

Markov chains are usually treated differently, depending on whether they live on a \emph{finite} state space or a \emph{countable infinite} one. This is mostly due to the fact, that the mathematical methods are entirely different: While one can rely on matrices and knowledge from linear algebra in case of a finite state space, one has to deal with - bounded - operators on Banach spaces, if the state space if countable infinite. 

In this section, we ask the question, of whether and how these two realms - finite and countably infinite - can be linked, or to be more precise, whether it is possible, to approximate a countable infinite system with a finite one. 

%%%%%%%%%%%%%%%%%%%%%%%%%%%%%%%%%%%%%%%%%%%%%%%%%%%%%%%%%%%%%%

%#################################################################################################################################################################################
\subsection{How to approximate countable, infinite dimensional probability vectors by finite dimensional ones }

For a given countable, infinite dimensional probability vector $ \bP = \bigl(P^{(1)}, P^{(2)}, P^{(3)}, \dots \bigr) \in \ProbStates $, the problem is how to define a sequence $(\bp_n)_{n \in \Omega} \in \ProbStates_n $ such that $ \bp_n \xlongrightarrow{n \to \infty} \bP $, given that they are not even elements of the same vector space ?   

A first idea could be, to `fill the sequence with zeros', as depicted below.

\begin{minipage}{0.49\textwidth}
%----------------------------------------------
\begin{equation*}
\begin{aligned}
%%%%%%%%%%%%%%%%%
\\
%%%%%%%%%%%%%%%%
\bp_2
&=
\bigl(p_2^{(1)}, p_2^{(2)} \bigr)
%
%%%%%%%%%%%%%%%%%
\\
%%%%%%%%%%%%%%%%%
\bp_3
&=
\bigl(p_3^{(1)}, p_3^{(2)}, p_3^{(3)} \bigr)
%
%%%%%%%%%%%%%%%%%
\\
%%%%%%%%%%%%%%%%%
\bp_4
&=
\bigl(p_4^{(1)}, p_4^{(2)}, p_4^{(3)}, p_4^{(4)} \bigr)
\\
\text{ with } 
\bp_n 
&\in
\ProbStates_n
:=
\bigl(\R_{\geq \, 0}\bigr)^{n}  \cap B_{r=1}^{\| \cdot \|_1}(0)
\end{aligned}
\end{equation*}
%----------------------------------------------
\end{minipage}\begin{minipage}{0.49\textwidth}
%----------------------------------------------
\begin{equation*}
\begin{aligned}
\bP_2
&=
\bp_2 \times \{0\}^{\Omega}
=
\bigl(p_2^{(1)}, p_2^{(2)}, 0, \dots \bigr) 
%
%%%%%%%%%%%%%%%%%
\\
%%%%%%%%%%%%%%%%%
\bP_3
&=
\bp_3 \times \{0\}^{\Omega}
=
\bigl(p_3^{(1)}, p_3^{(2)}, p_3^{(3)}, 0, \dots \bigr)
%
%%%%%%%%%%%%%%%%%
\\
%%%%%%%%%%%%%%%%%
\bP_4
&=
\bp_4 \times \{0\}^{\Omega}
=
\bigl(p_4^{(1)}, p_4^{(2)}, p_4^{(3)}, p_4^{(4)}, 0, \dots \bigr)
\\
\text{ with } 
\bP_n 
&\in
\ProbStates_0
:=
\bigl(\R_{\geq \, 0}\bigr)^{n}  \cap B_{r=1}^{\| \cdot \|_1}(0)
\end{aligned}
\end{equation*}
%----------------------------------------------
\end{minipage}

\vspace*{3mm}

This solves the problem of different vector spaces, but as it turns out, \emph{nets} are better suited than \emph{sequences} for describing the transition from finite- to infinite dimensions. 

Therefore, we define the set of all finite subsets of $\Omega$ as $\Fin(\Omega):= \{F \subseteq \Omega \,:\, |F|< \infty \}$. For some fixed $F \in \Fin(\Omega)$ the probability sequence $ \bPF \in \ProbStates_0 $ could look as follows:

\begin{center}
\begin{tabular}{| c | c | }
% \begin{longtable}
\hline 
$F$ & $ \bPF $ \\
 \hline  
$ \{1, 2, 3\} $ & $ \bigl(P^{(1)}, P^{(2)}, P^{(3)}, 0, 0, , 0, \dots \bigr)  $ \\
\hline   
$ \{1, 4, 5\} $ & $ \bigl(P^{(1)}, 0, 0, P^{(4)}, P^{(5)}, 0, \dots \bigr) $ \\
\hline
% \end{longtable}
\end{tabular}

\end{center}

This definition makes $\bPF $ a probability sequence, that is effectively finite dimensional.

We note that the element-wise definition $ \left( \PF \right)^{(n)} := \frac{P^{(n)}}{\sum\limits_{f \in F} P^{(f)}}$ requires that the denominator does not vanish. 

We therefore define for some fixed probability sequence $\bP \in \ProbStates $ the set of null sets of the corresponding measure, that is 

\begin{MyDef}[\textbf{the set of null sets}] \label{Def_TheSetOfNullSetsOfP} 
\end{MyDef}
%---------------------------------------------------------------------------------------
\begin{equation*}
\begin{aligned}
\NP 
&:=
\{ N \subseteq \Omega \,:\, \bigl(\bP\bigr)(N) := \sum\limits_{n \in N} P^{(n)} =0 \}. 
\end{aligned}
\end{equation*}
%---------------------------------------------------------------------------------------

Then we can define $\bPF$ as follows: 
\begin{MyDef}[\textbf{the probability net $\bPF $ }] \label{Def_bPA} 
\end{MyDef}

Then for any $F \in \Fin(\Omega) \backslash\,  \NP $ we can define $\bPF $ entry-wise as 

%---------------------------------------------------------------------------------------
\begin{equation}\label{ProbSeq_FinSubNet}\tag{ProbSeq-FinSubNet}
\begin{aligned}
\left( \PF \right)^{(n)} 
&:=
\frac{P^{(n)}}{\sum\limits_{f \in F} P^{(f)}}
\end{aligned}
\end{equation}
%---------------------------------------------------------------------------------------

However, since the solution $\bP^{F}\bigl( t \,|\, \bPF_0 \bigr) = \e^{t \,\GF} \, \bPF_0 $ depends on the generator, we need a good definition for $\GF $.

%#################################################################################################################################################################################

% section "Definition of finite subnetworks and their corresponding generators"
%#################################################################################################################################################################################
\subsection{Definition of finite subnetworks and their corresponding generators}\label{Section_DefinitionOfFiniteSubnetworksAndTheirCorrespondingGenerators}

Following the arguments made above, it makes sense to define $ \GF \in \R^{\Omega \times \Omega } $ as a double sequence, where all but finitely many entries vanish: 

% Define the set of all finite subsets of $\Omega$ as $\Fin(\Omega):= \{F \subseteq \Omega \,:\, |F|< \infty \}$ and let $F \in \Fin(\Omega)$.  Given a generator $ \G \in  \R^{\N \times \N}$, how could one define the generator $\GF $ of a finite system ? 
% First, it turns out that it makes more sense to define $ \GF $ on an infinite state space, where all entries outside of $B$ vanish, that is 

%---------------------------------------------------------------------------------------
\begin{equation}
\begin{aligned}
\GF \in \R^{\Omega \times \Omega}\text{, with  } \left(\GF\right)^{(i,\,j)} \neq 0 
\Longrightarrow
i, j \in F. 
\end{aligned}
\end{equation}
%---------------------------------------------------------------------------------------

This ensures that the states of the original network (which was countable, infinite dimensional) are always aligned with the component of the finite subnetwork. 

The first idea on how to define $\GF$ would be to introduce a sharp cutoff, namely 
%---------------------------------------------------------------------------------------
\begin{equation}
\begin{aligned}
\left(\widetilde{\GF}\right)^{(i,\,j)}
&=
\begin{cases}
(\G)^{(i,\,j)} &\text{ , if } i, j \in F  \\
0 &\text{ , else}. 
\end{cases}
\end{aligned}
\end{equation}
%---------------------------------------------------------------------------------------

However, this could make $ \widetilde{\GF} $ loose the property of being the generator of a continuous-time Markov chain (in particular, the column-sum would not necessarily equal zero). In the following, we need the definition of \emph{subnetworks}:

\begin{MyDef}[\textbf{Subnetworks}] \label{Def_SubNetworks} 
\end{MyDef}

We call a network $ \System_F = (\Omega_F, \Edge_F) $ a \emph{subnetwork} of $ \System = (\Omega, \Edge) $ if it contains \emph{some} of its states and \emph{all} the original links between those states, that is: 

%---------------------------------------------------------------------------------------
\begin{equation}
\begin{aligned}
\System_F \subseteq \System 
\iff 
\Omega_F & \subseteq \Omega \text{  and } \\
\Edge_F  & := \{(i,j) \in \Edge \,:\, i,j \in \Omega_F\}. 
\end{aligned}
\end{equation}
%---------------------------------------------------------------------------------------

%%%%%%%%%%%%%%%%%%%%%%%%%%%%%%%%%%%%%%%%%%%%%%%%%%%%%%%%%%%%%%%%%%%%%%%%%%%%%%%
\begin{figure}[H]
\begin{center}
\begin{subfigure}{0.33\textwidth}
   \subcaption{}
      \label{Illustratin_SubNetworks_a}
% Example_Network
%----------------------------------------------
\begin{tikzpicture}[scale=3]
\node[State](1) at (0,0) [circle,draw] {1} ;
\node[State](2) at (1,0) [circle,draw] {2} ;
\node[State](3) at (0.5,-0.866) [circle,draw] {3} ;
\path[Link,bend left] (1) edge node[above] {$\g_{1 \to 2}$} (2);
\path[Link,bend left] (2) edge node[below] {$\g_{2 \to 1}$} (1);
\path[Link] (1) edge node[left,below,sloped]  {$\g_{1 \to 3}$} (3);
\path[Link] (3) edge node[left,below,sloped]  {$\g_{3 \to 2}$} (2);
\end{tikzpicture}
%----------------------------------------------
\end{subfigure}\begin{subfigure}{0.33\textwidth}
   \subcaption{}
   \label{Illustratin_SubNetworks_b}
% Example_SubNetwork
%----------------------------------------------
\begin{tikzpicture}[scale=3]
\node[State](1) at (0,0) [circle,draw] {1} ;
\node[State](2) at (1,0) [circle,draw] {2} ;
\path[Link,bend left] (1) edge node[above] {$\g_{1 \to 2}$} (2);
\path[Link,bend left] (2) edge node[below] {$\g_{2 \to 1}$} (1);
\end{tikzpicture}
%----------------------------------------------
\end{subfigure} \begin{subfigure}{0.33\textwidth}
   \subcaption{}
   \label{Illustratin_SubNetworks_c}
% Example_NotASubNetwork
%----------------------------------------------
\begin{tikzpicture}[scale=3]
\node[State](1) at (0,0) [circle,draw] {1} ;
\node[State](2) at (1,0) [circle,draw] {2} ;
\node[State](3) at (0.5,-0.866) [circle,draw] {3} ;
\path[Link] (1) edge node[left,below,sloped]  {$\g_{1 \to 3}$} (3);
\path[Link] (3) edge node[left,below,sloped]  {$\g_{3 \to 2}$} (2);
\end{tikzpicture}
%----------------------------------------------
\end{subfigure}
\caption{Illustrating the concept of subnetworks: With Figure~\ref{Illustratin_SubNetworks_a} being the original network, Figure~\ref{Illustratin_SubNetworks_b} is a subnetwork, according to definition \ref{Def_SubNetworks}, while Figure~\ref{Illustratin_SubNetworks_c} is not.  }
\label{Illustratin_SubNetworks} 
\end{center}
\end{figure}
%%%%%%%%%%%%%%%%%%%%%%%%%%%%%%%%%%%%%%%%%%%%%%%%%%%%%%%%%%%%%%%%%%%%%%%%%%%%%%%

For a finite number of state $F \subseteq \Omega $, we define $\GF$ as the generator of the corresponding \emph{subnetwork} $ \System_F$, that is :

%---------------------------------------------------------------------------------------
\begin{equation} \label{GenFinSubNet}\tag{GenFinSubNet}
\begin{aligned}
\left(\GF\right)^{(i, \, j)}
&:=
\left(\G^{[\System_F]}\right)^{(i, \, j)}
=
\begin{cases}
(\G)^{(i, \, j)} &\text{ , if } i\neq j, \, i, j \in F  \\
- \sum\limits_{f \in F} \G^{(f, j)} &\text{  , if } i=j, \, i, j \in F  \\
0 &\text{ , else} \\
\end{cases} \\
%%%%%%%%%%%%%%
%
&=
(1 - \delta_{i, j}) \, \G^{(i, j)} \, 1_F(i) \, 1_F(j) - \delta_{i, j} \, \sum\limits_{f \in F} \G^{(f, j)}, 
\end{aligned}
\end{equation}
%---------------------------------------------------------------------------------------

which ensure that the generator property is being preserved (compare equation \eqref{Eq_MasterEq}):

%---------------------------------------------------------------------------------------
\begin{equation} \label{Eq_DifferentKindsOfTruncationForTheGenerator}
\begin{aligned}
\G 
&=
\begin{pmatrix}
-\g_{1 \to 2 }-\g_{1 \to 3 }& \g_{2 \to 1} & 0 \\
\g_{1 \to 2 } & -\g_{2 \to 1} & \g_{3 \to2} \\
\g_{1 \to 3} & 0 & -\g_{3 \to2}
\end{pmatrix} &\text{ $\dots$ generator of the network in figure  \ref{Illustratin_SubNetworks_a}} ,  \\
%%%%%%%%%%%%%%%%%%%%%%%%
\widetilde{\G^{[\{1, 2\}]}}
&=
\begin{pmatrix}
-\g_{1 \to 2 }-\g_{1 \to 3 }& \g_{2 \to 1} & 0 \\
\g_{1 \to 2 } & -\g_{2 \to 1} & 0 \\
0 & 0 & 0
\end{pmatrix} &\text{   $\dots$ is NOT a generator matrix } \\ 
%%%%%%%%%%%%%%%%%%%%%%%%
\G^{[\{1, 2\}]}
&=
\begin{pmatrix}
-\g_{1 \to 2 }& \g_{2 \to 1} & 0 \\
\g_{1 \to 2 } & -\g_{2 \to 1} & 0 \\
0 & 0 & 0
\end{pmatrix} &\text{   $\dots$ is a generator matrix } 
\end{aligned}
\end{equation}
%---------------------------------------------------------------------------------------

% the initial state is defined according to equation \eqref{ProbSeq_FinSubNet}. 

Given an initial probability sequence $\bP_0 \in \ProbStates $ and a finite subset $ F \in \Fin(\Omega) $ with $ \bP_0(F)>0$, we define the initial state $\bPF_0$ - like in equation \eqref{ProbSeq_FinSubNet} - as follows: 

%---------------------------------------------------------------------------------------
\begin{equation} \label{InitialStateFinSubNet} \tag{InitialStateFinSubNet}
\begin{aligned}
\bigl(
\bPF_0 
\bigr)^{(i)}
&:=
\begin{cases}
    \frac{P_0^{(i)}}{\sum\limits_{f \in F} P^{(f)}_0 } &\text{ , if  } i\in F \\
    0, &\text{ else}. 
\end{cases}
\end{aligned}
\end{equation}
%---------------------------------------------------------------------------------------

The probability sequence $\bP^F(t\,|\, \bPF_0 ) \in \left(\R_{\geq \, 0} \right)^{\Omega} $ for $ F \in \Fin(\Omega) \backslash \, \NPzero $ is then defined as the solution of the following initial value problem: 

\begin{minipage}{0.35\textwidth}
%---------------------------------------------------------------------------------------
\begin{equation*} %\label{MEqFinSubNet}\tag{MEqFinSubNet}
\begin{aligned}
\frac{d}{dt} \bP^F(t) &= \GF \, \bP^F(t) \\
 \bP^F(t=0) &= \bPF_0. 
\end{aligned}
\end{equation*}
%---------------------------------------------------------------------------------------
\end{minipage} $\iff$
\begin{minipage}{0.35\textwidth}
%---------------------------------------------------------------------------------------
\begin{equation} \label{MEqFinSubNet}\tag{MEqFinSubNet}
\begin{aligned}
\bP^F\left(t\,\bigl|\, \bPF_0 \right)  
=
\e^{t \, \GF } \bPF_0  
\end{aligned}
\end{equation}
%---------------------------------------------------------------------------------------
\end{minipage}

Since only finitely many entries are unequal to zero, the problem is effectively finite dimensional.

The following table gives some examples of truncations of generators and probability vectors of the original network, given in figure \ref{Example_NetworkWithGenerator}.

%%%%%%%%%%%%%%%%%%%%%%%%%%%%%%%%%%%%%%%%%%%%%%%%%%%%%%%%%%%%%%%%%%%%%%%%%%%%%%%
% \begin{minipage}{0.49\textwidth}
\begin{figure}[H]
\begin{center}
\begin{subfigure}{0.49\textwidth}
% Example_Network
%----------------------------------------------
\begin{tikzpicture}[scale=3]
\node[State](1) at (0,0) [circle,draw] {1} ;
\node[State](2) at (1,0) [circle,draw] {2} ;
\node[State](3) at (0.5,-0.866) [circle,draw] {3} ;
\path[Link,bend left] (1) edge node[above] {$\g_{1 \to 2}$} (2);
\path[Link,bend left] (2) edge node[below] {$\g_{2 \to 1}$} (1);
\path[Link] (1) edge node[left,below,sloped]  {$\g_{1 \to 3}$} (3);
\path[Link] (3) edge node[left,below,sloped]  {$\g_{3 \to 2}$} (2);
\end{tikzpicture}
%---------------------------------------------- 
\end{subfigure} \hspace*{-40mm} \begin{subfigure}{0.49\textwidth}
%----------------------------------------------
\begin{equation*}
\begin{aligned}
\G = 
\begin{pmatrix}
-\g_{1 \to 2 }-\g_{1 \to 3 }& \g_{2 \to 1} & 0 \\
\g_{1 \to 2 } & -\g_{2 \to 1} & \g_{3 \to2} \\
\g_{1 \to 3} & 0 & -\g_{3 \to2}
\end{pmatrix}
\end{aligned}
\end{equation*}
%----------------------------------------------
\end{subfigure}
\caption{Example of a network with the corresponding generator $\G$ }
\label{Example_NetworkWithGenerator} 
\end{center}
\end{figure}
%%%%%%%%%%%%%%%%%%%%%%%%%%%%%%%%%%%%%%%%%%%%%%%%%%%%%%%%%%%%%%%%%%%%%%%%%%%%%%%

%%%%%%%%%%%%%%%%%%%%%%%%%%%%%%%%%%%%%%%%%%%%%%%%%%%%%%%%%%%%%%%%%%%%%%%%%%%%%%%%%%%%
\begin{table}[h!]
\centering
 \begin{tabular}{| c | c | c | c | c |} 
 \hline
$F$ & $\System_F$ & $\GF$ & $\bPF_0$ & $\bP^F_\infty\left( \bPF_0 \right)$  \\ %[0.5ex] 
 \hline
 $\{1, 2\}$ & 
%----------------------------------------------
\begin{tikzpicture}[scale=1.5]
\node[State](1) at (0,0) [circle,draw] {1} ;
\node[State](2) at (1,0) [circle,draw] {2} ;
\path[Link,bend left] (1) edge node[above] {$\g_{1 \to 2}$} (2);
\path[Link,bend left] (2) edge node[below] {$\g_{2 \to 1}$} (1);
\end{tikzpicture}
%---------------------------------------------- 
 &
 $\begin{pmatrix}
-\g_{1 \to 2 }& \g_{2 \to 1} & 0 \\
\g_{1 \to 2 } & -\g_{2 \to 1} & 0 \\
0 & 0 & 0
\end{pmatrix} $
& 
$\frac{1}{P_0^{(1)}+P_0^{(2)}} \colvec{3}{P_0^{(1)}}{P_0^{(2)}}{0}$ 
&
$\frac{1}{\g_{2 \to 1} + \g_{1 \to 2}} \colvec{3}{\g_{2 \to 1}}{\g_{1 \to 2}}{0}$ 
\\ 
\hline
%%%%%%%%%%%%%%%%%%%%%%%%%%%%%%
$\{1, 3\}$ & 
%----------------------------------------------
\begin{tikzpicture}[scale=1.5]
\node[State](1) at (0,0) [circle,draw] {1} ;
\node[State](3) at (0.5,-0.866) [circle,draw] {3} ;
\path[Link] (1) edge node[left,below,sloped]  {$\g_{1 \to 3}$} (3);
\end{tikzpicture}
%---------------------------------------------- 
 &
 $\begin{pmatrix}
-\g_{1 \to 3 }& 0 & 0 \\
0 & 0 & 0 \\
\g_{1 \to 3 } & 0 & 0
\end{pmatrix} $
& 
$\frac{1}{P_0^{(1)}+P_0^{(3)}} \colvec{3}{P_0^{(1)}}{0}{P_0^{(3)}}$ 
&
$ \colvec{3}{0}{0}{1}$ 
\\
\hline
\\ 
 %%%%%%%%%%%%%%%%%%%%%%%%%%%%%%
 $\{2, 3\}$ & 
%----------------------------------------------
\begin{tikzpicture}[scale=1.5]
\node[State](2) at (1,0) [circle,draw] {2} ;
\node[State](3) at (0.5,-0.866) [circle,draw] {3} ;
\path[Link] (3) edge node[left,below,sloped]  {$\g_{3 \to 2}$} (2);
\end{tikzpicture}
%---------------------------------------------- 
 &
 $\begin{pmatrix}
0 & 0 & 0 \\
0 & 0 & \g_{3 \to 2} \\
0 & 0 & -\g_{3 \to 2}
\end{pmatrix} $
& 
$\frac{1}{P_0^{(2)}+P_0^{(3)}} \colvec{3}{0}{P_0^{(2)}}{P_0^{(3)}}$ 
&
$ \colvec{3}{0}{1}{0}$ 
\\
\hline 
\end{tabular}
\end{table}
%%%%%%%%%%%%%%%%%%%%%%%%%%%%%%%%%%%%%%%%%%%%%%%%%%%%%%%%%%%%%%%%%%%%%%%%%%%%%%%%%%%%

%#################################################################################################################################################################################
% end of % section "Definition of finite subnetworks and their corresponding generators"

% section "Approximating the system with finite subnetworks"
%####################################################################################################################
\subsection{Approximating the system with finite subnetworks}

Two finite subsets $ A, B \in \Fin(\Omega) $ with $A \cancel{\subseteq} B$ and $ B \cancel{\subseteq} A $ cannot directly be compared, due to the fact that the set of all finite subsets has no \emph{total order} but only a \emph{half order}, the set of probability sequences $ \left(\bPF \right)_{F\in \Fin(\Omega)} $ is no sequence, but a \emph{net} in the topological sense and the thermodynamic limit is a convergence of nets.

We say that the probability vector $ \bP^F \left(t \,\bigl|\, \bPF_0\right) $ converges in the thermodynamic limit to the probability vector probability vector $\bP^{}\left(t \,\bigl|\, \bP_0^{}\right) $ of the countable, infinite dimensional system $\Omega $, if for all $\epsilon $ larger than zero, there exists a finite subset $F_\epsilon \in \Fin(\Omega) $ such that for all finite subsets $F \in \Fin(\Omega)$ larger than $F_\epsilon$, the corresponding probability vector $ \bP^F\left(t \,\bigl|\, \bPF_0\right)$ for the finite subsystem differs from the probability vector of the whole system only by a number $\epsilon$ with respect to the $1$-norm:

%---------------------------------------------------------------------------------------
\begin{equation} \label{Eq_ThermodLimit} \tag{ThermodLimit}
\begin{aligned}
&\lim\limits_{|F| \to \infty }
\bP^F\left(t \,\bigl|\, \bPF_0\right) 
:=
\lim\limits_{F \in \Fin({\Omega})}
\bP^F\left(t \,\bigl|\, \bPF_0\right) 
=
\bP\left(t \,\bigl|\, \bP_0^{}\right) \overset{\text{def}}\iff \\
%%%%%%%%%%%%%%%%%%%%%%%%%%%%%%%%%%%%%%%%%%%%%%%%
&\forall \epsilon>0\;\; \exists \,F_\epsilon \in \Fin({\Omega}) \,:\, \forall F \in \Fin({\Omega}), \, F \supseteq F_\epsilon \;\;\; \left\| \bP^{}\left(t \,\bigl|\, \bP_0^{}\right) - \bP^F\left(t \,\bigl|\, \bPF_0\right) \right\|_1 < \epsilon
\end{aligned}
\end{equation}
%---------------------------------------------------------------------------------------

For better readability we write ``$\lim\limits_{|F| \to \infty }$'' instead of ``$ \lim\limits_{F \in \Fin({\Omega})} $'', but emphasize that is a convergence of nets.

%%%%%%%%%%%%%%%%%%%%%%%%%%%%%%%%%%%%%%%%%%%%%%%%
%%%%%%%%%%%%%%%%%%%%%%%%%%%%%%%%%%%%%%%%%%%%%%%%
%%%%%%%%%%%%%%%%%%%%%%%%%%%%%%%%%%%%%%%%%%%%%%%%
\begin{thm}[Thermodynamic Limit for master equations]\label{ThermodynamicLimitForMasterEquations} $ $ \\

For bounded generators $(\|\G\|_{1, 1}^\text{(op)} < \infty)$ the solution of the master equation \eqref{Eq_MasterEq} for a countable, infinite dimensional system $ \bP\left(t \,\bigl|\, \bP_0^{}\right) $ at a given time $t>0$ can be approximated by finite subsystems in the thermodynamic limit of definition \eqref{Eq_ThermodLimit}. 
\end{thm}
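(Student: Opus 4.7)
The plan is to invoke the triangle inequality
$$\| \bP(t\,|\, \bP_0) - \bP^F(t\,|\, \bP_0^{[F]}) \|_1 \leq \|e^{t\G}(\bP_0 - \bP_0^{[F]})\|_1 + \|(e^{t\G} - e^{t\GF})\bP_0^{[F]}\|_1$$
and then show that each summand vanishes in the net limit over $\Fin(\Omega)$. The first summand is bounded by $\|\bP_0 - \bP_0^{[F]}\|_1$, because $e^{t\G}$ is a contraction on $l^1(\Omega)$ (it preserves non-negativity and column sums, so $\|e^{t\G}\bx\|_1 \leq \|e^{t\G}|\bx|\|_1 = \|\bx\|_1$). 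A direct computation from \eqref{InitialStateFinSubNet} gives $\|\bP_0 - \bP_0^{[F]}\|_1 = 2\bigl(1 - \sum_{f \in F}P_0^{(f)}\bigr)$, which tends to zero since $\bP_0 \in l^1(\Omega)$.

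For the second summand I would use a Duhamel-type identity. Differentiating $s \mapsto e^{(t-s)\GF}e^{s\G}$ yields
$$e^{t\G} - e^{t\GF} = \int_0^t e^{(t-s)\GF}(\G - \GF)\, e^{s\G}\, \d s,$$
and applying this to $\bP_0^{[F]}$ together with the $l^1$-contractivity of $e^{(t-s)\GF}$ gives
$$\|(e^{t\G} - e^{t\GF})\bP_0^{[F]}\|_1 \;\leq\; \int_0^t \|(\G - \GF)\, e^{s\G}\bP_0^{[F]}\|_1 \, \d s.$$
Replacing $\bP_0^{[F]}$ by $\bP_0$ inside the integrand adds an error bounded by $2t\,\|\G\|_1^{(\text{op})}\,\|\bP_0^{[F]} - \bP_0\|_1$, which already vanishes by the first step. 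So it suffices to prove $\int_0^t \|(\G - \GF)\bw(s)\|_1 \, \d s \to 0$ for the $F$-independent trajectory $\bw(s) := e^{s\G}\bP_0$.

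An entry-wise calculation shows that for a fixed $\bw \in l^1(\Omega)$ the norm $\|(\G - \GF)\bw\|_1$ decomposes into pieces such as $\sum_{j\in F} w_j \sum_{k\notin F} \g_{j \to k}$ and $\sum_{j \notin F} \g_{j \to}\, w_j$. Each piece tends to zero as $F$ exhausts $\Omega$: the inner tail $\sum_{k \notin F}\g_{j \to k}$ vanishes pointwise in $j$ because $\g_{j \to} < \infty$, and the uniform bound $\sup_j \g_{j\to} \leq \tfrac{1}{2}\|\G\|_1^{(\text{op})}$ (compare Lemma~\ref{Lemma_FiniteProbabilityFlow}) furnishes a summable majorant against $\bw \in l^1(\Omega)$. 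To lift this pointwise-in-$\bw$ convergence to the required integral estimate, I would use that the trajectory $K := \{\bw(s) : s \in [0,t]\}$ is the continuous image of a compact interval, hence compact in $l^1(\Omega)$. A finite $\epsilon$-net on $K$ combined with the uniform operator bound $\|\G - \GF\|_1^{(\text{op})} \leq 2\|\G\|_1^{(\text{op})}$ then upgrades pointwise convergence to uniform convergence on $K$, from which the integral estimate follows.

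The hard part is precisely this last step. Since the thermodynamic limit is a convergence of \emph{nets} over $\Fin(\Omega)$ rather than a sequential one, the standard dominated or bounded convergence theorems do not apply verbatim to the integral $\int_0^t \|(\G - \GF)\bw(s)\|_1\, \d s$. The compactness of the trajectory $K$ in $l^1(\Omega)$ is the key structural input that reduces the problem to finitely many base points via an $\epsilon$-net, where the uniform operator bound on $\G - \GF$ closes the argument.
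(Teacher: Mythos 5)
Your proposal is correct, but it follows a genuinely different route from the paper. The paper never compares a finite subsystem with the infinite system directly: it writes the integral form of the master equation for two finite subsystems $F_1,F_2\in\Fin(\Omega)$, inserts the mixed term $\G^{[F_1]}\bP^{F_2}$, and applies the integral version of Gr\"onwall's lemma to obtain the explicit bound $\bigl\|\bP^{F_1}(t)-\bP^{F_2}(t)\bigr\|_1\le\bigl(\|\bP_0^{[F_1]}-\bP_0^{[F_2]}\|_1+t\,\|\G^{[F_1]}-\G^{[F_2]}\|_{1,1}^{\text{(op)}}\bigr)\e^{t\|\G\|_{1,1}^{\text{(op)}}}$, i.e.\ it shows the net is Cauchy, relying crucially on the convergence $\|\G-\GF\|_{1,1}^{\text{(op)}}\to 0$ (lemma \ref{Lemma_ApproximatingTheFullGeneratorWithGeneratorsOfFiniteSubnetworks}), which is where the hypothesis $\|\G\|_{1,1}^{\text{(op)}}<\infty$ enters; identifying the limit with $\bP(t\,|\,\bP_0)$ then requires the same estimate once more with $\G^{[F_2]}$ replaced by $\G$, a step the paper leaves implicit but which your set-up handles automatically. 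You instead compare directly with the infinite-dimensional solution, split off the initial-condition error using contractivity of $\e^{t\G}$, and treat the generator error through the Duhamel identity $\e^{t\G}-\e^{t\GF}=\int_0^t \e^{(t-s)\GF}(\G-\GF)\e^{s\G}\,\d s$, reducing everything to strong (pointwise-in-$\bw$) convergence $\GF\to\G$, upgraded to uniform convergence on the compact trajectory $\{\e^{s\G}\bP_0: s\in[0,t]\}$ by an $\epsilon$-net together with the uniform bound $\|\G-\GF\|_1^{\text{(op)}}\le 2\|\G\|_1^{\text{(op)}}$ (your tail decomposition and the directedness of $\Fin(\Omega)$ make the finitely-many-base-points argument close correctly). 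What each approach buys: under the theorem's stated hypothesis $\|\G\|_{1,1}^{\text{(op)}}<\infty$ your compactness machinery is actually superfluous, since $\|\G-\GF\|_{1,1}^{\text{(op)}}\to 0$ already gives convergence uniformly along the trajectory and the paper's Gr\"onwall route is shorter and yields an explicit rate $\epsilon(1+t)\e^{t\|\G\|_{1,1}^{\text{(op)}}}$; conversely, your argument avoids Gr\"onwall and the exponential factor entirely (contractivity gives bounds linear in $t$) and, more importantly, goes through assuming only $\|\G\|_1^{\text{(op)}}<\infty$, i.e.\ $\sup_n\g_{n\to}<\infty$, which is strictly weaker than the paper's hypothesis and even than the relaxation $\g_{n\to}\to 0$ mentioned in the remark following the theorem.
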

%%%%%%%%%%%%%%%%%%%%%%%%%%%%%%%%%%%%%%%%%%%%%%%%
%%%%%%%%%%%%%%%%%%%%%%%%%%%%%%%%%%%%%%%%%%%%%%%%
%%%%%%%%%%%%%%%%%%%%%%%%%%%%%%%%%%%%%%%%%%%%%%%%

%%%%%%%%%%%%%%%%%%%%%%%%%%%%%%%%%%%%%%%%%%%%%%%%
%%%%%%%%%%%%%%%%%%%%%%%%%%%%%%%%%%%%%%%%%%%%%%%%
%%%%%%%%%%%%%%%%%%%%%%%%%%%%%%%%%%%%%%%%%%%%%%%%
\begin{proof}

% %---------------------------------------------------------------------------------------
% \begin{equation}
% \begin{aligned}
% \end{aligned}
% \end{equation}
% %---------------------------------------------------------------------------------------

We have \\

\begin{minipage}{0.35\textwidth}
%---------------------------------------------------------------------------------------
\begin{equation*} 
\begin{aligned}
\frac{d}{dt} \bP^F(t) &= \GF \, \bP^F(t) \\
 \bP^F(t=0) &= \bPF_0. 
\end{aligned}
\end{equation*}
%---------------------------------------------------------------------------------------
\end{minipage} $\iff$
\begin{minipage}{0.35\textwidth}
%---------------------------------------------------------------------------------------
\begin{equation*} 
\begin{aligned}
\bP^F\left(t\,\bigl|\, \bPF_0 \right)  
=
\bPF_0 + \int_0^t
\underbrace{
 \frac{d}{d \tau} \, 
\bP^F\left(\tau \,\bigl|\, \bPF_0 \right)
}_{
\GF\, \bP^F\left(\tau \,\bigl|\, \bPF_0 \right)
}
\d \tau  
\end{aligned}
\end{equation*}
%---------------------------------------------------------------------------------------
\end{minipage}

For two, large enough, finite subnetworks $F_1, \,F_2 \in \Fin(\Omega)$, we have: 

%---------------------------------------------------------------------------------------
\begin{equation*}
\begin{aligned}
u(t) 
&:=
\Bigl\| 
\bP^{F_1} \left(t \,\bigl|\, \bP_0^{[F_1]} \right)
-
\bP^{F_2}\left(t \,\bigl|\, \bP_0^{[F_2]} \right)
\Bigr\|_1
\\
%%%%%%%%%%%%%%%%%%%%%%%%%%%%
%
&=
\left\|
\bP_0^{[F_1]} - \bP_0^{[F_2]}
+ \bigintsss_0^t \d \tau
\left(
\G^{[F_1]} \, \bP^{F_1} \left(\tau \,\bigl|\, \bP_0^{[F_1]}\right)
\right) 
- 
\left(
\G^{[F_2]} \, \bP^{F_2}\left(\tau \,\bigl|\, \bP_0^{[F_2]}\right)
\right)
\pm \left( \G^{[F_1]} \, \bP^{F_2} \left(\tau \,\bigl|\, \bP_0^{[F_2]}\right) \right) 
\right \|_1 \leq \\
%%%%%%%%%%%%%%%%%%%%%%%%%%%%
%
&\leq 
\left\|
\bP_0^{[F_1]} - \bP_0^{[F_2]}
\right \|_1
+
\bigintsss_0^t 
\left(
\underbrace{
\left\|
\left(\G^{[F_1]} - \G^{[F_2]}\right)
\bP^{F_2}\left(\tau \,\bigl|\, \bP_0^{[F_2]}\right)
\right \|_1
}_{
\leq \, 
\left\|
\G^{[F_1]} - \G^{[F_2]}
\right \|_{1, 1}^\text{(op)}
\, 
\left\|
\bP^{F_2}\left(\tau \,\bigl|\, \bP_0^{[F_2]}\right)
\right \|_1
}
\right) \d \tau 
\\
%%%%%%%%%%%%%%%%%%%%%%%%%%%%
&\hspace*{35mm}+
\bigintsss_0^t 
\left(
\underbrace{
\left \|
\G^{[F_1]} \, \left(\bP^{F_1} \left(\tau \,\bigl|\, \bP_0^{[F_1]}\right) - \bP^{F_2} \left(\tau \,\bigl|\, \bP_0^{[F_2]}\right)\right) 
\right \|_1 
}_{
\leq \, 
\left\| \G^{[F_1]}  \right\|_{1, 1}^\text{(op)} \, \left\| \bP^{F_1} \left(\tau \,\bigl|\, \bP_0^{[F_1]}\right) - \bP^{F_2} \left(\tau \,\bigl|\, \bP_0^{[F_2]}\right)  \right\|_1
} 
\right) \d \tau
\leq 
%%%%%%%%%%%%%%%%%%%%%%%%%%%%
\\ 
%%%%%%%%%%%%%%%%%%%%%%%%%%%%
%
&  \overset{\text{lemma }\ref{Lemma_Comparing_1NormOfMatrices_with_Vector1NormOfMatrices}} {\leq }  
\left\|
\bP_0^{[F_1]} - \bP_0^{[F_2]}
\right \|_1
+
\bigintsss_0^t  \left\|
\G^{[F_1]} - \G^{[F_2]}
\right \|_{1, 1}^\text{(op)}
\, 
\underbrace{
\left\|
\bP^{F_2}\left(\tau \,\bigl|\, \bP_0^{[F_2]}\right)
\right \|_1
}_{1} \d \tau
\\
%%%%%%%%%%%%%%%%%%%%%%%
&\hspace*{40mm}+
\bigintsss_0^t  
\left\| \G^{[F_1]}  \right\|_{1, 1}^\text{(op)} \, \left\| \bP^{F_1}\left(\tau \,\bigl|\, \bP_0^{[F_1]}\right) - \bP^{F_2}\left(\tau \,\bigl|\, \bP_0^{[F_2]}\right) \right\|_1 \d \tau 
\\
%%%%%%%%%%%%%%%%%%%%%%%
%
&=
\underbrace{
\left\|
\bP_0^{[F_1]} - \bP_0^{[F_2]}
\right \|_1
+
t \, 
\left\|
\G^{[F_1]} - \G^{[F_2]}
\right \|_{1, 1}^\text{(op)}
}_{
=: \alpha(t) 
}
+
\bigintsss_0^t  
\underbrace{
\left\| \G^{[F_1]}  \right\|_{1, 1}^\text{(op)}
}_{
\beta
}
\,
\underbrace{
\left\| \bP^{F_1}\left(\tau \,\bigl|\, \bP_0^{[F_1]}\right) - \bP^{F_2}\left(\tau \,\bigl|\, \bP_0^{[F_2]}\right) \right\|_1 \d \tau 
}_{
u(\tau)
} 
\\
&\xLongrightarrow[\text{Groenwall}]{\text{lemma \ref{Lemma_Groenwall_IntegralVersion}}} \\ 
&\Bigl\| 
\bP^{F_1} \left(t \,\bigl|\, \bP_0^{[F_1]} \right)
-
\bP^{F_2}\left(t \,\bigl|\, \bP_0^{[F_2]} \right)
\Bigr\|_1
=
u(t)
\leq 
\alpha \; \e^{\int_0^t \beta } =   \\ 
%%%%%%%%%%%%%%%%%%%%%%%
%
&=
\Bigl( \,
\underbrace{
\left\|
\bP_0^{[F_1]} - \bP_0^{[F_2]}
\right \|_1
}_{
\,< \, \epsilon
}
+
t \, 
\underbrace{
\left\|
\G^{[F_1]} - \G^{[F_2]}
\right \|_{1, 1}^\text{(op)} \, 
}_{
\,< \, \epsilon
}
\Bigr) \,\cdot \, 
\underbrace{
\e^{t \, \left\| \G^{[F_1]}  \right\|_{1, 1}^\text{(op)} } 
}_{ 
\leq \, 
\e^{t \, \left\| \G \right\|_{1, 1}^\text{(op)} } 
} \\ 
%%%%%%%%%%%%%%%%%%%%%%%
%
& \leq \, 
\epsilon \, (1+t) \, \e^{t \, \left\| \G \right\|_{1, 1}^\text{(op)} } . 
\end{aligned}
\end{equation*}
%---------------------------------------------------------------------------------------

We note that this convergence is not uniform in $t$. 

\end{proof}
%%%%%%%%%%%%%%%%%%%%%%%%%%%%%%%%%%%%%%%%%%%%%%%%
%%%%%%%%%%%%%%%%%%%%%%%%%%%%%%%%%%%%%%%%%%%%%%%%
%%%%%%%%%%%%%%%%%%%%%%%%%%%%%%%%%%%%%%%%%%%%%%%%

\begin{rem}
We note that, while $\| \G \|_{1,1}^\text{(op)} < \infty $ is a sufficient criterion, it suffices that $\g_{n \to} \xlongrightarrow{n \to \infty } 0$. 
\end{rem}

%####################################################################################################################
% end of section "Approximating the system with finite subnetworks"

% section "The appropriate norm for the generator"
%#################################################################################################################################################################################
\subsection{The appropriate norm for the generator }

A suitable norm $\|\cdot \|_\text{gen} $ for the generator matrix $\G$ must satisfy two requirements: 
\begin{itemize}
\item[i)] The norm ``$\|\cdot \|_\text{gen} $''  must be compatible with the $1$-norm of $\R^{\Omega} $, since that property was used in theorem \ref{ThermodynamicLimitForMasterEquations}. 
\item[ii) ] The generators of finite subnetworks $\GF$ must be able to approximate the full generator $\G$ with respect to $\|\cdot \|_\text{gen} $.  
\end{itemize}

Since the $\| \cdot \|_{1, 1}^\text{(op)}$-norm fulfills both properties (compare lemma \ref{Lemma_Comparing_1NormOfMatrices_with_Vector1NormOfMatrices} and lemma \ref{Lemma_ApproximatingTheFullGeneratorWithGeneratorsOfFiniteSubnetworks} ), we additionally require in the definition of a master equation \eqref{Eq_MasterEq} that 

%--------------------------------------------------------------------------------------------
\begin{equation}
\begin{aligned}
\infty
&>
\| \G \|_{1, 1}^\text{(op)}
:=
\sum\limits_{i, j \in \Omega} |\G^{(i,j)}|
=
2 \, \sum\limits_{j \in \Omega} \g_{j \to }
=
2 \, \| (\g_{j \to })_{j \in \Omega} \|_1. 
\end{aligned}
\end{equation}
%--------------------------------------------------------------------------------------------

This makes $ \G $ both a continuous operator on $l^1(\Omega)$, as well as the generator of the uniformly continuous semigroup  $ \left( \e^{t \, \G} \right)_{t \geq 0} $, which is defined via its power series:  

%--------------------------------------------------------------------------------------------
\begin{equation}\label{Eq_FormOfTheSemiGroupOfTheMasterEquation}
\e^{t \, \G} 
=
\sum\limits_{k \in \N_0} \frac{(t \, \G)^k}{k!}. 
\end{equation}
%--------------------------------------------------------------------------------------------

%#################################################################################################################################################################################
% end of section "The appropriate norm for the generator"

% section "Comparison with other truncation methods"
%####################################################################################################################
\subsection{Comparison with other truncation methods}

In the following section we will briefly discuss two other truncation methods, as opposed to our method of finite subsystems, presented in section \ref{Section_DefinitionOfFiniteSubnetworksAndTheirCorrespondingGenerators}. 

The essential - and perhaps philosophical - question, is, what is meant by the term `approximating an unknown object by a sequence of known objects'. The way we understand it, is that the sequence of known objects have properties that are eventually similar to that of the unknown object, not only in the limit case. 

We briefly show why two alternative truncation methods fall short of these requirements.

%###############################
\subsubsection{The truncation method `sharp cutoff '}

As briefly mentioned above, a natural method for approximating the solution  of linear differential equations on Banach spaces, is introducing a sharp cutoff of the generator, namely

%---------------------------------------------------------------------------------------
\begin{equation}
\begin{aligned}
\left(
\G_\text{sharp cutoff}^{[\N_{\leq N}]}
\right)^{(i,\,j)}
&:=
\begin{cases}
(\G)^{(i,\,j)} &\text{ , if } i, j \in F  \\
0 &\text{ , else}. 
\end{cases}
\end{aligned}
\end{equation}
%---------------------------------------------------------------------------------------

This procedure can result in generators, which do not describe master equation of a finite state space, as shown in equation \eqref{Eq_DifferentKindsOfTruncationForTheGenerator}.

In case of a sharp cutoff, we have an outflow of probability of the system, as the following calculation shows: 

%---------------------------------------------------------------------------------------
\begin{equation}\label{Eq_OutFlowOfProbabilityForSharpCutoff}
\begin{aligned}
\frac{\d}{\d t} \sum\limits_{i \in F} p^{(i)}(t)
&=
\sum\limits_{i \in F} 
\underbrace{
\frac{\d}{\d t} p^{(i)}(t)
}_{
\sum\limits_{a \in F} p^{(a)}(t) \, \g_{a \to i}
-
\sum\limits_{j \in \Omega} p^{(i)}(t) \, \g_{i \to j}
}
%%%%%%%%%%%%%%%%%%%%%%%%%%%%%%%%%%%%%%%%%%%%%
\\ &=
%%%%%%%%%%%%%%%%%%%%%%%%%%%%%%%%%%%%%%%%%%%%%
\sum\limits_{a \in F} 
\sum\limits_{i \in F} 
p^{(a)}(t) \, \g_{a \to i}
-
\sum\limits_{i \in F} 
p^{(i)}(t) \, \g_{i \to }
%%%%%%%%%%%%%%%%%%%%%%%%%%%%%%%%%%%%%%%%%%%%%
\\ &=
%%%%%%%%%%%%%%%%%%%%%%%%%%%%%%%%%%%%%%%%%%%%%
\underbrace{
\left(
\sum\limits_{a \in F} p^{(a)}(t) \, \g_{a \to }
-
\sum\limits_{\alpha \in F} p^{(\alpha)}(t) \, \g_{\alpha \to }
\right)
}_{
0
}
-
\sum\limits_{\beta \in \Omega \backslash F} p^{(\beta)}(t) \, \g_{\beta \to }
%%%%%%%%%%%%%%%%%%%%%%%%%%%%%%%%%%%%%%%%%%%%%
\\ &\leq
%%%%%%%%%%%%%%%%%%%%%%%%%%%%%%%%%%%%%%%%%%%%%
0. 
\end{aligned}
\end{equation}
%---------------------------------------------------------------------------------------

%%%%%%%%%%%%%%%%%%%%%%%%%%%%%%%%%%%%%%%%%%%%%%%%%%%%%%%%%%%%%%%%%%%%%%%%%%%%%%%%
\begin{figure}[H]
\begin{center}
\begin{subfigure}{0.99\textwidth}

\begin{minipage}{0.49\textwidth}
%--------------------------------------------------
    \begin{equation*} 
    \begin{aligned}
    \G 
    =
    \begin{pmatrix}
    -\g_{1 \to 2 }-\g_{1 \to 3 }& \g_{2 \to 1} & 0 \\
    \g_{1 \to 2 } & -\g_{2 \to 1} & \g_{3 \to2} \\
    \g_{1 \to 3} & 0 & -\g_{3 \to2}
    \end{pmatrix} 
    \end{aligned}
    \end{equation*}
%--------------------------------------------------
\end{minipage}\begin{minipage}{0.49\textwidth}
%--------------------------------------------------
\begin{tikzpicture}[scale=2.4]
    \node[State](1) at (0,0) [circle,draw] {1} ;
    \node[State](2) at (1,0) [circle,draw] {2} ;
    \node[State](3) at (0.5,-0.866) [circle,draw] {3} ;
    \path[Link,bend left] (1) edge node[above] {$\g_{1 \to 2}$} (2);
    \path[Link,bend left] (2) edge node[below] {$\g_{2 \to 1}$} (1);
    \path[Link] (1) edge node[left,below,sloped]  {$\g_{1 \to 3}$} (3);
    \path[Link] (3) edge node[left,below,sloped]  {$\g_{3 \to 2}$} (2);
    \end{tikzpicture}
%--------------------------------------------------
\end{minipage}
\subcaption{Generator matrix with its corresponding network $\System$  }
   % \subcaption*{original network $\System$}
      % \label{Fig_1_a}
\end{subfigure}
\begin{subfigure}{0.99\textwidth}

\begin{minipage}{0.49\textwidth}
%--------------------------------------------------
    \begin{equation*} 
    \begin{aligned}
    \G 
    =
    \begin{pmatrix}
    -\g_{1 \to 2 }{\color{MyRed}-\g_{1 \to 3 } }& \g_{2 \to 1} & 0 \\
    \g_{1 \to 2 } & -\g_{2 \to 1} & 0 \\
    0 & 0 & 0
    \end{pmatrix}
    % \\ \\ 
    % &\text{ matrix \emph{after} a sharp cutoff }
    \end{aligned}
    \end{equation*}
%--------------------------------------------------
\end{minipage}\begin{minipage}{0.49\textwidth}
%--------------------------------------------------
        \begin{tikzpicture}[scale=2.4]
        \node[State, black](1) at (0,0) [circle,draw] {1} ;
        \node[State, black](2) at (1,0) [circle,draw] {2} ;
        \node[](3) at (0.5,-0.866) [] {} ;
        \path[Link,bend left, black] (1) edge node[above] {$\g_{1 \to 2}$} (2);
        \path[Link,bend left, black] (2) edge node[below] {$\g_{2 \to 1}$} (1);
        \path[Link, MyRed] (1) edge node[left,below,sloped]  {$\g_{1 \to 3}$} (3);
        % \path[Link] (3) edge node[left,below,sloped]  {$\g_{3 \to 2}$} (2);
        \end{tikzpicture}
%--------------------------------------------------

\end{minipage}

   \subcaption{ The generator matrix \emph{after} a `sharp cutoff' looses the properties of a generator and does no longer correspond to a meaningful network  }

%    % \label{Fig_1_b}
% \end{subfigure}
% %%%%%%%%%%%%%%%5
% \begin{subfigure}{0.49\textwidth}

      % \label{Fig_1_a}
% \end{subfigure}\begin{subfigure}{0.49\textwidth}
   % \subcaption{}
   % \label{Fig_1_b}

\end{subfigure}

\caption{ Illustrating the fact that a sharp cutoff of the generator results in a matrix, which can not be interpreted as a generator for a network }
\label{} 
\end{center}
\end{figure}
%%%%%%%%%%%%%%%%%%%%%%%%%%%%%%%%%%%%%%%%%%%%%%%%%%%%%%%%%%%%%%%%%%%%%%%%%%%%%%%%

This means that we can non longer interpret the entries $P^{(i)}$ as probabilities, which results in qualitative different behavior between the infinite system and finite cutoffs, no matter how large. 
%###############################

%###############################
\subsubsection{The truncation method `condense to some state' }

Another possible truncation method is to consider a finite set $F \in \Fin(\Omega)$ and to `condense' the remaining - infinitely many - states in to a single state `remainder':

%---------------------------------------------------------------------------------------
\begin{equation} \label{Eq_TruncationMethod_CondensingIntoASingleState} 
\begin{aligned}
\g_{a \to \text{rem}}
&:=
\sum\limits_{\beta \in F^C} \g_{a \to \beta}
%%%%%%%%%%%%%%%%%%%%%%%%%%%%%%%%%%%%%%%%%%%%%
\\ 
%%%%%%%%%%%%%%%%%%%%%%%%%%%%%%%%%%%%%%%%%%%%%
\g_{\text{rem} \to a}
&:=
\sum\limits_{\beta \in F^C} \g_{ \beta \to a}
\end{aligned}
\end{equation}
%---------------------------------------------------------------------------------------

While this is feasible and in some cases - as the linear chain on the natural numbers, compare figure \ref{Truncation_CondenseToSingleState_N} - equivalent to our truncation method, it can also create totally different topologies, as depicted in figure \ref{Truncation_CondenseToSingleState_Z}.

%%%%%%%%%%%%%%%%%%%%%%%%%%%%%%%%%%%%%%%%%%%%%%%%%%%%%%%%%%%%%%%%%%%%%%%%%%%%%%
\begin{figure}[H]
\begin{center}
\begin{subfigure}{0.8\textwidth}
   \subcaption{}
      \label{Truncation_CondenseToSingleState_N_0}
% Linear Chain on N
%-----------------------------------------------------------------------------------------------
\begin{tikzpicture}[scale=2]
\draw[very thick, color=CrimsonGlory] (6.4, 0.2) ellipse (1.0 and 1.0);
\node[State](0)   at (0,0) [circle,draw] {0} ;
\node[State](1)   at (1,0) [circle,draw] {1} ; 
\node[State](2)   at (2,0) [circle,draw] {2} ;
\node[State](3)   at (3,0) [circle,draw] {$\dots$} ;
\node[State](N-1) at (4,0) [circle,draw] {$N-1$} ;
\node[State](N)   at (5,0) [circle,draw] {$\;\;\;N\;\;\;$} ;
\node[State](N+1) at (6,0) [circle,draw] {$N+1$} ;
\node[State](N+2) at (7,0) [circle,draw] {$\dots$} ;
\path[Link,bend left] (0) edge node[above] {$\g_{0 \to 1}$} (1);
\path[Link,bend left] (1) edge node[above] {$\g_{1 \to 2}$} (2);
\path[Link,bend left] (2) edge node[above] {} (3);
\path[Link,bend left] (3) edge node[above] {} (N-1);
\path[Link,bend left] (N-1) edge node[above] {$\g_{N-1 \to N}$} (N);
\path[Link,bend left] (N) edge node[above] {$\g_{N \to N+1}$} (N+1);
\path[Link,bend left] (N+1) edge node[above] {} (N+2);
\path[Link,bend left] (3) edge node[below] {} (2);
\path[Link,bend left] (2) edge node[below] {$\g_{2 \to 1}$} (1);
\path[Link,bend left] (1) edge node[below] {$\g_{1 \to 0}$} (0);
\path[Link,bend left] (N+2) edge node[below] {} (N+1);
\path[Link,bend left] (N+1) edge node[below] {$\g_{N+1 \to N}$} (N);
\path[Link,bend left] (N) edge node[below] {$\g_{N \to N-1}$} (N-1);
\path[Link,bend left] (N-1) edge node[above] {} (3);
\end{tikzpicture}
%-----------------------------------------------------------------------------------------------

\end{subfigure}
\begin{subfigure}{0.8\textwidth}
   \subcaption{}
    \label{Truncation_CondenseToSingleState_N_remainder}
% Linear Chain on N - remainder
%-----------------------------------------------------------------------------------------------
\begin{tikzpicture}[scale=2]
\node[State](0)   at (0,0) [circle,draw] {0} ;
\node[State](1)   at (1,0) [circle,draw] {1} ; 
\node[State](2)   at (2,0) [circle,draw] {2} ;
\node[State](3)   at (3,0) [circle,draw] {$\dots$} ;
\node[State](N-1) at (4,0) [circle,draw] {$N-1$} ;
\node[State](N)   at (5,0) [circle,draw] {$\;\;\;N\;\;\;$} ;
\node[State, color=CrimsonGlory](rem) at (6.5,0) [circle,draw, color=CrimsonGlory] {remainder} ;
% \node[State](N+2) at (7,0) [circle,draw] {$\dots$} ;
%
\path[Link,bend left] (0) edge node[above] {$\g_{0 \to 1}$} (1);
\path[Link,bend left] (1) edge node[above] {$\g_{1 \to 2}$} (2);
\path[Link,bend left] (2) edge node[above] {} (3);
\path[Link,bend left] (3) edge node[above] {} (N-1);
\path[Link,bend left] (N-1) edge node[above] {$\g_{N-1 \to N}$} (N);
\path[Link,bend left, color=CrimsonGlory] (N) edge node[above, color=CrimsonGlory] {$\g_{N \to \text{rem}}$} (rem);
% \path[Link,bend left] (N+1) edge node[above] {} (N+2);
%
\path[Link,bend left] (3) edge node[below] {} (2);
\path[Link,bend left] (2) edge node[below] {$\g_{2 \to 1}$} (1);
\path[Link,bend left] (1) edge node[below] {$\g_{1 \to 0}$} (0);
% \path[Link,bend left] (N+2) edge node[below] {} (N+1);
\path[Link,bend left, color=CrimsonGlory] (rem) edge node[below, color=CrimsonGlory] {$\g_{\text{rem} \to N}$} (N);
\path[Link,bend left] (N) edge node[below] {$\g_{N \to N-1}$} (N-1);
\path[Link,bend left] (N-1) edge node[above] {} (3);
\end{tikzpicture}
%-----------------------------------------------------------------------------------------------
\end{subfigure}
\caption{Condensing the remainder of a network into a single state, can lead to the same result as the method of finite subsystems  }
\label{Truncation_CondenseToSingleState_N} 
\end{center}
\end{figure}
%%%%%%%%%%%%%%%%%%%%%%%%%%%%%%%%%%%%%%%%%%%%%%%%%%%%%%%%%%%%%%%%%%%%%%%%%%%%%%

%%%%%%%%%%%%%%%%%%%%%%%%%%%%%%%%%%%%%%%%%%%%%%%%%%%%%%%%%%%%%%%%%%%%%%%%%%%%%%
\begin{figure}[H]
\begin{center}
\begin{subfigure}{0.8\textwidth}
   \subcaption{}
      \label{Truncation_CondenseToSingleState_Z_0}
% Linear Chain on Z
%-----------------------------------------------------------------------------------------------
\begin{center}
\hspace*{-15mm}
\scalebox{0.7}{
    \begin{tikzpicture}[scale=2]
\draw[very thick, color=CrimsonGlory] (-3.7, 0.2) ellipse (1.0 and 1.0);
\draw[very thick, color=CrimsonGlory] (3.7, 0.2) ellipse (1.0 and 1.0);
\node[State](-N-1) at (-4.3,0) [circle,draw] {$\dots$} ;
\node[State](-N)   at (-3.3,0) [circle,draw] {$-N$} ;
\node[State](-N+1)   at (-2,0) [circle,draw] {$-N+1$} ; 
\node[State](-1)   at (-1,0) [circle,draw] {$-1$} ;
\node[State](0)    at (0,0) [circle,draw] {$0$} ;
\node[State](1)    at (1,0) [circle,draw] {$1$} ;
\node[State](N-1)    at (2,0) [circle,draw] {$N-1$} ;
\node[State](N)    at (3.3,0) [circle,draw] {$N$} ;
\node[State](N+1)  at (4.3,0) [circle,draw] {$\dots$} ;
\path[Link,bend left] (-N-1) edge node[above] { } (-N);
\path[Link,bend left] (-N)   edge node[above]   {$\g_{-N \to -N+1}$}  (-N+1);
\path[Link,bend left] (-N+1) edge node[above] {}  (-1);
\path[Link,bend left] (-1)   edge node[above] {$\g_{-1 \to 0}$}  (0);
\path[Link,bend left] (0)    edge node[above] {$\g_{0 \to 1}$} (1);
\path[Link,bend left] (1)    edge node[above] {} (N-1);
\path[Link,bend left] (N-1)  edge node[above] {$\g_{N-1 \to N}$ } (N);
\path[Link,bend left] (N)    edge node[above] { } (N+1);
% %
\path[Link,bend left] (N+1)  edge node[below] { } (N);
\path[Link,bend left] (N)    edge node[below]   {$\g_{N \to N-1}$}  (N-1);
\path[Link,bend left] (N-1)  edge node[below] {}  (1);
\path[Link,bend left] (1)   edge node[below] {$\g_{1 \to 0}$}  (0);
\path[Link,bend left] (0)    edge node[below] {$\g_{0 \to- 1}$} (-1);
\path[Link,bend left] (-1)   edge node[below] {} (-N+1);
\path[Link,bend left] (-N+1) edge node[below] {$\g_{-N+1 \to -N}$ } (-N);
\path[Link,bend left] (-N)   edge node[below] { } (-N-1);
\end{tikzpicture}
}

\end{center}
%-----------------------------------------------------------------------------------------------
\end{subfigure}
\begin{subfigure}{0.8\textwidth}
   \subcaption{}
   \label{Truncation_CondenseToSingleState_Z_remainder}
% Linear Chain on Z -  remainder
%-----------------------------------------------------------------------------------------------
\begin{center}
\begin{tikzpicture}[scale=2]
% \node[State,color=CrimsonGlory](-N-1) at (0,0) [circle,draw] {remainder} ;
\node[State](-N)   at (-1,1) [circle,draw] {$-N$} ;
\node[State](-N+1)   at (-2,2) [circle,draw] {$-N+1$} ; 
\node[State](-1)   at (-1,3) [circle,draw] {$-1$} ;
\node[State](0)    at (0,4) [circle,draw] {$0$} ;
\node[State](1)    at (1,3) [circle,draw] {$1$} ;
\node[State](N-1)    at (2,2) [circle,draw] {$N-1$} ;
\node[State](N)    at (1,1) [circle,draw] {$N$} ;
\node[State,color=CrimsonGlory](rem)  at (0,0) [circle,draw,color=CrimsonGlory] { remainder} ;
\path[Link,bend left,color=CrimsonGlory] (rem) edge node[below, rotate=-45] {$\g_{ \text{rem} \to -N }$ } (-N);
\path[Link,bend left] (-N)   edge node[below,rotate=-45]   {$\g_{-N \to -N+1}$}  (-N+1);
\path[Link,bend left] (-N+1) edge node[left] {}  (-1);
\path[Link,bend left] (-1)   edge node[above,rotate=45] {$\g_{-1 \to 0}$}  (0);
\path[Link,bend left] (0)    edge node[above,rotate=-45]{$\g_{0 \to 1}$} (1);
\path[Link,bend left] (1)    edge node[left] {} (N-1);
\path[Link,bend left] (N-1)  edge node[below, rotate=45] {$\g_{N-1 \to N}$ } (N);
\path[Link,bend left,color=CrimsonGlory] (N)    edge node[below, rotate=45] {$\g_{N \to \text{rem}}$ } (rem);
\path[Link,bend left,color=CrimsonGlory] (rem)  edge node[rotate=45, above] {$\g_{ \text{rem} \to N }$ } (N);
\path[Link,bend left] (N)    edge node[above, rotate=45]   {$\g_{N \to N-1}$}  (N-1);
\path[Link,bend left] (N-1)  edge node[below] {}  (1);
\path[Link,bend left] (1)    edge node[below,rotate=-45] {$\g_{1 \to 0}$}  (0);
\path[Link,bend left] (0)    edge node[below,rotate=45] {$\g_{0 \to- 1}$} (-1);
\path[Link,bend left] (-1)   edge node[below] {} (-N+1);
\path[Link,bend left] (-N+1) edge node[above,rotate=-45] {$\g_{-N+1 \to -N}$ } (-N);
\path[Link,bend left,color=CrimsonGlory] (-N)   edge node[above, rotate=-45] {$\g_{-N \to \text{rem}}$ } (rem);
\end{tikzpicture}
\end{center}
%-----------------------------------------------------------------------------------------------

\end{subfigure}
\caption{The truncation method `condensing the remainder into a single state' can also lead to a totally different topology compared to the original network, as the linear chain on the integers shows}
\label{Truncation_CondenseToSingleState_Z} 
\end{center}
\end{figure}
%%%%%%%%%%%%%%%%%%%%%%%%%%%%%%%%%%%%%%%%%%%%%%%%%%%%%%%%%%%%%%%%%%%%%%%%%%%%%%

Since different topologies naturally result in different dynamics - and hence different - long-term behavior, we cannot hope to capture the essential dynamics of the infinite system by a finite, truncated system. The truncation method of condensing the remainder part of a network into a single state seems unfeasible. 
\section{The long-term behavior of an infinite-dimensional master equation }\label{Chapter_TheLongTermBehaviorOfAnInfiniteDimensionalMasterEquation}

Computing the long-term behaviour $t \to \infty $ of the master equation is interesting, because this determines, where a system eventually `is'. For finite dimensional system, this follows directly by applying Perron–Frobenius theorem Gershgorin's circle theorem to the generator matrix. 
However, this theorem is not applicable in the countable, infinite dimensional case. Moreover, we have to specify that we mean by the time limit the convergence in the so-called `strong operator topology', that is $ \lim\limits_{t \to \infty } \|\e^{t \, \G} \bP_0 -  \bP_\infty \|_1 $ for all initial probability states $\bP_0$. We emphasize that this is a strictly weaker version than  the convergence in the operator norm $ \lim\limits_{t \to \infty } \|\e^{t \, \G} -  \e^{\infty \, \G} \|_1^\text{(op)} $, where `$\e^{\infty \, \G}$' denotes the rank one projection operator $ \| \cdot \|_{1} -\lim\limits_{t \to \infty } \e^{t \, \G   } = \bP_* \otimes \bone' $, which can be interpreted as the transition operator at time `$ t=\infty $'.

The pointwise convergence for an irreducible, positive recurrent system was shown in \cite{grimmett2020probability}. Together with lemma \ref{Lemma_ForProbabilityVectors_PointwiseConvergenceImplies1NormConvergence}, this yields the convergence with respect to the $\| \cdot \|_{1} $-norm.

We provide a different proof, which yields some additional results. It can be shown that the space $ l^{1}(\Omega) $ for an irreducible, positive recurrent network can be written as the direct sum of the set of stationary states ($\Kern(\Id - \e^{t \, \G})$) and the closure (with respect to the operator-one-norm) of the image of the identity operator minus the transition matrix ($\overline{ \Image(\Id - \e^{t \, \G}) }$). While this is in general a proper subspace of $l^1(\Omega) $, it coincides with the whole space in the case where the system is both \emph{irreducible} and \emph{positive recurrent}. This means that the minimal assumptions (namely the existence of a stationary solution) are also sufficient to guarantee convergence in the strong operator topology.

Our arguments are based on the following prerequisites: 
Firstly, $(\e^{t \, \G})_{t \geq 0} $ is a \emph{positive} semigroup, meaning that for all $\bX \in l^1(\Omega)$, $\bX \geq 0$ we have $\e^{t \, \G} \, \bX \geq \bzero $ (with `$\bX \geq \bzero $' meaning that $\bX \in (\R_{\geq \, 0})^{\Omega} \backslash \{\bzero \}$). 

Secondly, the time limit $\lim\limits_{t \to \infty} \e^{t \, \G} \, \bP_0 $ for master equation (i.e. CTMC) is closely related to the \emph{Cesaro average} (also called the \emph{mean ergodic limit}) $ \lim\limits_{M \to \infty} \frac{1}{M} \sum\limits_{i=0}^{M-1} Q^{i} \bP_0  $ for DTMC. The so-called \emph{mean ergodic space}, where both these limits exists, is precisely the direct sums described above: $ \lme = \Kern(\Id - Q) \oplus \overline{ \Image(\Id - Q)}.  $

% can be shown to be the direct sum of the set of stationary states and the closure of the identity operator minus the transition matrix, with respect to the operator-one-norm. 

For \emph{positive}, \emph{irreducible} semigroups, we know that there exists a rank one projection $\e^{\infty \, \G} = \bP_* \otimes \bone' $ such that $ \e^{t \, \G} \, \bP_0 \xlongrightarrow[\|\cdot \|_1]{ t \to \infty} \bigl(\bP_* \otimes \bone'\bigr) \cdot \bP_0 = \bP_* $ (see lemma \ref{ConvergenceOfPositiveIrreducibleSemigroupsOnLp} for details). While this theorem is stated in \cite{arendt2020positive} as Corollary 3.7, and some parts can be found in the literature such as \cite{batkai2017positive, arendt1986one}, we are still lacking a direct proof that is accessible to physicists, without diving deep into the mathematical literature. This section is meant to close that gap.

%#################################################################################################################################################################################
\subsection{The mean ergodic space of $l^1(\Omega)$} \label{Section_TheMeanErgodicSpaceOf_l1Omega}

We start by defining the finite Cesaro mean, as well as the mean ergodic subspace $ \lme $ for a given state space and a corresponding transition matrix of a DTMC. 

%%%%%%%%%%%%%%%%%%%%%%%%%%%%%%%%%%%%%%%%%%%%%%%%%%%%%%%%%%%%%%%%%%%%%%%%%%%%%%%%%%%%%%%%%%%%%
\begin{MyDef}[The finite Cesaro mean and the \emph{mean ergodic subspace}] $ $ \\ 
Let $\Omega$ be a countable, infinite state space and $Q \in [0,1]^{\Omega \times \Omega} $ a countable, stochastic matrix, that is  $ \sum\limits_{i \in \Omega} Q^{(i, j)} = 1 $. The \emph{finite Cesaro mean} $\CMean_M$ of the operator $Q$ is defined as 

%----------------------------------------------------------------------------------
\begin{equation*} 
\begin{aligned}
\CMean_M (\bX) 
&:=
\CMean_M^{Q} (\bX) 
:=
\frac{1}{M} \, \sum\limits_{i=0}^{M-1} (Q)^{i}( \bX)
&&\text{  for $M \in \N $} 
\end{aligned}
\end{equation*}
%----------------------------------------------------------------------------------

We further define the mean ergodic subspace of $l^1(\Omega) $ with respect to $ Q $ as

%----------------------------------------------------------------------------------
\begin{equation}
\begin{aligned}
\lme
&:=
\{
\bX \in l^1(\Omega) \,:\, \lim\limits_{M \to \infty} \CMean_M(\bX) \text{ exists with respect to } \| \cdot \|_1 \}, 
\end{aligned}
\end{equation}
%----------------------------------------------------------------------------------

where ``m.e'' stands for ``mean ergodic´´. 
\end{MyDef}
%%%%%%%%%%%%%%%%%%%%%%%%%%%%%%%%%%%%%%%%%%%%%%%%%%%%%%%%%%%%%%%%%%%%%%%%%%%%%%%%%%%%%%%%%%%%%

Let us now study the mean ergodic space, as well as the corresponding mean ergodic averages. As it turns out, the Cesaro mean projects the mean ergodic subspace onto the set of fixed points of the transition matrix.

%%%%%%%%%%%%%%%%%%%%%%%%%%%%%%%%%%%%%%%%%%%%%%%%%%%%%%%%%%%%%%%%%%%%%%%%%%%%%%%%%%%%%%%%%%%%%
\begin{lemma}\label{Lemma_PropertiesOfCMean}
The mean ergodic subspace is closed in $l^1(\Omega)$ and the operator

%----------------------------------------------------------------------------------
\begin{equation}
\begin{aligned}
\CMean \,: \,  l^1_{m.e.}&(\Omega, Q)\to l^1(\Omega) \\
    &\bX  \hspace*{6mm} \mapsto \lim\limits_{M \to \infty} \CMean_M(\bX)
\end{aligned}
\end{equation}
%----------------------------------------------------------------------------------
 is a projection onto $\Kern(\Id - Q)$, which is the set of fixed points of $Q$.

\end{lemma}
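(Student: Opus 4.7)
The plan is to establish the lemma in four steps: (i) a uniform operator-norm bound on the finite Cesaro means; (ii) closedness of $\lme$ via a standard three-epsilon / completeness argument; (iii) that $\CMean$ acts as identity on $\Kern(\Id - Q)$; (iv) that the image of $\CMean$ lies in $\Kern(\Id - Q)$. The projection property $\CMean^2 = \CMean$ is then automatic.

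First I would observe that $Q$ is a contraction on $l^1(\Omega)$, because $Q$ is column-stochastic with non-negative entries: for every $\bX \in l^1(\Omega)$,
\begin{equation*}
\|Q\bX\|_1 \leq \sum_{j \in \Omega} |X^{(j)}| \sum_{i \in \Omega} Q^{(i,j)} = \|\bX\|_1.
\end{equation*}
Consequently $\|Q^i\|_{1}^{\text{(op)}} \leq 1$ and $\|\CMean_M\|_{1}^{\text{(op)}} \leq 1$ uniformly in $M$. This uniform bound is the workhorse of everything that follows.

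For closedness, I would take a sequence $(\bX_n)_{n \in \N} \subseteq \lme$ with $\bX_n \to \bX$ in $\| \cdot \|_1$ and show $(\CMean_M(\bX))_{M \in \N}$ is Cauchy in the complete space $l^1(\Omega)$. Given $\epsilon > 0$, pick $n$ with $\|\bX - \bX_n\|_1 < \epsilon/3$; since $\bX_n \in \lme$, pick $M_0$ with $\|\CMean_M(\bX_n) - \CMean_K(\bX_n)\|_1 < \epsilon/3$ for all $M, K \geq M_0$. Using $\|\CMean_M\|_{1}^{\text{(op)}} \leq 1$ and inserting $\pm \CMean_M(\bX_n) \pm \CMean_K(\bX_n)$ via the triangle inequality yields $\|\CMean_M(\bX) - \CMean_K(\bX)\|_1 < \epsilon$. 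Hence the limit exists, $\bX \in \lme$, and $\lme$ is closed.

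For the identification of $\CMean$ as a projection onto $\Kern(\Id - Q)$, I would argue both inclusions. If $\bX \in \Kern(\Id - Q)$, then $Q^i \bX = \bX$ for all $i$, so $\CMean_M(\bX) = \bX$ for every $M$, which gives $\bX \in \lme$ and $\CMean(\bX) = \bX$. Conversely, for $\bX \in \lme$, the telescoping identity
\begin{equation*}
(Q - \Id)\,\CMean_M(\bX) \;=\; \frac{1}{M}\bigl(Q^M - \Id\bigr)\bX
\end{equation*}
together with $\|Q^M - \Id\|_{1}^{\text{(op)}} \leq 2$ gives $\|(Q - \Id)\CMean_M(\bX)\|_1 \leq \tfrac{2}{M}\|\bX\|_1 \to 0$. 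Since $Q - \Id$ is bounded, it is continuous, so $(Q - \Id)\CMean(\bX) = \lim_{M\to\infty} (Q-\Id)\CMean_M(\bX) = 0$, i.e.\ $\CMean(\bX) \in \Kern(\Id - Q)$. Combined with the first inclusion this gives $\CMean^2 = \CMean$, proving that $\CMean$ is the claimed projection.

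The only slightly delicate step is the closedness argument, which requires recognising that pointwise convergence of $\CMean_M$ on $\lme$ plus the uniform operator bound on \emph{all} of $l^1(\Omega)$ is enough; there is no need to invoke density of $\lme$ (which is not available a priori). Everything else is essentially a telescoping identity plus the contraction property of a column-stochastic operator.
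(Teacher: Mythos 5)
Your proof is correct, and it is leaner than the paper's in two places worth noting. For the closedness of $\lme$ you run a direct three-epsilon Cauchy argument on $\bigl(\CMean_M(\bX)\bigr)_{M}$, whereas the paper first shows that the limits $\bX_{n,*}=\CMean(\bX_n)$ form a Cauchy sequence and then transfers convergence to $\bX$; these are essentially the same estimate packaged differently. The more substantive difference is in the identification of the range and the idempotence. The paper proves $\CMean(\bX)\in\Kern(\Id-Q)$ by first establishing the commutation $Q\,\CMean=\CMean\,Q$ on $\lme$ together with the $Q$-invariance of $\lme$, showing $\CMean_M\bigl((\Id-Q)\bY\bigr)\to\bzero$ (which also yields $\Image(\Id-Q)\subseteq\lme$, a fact it reuses later in Theorem \ref{Thm_ExplicitFormOfTheMeanErgodicSubspaceInl1}), and then proves $\CMean^2=\CMean$ via continuity of $\CMean$ and $\CMean(Q^{i}\bX)=\CMean(\bX)$. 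You instead apply the telescoping identity directly to the Cesaro mean of $\bX$ itself, $(Q-\Id)\,\CMean_M(\bX)=\tfrac{1}{M}(Q^{M}-\Id)\bX$, and use only the continuity of the bounded operator $Q-\Id$; idempotence then follows for free because fixed points of $Q$ are trivially fixed by every $\CMean_M$, so $\CMean$ is the identity on $\Kern(\Id-Q)$. Your route avoids the commutation lemma, the invariance statement, and any appeal to continuity of $\CMean$; what it does not deliver is the inclusion $\Image(\Id-Q)\subseteq\lme$, which is not part of the lemma's statement but is an ingredient the paper's longer proof banks for the characterization of $\lme$ in the following theorem.
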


\begin{proof}

First, we note that since $Q$ is stochastic, the operator norm of $Q$, $\CMean_M$ and $\CMean$ are bounded by one: 

%----------------------------------------------------------------------------------
\begin{equation}
\begin{aligned}
\| Q \|_{1}^\text{(op)}
&:=
\sup\limits_{\| \bX \|_1 = 1 } 
\underbrace{
\|Q  \, \bX \|_1
}_{
\sum\limits_{i \in \Omega } |(Q  \, \bX)^{(i)}|
}
=
\sup\limits_{\| \bX \|_1 = 1 } 
\sum\limits_{i \in \Omega } \, 
\Bigl| 
\sum\limits_{j \in \Omega }
Q^{(i,j)} \, X^{(j)}
\Bigr |
\leq 
\sup\limits_{\| \bX \|_1 = 1 } \, 
\sum\limits_{j \in \Omega } |X^{(j)}| \; 
\underbrace{
\left(
\sum\limits_{i \in \Omega } Q^{(i,j)} 
\right)
}_{
1
}
%%%%%%%%%%%%%%%%%%%%%%%%%%%%%%%%%%%%%%%%%%%%%%%%%%%%%%%%%%%%%%%%%%%
\\ &=
%%%%%%%%%%%%%%%%%%%%%%%%%%%%%%%%%%%%%%%%%%%%%%%%%%%%%%%%%%%%%%%%%%%
\sup\limits_{ \| \bX \|_1 = 1 } 
\underbrace{
\sum\limits_{j \in \Omega } |X^{(j)}|
}_{
\| \bX \|_1
}
=
1, 
%%%%%%%%%%%%%%%%%%%%%%%%%%%%%%%%%%%%%%%%%%%%%%%%%%%%%%%%%%%%%%%%%%%
\\ 
%%%%%%%%%%%%%%%%%%%%%%%%%%%%%%%%%%%%%%%%%%%%%%%%%%%%%%%%%%%%%%%%%%%
\| \CMean_M \|_{1}^\text{(op)}
&=
\sup\limits_{\| \bX \|_1 = 1 } 
\left\| \frac{1}{M}
\sum\limits_{i=0}^{M-1} Q^{i} \bX 
\right \| 
\leq
\sup\limits_{\| \bX \|_1 = 1 } 
\frac{1}{M} 
\sum\limits_{i=0}^{M-1} 
\underbrace{
\| Q^{i}  \bX\|
}_{
\leq \| Q  \|^{i} \, \|\bX\|_1 
}
\leq 
1, 
%%%%%%%%%%%%%%%%%%%%%%%%%%%%%%%%%%%%%%%%%%%%%%%%%%%%%%%%%%%%%%%%%%%
\\ 
%%%%%%%%%%%%%%%%%%%%%%%%%%%%%%%%%%%%%%%%%%%%%%%%%%%%%%%%%%%%%%%%%%%
\| \CMean \|_{1}^\text{(op)}
&=
\sup\limits_{ \| \bX \|_1 = 1 } 
\|\lim\limits_{M \to \infty} \CMean_M(\bX) \|_1
=
\sup\limits_{ \| \bX \|_1 = 1 } 
\lim\limits_{M \to \infty} 
\underbrace{
\| \CMean_M(\bX) \|_{1}^\text{(op)}
}_{
\leq \, \|\bX\|_1 
}
\leq 
1 . 
\end{aligned}
\end{equation}
%----------------------------------------------------------------------------------

Further, we note that $Q$ - and therefore $\CMean_M$ - are a positive operators (meaning $ \bX \geq\bzero  \Rightarrow Q \, \bX \geq\bzero $ and $\CMean_M \, \bX \geq \bzero$) and that the norm for positive vectors is preserved ($ \bX \geq\bzero  \Rightarrow \| \CMean_M \, \bX \| =\| Q \, \bX \| = \| \bX \| $).

Now let $\bX \in \overline{ \lme } $, then we know that there exists a sequence $(\bX_n)_{n\in \N} \subseteq \lme $ converging to $\bX$  and a sequence $ (\bX_{n, *})_{n \in \N} $ such that $ \lim\limits_{M \to \infty} \CMean_M(\bX_n) = \bX_{n, *}$.

First, we show that the sequence $ (\bX_{n, *})_{n \in \N} $  is Cauchy. We can choose the number $m$ and $n$ large enough, such that 
%----------------------------------------------------------------------------------
\begin{equation*}
\begin{aligned}
\| \bX_{n, *} - \bX_{m, *}\|_1
&=
\| \bX_{n, *}  \pm \CMean_M (\bX_n) \pm \CMean_M (\bX_m) - \bX_{m, *}\|_1
%%%%%%%%%%%%%%%%%%%%%%%%%%%%%%%%%%%%%%
\\ &\leq 
%%%%%%%%%%%%%%%%%%%%%%%%%%%%%%%%%%%%%%
\underbrace{
\| \bX_{n, *} - \CMean_M (\bX_n) \|_1
}_{
< \, \epsilon
}
+
\underbrace{
\| \CMean_M \|_1 \,
}_{
1
} \, 
\underbrace{ \| \bX_n - \bX_m  \|_1
}_{
< \, \epsilon
}
+
\underbrace{
\| \CMean_M (\bX_m) - \bX_{m, *} \|_1
}_{
< \, \epsilon
}
%%%%%%%%%%%%%%%%%%%%%%%%%%%%%
 \\ &< 3 \, \epsilon, 
\end{aligned}
\end{equation*}
%----------------------------------------------------------------------------------

hence the limit $ \bX_* := \lim\limits_{n \to \infty} \bX_{n, *} $ exists in $l^1(\Omega)$. Further, we have:

%----------------------------------------------------------------------------------
\begin{equation*}
\begin{aligned}
 \|  \CMean_M (\bX) - \bX_* \|_1
%%%%%%%%%%%%%%%%%%%%%%%%%%%%%%%%%%%%%%%%%%%%%%%%%%%%%%%%%%%%%%%%%%
&=
%%%%%%%%%%%%%%%%%%%%%%%%%%%%%%%%%%%%%%%%%%%%%%%%%%%%%%%%%%%%%%%%%%
\| \CMean_M (\bX) \pm \CMean_M (\bX_n) \pm \bX_{n, *} - \bX_* \|_1
%%%%%%%%%%%%%%%%%%%%%%%%%%%%%%%%%%%%%%%%%%%%%%%%%%%%%%%%%%%%%%%%%%
 \\ & \leq 
%%%%%%%%%%%%%%%%%%%%%%%%%%%%%%%%%%%%%%%%%%%%%%%%%%%%%%%%%%%%%%%%%%
\underbrace{
    \| \CMean_M  \|_1
}_{
    \leq \, 1
}
\, \cdot \,
\underbrace{
    \|\bX - \bX_n \|_1
}_{
    < \, \epsilon
}
+
\underbrace{
\| \CMean_M (\bX_n) - \bX_{n, *}  \|_1
}_{< \, \epsilon}
+
\underbrace{
\| \bX_{n,*} - \bX_*\|_1
}_{< \, \epsilon}
%%%%%%%%%%%%%%%%%%%%%%%%%%%%%
 \\ &< 3 \, \epsilon,  
\end{aligned}
\end{equation*}
%----------------------------------------------------------------------------------

that is $ \CMean_M (\bX) \xlongrightarrow[\| \cdot \|_1]{M \to \infty} \bX_*$ and hence $\bX \in \lme$. This shows that $ \lme $ is closed in $l^1(\Omega) $.

Now, let us determine the image of $\CMean $. Since $Q$ is a continuous operator, it "commutes" with the Cesaro operator $\CMean$ on $ \lme $:

%----------------------------------------------------------------------------------
\begin{equation}\label{Eq_QCommutesWithCMean}
\begin{aligned}
Q \, \CMean(\bX)
&=
Q \, 
\lim\limits_{M \to \infty}
\frac{1}{M} \, \sum\limits_{i=0}^{M-1} (Q)^{i}( \bX)
\xlongequal[\text{continuous}]{\text{$Q$ is }}
\lim\limits_{M \to \infty}
\frac{1}{M} \, \sum\limits_{i=0}^{M-1} (Q)^{i+1}( \bX)
=
\CMean\left( Q \, \bX \right), 
\end{aligned}
\end{equation}
%----------------------------------------------------

for all $\bX \in \lme $. What is more, is that equation \eqref{Eq_QCommutesWithCMean} tells us that $\lme $ is $Q$-invariant, meaning $Q \, \lme \subseteq \lme $.

Let us now look at the effect of a finite Cesaro operator acting on an element of the image of $\Id - Q$: For $\bX \in l^1(\Omega) $ we have: 
%----------------------------------------------------------------------------------
\begin{equation} \label{Eq_CMeanActingOnIdMinusQ_X_1}
\begin{aligned}
\CMean_M \, \bigl((\Id - Q) \, \bX \bigr)
&=
\frac{1}{M} \, 
\underbrace{
\sum\limits_{i=0}^{M-1}  Q^{i}(\Id - Q)
}_{
Q^{0} - Q^{M} 
} \, \bX =
%%%%%%%%%%%%%%%%%%%%%%%%%%%%%%%%%%%%%%%%
% 
%%%%%%%%%%%%%%%%%%%%%%%%%%%%%%%%%%%%%%%
\frac{\bX - Q^{M} \, \bX}{M}
\xlongrightarrow{M \to \infty}
\bzero, 
\end{aligned}
\end{equation}
%----------------------------------------------------

which mean that the mean ergodic subspace of $Q$ contains the image of $\Id - Q$, that is $ \Image(\Id - Q) \subseteq \lme$. 

On the other hand, we have for all $\bX \in \lme $: 
%----------------------------------------------------------------------------------
\begin{equation} \label{Eq_CMeanActingOnIdMinusQ_X_2}
\begin{aligned}
\bzero
&\xlongequal{\eqref{Eq_CMeanActingOnIdMinusQ_X_1}}
\underbrace{
\lim\limits_{M \to \infty} \CMean_M
}_{
\CMean 
}
\, \bigl((\Id - Q) \, \bX \bigr)
=
\CMean(\bX) - 
\underbrace{
\CMean(Q \, \bX)
}_{
Q \, \CMean( \bX)
}
\xlongequal{\eqref{Eq_QCommutesWithCMean}}
(\Id - Q ) \, \CMean(\bX), 
\end{aligned}
\end{equation}
%----------------------------------------------------

which means that the Cesaro average $\CMean(\bX)$ is a fixed point of the operator $Q$, that is $\CMean(\bX) \in \Kern(\Id - Q) $.

It now remains to show that the Cesaro operator is indeed a projection. This can be shown by applying the Cesaro operator twice, and keeping in mind that it is linear, bounded and that its image in $Q$-invariant:

%----------------------------------------------------------------------------------
\begin{equation*} 
\begin{aligned}
\CMean^2 \, (\bX) 
&=
\CMean\left(
\lim\limits_{M \to \infty } 
\frac{1}{M} \sum\limits_{i=0}^{M-1} Q^{i} \, \bX 
\right)
\xlongequal[\text{continuous}]{\text{C is }}
\lim\limits_{M \to \infty } 
\underbrace{
\frac{1}{M} \sum\limits_{i=0}^{M-1} 
\overbrace{
\CMean\left(
Q^{i} \, \bX
\right)
}^{
\CMean\left(
\bX
\right)
}
}_{
\CMean\left(
\bX
\right)
}
\xlongequal{\eqref{Eq_CMeanActingOnIdMinusQ_X_2}}
\CMean\left(
\bX
\right). 
\end{aligned}
\end{equation*}
%----------------------------------------------------------------------------------

So we can say $\lme $ is a closed, $Q$- invariant subspace, where the Ceasro average is a projection onto the kernel of $\Id - Q$: 

%----------------------------------------------------------------------------------
\begin{equation*}
\begin{aligned}
\CMean \,:\, \lme &\to \Kern(\Id - Q) \\
\bX & \mapsto \lim\limits_{M \to \infty} \CMean_M(\bX)
\end{aligned}
\end{equation*}
%----------------------------------------------------------------------------------

\end{proof}

We are now in a position to characterize the mean ergodic subspace: While it clearly contains the set of fixed points of $Q$, the previous calculations have shown, that the mean ergodic average vanishes for every vector contained in the image of the identity minus the transition matrix. As it turns, these subspaces uniquely determine the mean ergodic space.

%%%%%%%%%%%%%%%%%%%%%%%%%%%%%%%%%%%%%%%%%%%%%%%%%%%%%%%%%%%%%%%%%%%%%%%%%%%%%%%%%%%%%%%%%%%%%%%%%%%%%%%%%%%%%%%%%%%%%%%%%%%%%%%%%%%%%%%%%%%%%%%%%%%%%%%%%%%%%%%%%%%%
\begin{thm}\label{Thm_ExplicitFormOfTheMeanErgodicSubspaceInl1}
The mean ergodic space of a stochastic operator $Q$ in $l^1(\Omega)$ equals the direct sum of the kernel of $\Id - Q$ and the closure of the image of $\Id - Q$, that is 

%----------------------------------------------------------------------------
\begin{equation}
\begin{aligned}
\lme 
&=
\Kern(\Id - Q) \oplus \overline{ \Image(\Id - Q)}. 
\end{aligned}
\end{equation}
%----------------------------------------------------------------------------
\end{thm}

\begin{proof}
$ $ \\ 
\begin{itemize}
\item["$\supseteq$"]

This part was indirectly shown in lemma \ref{Lemma_PropertiesOfCMean}: \\
The mean ergodic space clearly contains every fixed point of $Q$ (that is $ \Kern(\Id - Q) \subseteq \lme$ ), while equation \eqref{Eq_CMeanActingOnIdMinusQ_X_2} tells us that it also contains the image of $\Id - Q$ ($ \Image(\Id - Q) \subseteq \lme $). Finally, since $\lme $ is closed in $l^1(\Omega) $, the claim follows.

\item["$\subseteq $" ]
Let $\bX \in \lme $, that is $\CMean_M(\bX) \xlongrightarrow[\| \cdot \|_1]{M \to \infty} \bX_* $ for some $\bX_* \in \Kern(\Id - Q) $.

Then we can separate $\bX $ into three parts: One belonging to $\Image(\Id - Q) $, one belonging to $\Kern(\Id - Q) $ and one vanishing:

%----------------------------------------------------------------------------------
\begin{equation}\label{Eq_DeterminingTheMeanErgodicSubspace}
\begin{aligned}
\bX
=
\underbrace{
\bX - \CMean_M(\bX) 
}_{
\in  \, \Image(\Id - Q)
}
+
\underbrace{
\CMean_M(\bX) - \bX_*
}_{
\xlongrightarrow{M \to \infty } \, \bzero
}  
+
\underbrace{
\bX_*
}_{
\in \,  \Kern(\Id - Q)
}, 
\end{aligned}
\end{equation}
%----------------------------------------------------------------------------------

which implies $\bX \in  \overline{ \Kern(\Id - Q) + \Image(\Id - Q)} $. 

What remains to show is the first under brace, claiming that the image of $\Id - \CMean_M $ is contained in the image of $\Id - Q $. We do this, by consider the following calculation:

%----------------------------------------------------------------------------------
\begin{equation*}
\begin{aligned}
\bigl(\Id - Q\bigr) \, &\sum\limits_{i=0}^{M-2} \, 
\left( \frac{M-1-i}{M} \right) \, Q^{i}
=
\underbrace{
\sum\limits_{i=0}^{M-2}
\, 
\left( \frac{M-1-i}{M} \right)
\, Q^{i}
}_{
\sum\limits_{j=0}^{M-2} \left(1 - \frac{j+1}{M} \right) \, Q^{j}
}
-
\underbrace{
\sum\limits_{i=0}^{M-2}\frac{M-1-i}{M} \, Q^{i+1}
}_{
\sum\limits_{j=1}^{M-1} \left(1 - \frac{j}{M} \right) \, Q^{j}
}
%%%%%%%%%%%%%%%%%%%%%%%%%%%%%%%%%%%%%%%%%%%%%%%%%%%%%
\\ &= 
%%%%%%%%%%%%%%%%%%%%%%%%%%%%%%%%%%%%%%%%%%%%%%%%%%%%%
\underbrace{
\left(1 - \frac{1}{M} \right) \, Q^{0}
}_{
\Id - \frac{1}{M} \, Q^{0} 
}
+
\sum\limits_{j=1}^{M-2} 
\underbrace{
\left[ 
\left(1 - \frac{1+j}{M} \right)
-
\left(1 - \frac{j}{M} \right)
\right] 
\, Q^{j}
}_{
\frac{-1}{M} \, Q^{j}
}
- 
\underbrace{
\left(1 - \frac{M-1}{M} \right) 
\, Q^{M-1}
}_{
-\frac{1}{M}
\, Q^{M-1}
}
%%%%%%%%%%%%%%%%%%%%%%%%%%%%%%%%%%%%%%%%%%%%%%%%%%%%%
\\ &= 
%%%%%%%%%%%%%%%%%%%%%%%%%%%%%%%%%%%%%%%%%%%%%%%%%%%%%
\Id -  \frac{1}{M} \, 
\underbrace{
\left( \sum\limits_{i=0}^{M-1} \, Q^{i} \right)
}_{
\CMean_M 
}
=
\Id - \CMean_M \text{  , which implies }
%%%%%%%%%%%%%%%%%%%%%%%%%%%%%%%%%%%%%%%%%%%%%%%%%%%%%
\\ 
%%%%%%%%%%%%%%%%%%%%%%%%%%%%%%%%%%%%%%%%%%%%%%%%%%%%%\
\bX - \CMean(\bX)
&=
\left(\Id - Q\right) \, \sum\limits_{i=0}^{M-2} \, 
\left( \frac{M-1-i}{M} \right) \, Q^{i} \bX
\in
\Image(\Id - Q). 
\end{aligned}
\end{equation*}
%----------------------------------------------------------------------------------

What remains to show, is that the sum of the two vector spaces $\Kern(\Id - Q) $ and $ \overline{\Image(\Id - Q)} $ in indeed direct. Let $\bX $ be an element of their intersection, that is 
$ \bX \in \Kern(\Id - Q) \cap  \overline{\Image(\Id - Q)} $. We now proceed by showing that the norm of $\bX $ is arbitrary small. 

Since $\bX \in \Kern(\Id - Q) $, we have $\bX = Q \, \bX = \CMean_M(\bX) $. On the other hand, $ \bX \in \overline{\Image(\Id - Q)} $ implies that for every $\epsilon > 0$ there exists an element $\bY_\epsilon \in l^1(\Omega)$ such that $ \|\bX - (\Id - Q) \bY_\epsilon \|_1 < \epsilon$.

Now - for a fixed $\epsilon > 0$ - choose a natural number $M = M_\epsilon \in \N$ such that $ \|\CMean_M (\Id - Q) \bY_\epsilon \| < \epsilon $, which is possible by equation \eqref{Eq_CMeanActingOnIdMinusQ_X_1}. Then we can estimate the norm of $\bX$ to 

%----------------------------------------------------------------------------------
\begin{equation}
\begin{aligned}
\| \bX\|_1
&=
\| 
\CMean_M(\bX)
\|_1
=
\|
\CMean_M(\bX) \pm \CMean_M \, (\Id - Q) \, \bY_\epsilon
\|_1
\leq
\underbrace{
    \|
    \CMean_M \left[
    \bX - (\Id - Q) \, \bY_\epsilon
    \right]
    \|_1 
}_{
    \leq \| \CMean_M \|_{1}^\text{(op)} 
    \, 
    \|
    \bX - (\Id - Q) \, \bY_\epsilon
    \|_1
}
+ 
\| 
\CMean_M \, (\Id - Q) \, \bY_\epsilon
\|_1
%%%%%%%%%%%%%%%%%%%%%%%%
\\ &\leq 
%%%%%%%%%%%%%%%%%%%%%%%%
\underbrace{
    \| \CMean_M \|_{1}^\text{(op)} 
}_{1}
\, 
\underbrace{
    \| \bX - (\Id - Q) \, \bY_\epsilon \|_1
}_{< \, \epsilon }
+
\underbrace{
    \| 
    \CMean_M \, (\Id - Q) \, \bY_\epsilon
    \|_1
}_{
    < \, \epsilon 
}
<
2 \, \epsilon. 
\end{aligned}
\end{equation}
%----------------------------------------------------------------------------------

Since $\epsilon > 0$ was arbitrary, this mean that the norm of $\bX $ must be arbitrary small, hence $\bX$ must vanish. This concludes the proof. 

\end{itemize}
\end{proof}
%%%%%%%%%%%%%%%%%%%%%%%%%%%%%%%%%%%%%%%%%%%%%%%%%%%%%%%%%%%%%%%%%%%%%%%%%%%%%%%%%%%%%%%%%%%%%%%%%%%%%%%%%%%%%%%%%%%%%%%%%%%%%%%%%%%%%%%%%%%%%%%%%%%%%%%%%%%%%%%%%%%%

%#################################################################################################################################################################################

%#################################################################################################################################################################################
\subsection{The ergodic mean of the transition matrix of an irreducible, positive recurrent Markov chain} \label{Section_TheErgodicMeanOfTheTransitionMatrixOfAnIrreduciblePositiveRecurrentMarkovChain}

So far we characterized the mean ergodic space for arbitrary DTMC, but eventually we are interested in the long-term behavior, for which we need the existence of stationary solutions. Stationary solutions, on the other hand, require - as we have seen in section \ref{Chapter_MarkovChains} - positive recurrence on each strongly connected component. 

It turns out that for networks which are both strongly connected and positive recurrent, all initial states are mean ergodic.

%%%%%%%%%%%%%%%%%%%%%%%%%%%%%%%%%%%%%%%%%%%%%%%%%%%%%%%%%%%%%%%%%%%%%%%%%%%%%%%%%%%%%%%%%%%%%%%%%%%%%%%%%%%%%%%%%%%%%%%%%%%%%%%%%%%%%%%%%%%%%%%
\begin{thm}[For irreducible and positive recurrent networks all states are mean ergodic] \label{Thm_Irreducible_PositiveRecurrent_l1IsMeanErgodic} $ $ \\  

If $Q$ is the stochastic matrix of an \emph{irreducible}, \emph{positive recurrent} Markov chain, then the whole space $l^1(\Omega) $ is mean ergodic with respect to $Q$, that is $ \lme =  l^1(\Omega) $, or, in other words: The Cesaro mean (i.e. the ergodic mean) of a countable stochastic matrix in $ l^1(\Omega ) $ converges strongly, if the underlying Markov chain is both irreducible and positive recurrent.  
    
\end{thm}

\subsection*{basic idea}
Since the network is irreducible and positive recurrent, a unique stationary solution $\bP_{*} \in (\R_{> \, 0})^{\Omega} \cap \Kern(\G) \cap \B_{r=1}^{\| \cdot \|_{1}}(0) $ exists and it is possible - similar to equation \eqref{Eq_DeterminingTheMeanErgodicSubspace} - to write every unit vector $ \bE_{\omega} $ as a sum, consisting of a summand $ \bE_{\omega} - \CMean_{M}(\bE_{\omega}) $ contained in the image of $\Id - Q$, a summand $ \CMean_{M}(\bE_{\omega}) - \bP_* $ converging to zero and third summand $ \bP_* $ contained in the set of stationary states:

%----------------------------------------------------------------------------------
\begin{equation*}
\begin{aligned}
\bE_{\omega}
&=
\underbrace{
    \bE_{\omega} - \CMean_{M}(\bE_{\omega})
}_{
\in \, \Image(\Id - Q)
}
+
\underbrace{
    \CMean_{M}(\bE_{\omega}) - \bP_*
}_{
\xlongrightarrow[\| \cdot \|_1]{ ??? } \,  \bzero
}
+
\underbrace{
    \bP_*
}_{
    \in \, \Kern(\Id - Q)
}
\end{aligned}
\end{equation*}
%----------------------------------------------------------------------------------

The problem is now, that in order to show that the whole space $ l^1(\Omega) $ is a mean ergodic, we have to assume that the Cesaro mean $\lim\limits_{M \to \infty} \CMean_{M} (\bE_{\omega}) $ of every unit vector $\bE_{\omega}$ converges to $\bP_*$, which would be a circular reasoning. 

This can be circumvented, by showing that for all  $\omega \in \Omega $ the sequence $ \left(\CMean_{M}(\bE_{\omega}) \right)_{M \in \N} \subseteq l^1(\Omega) $ has a convergent \emph{subsequence} $ \left(\CMean_{M_n}(\bE_{\omega}) \right)_{n \in \N}$ in $l^1(\Omega) $. This limit must coincide with $\bP_*$, since: 

\begin{itemize}
\item[(i)]
It lies in the kernel of $\Id-Q$: 

%----------------------------------------------------------------------------------
\begin{equation*}
\begin{aligned}
(\Id - Q) \lim\limits_{n \to \infty} 
\underbrace{
    \CMean_{M_n}(\bE_\omega)
}_{
    \frac{1}{M_n} \sum\limits_{k=0}^{M_n-1} Q^{k}(\bE_{\omega}) 
}
\xlongequal{\text{Q continuous}}
\lim\limits_{n \to \infty} 
\underbrace{
    (\Id - Q) \, \sum\limits_{k=0}^{M_n-1} Q^{k}(\bE_{\omega})
}_{
    \frac{\bE_{\omega} - Q^{M_n}\bE_{\omega}}{M_n}
}
=
\bzero
\end{aligned}
\end{equation*}
%----------------------------------------------------------------------------------

\item[(ii)] The norm is conserved by the limit $\lim\limits_{n \to \infty} \CMean_{M_n}  $: 

%----------------------------------------------------------------------------------
\begin{equation*}
\begin{aligned}
\| \lim\limits_{n \to \infty} \, \CMean_{M_n}(\bE_\omega) \|_1 
\xlongequal{\| \cdot \|_1 \text{ continuous}}
\lim\limits_{n \to \infty} \, 
\underbrace{
    \| \CMean_{M_n}(\bE_\omega) \|_1 
}_{
    1
    }
=
1. 
\end{aligned}
\end{equation*}
%----------------------------------------------------------------------------------

\end{itemize}

Since $(\bE_{\omega})_{\omega \in \Omega} $ forms a Schauder basis of $ l^1(\Omega)$, we can conclude that the vector space $l^1(\Omega)$ of summable entries of $\Omega$ equals the sum of the closure of the image- and the kernel of the operator $\Id - Q $, that is: \\
$ l^1(\Omega) = \overline{ \Kern(\Id - Q) + \Image(\Id - Q)} $.

% and, since $\CMean_{M_n} $ conserves the norm, so does $\CMean$ 

% and
% The limit of this subsequence must again be $\bP_*$: Letting the operator $(\Id - Q) $ act on this limit, equation \eqref{} tells us that the result must vanish, when replacing the number $M$ by $M_n$ 

\begin{proof}
Since the Markov chain is irreducible, the dimension of the fixed points of $Q$ (that is the dimension of the kernel of $ \Id - Q $) is at most one (compare lemma \ref{Lemma_StrictPositivityForStationarySolutionsOfStronglyConnectedNetworks}) and due to positive recurrence, there exists a strictly positive probability vector $\bP_* \in \ProbStates \cap (\R_{> \, 0})^{\Omega} $ such that $ \Id - Q = \Span(\bP_*)$ (compare lemma \ref{Lemma_PositiveRecurrenceForCTMC} ).

In order to show the existence of a converging subsequence of $ \left(\CMean_{M}(\bE_{\omega}) \right)_{M \in \N} $, we make use of the fact that the element of the kernel of the generator has strictly positive entries: $\bP_* \in \Kern(\Id - Q) \cap (\R_{>0})^\Omega $. This allows us to define the vector $ \left( \frac{P_*^{(\alpha)}}{P_*^{(\omega)}}\right)_{\alpha \in \Omega } $, which at the same time, is an element-wise upper bound for $\bE_{\omega} $, that is we have

%----------------------------------------------------------------------------------
\begin{equation}\label{Eq_CapturingErgodicMean}
\begin{aligned}
\bzero
&\leq \;\;
\bE_{\omega}
&\leq
\frac{\bP_*}{P_*^{(\omega)}} 
&\text{   element-wise, that is } 
%%%%%%%%%%%%%%%%%%%%%%%%%%%%%%%%%%%%%%%
\\
%%%%%%%%%%%%%%%%%%%%%%%%%%%%%%%%%%%%%%%
0
&\leq \;\;
E_{\omega}^{(\alpha)}
&\leq
\frac{P_*^{(\alpha)}}{P_*^{(\omega)}} 
&\text{   for all  } \alpha \in \Omega. 
\end{aligned}
\end{equation}
%----------------------------------------------------------------------------------

We note that equation \eqref{Eq_CapturingErgodicMean} reads the following:

%----------------------------------------------------------------------------------
\begin{equation*}
\begin{aligned}
0
&\leq \;\;
\underbrace{
E_{\alpha}^{(\alpha)}
}_{
1
}
&&\leq
\underbrace{
\frac{P_*^{(\alpha)}}{P_*^{(\alpha)}} 
}_{
1
} \text{    for $\alpha=\omega$ and  }
%%%%%%%%%%%%%%%%%%%%%%%%%%%%%%%%%%%%%%%
\\ \\ 
%%%%%%%%%%%%%%%%%%%%%%%%%%%%%%%%%%%%%%%
0
&\leq \;\;
\underbrace{
E_{\omega}^{(\alpha)}
}_{
0
}
&&\leq
\underbrace{
\frac{P_*^{(\alpha)}}{P_*^{(\omega)}} 
}_{
> \, 0
} \text{  for $\alpha \neq \omega $.  }
\end{aligned}
\end{equation*}
%----------------------------------------------------------------------------------

Since $ Q $ is a positive, linear operator (and hence $(Q)^{i} $ is positive, for all natural numbers $i \in \N_0$), we conclude: 

%----------------------------------------------------------------------------------
\begin{equation*}
\begin{aligned}
\bzero
\overset{\eqref{Eq_CapturingErgodicMean}}{\leq} 
\left(Q \right)^{i} ( \bE_{\omega})
\overset{\eqref{Eq_CapturingErgodicMean}}{\leq} 
\left(Q \right)^{i} \, 
\left(
\frac{\bP_*}{P_*^{(\omega)}}
\right)
\xlongequal[\text{point of $ Q $}]{\text{$ \bP_*$ is a fixed}}
%
% =
%
\frac{\bP_*}{P_*^{(\omega)}}. 
\end{aligned}
\end{equation*}
%----------------------------------------------------------------------------------

By summing over $i$ from $0$ to $M-1$, we can make the same estimation for the corresponding Cesaro-mean: 

%----------------------------------------------------------------------------------
\begin{equation*}
\begin{aligned}
\bzero
\leq 
\CMean_{M} \left( \bE_{\omega} \right) = \frac{1}{M} \, \sum\limits_{i=0}^{M-1} \, \left( Q \right)^{i} \, ( \bE_{\omega})
\leq
\frac{\bP_*}{P_*^{(\omega)}}. 
\end{aligned}
\end{equation*}
%----------------------------------------------------------------------------------

Since for every $\alpha \in \Omega $ the sequence $ \left(\left( \CMean_{M} \left( \bE_{\omega} \right)  \right)^{(\alpha)} \right)_{M \in \N} $ is contained in the compact interval $ \left[0,  \frac{P_*^{(\alpha)}}{P_*^{(\omega)}} \right] $, we can construct - using a `diagonal argument' - a pointwise converging subsequence of $ \left( \CMean_{M_n} \left( \bE_{\omega} \right)^{} \right)_{n\in\N} $ . 

An alternative way to argue would be to consider the set $ \prod\limits_{\alpha \in \Omega} \left[0,  \frac{P_*^{(\alpha)}}{P_*^{(\omega)}} \right] $, which by Tychonoff's theorem, is compact (in the product topology) as a product of compact spaces. Since every sequence in a compact space contains a convergent subsequence, we get a - pointwise - converging subsequence $  \left( \CMean_{M_n} \left( \bE_{\omega}   \right)^{} \right)_{n\in\N} $ to $ \bP_* \in \ProbStates \cap  \prod\limits_{\alpha \in \Omega} \left[0,  \frac{P_*^{(\alpha)}}{P_*^{(\omega)}} \right] $. 

So far, we have shown the existence of a \emph{pointwise} converging subsequence, that is for all states $\alpha \in \Omega $ and all small numbers $\epsilon >0$, we know of the existence of a natural number $ N_{\alpha, \epsilon} \in \N $ such that for all larger numbers $ n \geq N_{\alpha, \epsilon} $ we have: 

%----------------------------------------------------------------------------------
\begin{equation*}
\begin{aligned}
\Bigl|
\bigl(\CMean_{M_n}(\bE_\omega)\bigr)^{(\alpha)} - P_*^{(\alpha)}
\Bigr|
<
\epsilon. 
\end{aligned}
\end{equation*}
%----------------------------------------------------------------------------------

What remains of the proof, is to show is that this convergence is not only \emph{pointwise}, but also with respect to the $\| \cdot \|_1$-norm (i.e. that the number $N_{\alpha, \epsilon}$ can be chosen independent of $\alpha $). 

Since both the elements $\CMean_{M_n}(\bE_{\omega}) $ of the sequence, as well as $\bP_{*} $ are probability vectors (that is, $ \CMean_{M_n}(\bE_{\omega}), \bP_{*} \in \ProbStates$ ), this is a direct consequence of lemma \ref{Lemma_ForProbabilityVectors_PointwiseConvergenceImplies1NormConvergence}, which concludes the proof.

\end{proof}
%%%%%%%%%%%%%%%%%%%%%%%%%%%%%%%%%%%%%%%%%%%%%%%%%%%%%%%%%%%%%%%%%%%%%%%%%%%%%%%%%%%%%%%%%%%%%%%%%%%%%%%%%%%%%%%%%%%%%%%%%%%%%%%%%%%%%%%%%%%%%%%

%#################################################################################################################################################################################

%#################################################################################################################################################################################

\subsection{The long-term behavior of an irreducible, positive recurrent master equation} \label{Section_TheLongTermBehavioOfAnIrreduciblePositiveRecurrentMasterEquation}

Similar to the discrete-time case we define the subspace where the long-term behavior is well defined. The time limit $\xlongrightarrow[\| \cdot \|_1]{t \to \infty} $ - in analogy to the Cesaro mean - is then again a projection onto the stationary states.

%%%%%%%%%%%%%%%%%%%%%%%%%%%%%%%%%%%%%%%%%%%%%%%%%%%%%%%%%%%%%%%%%%%%%%%%%%%%%%%%%%%%%%%%%%%%%%%%%%%%%%%%%%%%%%%%%%%%%%%%%%%%%%%%%
\begin{lemma}
Let $\ltb := \{ \bX \in l^1(\Omega) \,:\, \lim\limits_{t \to \infty } \e^{t \, \G} \bX \text{  exists with respect to } \| \cdot \|_1 \} $. Then $\ltb $ is closed in $l^1(\Omega) $ and the mapping

%----------------------------------------------------------------------------------
\begin{equation*}
\begin{aligned} 
 \LL \, : \, \ltb &\to l^1(\Omega) 
%%%%%%%%%%%%%%%
\\
%%%%%%%%%%%%%%%
\bX &\mapsto \lim\limits_{t \to \infty} \e^{t \, \G} \, \bX 
\end{aligned}
\end{equation*}
%----------------------------------------------------------------------------------

is a projection onto $\Kern(\Id - \e^{t \, \G}) = \Kern(\G) $. 
\end{lemma}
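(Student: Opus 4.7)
The plan is to mirror the proof of Lemma \ref{Lemma_PropertiesOfCMean} almost verbatim, replacing the finite Cesàro averages $\CMean_M$ by the semigroup operators $\e^{t\G}$. The essential ingredient that makes everything work is that $(\e^{t\G})_{t\geq 0}$ is a contractive $C_0$-semigroup on $l^1(\Omega)$, i.e.\ $\|\e^{t\G}\|_{1}^{\text{(op)}} \leq 1$ for all $t\geq 0$. This follows from Section \ref{Section_SolutionRemainsAProbabilitySequence}: for $\bX \in l^1(\Omega)$, decomposing $\bX = \bX^+ - \bX^-$ into positive and negative parts, applying $\e^{t\G}$ separately to each (which preserves positivity and the $1$-norm on non-negative vectors), and using the triangle inequality yields $\|\e^{t\G}\bX\|_1 \leq \|\bX^+\|_1 + \|\bX^-\|_1 = \|\bX\|_1$. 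As a consequence, $\LL$ is automatically $1$-bounded wherever it is defined.

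For closedness of $\ltb$, I would take a sequence $(\bX_n)\subseteq \ltb$ with $\bX_n \xrightarrow{\|\cdot\|_1} \bX$ and set $\bX_{n,*} := \LL(\bX_n)$. Using the standard triple-intercalation trick
\[
\|\bX_{n,*} - \bX_{m,*}\|_1 \leq \|\bX_{n,*} - \e^{t\G}\bX_n\|_1 + \underbrace{\|\e^{t\G}\|_{1}^{\text{(op)}}}_{\leq 1}\|\bX_n - \bX_m\|_1 + \|\e^{t\G}\bX_m - \bX_{m,*}\|_1,
\]
choosing first $n,m$ large to control the middle term and then $t$ large to control the outer two, shows $(\bX_{n,*})$ is Cauchy, hence converges to some $\bX_*\in l^1(\Omega)$. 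An entirely analogous intercalation then shows $\e^{t\G}\bX \xrightarrow{t\to\infty} \bX_*$, proving $\bX\in\ltb$ and $\LL(\bX)=\bX_*$.

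The image of $\LL$ is characterized via the semigroup property: for $\bX\in\ltb$ and any $s\geq 0$, continuity of $\e^{s\G}$ gives
\[
\e^{s\G}\LL(\bX) = \e^{s\G}\lim_{t\to\infty}\e^{t\G}\bX = \lim_{t\to\infty}\e^{(s+t)\G}\bX = \LL(\bX),
\]
so $\LL(\bX)$ is a common fixed point of the whole semigroup. Since $\G$ is bounded and $t \mapsto \e^{t\G}\LL(\bX)$ is therefore analytic, differentiating at $s=0$ yields $\G\LL(\bX)=0$, i.e.\ $\LL(\bX)\in\Kern(\G)$. Conversely, $\Kern(\G)\subseteq\Kern(\Id - \e^{t\G})$ for every $t\geq 0$ is immediate from the power-series definition of $\e^{t\G}$, which establishes the claimed equality $\Kern(\G)=\Kern(\Id-\e^{t\G})$ (interpreted as the common fixed-point set). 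Finally, the projection property $\LL^2 = \LL$ is now immediate: for $\bX\in\ltb$, $\LL(\bX)\in\Kern(\G)$ implies $\e^{t\G}\LL(\bX) = \LL(\bX)$ for all $t$, so $\LL(\LL(\bX)) = \lim_t \e^{t\G}\LL(\bX) = \LL(\bX)$, and surjectivity onto $\Kern(\G)$ holds since every $\bY\in\Kern(\G)$ trivially satisfies $\bY\in\ltb$ with $\LL(\bY)=\bY$. The only nontrivial point deserving care is the direction $\Kern(\Id - \e^{t\G})\subseteq\Kern(\G)$; this is precisely where boundedness of $\G$ (and hence the analyticity of the semigroup ensured by $\|\G\|_{1,1}^{\text{(op)}}<\infty$) enters, so the argument must be phrased either as ``$\e^{s\G}\bX=\bX$ for all $s$'' (followed by differentiation) or directly via the holomorphic calculus—both routes are available here.
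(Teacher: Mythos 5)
Your proposal is correct and follows essentially the same route as the paper: contractivity of $\e^{t\,\G}$ on $l^1(\Omega)$, the triple-intercalation argument for closedness of $\ltb$ (mirroring Lemma \ref{Lemma_PropertiesOfCMean}), the semigroup property plus continuity to place $\LL(\bX)$ in the common fixed space, and the fixed-point property for $\LL^2=\LL$. The only difference is that you are somewhat more careful than the paper in justifying $\Kern(\Id-\e^{t\,\G})\subseteq\Kern(\G)$ (via invariance for all $s$ and differentiation at $s=0$, using boundedness of $\G$), a point the paper asserts without comment.
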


\begin{proof}
The closedness of $\ltb $ can be shown in analogy to the proof in lemma \ref{Lemma_PropertiesOfCMean}, by replacing `$ \lim\limits_{M \to \infty} \CMean(\bX) $' by `$ \lim\limits_{t \to \infty} \e^{t \, \G} $' (see lemma \ref{Lemma_ltbClosedIn_l1} in the appendix for details).

The fact that the image of $ \LL $ is contained in the fixed space of $\e^{t \, \G} $ can be seen from the following calculation: For $\bX \in \ltb $ and $\bX_* := \LL(\bX)$, we have 

%----------------------------------------------------------------------------------
\begin{equation*}
\begin{aligned} 
\|
\left(
\Id - \e^{\tau \, \G}\right) \LL(\bX) 
\|_1
&=
\|
\underbrace{
\LL(\bX)
}_{
\lim\limits_{t \to \infty } \e^{t \, \G} \, \bX
}
-
\e^{\tau \, \G}  
\underbrace{
\LL(\bX)
}_{
\lim\limits_{t \to \infty } \e^{t \, \G} \, \bX
}
\pm 
\,  \bX_*  \|_1
\leq 
\lim\limits_{t \to \infty }
\left( 
\| \e^{(t+\tau) \, \G} \, \bX - \bX_*\|_1 + \|\bX_* -  \e^{\tau \, \G} \, \bX  \|_1
\right)
%%%%%%%%%%%%%%%%%%%%%%%%%%%%%%%%%%%%%%%%%%%%%%
\\
%%%%%%%%%%%%%%%%%%%%%%%%%%%%%%%%%%%%%%%%%%%%%%
%
&=
0
=
\|
\LL(\bX) \left(
\Id - \e^{\tau \, \G}\right) 
\|_1. 
\end{aligned}
\end{equation*}
%----------------------------------------------------------------------------------

hence $ \LL(\bX) \in \Kern(\Id - \e^{\tau \, \G}) = \Kern(\G) $ and $\LL(\e^{\tau \, \G} \, \bX) = \LL(\bX) = \e^{\tau \, \G} \,\LL( \bX) $. 

The fact that $ \LL $ is a projection, can be seen as follows:

%----------------------------------------------------------------------------------
\begin{equation*}
\begin{aligned} 
\LL^2(\bX)
&=
\lim\limits_{t \to \infty} 
\underbrace{
\e^{t \, \G } \LL
}_{ 
\LL
}
\, (\bX)
= 
\LL(\bX). 
\end{aligned}
\end{equation*}
%----------------------------------------------------------------------------------
\end{proof}
%%%%%%%%%%%%%%%%%%%%%%%%%%%%%%%%%%%%%%%%%%%%%%%%%%%%%%%%%%%%%%%%%%%%%%%%%%%%%%%%%%%%%%%%%%%%%%%%%%%%%%%%%%%%%%%%%%%%%%%%%%%%%%%%%

Just as we have seen in the discrete-time case, that the two properties of irreducibility and positive recurrence make sure that all states are mean ergodic, they guarantee in the continuous-time case that the time limit exists for all initial states.

%%%%%%%%%%%%%%%%%%%%%%%%%%%%%%%%%%%%%%%%%%%%%%%%%%%%%%%%%%%%%%%%%%%%%%%%%%%%%%%%%%%%%%%%%%%%%%%%%%%%%%%%%%%%%%%%%%%%%%%%%%%%%%%%%
\begin{thm}[The time limit for countable, infinite dimensional master equations exists for all networks which are irreducible and positive recurrent] \label{Thm_TheTimeLimitForCountableInfiniteDimensionalMasterEquationsExistsForAllNetworksWhichSAreIrreducibleAndPositiveRecurrent} $ $ \\
If the master equation is both irreducible and positive recurrent, then the limit $\lim\limits_{t \to \infty} \e^{t \, \G} \bX_0 $ exists for all $\bX_0 \in l^1(\Omega) $, that is $\ltb = l^1(\Omega )$. 
\end{thm}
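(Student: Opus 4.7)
The plan is to mirror the discrete-time argument of Theorem \ref{Thm_Irreducible_PositiveRecurrent_l1IsMeanErgodic}, now for the continuous-time semigroup $(e^{t\G})_{t\geq 0}$. Because $\ltb$ is closed in $l^1(\Omega)$ and the unit vectors $(\bE_\omega)_{\omega \in \Omega}$ form a Schauder basis of $l^1(\Omega)$, it suffices to verify $\bE_\omega \in \ltb$ with $\LL(\bE_\omega) = \bP_*$ for every $\omega \in \Omega$, where $\bP_* \in \Kern(\G) \cap (\R_{>0})^\Omega$ is the unique stationary probability vector furnished by Lemma \ref{Lemma_VectorOfExpectedVisitingTimeIsCandidateForStationarySolutionOfCTMC}.

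\textbf{Step 1 (precompactness of the orbit).} Positivity of $(e^{t\G})_{t\geq 0}$ combined with the element-wise inequality $\bE_\omega \leq \bP_*/P_*^{(\omega)}$ yields
\[
\bzero \;\leq\; e^{t\G}\bE_\omega \;\leq\; e^{t\G}\,\frac{\bP_*}{P_*^{(\omega)}} \;=\; \frac{\bP_*}{P_*^{(\omega)}}
\]
for every $t \geq 0$. Since the dominating vector $\bP_*/P_*^{(\omega)}$ lies in $l^1(\Omega)$, a Tychonoff/diagonal extraction along any sequence $t_n \to \infty$ produces a pointwise convergent subsequence, and the tail estimate used in the proof of Theorem \ref{Thm_Irreducible_PositiveRecurrent_l1IsMeanErgodic} upgrades this to convergence in $\|\cdot\|_1$. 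In particular, any $l^1$-accumulation point $\bY$ of the orbit satisfies $\|\bY\|_1 = 1$ and $\bzero \leq \bY \leq \bP_*/P_*^{(\omega)}$.

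\textbf{Step 2 (identification of accumulation points with $\bP_*$).} Let $\bY$ be such an accumulation point: $e^{t_n\G}\bE_\omega \to \bY$ in $l^1$. Fix any $h > 0$ and set $Q_h := e^{h\G}$; by Section \ref{Section_ThePositivityOfTheSolutionOperatorForAStronglyConnectedNetwork} this is strictly positive, and because $Q_h\bP_* = \bP_*$ the induced DTMC is irreducible and positive recurrent, so Theorem \ref{Thm_Irreducible_PositiveRecurrent_l1IsMeanErgodic} applied with initial state $\bY$ gives
\[
\frac{1}{N}\sum_{k=0}^{N-1} Q_h^k \bY \;\xlongrightarrow[\|\cdot\|_1]{N\to\infty}\; \bP_*.
\]
On the other hand, by $l^1$-continuity of $Q_h^k$, each iterate $Q_h^k\bY = \lim_n e^{(t_n + kh)\G}\bE_\omega$ is again an accumulation point of the original orbit, so the closed set $A$ of accumulation points is $Q_h$-invariant and compact (closed subset of the dominated compact set from Step~1). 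Strict positivity of $Q_h$ (the continuous-time analogue of aperiodicity) combined with the Cesaro convergence above then forces $A = \{\bP_*\}$, hence $\bY = \bP_*$.

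\textbf{Step 3 (conclusion).} With $\bP_*$ the unique accumulation point of the precompact orbit, $e^{t\G}\bE_\omega \to \bP_*$ in $l^1$ as $t \to \infty$; thus $\bE_\omega \in \ltb$. Closedness of $\ltb$ together with the density of $\mathrm{span}\{\bE_\omega : \omega \in \Omega\}$ in $l^1(\Omega)$ then yields $\ltb = l^1(\Omega)$.

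\textbf{Main obstacle.} The delicate point is Step~2, namely ruling out nontrivial accumulation points besides $\bP_*$. What makes this possible in continuous (and not in general discrete) time is the strict positivity of $e^{h\G}$ for every $h>0$, which forbids any peripheral spectrum on the unit circle beyond $\{1\}$; pinning this down rigorously either invokes the Perron--Frobenius theorem for irreducible positive semigroups (as in \cite{arendt2020positive}) or requires a Doeblin-type minorization estimate on finite subnetworks, combined with the thermodynamic-limit approximation of Theorem \ref{ThermodynamicLimitForMasterEquations}.
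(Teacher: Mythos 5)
Your Steps 1 and 3 are sound and in fact reproduce the dominated-orbit technique that the paper uses for the discrete-time statement (Theorem \ref{Thm_Irreducible_PositiveRecurrent_l1IsMeanErgodic}): the element-wise bound $\bzero \leq \e^{t\,\G}\bE_\omega \leq \bP_*/P_*^{(\omega)}$ plus the tail estimate does give that every sequence $t_n \to \infty$ has a subsequence along which $\e^{t_n\,\G}\bE_\omega$ converges in $\|\cdot\|_1$ to a probability vector, and closedness of $\ltb$ plus the Schauder basis reduces everything to the unit vectors. The genuine gap is Step 2. Mean ergodicity of $Q_h = \e^{h\,\G}$ only tells you that the \emph{Cesaro averages} $\frac{1}{N}\sum_{k=0}^{N-1}Q_h^k\bY$ converge to $\bP_*$; it does not rule out that the orbit itself oscillates, i.e.\ that the compact $Q_h$-invariant set $A$ of accumulation points is larger than $\{\bP_*\}$ while still averaging to $\bP_*$. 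Excluding that is exactly the content of the theorem (absence of peripheral spectrum of $\e^{h\,\G}$ other than $1$), and your proposal asserts it with the phrase ``strict positivity of $Q_h$ \dots forces $A=\{\bP_*\}$'' without an argument; your own ``Main obstacle'' paragraph concedes this. Note also that the two repairs you suggest are not interchangeable: a Doeblin-type minorization fails in general on a countably infinite state space, because strict positivity of all entries of $\e^{h\,\G}$ does not give a minorizing measure uniform in the starting state (the column-wise infima can vanish), so that route would need the thermodynamic-limit machinery plus uniformity estimates you do not have.

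The paper closes precisely this gap by a different mechanism: it splits $l^1(\Omega) = \Kern(\Id - Q_\tau)\oplus\overline{\Image(\Id - Q_\tau)}$ (Theorems \ref{Thm_ExplicitFormOfTheMeanErgodicSubspaceInl1} and \ref{Thm_Irreducible_PositiveRecurrent_l1IsMeanErgodic}), reduces to convergence on $\Image(\Id - Q_\tau)$, and there uses the Katznelson--Tzafriri theorem (Lemma \ref{Lemma_Katznelson_Tzafriri_1984}), which requires showing $\sigma(\e^{\tau\,\G})\cap\{z\in\C : |z|=1\}\subseteq\{1\}$; this spectral fact is then obtained from the spectral mapping theorem (Lemma \ref{Lemma_VersionOfTheSpectralMappingThm}), the vanishing of the spectral bound for the stochastic semigroup (Lemma \ref{Lemma_GrowthBoundAndSpectralBoundOfPositiveSemigroups}), and the cyclicity of the boundary spectrum (Lemma \ref{Lemma_BoundarySpectrumContainsOnlyTheSpectralBound}). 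If you want to keep your accumulation-point architecture, you must import an equivalent peripheral-spectrum (or Perron--Frobenius for irreducible positive semigroups, as in \cite{arendt2020positive}) input at Step 2; as written, the proof does not go through.
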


\begin{proof}

For every $ t>0$, the solution operator $Q_t = \e^{t \, \G } $ is a stochastic matrix, which is both irreducible and positive recurrent, if and only if the master equation is both irreducible and positive recurrent.

After applying the theorems \ref{Thm_ExplicitFormOfTheMeanErgodicSubspaceInl1} and \ref{Thm_Irreducible_PositiveRecurrent_l1IsMeanErgodic}, we get: 

%----------------------------------------------------------------------------------
\begin{equation*}
\begin{aligned} 
l^1(\Omega)
&\xlongequal[]{\text{thm } \ref{Thm_Irreducible_PositiveRecurrent_l1IsMeanErgodic}}
\underbrace{\Kern(\Id - Q_t)}_{\Kern(\G)} \oplus 
\overline{ \Image(\Id - Q_t)}
\xlongequal[]{\text{thm } \ref{Thm_ExplicitFormOfTheMeanErgodicSubspaceInl1}}
l^1_{m.e.}(\Omega, Q_t)
\end{aligned}
\end{equation*}
%----------------------------------------------------------------------------------

Since $\ltb $ is closed in $l^1(\Omega) $ and clearly contains the set of fixed points $\Kern(\Id - Q_t) = \Kern(\G) $, it suffices to show convergence on $ \Image(\Id - Q_t) $.

% \begin{color}{red}
% Let us define linear subspace, where the long-term behavior, namely 
% %----------------------------------------------------------------------------------
% \begin{equation*}
% \begin{aligned}    
% \ltb
% %
% &:=
% %
% \{\bX \in l^1(\Omega) \,:\, \lim\limits_{t \to \infty} Q_t \, \bX \text{  exists}\}. 
% \end{aligned}
% \end{equation*}
% %----------------------------------------------------------------------------------

% Clearly, we the long-term behavior es defined for the fixed points of the semigroup, that is $ \Kern(\Id - Q_t) = \Kern(\G) \subseteq \ltb $. 
% By a similar argument as in lemma \ref{Lemma_PropertiesOfCMean} we can show that $\ltb$ is closed in $l^1(\Omega) $. 

% This means, that in order to show convergence on the whole space, it suffices to show convergence on $ \Image(\Id - Q_t) $. In our case, we even have $Q_t \, (\Id - Q_\tau) \bX \xlongrightarrow{t \to \infty}  \bzero $. 
% \end{color}

Now fix $\tau > 0 $ and write $ t = n \, \tau + \tilde{\tau} $. This helps us to make the following estimation: 

%----------------------------------------------------------------------------------
\begin{equation*}
\begin{aligned}    
\|
\underbrace{
Q_t \, \left( \Id - Q_\tau \right)
}_{
Q_{t} - Q_{t + \tau}
}
\bX 
\|
&=
\|\left(
Q_{n \, \tau + \tilde{\tau}}
-
Q_{(n+1) \, \tau + \tilde{\tau}}
\right)
\bX 
\|
%%%%%%%%%%%%%%%%%%%%%%%%%%%%%%%%%%%%%%%%%%%%%%%%%%%
\\ &=
%%%%%%%%%%%%%%%%%%%%%%%%%%%%%%%%%%%%%%%%%%%%%%%%%%%
\|
\left( Q_{\tau}^{n} - Q_{\tau}^{n+1} \right) \, Q_{\tilde{\tau}} \, 
\bX 
\|
\leq 
\|
Q_{\tau}^{n} - Q_{\tau}^{n+1}
\| \, \| 
Q_{\tilde{\tau}} \, \bX 
\|
\end{aligned}
\end{equation*}
%----------------------------------------------------------------------------------

In order to show that $ \|Q_{\tau}^{n} - Q_{\tau}^{n+1} \| $ converges  for $ n \to \infty $ (and thus for $t \to \infty $), it suffices - by lemma \ref{Lemma_Katznelson_Tzafriri_1984} - to show that the spectrum of $ Q_{\tau} $ intersects with the unit circle at the point $1$, that is $\sigma(Q_{\tau}) \cap \{ z \in \C \,:\, |z| = 1\}  \subseteq \{1\} $. This can be done by invoking a version of the spectral mapping theorem (compare \ref{Lemma_VersionOfTheSpectralMappingThm}), the fact that the boundary spectrum of the generator of the master equation contains only its spectral bound (lemma \ref{Lemma_BoundarySpectrumContainsOnlyTheSpectralBound}), as well as the fact that this spectral bound vanishes (see lemma \ref{Lemma_GrowthBoundAndSpectralBoundOfPositiveSemigroups}):

% and the two lemma \ref{Lemma_SpectralMappingAppliedToBoundaryOfTheUnitCircle} and \ref{Lemma_TheIntersectionOfTheSpectrumOfTheGenerqatorWithTheImaginaryAxisIsBothCyclicAndBouonded}: 

%----------------------------------------------------------------------------------
\begin{equation*}
\begin{aligned}    
\sigma(\e^{t \, \G}) &\cap \{ z \in \C \,:\, |z| = 1\} 
=
\underbrace{
\left(
\sigma(\e^{t \, \G}) \backslash \{0\}
\right)
}_{
\e^{t \, \sigma(\G)}
}
\cap \, \{ z \in \C \,:\, |z| = 1\} 
\xlongequal{\text{lemma } \ref{Lemma_VersionOfTheSpectralMappingThm} }
\e^{t \, \sigma(\G)} \cap \{ z \in \C \,:\, |z| = 1\} 
%%%%%%%%%%%%%%%%%%%%%%%%%%%%%%%%%%%%%%%%%%%%%%%%%%%%%%%%%%%%%%%%%%%%%%%%%%%%%%%%%
\\ &\xlongequal{\text{lemma \ref{Lemma_SpectralMappingAppliedToBoundaryOfTheUnitCircle}}}
%%%%%%%%%%%%%%%%%%%%%%%%%%%%%%%%%%%%%%%%%%%%%%%%%%%%%%%%%%%%%%%%%%%%%%%%%%%%%%%%%
\e^{t \, \sigma(\G) \cap i \, \R}
\xlongequal{i \, t \, \R = i \, \R }
\e^{ t \, (\sigma(\G) \cap i \, \R )} 
\xlongequal[s(\G) = 0]{\text{lemma } \ref{Lemma_GrowthBoundAndSpectralBoundOfPositiveSemigroups}}
\e^{ t \, 
\overbrace{
\bigl(\sigma(\G) \cap \{z \in \C \,:\, \Re[z]=s(\G)
\} \bigr)
}^{
\sigma_b(\G)
}} 
%%%%%%%%%%%%%%%%%%%%%%%%%%%%%%%%%%%%%%%%%
\\ &\xlongequal{}
%%%%%%%%%%%%%%%%%%%%%%%%%%%%%%%%%%%%%%%%%
\e^{ t \, \overbrace{\sigma_b(\G)}^{\{0\} } }
\xlongequal{\text{lemma } \ref{Lemma_BoundarySpectrumContainsOnlyTheSpectralBound}}
\e^{\{0\} }
=
\{1\}. 
\end{aligned}
\end{equation*}
%----------------------------------------------------------------------------------

This concludes the proof.

% The last step is due to the fact that $ (\sigma(\G) \cap i \, \R ) $ is both cyclic and bounded, which can only be the case, if the set $ (\sigma(\G) \cap i \, \R ) $ contains (at most) the number $0$. Since our system is positive recurrent, we indeed have an element $\bX_*  \in \Kern(\G) $, hence $0 \in (\sigma(\G) \cap i \, \R )$. 

% %----------------------------------------------------------------------------------
% \begin{equation*}
% \begin{aligned}    
% \end{aligned}
% \end{equation*}
% %----------------------------------------------------------------------------------

\end{proof}

\newpage

\section{The thermodynamic limit of the stationary solutions of the master equation}\label{Chapter_TheThermodynamicLimitOfTheStationarySolutionsOfTheMasterEquations}

%#################################################################################################################################################################################
\subsection{The (uniform) convergence of the time limit with respect to the system size}

The solution of the master equation of a finite system size approaches a steady state, which - for non irreducible networks - can depend on the initial condition (compare \cite{fernengel2022obtaining}). 

Let us consider a generator $\G$ of a countable, infinite dimensional system, fix an arbitrary initial state $\bP_0 \in \ProbStates $ and a finite set $F \in \Fin(\Omega)$.

In the following, we estimate the difference between the solution of the master equation $ \bP^F(t \,|\, \bPF_0) $ and the steady state $ \bP^F_\infty(\bPF_0) $. When keeping in mind, that the stationary state is an eigenvector of the generator to the eigenvalue $\lambda=0$, we get:

%------------------------------------------------------
\begin{equation*}
\begin{aligned}
\left\|
\underbrace{
\bP^{F}(t \,|\, \bPF_0)
}_{
\e^{t \, \GF} \bPF_0
} 
-
\bPF_\infty \left(\bPF_0 \right)
\right\|_1    
%%%%%%%%%%%%%%%%%%%
&=
%%%%%%%%%%%%%%%%%%%
\sum\limits_{\lambda \in \sigma\left(\GF\right) \backslash \{0\} } \, 
\underbrace{
\e^{t \, \lambda}
}_{
\leq \, \e^{-t \, \Sgap}
}
\,
\underbrace{
\left(
\sum\limits_{d=1}^{g_{\lambda}(\GF)}
\;\; 
\sum\limits_{r=1}^{j_{\lambda, d}(\GF)}
 \mu_{\lambda, d, r} \, (\GF) \, 
\sum\limits_{k=0}^{r-1}
\frac{t^k}{k!}
\right)
}_{
\leq \, c \, t^{j(\lambda)}
} \\ 
&\leq
C \, \e^{-t \, \Sgap } \, t^{\JGF}. 
\end{aligned}
\end{equation*}
%------------------------------------------------------

where $\sigma $ denotes the spectrum of a matrix, $ g_{\lambda}(\GF) $ the geometric multiplicity, $ j_{\lambda, d}(\GF) \in \N_{\leq \, g_{\lambda}(\GF)}$ the size of Jordan block number $ d $ to the eigenvalue $ \lambda $, and $\mu $ arbitrary coefficients. 

The expression in the parenthesis $ \bigl( \cdots \bigr) $ can be estimated as some constants times $t$ to the power of the size of the largest Jordan block $j(\lambda) $, whereas the exponential term $ \e^{t \, \lambda} $ is less or equal to $ \e^{-t \, S_{g}(\GF)} $, where $ S_{g}(\GF) := \text{dist}\bigl( \Re(\sigma(\GF) \backslash \{0\}), 0 \bigr) := \max\{\Re(\lambda) \,:\, \lambda \in \sigma(\GF)\backslash \{0\} \}   $ is called the \emph{spectral gap}, the minimal distance from the eigenvalues of the generator and the imaginary axis.

The sum over the eigenvalue of the generator can again be estimated to a constant times the time $t$ to the power of the \emph{largest} Jordan block $J(\GF)$ times the exponential function of $t$ times the spectral gap.

While this expression does converge to zero for $t \to \infty $, it does not do so uniformly in the system size $|F|$. 

A sufficient conditions for uniform convergence in the system size would be the following two properties:  

\begin{itemize}
\item[(i)] The size of the Jordan blocks is bounded from above, that is $ \JGF \leq N $ for some $N \in \N $ AND
\item[(ii)] The spectral gap is bounded from below and does therefore not vanish in the thermodynamic limit, that is $\Sgap \geq \epsilon_0$ for some $\epsilon_0 >0$. 
\end{itemize}

With these two requirements met, it is straightforward to see that we can achieve uniform convergence, since for fixed $N \in \N$ and $\epsilon>0$, we can choose a time $T_\epsilon $ such that from thereon out ($t \geq T_\epsilon $) we have 

%------------------------------------------------------
\begin{equation*}
\begin{aligned}
\left\|
\bP^{F}(t \,|\, \bPF_0)
-
\bP^{F}_\infty \left(\bPF_0 \right)
\right\|_1    
&\leq
C \, \e^{-t \, \Sgap } \, t^{\JGF}
&\leq
C \, \e^{-t \, \epsilon_0 } \, t^{N}
<
\epsilon. 
\end{aligned}
\end{equation*}
%------------------------------------------------------

But since we lack theorems which guarantee these requirements, we can only prove the existence of the iterated limit 

%------------------------------------------------------
\begin{equation*}
\begin{aligned}
\lim\limits_{{t \to \infty }\atop {|F| \to \infty } } \,
\bPFt 
\end{aligned}
\end{equation*}
%------------------------------------------------------

in very specific examples.

% On the other hand, we cannot have uniform convergence in the system size, if these two criteria are not met. It is therefore only in very specific examples that the iterated limit 

% %------------------------------------------------------
% \begin{equation*}
% \begin{aligned}
% \lim\limits_{{t \to \infty }\atop {|F| \to \infty } } \,
% \bPFt 
% \end{aligned}
% \end{equation*}
% %------------------------------------------------------

% will exist. 

%#################################################################################################################################################################################

%#################################################################################################################################################################################
\subsection{Sufficient conditions for the convergence of stationary solutions for finite systems in the thermodynamic limit} \label{Section_SufficientConditionsForTheConvergenceInTheThermodynamicLimit}

We start by showing that if the thermodynamic limit for stationary solutions of finite systems exist, then this limit vector is a \emph{stationary} solution for the countable, infinite dimensional system.

\begin{thm}[thermodynamic limit of stationary states]\label{IfThermodynamicLimitOfStationaryStatesExists} $ $ \\ 
Let $ \G $ be the generator of an irreducible master equation with $ \| \G \|^\text{(op)}_{1, 1} < \infty $ and let the limiting states  $\left( \bP_*^F \right)_{F \in \Fin(\Omega)} := \left( \bPInftyF \right)_{F \in \Fin(\Omega)} \in  \ProbStates \cap \Kern(\GF) $ of every finite subsystem $F \in \Fin(\Omega)$ converge for all initial state $\bP_0 \in \ProbStates$ to some probability vector $\bP_*$. Then the master equation is positive recurrent with the stationary and limiting solution $ \bP_\infty = \bP_* = \lim\limits_{|F| \to \infty} \bP_*^F  $. 
\end{thm}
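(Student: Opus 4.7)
The plan is to pass to the limit in the stationarity relation $\GF \bP_*^F = \bzero$ using the two approximation properties at our disposal: the hypothesis $\bP_*^F \xlongrightarrow[\|\cdot\|_1]{|F| \to \infty} \bP_*$ and the approximation of the full generator by truncated generators in the $\|\cdot\|_{1,1}^\text{(op)}$ norm (Lemma \ref{Lemma_ApproximatingTheFullGeneratorWithGeneratorsOfFiniteSubnetworks}). Concretely, first I would estimate
\begin{equation*}
\| \G \bP_* \|_1 \; \leq \; \| \G\,(\bP_* - \bP_*^F) \|_1 + \| (\G - \GF)\, \bP_*^F \|_1 + \underbrace{\| \GF \bP_*^F \|_1}_{0} \; \leq \; \| \G \|_{1,1}^\text{(op)} \cdot \| \bP_* - \bP_*^F \|_1 + \| \G - \GF \|_{1,1}^\text{(op)},
\end{equation*}
where I use Lemma \ref{Lemma_Comparing_1NormOfMatrices_with_Vector1NormOfMatrices} to bound the $l^1$-action of each matrix by its $\|\cdot\|_{1,1}^\text{(op)}$ norm. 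Both terms on the right go to zero in the thermodynamic limit by the hypothesis and by Lemma \ref{Lemma_ApproximatingTheFullGeneratorWithGeneratorsOfFiniteSubnetworks}, so $\G \bP_* = \bzero$.

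Next I would verify that $\bP_*$ is a probability vector. Since the $l^1$-norm is continuous and every $\bP_*^F$ is a probability vector, we have $\| \bP_* \|_1 = \lim_{|F| \to \infty} \| \bP_*^F \|_1 = 1$, and non-negativity is preserved under $l^1$-limits. Thus $\bP_* \in \Kern(\G) \cap \ProbStates$. By irreducibility together with Lemma \ref{Lemma_StrictPositivityForStationarySolutionsOfStronglyConnectedNetworks}, the kernel of $\G$ is at most one-dimensional and any element of $\Kern(\G) \cap \ProbStates$ lies in $(\R_{>0})^{\Omega}$, so $\bP_*$ is \emph{the} unique stationary probability vector of the infinite dimensional master equation.

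Having produced a stationary solution, I would invoke Lemma \ref{Lemma_PositiveRecurrenceForCTMC} (the equivalence $(iii) \Rightarrow (i)$: existence of a stationary solution implies positive recurrence of all states) to conclude that the master equation is positive recurrent. Then Theorem \ref{Thm_TheTimeLimitForCountableInfiniteDimensionalMasterEquationsExistsForAllNetworksWhichSAreIrreducibleAndPositiveRecurrent} guarantees that $\lim_{t \to \infty} \e^{t \, \G} \bP_0$ exists in $\|\cdot\|_1$ for every $\bP_0 \in \ProbStates$. Its limit $\bP_\infty$ lies in $\Kern(\G) \cap \ProbStates$, which by uniqueness coincides with $\bP_*$. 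This gives the claimed identity $\bP_\infty = \bP_* = \lim_{|F| \to \infty} \bP_*^F$.

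\textbf{Main obstacle.} The only delicate point is the passage to the limit in the first display: it relies crucially on the operator-norm estimate $\| \G - \GF \|_{1,1}^\text{(op)} \to 0$, which in turn uses the standing assumption $\|\G\|_{1,1}^\text{(op)} < \infty$ (so that the truncation tail is controllable). Once that approximation lemma is in hand, everything else is a clean assembly of earlier results: the uniqueness/positivity statement of Section \ref{Section_UniquenessAndStrictPositivityOfStationarySolutionsForIrreducibleNetworks}, the positive-recurrence characterization of Section \ref{Sec_DiscreteTimeMarkovChains}, and the long-term existence theorem of Section \ref{Section_TheLongTermBehavioOfAnIrreduciblePositiveRecurrentMasterEquation}. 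Note that the argument does \emph{not} require any uniformity of the convergence $\bP^F(t) \to \bP(t)$ in $t$, which is the point where one could otherwise get stuck.
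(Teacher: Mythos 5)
Your proposal is correct and follows essentially the same route as the paper: pass to the limit in $\GF \bP_*^F = \bzero$ using Lemma \ref{Lemma_ApproximatingTheFullGeneratorWithGeneratorsOfFiniteSubnetworks} (your triangle-inequality split is equivalent to the paper's use of continuity of $\G$), check normalization and non-negativity by continuity, and conclude uniqueness via irreducibility and Lemma \ref{Lemma_StrictPositivityForStationarySolutionsOfStronglyConnectedNetworks}. Your explicit appeal to Lemma \ref{Lemma_PositiveRecurrenceForCTMC} and Theorem \ref{Thm_TheTimeLimitForCountableInfiniteDimensionalMasterEquationsExistsForAllNetworksWhichSAreIrreducibleAndPositiveRecurrent} for the positive-recurrence and $\bP_\infty = \bP_*$ claims simply spells out steps the paper leaves implicit.
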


%%%%%%%%%%%%%%%%%%%%%%%%%%%%%%%%%%%%%%%%%%%%%%%%%%%%%%%%%%%%%%%%%%%%%%%%%%%%%%%%%%%%%
\begin{proof}

First, we note that $\bP_*$ has non-negative entries and is normalized, since both the dual unit vector $\bE_{i}'$, as well as the norm $\| \cdot \|_1$ are continuous: 

%------------------------------------------------------------------------------------------------
\begin{equation*}
\begin{aligned}
\bE_i' \left(\lim\limits_{|F| \to \infty} \bP_*^{F} \right)
&\xlongequal{
\bE_i' \text{   continuous}
}
\lim\limits_{|F| \to \infty}
\underbrace{
\bE_i' \left(
\bP_*^{F}
\right)
}_{
\geq \, 0
}
\geq
0 \text{             and }
%%%%%%%%%%%%%%%%%%%%%%%%%%%%%%%%%%%%%%%
\\
%%%%%%%%%%%%%%%%%%%%%%%%%%%%%%%%%%%%%%%
\left\|
\lim\limits_{|F| \to \infty} \bP_*^{F} 
\right \|_1
&\xlongequal{\| \cdot \|_1 \text{ cont}}
\lim\limits_{|F| \to \infty} \, 
\underbrace{
\| \bP_*^{F} \|_1
}_{1}
=
1. 
\end{aligned}
\end{equation*}
%------------------------------------------------------------------------------------------------

To see, that $\lim\limits_{|F| \to \infty} \bP_*^{F} $ indeed lies in the kernel of $\G$, we consider

%------------------------------------------------------------------------------------------------
\begin{equation}\label{Eq_Estimating_G_bPInftyA}
\begin{aligned}
\Bigl\| 
\underbrace{
    \G \, \lim\limits_{|F| \to \infty} \bP_*^{F}
}_{
     \lim\limits_{|F| \to \infty}  \G \, \bP_*^{F}
}
\Bigr \|_1
&\xlongequal{\G \text{ continuous}}
\underbrace{
 \left\| 
 \lim\limits_{|F| \to \infty}  \G \,\bP_*^{F}
 \right \|_1
}_{
    \lim\limits_{|F| \to \infty}
    \left\| 
    \G \,\bP_*^{F}
    \right \|_1
}
%%%%%%%%%%%%%%%%%%%%%%%%%%%%%%%%%%%%%%%%%%5
\xlongequal{\| \cdot \|_1 \text{  continuous}} \\ &=
%%%%%%%%%%%%%%%%%%%%%%%%%%%%%%%%%%%%%%%%%%
\lim\limits_{|F| \to \infty}
\left\| 
\underbrace{
    \G \,\bP_*^{F}
    -
    \overbrace{
        \GF \,\bP_*^{F}
    }^{
        \bzero
    }
}_{
    (\G - \GF) \, \bP_*^{F}
}
\right \|_1
=
\lim\limits_{|F| \to \infty}
\underbrace{
    \left\|
    (\G - \GF) \, \bP_*^{F}
    \right\|_1
}_{
    \leq \, \left\|
    (\G - \GF) 
    \right\|_1
    \, \left \| \bP_*^{F} \right \|_1
}
%%%%%%%%%%%%%%%%%%%%%%%%%%%%%%%%%%
 \\ &\leq 
%%%%%%%%%%%%%%%%%%%%%%%%%%%%%%%%%%
\lim\limits_{|F| \to \infty}
\left\|
(\G - \GF) 
\right\|_1
\, 
\underbrace{
\left \| \bP_*^{F} \right \|_1
}_{1}
\xlongequal[\ref{Lemma_ApproximatingTheFullGeneratorWithGeneratorsOfFiniteSubnetworks}]{\text{lemma} }
0. 
\end{aligned}
\end{equation}
%------------------------------------------------------------------------------------------------

Since the network was assumed to be irreducible, by lemma \ref{Lemma_StrictPositivityForStationarySolutionsOfStronglyConnectedNetworks}, the stationary solution must be unique, hence the limit $ \lim\limits_{|F| \to \infty} \bP_*^{F} $ does not depend on the initial state $\bP_0$. 

\end{proof}
%%%%%%%%%%%%%%%%%%%%%%%%%%%%%%%%%%%%%%%%%%%%%%%%%%%%%%%%%%%%%%%%%%%%%%%%%%%%%%%%%%%%%

%%%%%%%%%%%%%%%%%%%%%%%%%%%%%%%%%%%%%%%%%%%%%%%%%%%%%%%%%%%%%%%%%%%%%%%%%%%%%%%%%%%%%
\begin{rem}
Let $ (F_n)_{n \in \N } \subseteq \Fin(\Omega) $ be an increasing sequence of finite sets (that is $F_{n+1} \subseteq F_n $ for all $n \in \N$ ) such that $\Omega = \bigcup \limits_{n \in \N} F_n$. It is sufficient to check, whether $\left(\bP_\infty^{F_n}(\bP_0^{F_n}) \right)_{n \in \N} $ converges, in order to make the same conclusions as in theorem \ref{IfThermodynamicLimitOfStationaryStatesExists}.  
\end{rem}
%%%%%%%%%%%%%%%%%%%%%%%%%%%%%%%%%%%%%%%%%%%%%%%%%%%%%%%%%%%%%%%%%%%%%%%%%%%%%%%%%%%%%

%%%%%%%%%%%%%%%%%%%%%%%%%%%%%%%%%%%%%%%%%%%%%%%%%%%%%%%%%%%%%%%%%%%%%%%%%%%%%%%%%%%%%%%%%%%%%%%%
%%%%%%%%%%%%%%%%%%%%%%%%%%%%%%%%%%%%%%%%%%%%%%%%%%%%%%%%%%%%%%%%%%%%%%%%%%%%%%%%%%%%%%%%%%%%%%%%
\begin{lemma}
\label{ASufficientConditionForTheApproximationOfTheLongTermBehaviorOfProbabilityStates}
[A Sufficient condition for the approximation of the long-term behavior of probability states]
\end{lemma}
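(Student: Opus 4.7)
The plan is to combine Theorem~\ref{IfThermodynamicLimitOfStationaryStatesExists} (which already tells us that \emph{if} $\bP_*^F := \bP^F_\infty(\bPF_0)$ converges in $\ell^1$ then the limit must be the unique stationary vector $\bP_\infty$ of the infinite system) with a concrete structural hypothesis on the network that forces the convergence to actually happen. The natural candidate, given the preparatory work in Section~\ref{Chapter_MarkovChains}, is to express every $\bP_*^F$ through the expected visiting times $\bT^F = (T^{(\omega,F)})_{\omega \in F}$ of the restricted CTMC and to pass to the limit $|F|\to\infty$ term by term, using irreducibility plus positive recurrence of the full network as the overarching hypothesis.

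First, I would recall that for an irreducible, bounded generator $\GF$ the unique stationary state is $\bP_*^F = \bT^F / \|\bT^F\|_1$, by Lemma~\ref{Lemma_VectorOfExpectedVisitingTimeIsCandidateForStationarySolutionOfCTMC}; in the infinite-dimensional, positive recurrent case the same lemma yields $\bP_\infty = \bT/\|\bT\|_1$ with $\|\bT\|_1 < \infty$ by Lemma~\ref{Lemma_PositiveRecurrenceForCTMC}. Choose a reference state $\omega_0\in\Omega$ and an increasing exhaustion $(F_n)_{n\in\N}\subseteq\Fin(\Omega)$ with $\omega_0\in F_1$ and $\bigcup_n F_n=\Omega$. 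For each $\omega\in\Omega$ and $n$ large enough that $\omega\in F_n$, the trajectorial representation
\begin{equation*}
T^{(\omega,F_n)} = \EV\!\left[\int_0^{t_R^{(F_n)}} \mathbb{1}_{\{X_t=\omega\}}\, \d t \;\Big|\; X_0 = \omega_0\right]
\end{equation*}
may be compared with $T^{(\omega)}$: the integrands coincide on the event that the trajectory has not left $F_n$ before time $t_R$, and the missing event has probability tending to zero as $n\to\infty$. This should give pointwise convergence $T^{(\omega,F_n)} \to T^{(\omega)}$ for each fixed $\omega$, together with a uniform bound $T^{(\omega,F_n)} \leq T^{(\omega)}$ (or at worst $\le C\cdot T^{(\omega)}$).

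With the bound $T^{(\omega,F_n)}\le T^{(\omega)}$ and $\|\bT\|_1<\infty$, dominated convergence in $\ell^1(\Omega)$ yields $\bT^{F_n}\to \bT$ in $\|\cdot\|_1$, and in particular $\|\bT^{F_n}\|_1 \to \|\bT\|_1 > 0$. Dividing,
\begin{equation*}
\bP_*^{F_n} = \frac{\bT^{F_n}}{\|\bT^{F_n}\|_1} \xlongrightarrow[\|\cdot\|_1]{n\to\infty} \frac{\bT}{\|\bT\|_1} = \bP_\infty.
\end{equation*}
Because the limit is independent of the exhausting sequence and independent of $\bPF_0$ (uniqueness of the stationary state on an irreducible, positive recurrent network), this upgrades from a sequential limit to the net limit in the sense of \eqref{Eq_ThermodLimit}. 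The conclusion then follows from Theorem~\ref{IfThermodynamicLimitOfStationaryStatesExists}.

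The main obstacle I anticipate is the trajectorial comparison step: one has to justify carefully that the CTMC on $F_n$ (with the truncated generator $\G^{[F_n]}$ of \eqref{GenFinSubNet}) can really be coupled with the full CTMC up to the first exit time from $F_n$, and that under positive recurrence the probability of leaving $F_n$ before returning to $\omega_0$ vanishes as $n\to\infty$. If a direct coupling argument proves awkward, the alternative is a purely spectral one: use that $\bP_*^F$ lies in the kernel of $\GF$, extract a weak-$*$ (or pointwise, via Tychonoff as in the proof of Theorem~\ref{Thm_Irreducible_PositiveRecurrent_l1IsMeanErgodic}) limit point $\bP_\star$, verify $\G\bP_\star=\bzero$ by the continuity estimate \eqref{Eq_Estimating_G_bPInftyA}, and then promote pointwise to $\|\cdot\|_1$-convergence by controlling the tail $\sum_{\omega\notin F}P_*^F(\omega)$ via $\bT/\|\bT\|_1$, exactly as in the final tail-estimation step of Theorem~\ref{Thm_Irreducible_PositiveRecurrent_l1IsMeanErgodic}.
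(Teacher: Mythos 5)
Your proposal proves (or attempts to prove) a different statement from the one this lemma actually makes. The lemma is conditional: the network is assumed to have only bidirectional links, and the hypothesis is that for every \emph{strongly connected} finite subnetwork $M \supseteq M_0$ the limiting state of the finite system is exactly the truncation $\bPM_*$ of a fixed $\bP_* \in \ProbStates$; the conclusion is that the net $\bigl(\bP^F_\infty(\bPF_0)\bigr)_{F \in \Fin(\Omega)\backslash \NPzero}$, taken over \emph{all} finite subsets, converges to $\bP_*$. The entire content of the paper's proof is the reduction from arbitrary finite subsets to strongly connected ones: because links are bidirectional, any $F \supseteq F_\epsilon$ decomposes into minimal absorbing sets; the one containing $F_\epsilon$, call it $M_0$, carries all but a fraction of order $\epsilon$ of the initial mass; the hypothesis identifies its limiting state with $\bP_*^{[M_0]}$; and lemma \ref{lemma_ApproximatingProbabilityStates} gives $\|\bP_* - \bP_*^{[M_0]}\|_1 < \epsilon$, so that everything sums to a $5\epsilon$ bound. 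Your argument never touches this reduction: you work only along an increasing exhaustion and tacitly assume each truncated system is irreducible, so that $\bP_*^{F} = \bT^F/\|\bT^F\|_1$, which fails for general $F$ — and handling exactly those $F$ is the point of the lemma.

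Moreover, the unconditional theorem you aim at — irreducibility plus positive recurrence of the infinite network implies $\bP^F_\infty(\bPF_0) \to \bP_\infty$ in the net sense of \eqref{Eq_ThermodLimit} — is contradicted by the paper's own example in section \ref{Chapter_ExampleOfANetworkWithoutDetailedBalance}: that network is irreducible and positive recurrent, yet the thermodynamic limit of the finite limiting states does not exist, because finite subsets whose subnetwork is not strongly connected have limiting states concentrated on what is effectively a trapping state. This is precisely what the bidirectional-link hypothesis and the restriction of the assumption to strongly connected $M$ are designed to exclude, so no argument using only irreducibility and positive recurrence can give the net convergence. Finally, even for the sequential statement you have in mind, the central technical step — the domination $T^{(\omega, F_n)} \leq T^{(\omega)}$ (or $\leq C \, T^{(\omega)}$) — is not justified: under the truncated generator $\GF$ of \eqref{GenFinSubNet} one has $\g^{[F]}_{\omega \to} \leq \g_{\omega \to}$, so the truncated chain dwells \emph{longer} in boundary states, and the comparison of whole return cycles would require a coupling argument that is not supplied. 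In short, the approach as written both misses the actual claim and has a gap at its key estimate; the paper's $\epsilon$-decomposition into minimal absorbing sets is the missing idea.
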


Let $\System$ be an irreducible network with only bidirectional links (that is $\g_{\alpha \to \beta} >0 \iff \g_{\beta \to \alpha}> 0 $),  $\bP_0, \bP_* \in \ProbStates $ be probability vectors and $M_0 \in \Fin(\Omega) \backslash (\Null_{\bP_0} \cup \Null_{\bP_*}) $ a strongly connected, finite subnetwork, such that for all strongly connected, finite subnetworks containing $M_0$ ($M \in \Fin(\Omega), \, M \supseteq M_0$, $\System_M $ strongly connected) we have 

%-----------------------------------------------------
\begin{equation}\label{Eq_StationarySolutionLooksLikeCutOff}
\begin{aligned}
\bP^M_\infty(\bPM_0) = \bPM_*, 
\end{aligned}
\end{equation}
%-----------------------------------------------------

then $ \bigl(\bP^M_\infty \bigr)(\bPM_0) \xlongrightarrow{ M \in \Fin(\Omega) \backslash \NPzero } \bP_* $.

\begin{proof}

First, we note that if equation \eqref{Eq_StationarySolutionLooksLikeCutOff} were to hold for all finite sets (instead of only on those finite sets, whose associated network is strongly connected), then the statement would follow from lemma \ref{lemma_ApproximatingProbabilityStates}. What real message from lemma \ref{ASufficientConditionForTheApproximationOfTheLongTermBehaviorOfProbabilityStates} is therefore, that it is enough to check the fulfillment of equation \eqref{Eq_StationarySolutionLooksLikeCutOff} for strongly connected subnetworks. 

Fix $ \epsilon \in \bigl(0, \frac{1}{2} \bigr) $.

\begin{itemize}

\item[(i)] choose $ F_\epsilon^{(1)} \in  \Fin(\Omega)  $ such that for all $ F  \in \Fin(\Omega) \backslash \NPzero, \, F \supseteq F_\epsilon^{(1)} $ we have $ \bigl(\bP_0 \bigr) (F) > 1- \epsilon $ and $ \bigl( \bP_* \bigr) \bigl( F \bigr) >0 $ 

\item[(ii)] 
choose $  F_\epsilon^{(2)} \in  \Fin(\Omega)  $ such that for all $ A  \in \Fin(\Omega) \backslash \NPzero, \, A \supseteq F_\epsilon^{(2)} $ we have $ \| \bP_* - \bPF_* \|_1 < \epsilon $ (this is possible due to lemma \ref{lemma_ApproximatingProbabilityStates})

\item[(iii)] choose $ F_\epsilon :=  F_\epsilon^{(1)} \cup F_\epsilon^{(2)} $ and fix $ F \in \Fin(\Omega) $, $F \supseteq F_\epsilon $. Since there are only bidirectional links, the set $F$ is a disjoint union of minimal absorbing set, that is  $A = \bigcup\limits_{i=0}^{n} M_i $ with $A_i \in \MM_{\System_A} $. We call $ M_0 $ the minimal absorbing set containing $ F_\epsilon $, that is $ M_0 \supseteq F_\epsilon $.

We also have 
\item[(iv)] $ \bP^F_\infty(\bPF_0 \in M_0) = \bP_\infty^{[M_0]}(\bP_0^{[M_0]})$ and 

\item[(v)] $ \bigl(\bPF_0\bigr) \, \bigl( \Omega\backslash M_0 \bigr) = \frac{1-(\bP_0)\,\bigl( M_0 \bigr)}{(\bP_0) \, \bigl( A \bigr)}  \leq \frac{\epsilon}{1-\epsilon} \overset{ \epsilon < 1/2}{<} 2 \, \epsilon $.

\end{itemize}

Then we get:

%---------------------------------------------------------------------------------------
\begin{equation*}
\begin{aligned}
&\Bigl\| 
\bP_* - 
\underbrace{
\bPF_\infty (\bPF_0)
}_{
\sum\limits_{M \in \MM_{\System_A}} \bigl(\bPF_0(M)\bigr) \bPF_\infty(\bPF_0 \in M)
}
\Bigr\|_1 
\overset{ M_0 \in \MM_{\System_A}}{\leq}
%%%%%%%%%%%%%%%%%%%%%%%%
\\ &\leq
%%%%%%%%%%%%%%%%%%%%%%%%
\Bigl\| 
\bP_* - 
\overbrace{
\bigl(\bPF_0\bigr) (M_0)
}^{
1-{\color{MyGreen} \bigl(\bPF_0\bigr) \bigl(\Omega \backslash M_0\bigr) } 
} 
\, 
\overbrace{
\bigl( \bPF_\infty \bigr) (\bPF_0 \in M_0) 
}^{
\bP_\infty^{[M_0]}(\bP_0^{[M_0]})
}
\Bigr\|_1
+
\underbrace{
\sum\limits_{M \in \MM_{\System_A} \backslash \{M_0\} } \bigl(\bPF_0\bigr) (M) 
\overbrace{
\| \bPF_\infty(\bPF_0 \in M) \|_1
}^{
1
}
}_{
\color{ElectricPurple} \bigl(\bPF_0\bigr) (\Omega \backslash M_0)
} = 
%%%%%%%%%%%%%%%%%%%%%%%
\\ &\leq
%%%%%%%%%%%%%%%%%%%%%%%
\Bigl\|
\bP_*
-
\underbrace{
\bP_\infty^{[M_0]}(\bP_0^{[M_0]})
}_{
\bigl(
\bP_*
\bigr)^{[M_0]}
}
\Bigr\|_1
+
{\color{MyGreen}  \bigl(\bPF_0\bigr) \bigl(\Omega \backslash M_0 \bigr)  }
\underbrace{
\|\bP_\infty^{[M_0]}(\bP_0^{[M_0]}) \|_1
}_{
1
}
+ 
{  \color{ElectricPurple} \bigl( \bPF_0 \bigr) \bigl(\Omega \backslash M_0\bigr)  } \xlongequal{\text{assumption (iii)}}
%%%%%%%%%%%%%%%%%%%%%%%
\\ &=
%%%%%%%%%%%%%%%%%%%%%%%
\underbrace{
\left\|
\bP_*
-
\bigl(
\bP_*
\bigr)^{[M_0]}
\right\|_1
}_{
\overset{(\text{ii})} < \epsilon
}
+ 2 \, 
\underbrace{
\bigl( \bPF_0\bigr) 
\bigl(
\Omega \backslash M_0
\bigr)  
}_{
\leq \frac{\epsilon}{1-\epsilon} \, \overset{(\text{v})}{<}2\,  \epsilon 
} 
%%%%%%%%%%%%%%%%%%%%%%%
\\ &<
%%%%%%%%%%%%%%%%%%%%%%%
5 \, \epsilon. 
\end{aligned}
\end{equation*}
%---------------------------------------------------------------------------------------

%%%%%%%%%%%%%%%%%%%%%%%%%%%%%%%%%%%%%%%%%%%%%%%%%%%%%%%%%%%%%%%%%%%%%%%%%%%%%%%
\begin{figure}[H]
\begin{center}
%----------------------------------------------
\begin{tikzpicture} 
\draw[very thick] (0.0,0.0) ellipse (0.8 and 0.8);
\draw (0.0, 0.0) node{  $ {A_{\epsilon} = } \atop {A_{\epsilon}^{(1)} \cup A_{\epsilon}^{(2)}}  $ };
\draw[very thick] (0.65, 0.2) ellipse (1.6 and 1.3);
\draw (1.15, 0.2) node{  $A_0 $ };
\draw[very thick] (2.0, 2.2) ellipse (0.4 and 0.4);
\draw (2.0, 2.2) node{  $A_1 $ };
\draw[very thick] (2.0, -1.2) ellipse (0.4 and 0.4);
\draw (2.0, -1.2) node{  $A_2 $ };
\draw[very thick] (-2.0, 2.2) ellipse (0.4 and 0.4);
\draw (-2.0, 2.2) node{  $ \dots $ };
\draw[very thick] (-2.0, 0.0) ellipse (0.4 and 0.4);
\draw (-2.0, 0.0) node{  $ A_n $ };
\end{tikzpicture}
%----------------------------------------------
\caption{Illustrating the relationships between the sets $F_\epsilon \subseteq A_0 $,$\dots$, $A_n$ and $A = \bigcup\limits_{i=0}^{n}$.    }
\label{Illustrating_RelationshipsOfSets_A} 
\end{center}
\end{figure}
%%%%%%%%%%%%%%%%%%%%%%%%%%%%%%%%%%%%%%%%%%%%%%%%%%%%%%%%%%%%%%%%%%%%%%%%%%%%%%%

\end{proof}

%######################################################################################################
\subsection{Explicit expression for the stationary solution in case of detailed balance}\label{Section_ExplicitExpressionForTheStationarySolutionInCaseOfDetailedBalance}

In the following section we want to show, that in case of detailed balance, it is possible to compute the stationary solution $\bP_* $ in terms of transition rates.

%%%%%%%%%%%%%%%%%%%%%%%%%%%%%%%%%%%%%%%%%%%%%%%%%%%%%%%%%%%%%%%5
\begin{MyDef}[Detailed balance]\label{Def_DetailedBalance} $ $ \\
A probability vector $ \bP_* $ is called \emph{stationary}, if it lies in the kernel of the generator matrix $\G$, that is $\bP_* \in \Kern(\G) $, or component-wise: 

%----------------------------------------------------------------------------
\begin{equation}\label{Eq_StationarySolution}
\begin{aligned}
0
&=
\sum\limits_{j \in \Omega \backslash\{i\} }
\left( P_*^{(j)} \g_{j \to i} - P_*^{(i)} \g_{i \to j} \right)
\end{aligned}
\end{equation}
%----------------------------------------------------------------------------

When in addition to being stationary, each summand $\left( \cdot \right) $ in equation \eqref{Eq_StationarySolution} vanishes, then we say the stationary solution $\bP_*$ exhibits \emph{detailed balance}. 
Since we often encounter so called \emph{`non-normalizable' stationary states} (that is sequences $\bX_* \in (\R_{> \, 0})^{\Omega} \backslash l^{1}(\Omega) $ with $\G \, \bX_* = \bzero$, but $ \|\bX_*\|_1 = \infty$) we relax the definition to some extend, in order to better suit our requirements: 

A master equation is said to satisfy the \emph{generalized detailed balance} condition, if there exists a strictly positive sequence $ \bX_* \in (\R_{> \, 0})^{\Omega} $, such that for all $j \in \Omega\backslash \{i\}$ we have 

%----------------------------------------------------------------------------
\begin{equation} \label{Eq_StationarySolution_DetailedBalance} \tag{gen. det. bal. }
\begin{aligned}
X_*^{(j)} \g_{j \to i}
&=
X_*^{(i)} \g_{i \to j}, 
\end{aligned}
\end{equation}
%----------------------------------------------------------------------------

independent of whether $\bX_*$ is normalizable or not.

\end{MyDef}
%%%%%%%%%%%%%%%%%%%%%%%%%%%%%%%%%%%%%%%%%%%%%%%%%%%%%%%%%%%%%%%

%%%%%%%%%%%%%%%%%%%%%%%%%%%%%%%%%%%%%%%%%%%%%%%%%%%%%%%%%%%%%%%
\begin{rem}
Even though detailed balance seems to be a property of the \emph{stationary solution $\bP_{*}$ }, equation \eqref{Eq_StationarySolution_DetailedBalance} hints that it is actually a property of the network. And indeed, we will see that it is possible to characterize detailed balance as a network property via the so-called \emph{Kolmogorov criterion} (compare \cite{kelly2011reversibility, tubiblio138082}). 
\end{rem}
%%%%%%%%%%%%%%%%%%%%%%%%%%%%%%%%%%%%%%%%%%%%%%%%%%%%%%%%%%%%%%%

%%%%%%%%%%%%%%%%%%%%%%%%%%%%%%%%%%%%%%%%%%%%%%%%%%%%%%%%%%%%%
\begin{MyDef}[Kolmogorov criterion] \label{Def_KolmogorovCriterion} $ $ \\ 

A network is said to satisfy the \emph{Kolmogorov criterion}, if it is `rotation free', in the sense that a \emph{weight} of a closed walk equals the weight of its reversed walk. To be more precise, let $n \in \N_{\geq \, 2}$ and $\bw:=(\omega_1, \dots, \omega_n) \in \Omega^{n} $ be a path in $\System$. Then the system exhibits detailed balance, if and only if 
%----------------------------------------------------------------------------
\begin{equation}\label{Eq_KolmogorocCriterion}
\begin{aligned}
\g_{\bw_\circlearrowleft}
&:=
\prod\limits_{i=1}^{n-1} \g_{\omega_{i} \to \omega_{i+1}} \cdot \g_{\omega_{n} \to \omega_{1}}
=
\g_{\omega_{1} \to \omega_{n}}\cdot 
\prod\limits_{i=1}^{n-1} \g_{\omega_{i+1} \to \omega_{i}} 
=:
\g_{\bw_\circlearrowright}. 
\end{aligned}
\end{equation}
%----------------------------------------------------------------------------

The network satisfying the Kolmogorov criterion can be interpreted a having no net `circular flows'.

\end{MyDef}
%%%%%%%%%%%%%%%%%%%%%%%%%%%%%%%%%%%%%%%%%%%%%%%%%%%%%%%%%%%%%%%%%%%%%%%%%%%%%%%%%%%

%%%%%%%%%%%%%%%%%%%%%%%%%%%%%%%%%%%%%%%%%%%%%%%%%%%%%%%%%%%%%
\begin{lemma}[Consequences of the Kolmogorov criterion] \label{Lemma_ConsequencesOfTheKolmogorovCriterion} $ $ \\ 

Consider an irreducible network satisfying the Kolmogorov criterion, two distinct states $\alpha, \beta \in \Omega, \alpha \neq \beta $ and a path $\bw =(\alpha= w_{1}, \dots, w_{n} = \beta) \in \Omega^{n} $. Then the product of the weight of a path, divided by the weight of its inverse path (namely $ \frac{\g_{w}(\alpha \rightsquigarrow \beta)}{\g_{w}(\alpha \leftlsquigarrow \beta)} := \frac{\g_{w_{1} \to w_{2}} \cdot \, \dotsc \, \cdot \g_{w_{n-1} \to w_{n}}}{\g_{w_{1} \gets w_{2}} \cdot \,  \dotsc \, \cdot \g_{w_{n-1} \gets w_{n}}} $) does not depend on the choice of the path $\bw$, but only on its start- and end-point $\alpha$ and $\beta$.

\end{lemma}
%%%%%%%%%%%%%%%%%%%%%%%%%%%%%%%%%%%%%%%%%%%%%%%%%%%%%%%%%%%%%

\begin{proof}
Let $ \bw_1^{\to} (\alpha \rightsquigarrow \beta) $ and $ \bw_2^{\to} (\alpha \rightsquigarrow \beta) $ be two different path from state $\alpha$ to state $\beta$. 

By Kolmogorov's criterion, we have: 

%----------------------------------------------------------------------------
\begin{equation} 
\begin{aligned}
\g_{\bw_1^{\to}} (\alpha \rightsquigarrow \beta )
\, \cdot \, 
\g_{\bw_2^{\gets}} (\beta \rightsquigarrow \alpha )
&\xlongequal{\text{Kolmogorov}}
\g_{\bw_2^{\to}} (\alpha \rightsquigarrow \beta )
\, \cdot \, 
\g_{\bw_1^{\gets}} (\beta \rightsquigarrow \alpha )
%%%%%%%%%%%%%%%%%%%%%%%%%%
\\ \Longrightarrow 
%%%%%%%%%%%%%%%%%%%%%%%%%%
\frac{
\g_{\bw_1^{\to}} (\alpha \rightsquigarrow \beta )
}{
\g_{\bw_1^{\gets}} (\alpha \leftlsquigarrow \beta )
}
&=
\frac{
\g_{\bw_2^{\to}} (\alpha \rightsquigarrow \beta )
}{
\g_{\bw_2^{\gets}} (\alpha \leftlsquigarrow \beta )
}
\end{aligned}
\end{equation}
%----------------------------------------------------------------------------

Figure \ref{Figure_IllustratingKolmogorovsCriterion} is meant to illustrate this result.

%----------------------------------------------------------------------------
\begin{figure}[H]
\begin{center}
%----------------------------------------------------   
\begin{tikzpicture}[scale=2]
\centering
\node[State](-30)   at (-3,0) [circle,draw] { } ;
\node[State](-20)   at (-2,0) [circle,draw] { } ;
\node[State](-10)   at (-1,0) [circle,draw] { } ;
\node[State](00)    at ( 0,0) [circle,draw]  { } ;
\node[State](10)    at ( 1,0) [circle,draw]  { } ;
\node[State, MyGreen](20)    at ( 2,0) [circle,draw]  { $\boldsymbol{\beta}$ } ;
\node[State](30)    at ( 3,0) [circle,draw]  { } ;
\node[State](-31)  at (-3,1) [circle,draw]  { } ;
\node[State](-21)  at (-2,1) [circle,draw]  { } ;
\node[State](-11)  at (-1,1) [circle,draw]  { } ;
\node[State](01)   at (-0,1) [circle,draw] { } ;
\node[State](11)   at ( 1,1) [circle,draw] { } ;
\node[State](21)   at ( 2,1) [circle,draw] { } ;
\node[State](31)   at ( 3,1) [circle,draw] {  } ;
\node[State](-3-1)  at (-3,-1) [circle,draw] { } ;
\node[State, MyGreen](-2-1)  at (-2,-1) [circle,draw] {$\boldsymbol{\alpha}$ } ;
\node[State](-1-1)  at (-1,-1) [circle,draw] { } ;
\node[State](0-1)  at (0, -1) [circle,draw]  {} ;
\node[State](1-1)  at ( 1,-1) [circle,draw]  {} ;
\node[State](2-1)  at ( 2,-1) [circle,draw]  {} ;
\node[State](3-1)  at ( 3,-1) [circle,draw]  {} ;
%%%%%%%%%%%%%%%%%%%%%
% flowing right
%%%%%%%%%%%%%%%%%%%%%
\path[Link, bend left, MyBlue, dotted] (-3-1) edge node[above] {} (-2-1);
\path[Link, bend left, MyRed] (-2-1) edge node[above] {} (-1-1);
\path[Link, bend left, MyRed] (-1-1) edge node[above] {} (0-1);
\path[Link, bend left, MyRed] ( 0-1) edge node[above] {} (1-1);
\path[Link, bend left, MyRed] ( 1-1) edge node[above] {} (2-1);
\path[Link, bend left, MyRed] ( 2-1) edge node[above] {} (3-1);
\path[Link, bend left] (-30) edge node[above] {} (-20);
\path[Link, bend left] (-20) edge node[above] {} (-10);
\path[Link, bend left] (-10) edge node[above] {} (00);
\path[Link, bend left] ( 00) edge node[above] {} (10);
\path[Link, bend left] ( 10) edge node[above] {} (20);
\path[Link, bend left, MyRed, dotted] ( 20) edge node[above] {} (30);
\path[Link, bend left, MyBlue] (-31) edge node[above] {} (-21);
\path[Link, bend left, MyBlue] (-21) edge node[above] {} (-11);
\path[Link, bend left, MyBlue] (-11) edge node[above] {} (01);
\path[Link, bend left, MyBlue] ( 01) edge node[above] {} (11);
\path[Link, bend left, MyBlue] ( 11) edge node[above] {} (21);
\path[Link, bend left] ( 21) edge node[above] {} (31);
%%%%%%%%%%%%%%%%%%%%%
% flowing left
%%%%%%%%%%%%%%%%%%%%%
\path[Link, bend left, MyRed, dotted] (3-1) edge node[above] {} (2-1);
\path[Link, bend left, MyRed, dotted] (2-1) edge node[above] {} (1-1);
\path[Link, bend left, MyRed, dotted] (1-1) edge node[above] {} (0-1);
\path[Link, bend left, MyRed, dotted] (0-1) edge node[above] {} (-1-1);
\path[Link, bend left, MyRed, dotted] (-1-1) edge node[above] {} (-2-1);
\path[Link, bend left, MyBlue] (-2-1) edge node[above] {} (-3-1);
\path[Link, bend left, MyRed] (30) edge node[above] {} (20);
\path[Link, bend left] (20) edge node[above] {} (10);
\path[Link, bend left] (10) edge node[above] {} (00);
\path[Link, bend left] (00) edge node[above] {} (-10);
\path[Link, bend left] (-10) edge node[above] {} (-20);
\path[Link, bend left] (-20) edge node[above] {} (-30);
\path[Link, bend left] (31) edge node[above] {} (21);
\path[Link, bend left, MyBlue, dotted] (21) edge node[above] {} (11);
\path[Link, bend left, MyBlue, dotted] (11) edge node[above] {} (01);
\path[Link, bend left, MyBlue, dotted] (01) edge node[above] {} (-11);
\path[Link, bend left, MyBlue, dotted] (-11) edge node[above] {} (-21);
\path[Link, bend left, MyBlue, dotted] (-21) edge node[above] {} (-31);
%%%%%%%%%%%%%%%%%%%%%
% flowing up
%%%%%%%%%%%%%%%%%%%%%
\path[Link, bend left, MyBlue] (-3-1) edge node[above] {} (-30);
\path[Link, bend left, MyBlue] (-30) edge node[above] {} (-31);
\path[Link, bend left] (-2-1) edge node[above] {} (-20);
\path[Link, bend left] (-20) edge node[above] {} (-21);
\path[Link, bend left] (-1-1) edge node[above] {} (-10);
\path[Link, bend left] (-10) edge node[above] {} (-11);
\path[Link, bend left] (0-1) edge node[above] {} (00);
\path[Link, bend left] (00) edge node[above] {} (01);
\path[Link, bend left] (1-1) edge node[above] {} (10);
\path[Link, bend left] (10) edge node[above] {} (11);
\path[Link, bend left] (2-1) edge node[above] {} (20);
\path[Link, bend left, MyBlue, dotted] (20) edge node[above] {} (21);
\path[Link, bend left, MyRed] (3-1) edge node[above] {} (30);
\path[Link, bend left] (30) edge node[above] {} (31);
%%%%%%%%%%%%%%%%%%%%%
% flowing down
%%%%%%%%%%%%%%%%%%%%%
\path[Link, bend left, MyBlue, dotted] (-31) edge node[above] {} (-30);
\path[Link, bend left, MyBlue, dotted] (-30) edge node[above] {} (-3-1);
\path[Link, bend left] (-21) edge node[above] {} (-20);
\path[Link, bend left] (-20) edge node[above] {} (-2-1);
\path[Link, bend left] (-11) edge node[above] {} (-10);
\path[Link, bend left] (-10) edge node[above] {} (-1-1);
\path[Link, bend left] (01) edge node[above] {} (00);
\path[Link, bend left] (00) edge node[above] {} (0-1);
\path[Link, bend left] (11) edge node[above] {} (10);
\path[Link, bend left] (10) edge node[above] {} (1-1);
\path[Link, bend left, MyBlue] (21) edge node[above] {} (20);
\path[Link, bend left] (20) edge node[above] {} (2-1);
\path[Link, bend left] (31) edge node[above] {} (30);
\path[Link, bend left, MyRed, dotted] (30) edge node[above] {} (3-1);
%%%%%%%%%%%%%%%%%%%%%
\end{tikzpicture}
%----------------------------------------------------
\[
\frac{
\g_{{\color{MyRed}\bw_1}^{\to}} (\alpha \rightsquigarrow \beta )
}{
\g_{{\color{MyRed}\bw_1}^{\gets}} (\alpha \leftlsquigarrow \beta )
}
=
\frac{ 
    \begin{tikzpicture}[scale=1]
    \node[State](alpha)  at (0,0) [circle,draw] { } ;
    \node[State](beta)   at (1,0) [circle,draw] { } ;
    \path[Link,MyRed] (alpha) edge node[above] {} (beta);
    \end{tikzpicture}
}{
    \begin{tikzpicture}[scale=1]
    \node[State](alpha)  at (0,0) [circle,draw] { } ;
    \node[State](beta)   at (1,0) [circle,draw] { } ;
    \path[Link,MyRed, dotted] (beta) edge node[above] {} (alpha);
    \end{tikzpicture}
}
\xlongequal[\text{criterion}]{\text{Kolmogorov}}
\frac{ 
    \begin{tikzpicture}[scale=1]
    \node[State](alpha)  at (0,0) [circle,draw] { } ;
    \node[State](beta)   at (1,0) [circle,draw] { } ;
    \path[Link,MyBlue] (alpha) edge node[above] {} (beta);
    \end{tikzpicture}
}{  
    \begin{tikzpicture}[scale=1]
    \node[State](alpha)  at (0,0) [circle,draw] { } ;
    \node[State](beta)   at (1,0) [circle,draw] { } ;
    \path[Link,MyBlue, dotted] (beta) edge node[above] {} (alpha);
    \end{tikzpicture}
}
=
\frac{
\g_{{\color{MyBlue}\bw_2}^{\to}} (\alpha \rightsquigarrow \beta )
}{
\g_{{\color{MyBlue}\bw_2}^{\gets}} (\alpha \leftlsquigarrow \beta )
}
\]
\caption{Illustrating Kolmogorov criterion: The term $ \frac{
\g_{\bw^{\to}} (\alpha \rightsquigarrow \beta )
}{
\g_{\bw^{\gets}} (\alpha \leftlsquigarrow \beta )
} $ is independent of the specific paths ${\color{MyRed} \bw_1   }$ or ${\color{MyBlue} \bw_2   }$. 
}
\label{Figure_IllustratingKolmogorovsCriterion} 
\end{center}
\end{figure}
%----------------------------------------------------------------------------

Hence, the term $ \frac{\g_{\bw^{\to}} (\alpha \rightsquigarrow \beta )}{\g_{\bw^{\gets}} (\alpha \leftlsquigarrow \beta )} $ is independent of the path $\bw$, which enables us to write $ \frac{\g_{\omega_0 \rightsquigarrow \omega }}{\g_{\omega \leftlsquigarrow \omega_0 } } := \frac{\g_{\bw^{\to}} (\omega_{0} \rightsquigarrow \omega )}{\g_{\bw^{\gets}} (\omega_{0} \leftlsquigarrow \omega )}$ for any path $\omega_0 \rightsquigarrow \omega  := (\omega_0 \to \dots \to \omega)$, where we set $\g_{\omega_0 \rightsquigarrow \omega_0 }:=1$ .

\end{proof}

%%%%%%%%%%%%%%%%%%%%%%%%%%%%%%%%%%%%%%%%%%%%%%%%%%%%%%%%%%%%%

%%%%%%%%%%%%%%%%%%%%%%%%%%%%%%%%%%%%%%%%%%%%%%%%%%%%%%%%%%%%%
\begin{lemma}[Generalized detailed balance equals Kolmogorov criterion] \label{Lemma_GeneralizedDetailedBalanceEqualsKolmogorovCriterion} $ $ \\ 

A strongly connected network satisfies the generalized detailed balance condition \eqref{Eq_StationarySolution_DetailedBalance}, if and only if it satisfies the Kolmogorov criterion. The strictly positive sequence $\bX_{*} \in (\R_{> \, 0})^{\Omega } $ is then given by 

%----------------------------------------------------------------------------
\begin{equation}
\begin{aligned}
\bX_{*}
&:=
\left(
\frac{\g_{\omega_{0} \rightsquigarrow \omega}}{\g_{\omega_{0} \leftlsquigarrow \omega}}
\right)_{\omega \in \Omega}, 
\end{aligned}
\end{equation}
%----------------------------------------------------------------------------

which is well defined by lemma \ref{Lemma_ConsequencesOfTheKolmogorovCriterion}.

\end{lemma}

%%%%%%%%%%%%%%%%%%%%%%%%%%%%%%%%%%%%%%%%%%%%%%%%%%%%%%%%%%%%%

%%%%%%%%%%%%%%%%%%%%%%%%%%%%%%%%%%%%%%%%%%%%%%%%%%%%%%%%%%%%%
\begin{proof} $ $ \\ 
\begin{itemize}
\item["$\Leftarrow$"]

If a network satisfies the Kolmogorov criterion, we have: 

%----------------------------------------------------------------------------
\begin{equation*}
\begin{aligned}
\underbrace{
    X_{*}^{(\alpha)}
}_{
    \frac{\g_{\omega_{0} \rightsquigarrow \alpha}}{\g_{\omega_{0} \leftlsquigarrow \alpha}}
}
\, \g_{\alpha \to  \beta} 
&=
\underbrace{
    \frac{\g_{\omega_{0} \rightsquigarrow \alpha}}{\g_{\omega_{0} \leftlsquigarrow \alpha}}
    \, \cdot \, 
    \frac{\g_{\alpha \to \beta}}{\g_{\alpha \gets \beta}} 
}_{
        \frac{\g_{\omega_{0} \rightsquigarrow \beta}}{\g_{\omega_{0} \leftlsquigarrow \beta}}
}
\, 
\underbrace{
    \g_{\alpha \gets \beta}
}_{
     \g_{\beta \to \alpha}
}
=
\underbrace{
    \frac{\g_{\omega_{0} \rightsquigarrow \beta}}{\g_{\omega_{0} \leftlsquigarrow \beta}}
}_{
    X_{*}^{(\beta)}
} \, \cdot \, \g_{\beta \to \alpha}
=
X_{*}^{(\beta)} \, \cdot \, \g_{\beta \to \alpha}, 
\end{aligned}
\end{equation*}
%----------------------------------------------------------------------------

which means that the network satisfies the generalized detailed balance condition. 

\item["$\Rightarrow$"]

Let $\bw = (w_{1}, \dots, w_{n}) \in \Omega^{n} $ be a walk in $ \System $. If the system satisfies the detailed balance condition, we have a strictly positive sequence $ \bX_{*} \in (\R_{> \, 0})^{\Omega} $ such that 

\begin{equation}
\begin{aligned}
X_{*}^{(w_{1})} \, \g_{w_{1} \to w_{2}} 
&=
\g_{w_{1} \gets w_{2}} \,  X_{*}^{(w_{2})} 
%%%%%%%%%%
\\
%%%%%%%%%%
\vdots \hspace*{10mm} &{} \hspace*{10mm}  \vdots 
%%%%%%%%%%
\\
%%%%%%%%%%
X_{*}^{(w_{n-1})} \, \g_{w_{n-1} \to w_{n}} 
&=
\g_{w_{n-1} \gets w_{n}} \,  X_{*}^{(w_{n})} 
%%%%%%%%%%
\\
%%%%%%%%%%
X_{*}^{(w_{n})} \, \g_{w_{n} \to w_{1}} 
&=
\g_{w_{n} \gets w_{1}} \,  X_{*}^{(w_{1})}  \Bigl| \text{ multiply all equations }
%%%%%%%%%%%%%%%%%%%%%%%%%%%%%%%%%%%%%%%%%%%%
\\
%%%%%%%%%%%%%%%%%%%%%%%%%%%%%%%%%%%%%%%%%%%%
\Longrightarrow \prod\limits_{i=1}^{n} X_{*}^{(w_i)} \, 
\underbrace{
    \g_{w_{1} \to w_{2}} \cdot \dotsc \cdot \g_{w_{n} \to w_{1}} 
}_{
    \g_{\bw_\circlearrowright}
}
&=
\underbrace{
    \g_{w_{1} \gets w_{2}} \cdot \dotsc \cdot \g_{w_{n} \gets w_{1}} 
}_{
    \g_{\bw_\circlearrowleft}
} \, 
\prod\limits_{i=1}^{n} X_{*}^{(w_i)}, 
\end{aligned}
\end{equation}

which means that the network satisfies the Kolmogorov criterion. 
\end{itemize}
\end{proof}
\begin{lemma}[Candidate for stationary solution of irreducible networks with detailed balance]\label{Thm_CandidateForStationarySolutionOfIrreducibleNetworksWithDetailedBalance} $ $ \\

If an irreducible network with generator $ \| \G \|_{1,1}  < \infty $ satisfies the generalized condition of detailed balance \eqref{Eq_StationarySolution_DetailedBalance}, then the sequence $ \bX_*:= \left(\frac{\g_{\omega_0 \rightsquigarrow \omega }}{\g_{\omega \leftlsquigarrow \omega_0 } } \right)_{\omega \in \Omega } \in (\R_{> \, 0})^{\Omega} $ - which is well defined by lemma \ref{Lemma_ConsequencesOfTheKolmogorovCriterion} -  is normalizable ($\bX_{*} \in l^{1} (\Omega)$, that is $ \| \bX_{*} \|_{1} < \infty $ ), if and only if the network is positive recurrent. 

This means that for an irreducible network satisfying the generalized condition for detailed balance \eqref{Eq_StationarySolution_DetailedBalance} the vector $ \frac{\bX_{*}}{\| \bX_{*} \|_{1}} $ is the only `\emph{candidate}' for a stationary solution, in the sense that 

%----------------------------------------------------------------------------
\begin{equation*}
\begin{aligned}
\| \bX_{*} \|_{1} &< \infty \Longrightarrow \frac{\bX_{*}}{\| \bX_{*} \|_{1}} \text{  is the stationary solution of $\G$ }
%%%%%%%%%%%%%%%%%%%%%%%%%%%%%%%%%%%%%%%%%%%%
\\ 
%%%%%%%%%%%%%%%%%%%%%%%%%%%%%%%%%%%%%%%%%%%%
\| \bX_{*} \|_{1} &= \infty \Longrightarrow  \text{  no stationary solution, } \Kern(\G) = \{\bzero\}
\end{aligned}
\end{equation*}
%----------------------------------------------------------------------------

\end{lemma}
%%%%%%%%%%%%%%%%%%%%%%%%%%%%%%%%%%%%%%%%%%%%%%%%%%%%%%%%%%%%%

\begin{proof}

\begin{itemize}

\item["$\Rightarrow$"] $ $ \\ 

Since $ \bX_{*} \in \left(\R_{>0}\right)^{\Omega} $ and $ \| \bX_{*} \|_{1} < \infty $, the vector $ \frac{\bX_{*}}{\| \bX_{*} \|_{1}} $ is already a probability vector and we only need to show that it is stationary (that is $ \G \, \bX_{*} = \bzero$). 

Let $\Null \Null(i) := \{j \in \Omega \,:\, \g_{i \to j}>0 \} $ denote the set of \emph{nearest neighbors} of state $i \in \Omega $. Since $\bX_{*} \in l^{1} (\Omega)$, $\G \, \bX_{*} $ is well defined and we can compute the result component-wise: 

%----------------------------------------------------------------------------
\begin{equation}\label{Eq_KolmogorocCriterion}
\begin{aligned}
\left( 
\G \bX_*
\right)^{(i) }
&=
\sum\limits_{j \in \Omega\backslash\{i\} }
\underbrace{
    \hspace*{2mm} X_*^{(j)} \hspace*{2mm} 
}_{
     \frac{\g_{\omega_0 \rightsquigarrow j }}{\g_{ \omega_0 \leftlsquigarrow j } }
}
 \g_{j \to i}
-
X_*^{(i)} \g_{i \to j}
%%%%%%%%%%%%%%%
\\ &=
%%%%%%%%%%%%%%%
\Bigl[
\sum\limits_{j \in \Null \Null(i) }
\underbrace{
    \left(
    \frac{\g_{\omega_0 \rightsquigarrow j }}{\g_{ \omega_0 \leftlsquigarrow j } }
    \cdot 
    \frac{\g_{j \to i}}{\g_{j \gets i}}
    \right) 
}_{
    \frac{\g_{\omega_0 \rightsquigarrow i }}{\g_{\omega_0 \leftlsquigarrow i } }
    =
    X_*^{(i)}
}
\, \cdot \, \g_{i \to j}
\Bigr]
-
X_*^{(i)} \,
\underbrace{
\left( \sum\limits_{j \in \Null \Null(i)} \g_{i \to j} \right) 
}_{
\g_{i \to }
}
%%%%%%%%%%%%%%%%%%%%%%%%%%%%%%%%%
\\ &=
%%%%%%%%%%%%%%%%%%%%%%%%%%%%%%%%%
X_*^{(i)}
\underbrace{
    \left(
    \sum\limits_{j \in \Null \Null(i) } \g_{i \to j}
    \right)
}_{
    \g_{i \to }
}
-
X_*^{(i)} \g_{i \to } = 0. 
\end{aligned}
\end{equation}
%----------------------------------------------------------------------------

\item["$\Leftarrow$"] $ $ \\

Since the network is irreducible and positive recurrent, by lemma \ref{Lemma_PositiveRecurrenceForCTMC} we know, that there exists a stationary solution $ \bP_{*} \in \Kern(\G) \cap \ProbStates \cap (\R_{> \, 0})^{\Omega} $. Since the network exhibits detailed balance, which is preserved when looking at a strongly connected finite subsystem $ F \in \Fin(\Omega) $, we know that the stationary solution for the finite subsystem is given by $ \bP_{*}^{F} = \bXF_{*}$. 

From section \ref{Chaper_FromMarkovChainsToMasterEquations} in the appendix, we know that it is possible to express the stationary solution at state $\alpha \in \Omega $ as the fraction of the \emph{expected visiting time} $ T^{(\alpha)} $ and the \emph{expected return time} $ \| \bT \|_{1} = \sum\limits_{\alpha \in \Omega} T^{(\alpha)} $ which is true for both finite- and infinite systems: 
%----------------------------------------------------------------------------
\begin{equation*}
\begin{aligned}
\bigl(\bP_{*}^{F}\bigr)^{(\alpha)} 
&=
\bigl(
    \bXF_{*}
\bigr)^{(\alpha)}
=
\frac{
    \bigl( T^{F} \bigr)^{(\alpha)}
}{
    \| \bT^{F} \|_{1} 
}
%%%%%%%%%%%%%%%%%
\\ 
%%%%%%%%%%%%%%%%%
\bigl(\bP_{*} \bigr)^{(\alpha)} 
&=
\frac{
    \bigl( T \bigr)^{(\alpha)}
}{
     \| \bT  \|_{1}
}
\end{aligned}
\end{equation*}
%----------------------------------------------------------------------------

For an increasing sequence of finite, strongly connected subsystems $F_{1} \subseteq F_{2} \subseteq \dots $ with $\bigcup\limits_{n \in \N} F_{n} = \Omega $, both the \emph{expected visiting time} $ \bigl( (T^{F_n})^{(\alpha)} \bigr)_{n \in \N } $ as well as the \emph{expected return time} $ \bigl(\| \bT^{F_n} \|_{1} \bigr)_{n \in \N } $ are monotonously increasing and will converge to the expected visiting / return time of the infinite system, that is 

%----------------------------------------------------------------------------
\begin{equation*}
\begin{aligned}
\bigl(
    \bXF_{*}
\bigr)^{(\alpha)}
=
\frac{
    \bigl( T^{F} \bigr)^{(\alpha)}
}{
    \| \bT^{F} \|_{1} 
}
\xlongrightarrow{|F| \to \infty }
\frac{
    \bigl( T \bigr)^{(\alpha)}
}{
    \| \bT  \|_{1} 
}
=
\bigl(
    \bP_{*}
\bigr)^{(\alpha)}
\end{aligned}
\end{equation*}
%----------------------------------------------------------------------------

This means, we have 

%----------------------------------------------------------------------------
\begin{equation*}
\begin{aligned}
\bXF_{*}
&\xlongrightarrow[\text{pointwise}]{|F| \to \infty }
\frac{\bX_{*}}{\| \bX_{*} \|_{1} } \text{  and }
%%%%%%%%%%%%%%%%%%%%%%%
\\
%%%%%%%%%%%%%%%%%%%%%%%
\bXF_{*}
&\xlongrightarrow[\text{pointwise}]{|F| \to \infty }
\bP_{*}, 
\end{aligned}
\end{equation*}
%----------------------------------------------------------------------------

and hence $ \bP_{*} = \frac{\bX_{*}}{\| \bX_{*} \|_{1} }  $ and $ \| \bX_{*} \|_{1} < \infty $.

\end{itemize}

\end{proof}
%%%%%%%%%%%%%%%%%%%%%%%%%%%%%%%%%%%%%%%%%%%%%%%%%%%%%%%%%%%%%

%%%%%%%%%%%%%%%%%%%%%%%%%%%%%%%%%%%%%%%%%%%%%%%%%%%%%%%%%%%%%
\begin{thm}[Thermodynamic limit for irreducible, positive recurrent systems with detailed balance]\label{Theorem_DetailedBalanceApproximatingStationarySolutionByFiniteSubsystems} $ $ \\ 

If a countable, infinite dimensional network of a master equation (satisfying $ \| \G \|_{1,1} < \infty $) is irreducible, positive recurrent and satisfies the (generalized) detailed balance condition of equation \eqref{Eq_StationarySolution_DetailedBalance}, then the stationary solutions of the \emph{finite} dimensional master equation converge in the thermodynamic limit to the stationary solution of the \emph{countable, infinite} dimensional master equations, that is  $\bP^{F}_\infty(\bPF_0) \xlongrightarrow[\| \cdot \|_1]{|F| \to \infty} \bP_* $.

\end{thm}

\begin{proof}
Since the master equation is irreducible and positive recurrent, there exists a unique stationary solution $\bP_* \in \ProbStates \cap \, \Kern(\G) \cap (\R_{> \, 0})^{\Omega}$ such that $\bPt \xlongrightarrow[\| \cdot \|]{t \to \infty}  \bP_\infty(\bP_0) = \bP_*$. 

Since the master equation satisfies the detailed balance condition of equation \eqref{Eq_StationarySolution_DetailedBalance}, we know by theorem \ref{Thm_CandidateForStationarySolutionOfIrreducibleNetworksWithDetailedBalance} the exact form of this stationary solution, namely 

%----------------------------------------------
\begin{equation}
\begin{aligned}
\bP_*
&=
\frac{
    \bX_*
}{
    \left\|
    \bX_*
    \right \|_1
}
%%%%%%%%%%%%%%%%%%%%%%%%%%%%%%%%%%%%%%
\\ \text{ with  }
%%%%%%%%%%%%%%%%%%%%%%%%%%%%%%%%%%%%%%
\bX_*
&=
\left(
\frac{
\g_{\omega_0 \rightsquigarrow \omega }
}{
\g_{\omega_0 \leftlsquigarrow \omega }
}
\right)_{\omega \in \Omega}
\in
l^1(\Omega). 
\end{aligned}
\end{equation}
%----------------------------------------------

Since the detailed balance condition holds also for finite subsystems, we have: 

%----------------------------------------------
\begin{equation*}
\begin{aligned}
\bP_\infty^{F}(\bPF_0)
&=
\frac{
    \left(
    \frac{
    \g_{\omega_0 \rightsquigarrow f }
    }{
    \g_{\omega_0 \leftlsquigarrow f }
    }
    \right)_{f \in \Omega}
}{
    \left\|
    \left(
    \frac{
    \g_{\omega_0 \rightsquigarrow f }
    }{
    \g_{\omega_0 \leftlsquigarrow f }
    }
    \right)_{f \in \Omega}
    \right \|_1
}
=
\bPF_*
\xlongrightarrow[\| \cdot \|_1]{|F| \to \infty}
\bP_*. 
\end{aligned}
\end{equation*}
%----------------------------------------------

\end{proof}

%#################################################################################################################################################################################

\begin{lemma}[Sufficient condition for detailed balance and positive recurrence] \label{Lemma_SufficientConditionForDetailedBalanceAnd positiveRecurrence} $ $ \\

Let $\System$ be a network such that for every link, the reverse link is also present, that is $\g_{\alpha \to \beta}>0 \iff  \g_{\beta \to \alpha}>0 $. Moreover, let the fraction of the link strength between a link and its reverse link, be given by
%--------------------------------------------------------------------------------------------
\begin{equation}\label{Eq_SufficientConditionForDetailedBalance} \tag{suff.cond.det.bal.}
\begin{aligned}
\frac{\g_{\alpha \to \beta}}{\g_{\alpha \gets \beta}}
&=
\frac{q^{\bw_\beta \cdot \bc }}{q^{\bw_\alpha \cdot \bc }}
=
q^{(\bw_{\beta}-\bw_{\alpha}) \cdot \bc }, 
\end{aligned}
\end{equation}
%--------------------------------------------------------------------------------------------

for some positive number $q>0$, some non-negative vector $\bc \in (\R_{\geq \, 0})^{\N} $ and $w := d \circ b$. Then the network fulfills the - generalized - detailed balance condition of equation \eqref{Eq_StationarySolution_DetailedBalance}. 

Here, $d \,:\, \N_0 \to  \{0, 1\}^{\infty }   $ is the dual representation of natural numbers (that is, for $ \bd_n := d(n)$ we have $n = \sum\limits_{i \in \N_0} d^{(i)} \, 2^{i} = \bd \cdot 2^{\N_0}$ ) and $b \,:\, \Omega \to \N_0$ is any bijection.

If, in addition, $\sum\limits_{N \in \N} q^{c^{(N)}} \prod\limits_{i=1}^{N-1} (1+q^{c^{(i)}}) < \infty$, the network is positive recurrent. 

\end{lemma}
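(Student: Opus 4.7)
The plan is to make the candidate stationary vector from Lemma \ref{Thm_CandidateForStationarySolutionOfIrreducibleNetworksWithDetailedBalance} explicit, and then decide its summability by a direct combinatorial sum. Concretely, I would define
$$X_*^{(\omega)} \;:=\; q^{\bw_\omega \cdot \bc} \quad \text{for every } \omega \in \Omega,$$
and verify the generalized detailed balance relation \eqref{Eq_StationarySolution_DetailedBalance} pointwise. Since each $\bw_\omega = d(b(\omega))$ has only finitely many nonzero entries and $\bc \in (\R_{\geq 0})^{\N}$, the dot product is a finite non-negative number and $X_*^{(\omega)} = q^{\bw_\omega \cdot \bc} > 0$. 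For any pair of neighbors $\alpha, \beta$ the assumption \eqref{Eq_SufficientConditionForDetailedBalance} gives directly
$$\frac{X_*^{(\beta)}}{X_*^{(\alpha)}} \;=\; q^{(\bw_\beta - \bw_\alpha) \cdot \bc} \;=\; \frac{\g_{\alpha \to \beta}}{\g_{\beta \to \alpha}},$$
so $X_*^{(\alpha)} \g_{\alpha \to \beta} = X_*^{(\beta)} \g_{\beta \to \alpha}$; for non-adjacent pairs both sides vanish by assumption, so \eqref{Eq_StationarySolution_DetailedBalance} holds for all pairs.

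For the positive recurrence statement, I would invoke Lemma \ref{Thm_CandidateForStationarySolutionOfIrreducibleNetworksWithDetailedBalance}, which reduces the question to whether $\bX_* \in l^1(\Omega)$. Using the bijection $b : \Omega \to \N_0$ and substituting $n = b(\omega)$, the summability condition becomes
$$\sum_{\omega \in \Omega} q^{\bw_\omega \cdot \bc} \;=\; \sum_{n \in \N_0} q^{d(n) \cdot \bc} \;<\; \infty.$$
I would then reorganize the right-hand side by the position $N$ of the highest nonzero bit of $n$. For fixed $N \geq 1$ the bit at position $N$ is forced to be $1$ (contributing a factor $q^{c^{(N)}}$), while the bits at positions $1, \ldots, N-1$ may independently be $0$ or $1$, contributing $\prod_{i=1}^{N-1}(1 + q^{c^{(i)}})$ after summing. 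Including the term $n=0$ separately, this yields
$$\sum_{n \in \N_0} q^{d(n) \cdot \bc} \;=\; 1 \;+\; \sum_{N \in \N} q^{c^{(N)}} \prod_{i=1}^{N-1}\bigl(1 + q^{c^{(i)}}\bigr),$$
which is exactly the hypothesized finite quantity. Thus $\bX_* \in l^1(\Omega)$ and, by Lemma \ref{Thm_CandidateForStationarySolutionOfIrreducibleNetworksWithDetailedBalance} together with Lemma \ref{Lemma_PositiveRecurrenceForCTMC}, the Markov chain is positive recurrent.

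The only subtlety I expect is bookkeeping with the indexing of the dual representation $d$: one must be consistent about whether $\bc$ is indexed starting at $0$ or $1$, and the grouping by the highest nonzero bit must match the index convention used in the hypothesis $\sum_N q^{c^{(N)}} \prod_{i=1}^{N-1}(1+q^{c^{(i)}})$. Beyond this indexing check, the argument is entirely a direct verification followed by a geometric-series-style reorganization, so I do not expect any genuine analytic obstacle.
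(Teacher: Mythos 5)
Your proposal is correct and takes essentially the same route as the paper: the candidate vector $X_*^{(\omega)} = q^{\bw_\omega \cdot \bc}$ is exactly the vector the paper obtains (the path ratio $\g_{\omega_0 \rightsquigarrow \omega}/\g_{\omega_0 \leftlsquigarrow \omega}$ with $\omega_0 = b^{-1}(0)$), and your regrouping of the $l^1$-sum by the highest nonzero bit is precisely the computation in \eqref{Eq_SufficientConditionForPositiveRecurrence}. The only cosmetic differences are that you verify the pairwise detailed-balance identity directly instead of via the telescoping path-ratio product, and you keep the $n=0$ term as a separate summand equal to $1$, which does not affect the finiteness criterion.
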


\begin{proof}
Let $(\alpha = \omega_0 \to \dots, \dots \omega_n = \beta) $ be any path in $\System$ from state $\alpha$ to state $\beta$. Then the weight of a path, divided by the weight of its corresponding reverse path, depends only on the start - and endpoint: 

%--------------------------------------------------------------------------------------------
\begin{equation*}
\begin{aligned}
\frac{\g_{\alpha \rightsquigarrow \beta}}{\g_{\alpha \leftlsquigarrow \beta}}
&=
\frac{\g_{\omega_0 \to \dots \to \omega_n}}{\g_{\omega_0 \gets \dots \gets \omega_n}}
=
\prod\limits_{i=0}^{n-1}
\underbrace{
    \frac{\g_{\omega_{i} \to \omega_{i+1}}}{\g_{\omega_{i} \gets \omega_{i+1}}}
}_{
    \frac{
    q^{\bc \cdot \bw_{\omega_{i+1}}}
    }{
    q^{\bc \cdot \bw_{\omega_{i}}}
    }
}
=
\frac{
\prod\limits_{i=i}^{n} q^{\bc \cdot \bw_{\omega_{i}}}
}{
\prod\limits_{i=0}^{n-1}  q^{\bc \cdot \bw_{\omega_{i}}}
}
=
\frac{
 q^{\bc \cdot \bw_{\beta}}
}{
 q^{\bc \cdot \bw_{\alpha}}
}
\end{aligned}
\end{equation*}
%--------------------------------------------------------------------------------------------

With $\omega_0 :=b^{-1}(0)$ and $ \frac{\g_{\omega_0 \rightsquigarrow \omega} }{\g_{\omega_0 \leftlsquigarrow \omega}} = \frac{q^{\bc \cdot \bw_{\omega}}}{q^{\bc \cdot \bw_{\omega_0}}} \xlongequal{\bw_{\omega_0} = \bw_{\bzero} = \bzero} \frac{q^{\bc \cdot \bw_{\omega}}}{q^{\bc \cdot \bw_{\bzero}}} = q^{\bc \cdot \bw_{\omega}}$, we get the following expression for $\bX_*$:

%--------------------------------------------------------------------------------------------
\begin{equation}
\begin{aligned}
\bX_*
&=
\left(
\underbrace{
\frac{\g_{\omega_0 \rightsquigarrow \omega} }{\g_{\omega_0 \leftlsquigarrow \omega}}
}_{q^{\bc \cdot \bw_{\omega}}}
\right)_{\omega \in \Omega }
%%%%%%%%%%%%%%%%%%%%%%%%%%%%%%%%%%%%%%%%%
\\ &=
%%%%%%%%%%%%%%%%%%%%%%%%%%%%%%%%%%%%%%%%%
\left(
\bigl( q^{0}, q^{c^{(1)}} \bigr),
\underbrace{
q^{c^{(2)}},q^{c^{(2)}+c^{(1)}}
}_{
q^{c^{(2)}}\,  \bigl( q^{0}, q^{c^{(1)}}\bigr)
},
\underbrace{
q^{c^{(3)}},q^{c^{(3)}+c^{(1)}},q^{c^{(3)}+c^{(2)}},q^{c^{(3)}+c^{(2)}+c^{(1)}}
}_{
q^{c^{(3)}} \, \bigl(q^{0}, q^{c^{(1)}},q^{c^{(2)}},q^{c^{(2)}+c^{(1)}} \bigr) 
}
, \dots\right)
%%%%%%%%%%%%%%%%%%%%%%%%%%%%%%%%%%%%%%%%%
\\ &=
%%%%%%%%%%%%%%%%%%%%%%%%%%%%%%%%%%%%%%%%%
\left(
\underbrace{
    \underbrace{
    \bigl( q^{0}, q^{c^{(1)}} \bigr), 
    q^{c^{(2)}} \, \bigl( q^{0}, q^{c^{(1)}} \bigr)
    }_{
    (1, q^{c^{(2)}}) \otimes (1, q^{c^{(1)}})
    },
    q^{c^{(3)}} \,
    \underbrace{
     \bigl(q^{0}, q^{c^{(1)}},q^{c^{(2)}},q^{c^{(2)}+c^{(1)}} \bigr)
    }_{
    (1, q^{c^{(2)}}) \otimes (1, q^{c^{(1)}})
    }
    }_{
    (1, q^{c^{(3)}}) \otimes1, q^{c^{(2)}}) \otimes (1, q^{c^{(1)}})
    }, \dots
\right)
%%%%%%%%%%%%%%%%%%%%%%%%%%%%%%%%%%%%%%%%%
\\ &=
%%%%%%%%%%%%%%%%%%%%%%%%%%%%%%%%%%%%%%%%%
\bigotimes\limits_{n \in \N}^{\gets} \, (1, q^{c^{(n)}})
:=
\otimes \dots \otimes (1, q^{c^{(3)}}) \otimes (1, q^{c^{(2)}}) \otimes (1, q^{c^{(1)}}). 
\end{aligned}
\end{equation}
%--------------------------------------------------------------------------------------------

This will be discussed in more details in the section about the countable, infinite dimensional hypercube \ref{Chapter_TheInfiniteDimensionalHypercube}.

The network is positive recurrent, if and only if $\bX_*= \left(\frac{ \g_{\omega_0 \rightsquigarrow \omega }
}{\g_{\omega_0 \leftlsquigarrow \omega } } \right)_{\omega \in \Omega} \in l^1(\Omega)$. We can compute the norm of $\bX_*$ to

%--------------------------------------------------------------------------------------------
\begin{equation}\label{Eq_SufficientConditionForPositiveRecurrence}\tag{suff.cond.pos.rec.}
\begin{aligned}
\infty
&\overset{!} >
\| \bX_*\|_1
=
\sum\limits_{\omega \in \Omega} X^{(\omega)}
=
\sum\limits_{\omega \in \Omega} 
\frac{
\g_{\omega_0 \rightsquigarrow \omega}
}{
\g_{\omega_0 \leftlsquigarrow \omega}
}
=
\sum\limits_{\bw \in \{0,1\}^{\infty}} q^{\bw \cdot \bc}
=
\sum\limits_{N \in \N}
\sum\limits_{\bw_{N} \in \{0,1\}^{N-1}\times \{1\} }
q^{\bw_{N} \cdot \bc }
%%%%%%%%%%%%%%%%%%%%%%%%%%%%%%%%%%%%%%
\\ &=
%%%%%%%%%%%%%%%%%%%%%%%%%%%%%%%%%%%%%%
\sum\limits_{N \in \N}
q^{c^{(N)}} \prod\limits_{i=1}^{N-1} (1+q^{c^{(i)}}). 
\end{aligned}
\end{equation}
%--------------------------------------------------------------------------------------------

Clearly, this does not converge, if $ \bc $ is a constant vector, that is $\bc = \text{const} \cdot (1, 1, \dots)$, since $\sum\limits_{N \in \N }q (1+q)^{N-1}  = \infty $. 

It does converge, however, for $ \bc \in \{ (n)_{n \in \N}, (n^2)_{n \in \N}, (2^n)_{n \in \N} \} $, as can be seen, by e.g. applying the ratio test (see fact \ref{Fact_ApplyingRatioTest}).

\end{proof}

%#################################################################################################################################################################################

%#################################################################################################################################################################################
%#################################################################################################################################################################################
%#################################################################################################################################################################################

%#################################################################################################################################################################################
%#################################################################################################################################################################################
%#################################################################################################################################################################################
\section{A first non-trivial example: the linear chain with one open end }\label{Chapter_AFirstNonTrivialExample_TheLinearChainOnN}

In the following section we focus on  the following generator 

%--------------------------------------------------------------------------------------------
\begin{equation} \label{Eq_GeneratorOfTheLinearChainOnN}
\begin{aligned}
\G 
&=
\begin{pmatrix}
- \g_{0 \to 1} & \g_{1 \to 0}  & 0            & 0      & \dots          & \dots \\ 
\g_{0 \to 1}   & -\g_{1 \to }  & \g_{2 \to 1} &        & \dots          & \dots \\ 
0              &  \g_{1 \to 2} & -\g_{2 \to}  & \ddots &  \vdots        & \vdots \\
0              & 0             & \g_{2 \to 3} & \ddots & \g_{n \to n-1} & \dots  \\ 
\vdots         & \vdots        &  \vdots      & \ddots &  -\g_{n \to }  & \dots \\ 
\vdots         &  \vdots       &  \vdots      & \vdots & \g_{n \to n+1} & \dots  \\
\vdots         & \vdots        & \vdots       & \vdots & \vdots         & \ddots 
\end{pmatrix}, 
\end{aligned}
\end{equation}
%--------------------------------------------------------------------------------------------

with $\g_{n \to } := \sum\limits_{i \in \Omega} \g_{n \to i} $, or component-wise

%--------------------------------------------------------------------------------------------
\begin{equation} \label{Eq_LinearChainOnN0_DefGenerator_Componentwise}
\begin{aligned}
\G^{(m, n)}
&=
\delta_{n,0} \cdot \left(- \g_{0 \to 1} \delta_{m,0} + \g_{0 \to 1} \delta_{m,1} \right)   
\\ +& 
(1 - \delta_{n,0}) 
\cdot \Bigl(
\delta_{m, n-1} \, \g_{n \to n-1}  - \delta_{m,n}\, (\g_{n \to n-1} + \g_{n \to n+1}) + \delta_{m, n+1} \,  \g_{n \to n+1}  \Bigr), \\ 
% &\text{which corresponds to an infinite long chain: } \\
\end{aligned}
\end{equation}
%--------------------------------------------------------------------------------------------

which corresponds to an infinite long chain with one open end: 

%%%%%%%%%%%%%%%%%%%%%%%%%%%%%%%%%%%%%%%%%%%%%%%%%%%%%%%%%%%%%%%%%%%%%%%%%%%%%%%%%%%%%%%%%%%%%%%%%%%%%%%
\begin{figure}[H]
\begin{center}
% Linear Chain on N
%----------------------------------------------
\begin{tikzpicture}[scale=2]
\node[State](0)   at (0,0) [circle,draw] {0} ;
\node[State](1)   at (1,0) [circle,draw] {1} ; 
\node[State](2)   at (2,0) [circle,draw] {2} ;
\node[State](3)   at (3,0) [circle,draw] {$\dots$} ;
\node[State](n-1) at (4,0) [circle,draw] {$n-1$} ;
\node[State](n)   at (5,0) [circle,draw] {$\;\;\;n\;\;\;$} ;
\node[State](n+1) at (6,0) [circle,draw] {$n+1$} ;
\node[State](n+2) at (7,0) [circle,draw] {$\dots$} ;
\path[Link,bend left] (0) edge node[above] {$\g_{0 \to 1}$} (1);
\path[Link,bend left] (1) edge node[above] {$\g_{1 \to 2}$} (2);
\path[Link,bend left] (2) edge node[above] {} (3);
\path[Link,bend left] (3) edge node[above] {} (n-1);
\path[Link,bend left] (n-1) edge node[above] {$\g_{n-1 \to n}$} (n);
\path[Link,bend left] (n) edge node[above] {$\g_{n \to n+1}$} (n+1);
\path[Link,bend left] (n+1) edge node[above] {} (n+2);
\path[Link,bend left] (3) edge node[below] {} (2);
\path[Link,bend left] (2) edge node[below] {$\g_{2 \to 1}$} (1);
\path[Link,bend left] (1) edge node[below] {$\g_{1 \to 0}$} (0);
\path[Link,bend left] (n+2) edge node[below] {} (n+1);
\path[Link,bend left] (n+1) edge node[below] {$\g_{n+1 \to n}$} (n);
\path[Link,bend left] (n) edge node[below] {$\g_{n \to n-1}$} (n-1);
\path[Link,bend left] (n-1) edge node[above] {} (3);
\end{tikzpicture}
%----------------------------------------------
\caption{The linear chain with one open end }
\label{Fig_Network_LinearChainOnN0} 
\end{center}
\end{figure}
%%%%%%%%%%%%%%%%%%%%%%%%%%%%%%%%%%%%%%%%%%%%%%%%%%%%%%%%%%%%%%%%%%%%%%%%%%%%%%%%%%%%%%%%%%%%%%%%%%%%%%%

%#############################################################################################

%#############################################################################################
\subsection{The thermodynamic limit for the linear chain with one open end }\label{Section_TheThermodynamicLimitForTheLinearChainWithOneOpenEnd}

The master equation for the linear chain with one open end satisfies the (generalized) detailed balance condition of equation \eqref{Eq_StationarySolution_DetailedBalance} for

%----------------------------------------------
\begin{equation} \label{Eq_CandidateForStationarySolution_LinearChainWithOneOpenEnd}
\begin{aligned}
\bX_* 
&:=
\left( \frac{\g_{0 \rightsquigarrow n}}{\g_{0 \leftlsquigarrow n}} \right)_{n \in \N_0 } = \prod\limits_{i=0}^{n-1} \frac{\g_{i \to i+1}}{\g_{i \gets i+1}}, 
\end{aligned}
\end{equation}
%----------------------------------------------

since we have for $n \in \N_0$: 
%----------------------------------------------
\begin{equation*}
\begin{aligned}
\underbrace{
X_*^{(n)}
}_{
\frac{\g_{0 \rightsquigarrow n}}{\g_{0 \leftlsquigarrow n}}
}
\, \g_{n \to n\pm 1 }
&=
\underbrace{
\left(
\prod\limits_{i=0}^{n-1} \frac{\g_{i \to i+1}}{\g_{i \gets i+1}}
\right)
 \,
 \frac{\g_{n \to n\pm 1}}{\g_{n \gets n\pm 1}}
}_{
\prod\limits_{i=0}^{n\pm 1} \frac{\g_{i \to i+1}}{\g_{i \gets i+1}}
}
\,
\g_{n\pm 1 \to n}
=
\underbrace{
\left(
\prod\limits_{i=0}^{n\pm 1 -1} \frac{\g_{i \to i+1}}{\g_{i \gets i+1}}
\right)
}_{
X_*^{(n \pm 1)}
}
\,
\g_{n\pm 1 \to n}
=
X_*^{(n \pm 1)} \, \g_{n\pm 1 \to n}. 
\end{aligned}
\end{equation*}
%----------------------------------------------

An alternative way to see the (generalized) detailed balance condition is to notice, that since there is exactly one path from the state $0$ to any other state $n$, so clearly $ \frac{\g_{0 \rightsquigarrow n}}{\g_{0 \leftlsquigarrow n}} $ does not depend on the path $0 \rightsquigarrow n$. The following figure is meant to illustrate this:

\begin{figure}[H]
\begin{center}
%----------------------------------------------
\begin{equation*}
\begin{aligned}
\bP^{F}_*
&=
\frac{1}{\Zen^{F}}
\colvec{9}
{\g_{T_{\to \state{\bzero}}}}{}
{\g_{T_{\to \state{1}}}}{}
{\vdots}{}
{\g_{T_{\to \state{f}}}}{}
{\vdots}
%%%%%%%%%%%%%%%%%%%%%%%%%%%%%%%%%%%%%%%%%%%%%%
 \\ &=
%%%%%%%%%%%%%%%%%%%%%%%%%%%%%%%%%%%%%%%%%%%%%%
\frac{1}{\Zen^{F}}
\colvec{9}{
\underbrace{
    %-------
    \begin{tikzpicture}[scale=1]
    \centering
    \node[State](0)   at (0,0) [circle,draw] {0} ;
    \node[State](1)   at (1,0) [circle,draw] {1} ; 
    \node[State](dots)   at (2,0) [circle,draw] {$\dots$} ;
    \node[State](f)   at (3,0) [circle,draw] {f} ;
    \node[State](Dots)   at (4,0) [circle,draw] {$\dots$} ;
    \path[Link] (Dots) edge node[above] {} (f);
    \path[Link] (f) edge node[above] {} (dots);
    \path[Link] (dots) edge node[above] {} (1);
    \path[Link] (1) edge node[above] {} (0);
    \end{tikzpicture}
    %-------
    }_{
    {\g_{T}}_{\to \state{\bzero}}
    }
}{}{
\underbrace{
    %-------
    \begin{tikzpicture}[scale=1]
    \centering
    \node[State](0)   at (0,0) [circle,draw] {0} ;
    \node[State](1)   at (1,0) [circle,draw] {1} ; 
    \node[State](dots)   at (2,0) [circle,draw] {$\dots$} ;
    \node[State](f)   at (3,0) [circle,draw] {f} ;
    \node[State](Dots)   at (4,0) [circle,draw] {$\dots$} ;
    \path[Link] (Dots) edge node[above] {} (f);
    \path[Link] (f) edge node[above] {} (dots);
    \path[Link] (dots) edge node[above] {} (1);
    \path[Link, CrimsonGlory] (0) edge node[above] {} (1);
    \end{tikzpicture}
    %-------
    }_{
    \g_{T_{\to \state{1}}}
    }
}{}{\vdots}{}{
\underbrace{
    %-------
    \begin{tikzpicture}[scale=1]
    \centering
    \node[State](0)   at (0,0) [circle,draw] {0} ;
    \node[State](1)   at (1,0) [circle,draw] {1} ; 
    \node[State](dots)   at (2,0) [circle,draw] {$\dots$} ;
    \node[State](f)   at (3,0) [circle,draw] {f} ;
    \node[State](Dots)   at (4,0) [circle,draw] {$\dots$} ;
    \path[Link] (Dots) edge node[above] {} (f);
    \path[Link,CrimsonGlory] (dots) edge node[above] {} (f);
    \path[Link,CrimsonGlory] (1) edge node[above] {} (dots);
    \path[Link,CrimsonGlory] (0) edge node[above] {} (1);
    \end{tikzpicture}
    %-------
    }_{
    \g_{T_{\to \state{f}}}
    }
}{}{\vdots}
%%%%%%%%%%%%%%%%%%%%%%%%%%%%%%%%%%%%%%%%%%%%%%
=
%%%%%%%%%%%%%%%%%%%%%%%%%%%%%%%%%%%%%%%%%%%%%%
\frac{1}{\Zen^{F}}
\colvec{9}{
\underbrace{
    %-------
    \begin{tikzpicture}[scale=1]
    \centering
    \node[State](0)   at (0,0) [circle,draw] {0} ;
    \node[State](1)   at (1,0) [circle,draw] {1} ; 
    \node[State](dots)   at (2,0) [circle,draw] {$\dots$} ;
    \node[State](f)   at (3,0) [circle,draw] {f} ;
    \node[State](Dots)   at (4,0) [circle,draw] {$\dots$} ;
    \path[Link] (Dots) edge node[above] {} (f);
    \path[Link] (f) edge node[above] {} (dots);
    \path[Link] (dots) edge node[above] {} (1);
    \path[Link] (1) edge node[above] {} (0);
    \end{tikzpicture}
    %-------
    }_{
    {\g_{T}}_{\to \state{\bzero}}
    }
    }{}{
    \underbrace{
    %-------
    \begin{tikzpicture}[scale=1]
    \centering
    \node[State](0)   at (0,0) [circle,draw] {0} ;
    \node[State](1)   at (1,0) [circle,draw] {1} ; 
    \node[State](dots)   at (2,0) [circle,draw] {$\dots$} ;
    \node[State](f)   at (3,0) [circle,draw] {f} ;
    \node[State](Dots)   at (4,0) [circle,draw] {$\dots$} ;
    \path[Link] (Dots) edge node[above] {} (f);
    \path[Link] (f) edge node[above] {} (dots);
    \path[Link] (dots) edge node[above] {} (1);
    \path[Link, MyGreen] (1) edge node[above] {} (0);
    \end{tikzpicture}
    %-------
    }_{
    {\g_{T}}_{\to \state{\bzero}}
    }
    \, \cdot \, 
    \frac{
    %-------
    \begin{tikzpicture}[scale=1]
    \centering
    \node[State](0)   at (0,0) [circle,draw] {0} ;
    \node[State](1)   at (1,0) [circle,draw] {1} ; 
    \path[Link, CrimsonGlory] (0) edge node[above] {} (1);
    \end{tikzpicture}
    %-------
    }{
    %-------
    \begin{tikzpicture}[scale=1]
    \centering
    \node[State](0)   at (0,0) [circle,draw] {0} ;
    \node[State](1)   at (1,0) [circle,draw] {1} ; 
    \path[Link, MyGreen] (1) edge node[above] {} (0);
    \end{tikzpicture}
    %-------
    }
}{}{\vdots
    % \underbrace{
    % %-------
    % \begin{tikzpicture}[scale=1]
    % \centering
    % \node[State](0)   at (0,0) [circle,draw] {0} ;
    % \node[State](1)   at (1,0) [circle,draw] {1} ; 
    % \node[State](dots)   at (2,0) [circle,draw] {$\dots$} ;
    % \node[State](f)   at (3,0) [circle,draw] {f} ;
    % \node[State](Dots)   at (4,0) [circle,draw] {$\dots$} ;
    % %
    % \path[Link] (Dots) edge node[above] {} (f);
    % \path[Link] (f) edge node[above] {} (dots);
    % \path[Link,MyGreen] (dots) edge node[above] {} (1);
    % \path[Link, MyGreen] (1) edge node[above] {} (0);
    % \end{tikzpicture}
    % %-------
    % }_{
    % {\g_{T}}_{\to \state{\bzero}}
    % }
    % \, \cdot \, 
    % \frac{
    % %-------
    % \begin{tikzpicture}[scale=1]
    % \centering
    % \node[State](0)   at (0,0) [circle,draw] {0} ;
    % \node[State](1)   at (1,0) [circle,draw] {1} ; 
    % \node[State](2)   at (2,0) [circle,draw] {2} ;
    % %
    % \path[Link, CrimsonGlory] (0) edge node[above] {} (1);
    % \path[Link, CrimsonGlory] (1) edge node[above] {} (2);
    % \end{tikzpicture}
    % %-------
    % }{
    % %-------
    % \begin{tikzpicture}[scale=1]
    % \centering
    % \node[State](0)   at (0,0) [circle,draw] {0} ;
    % \node[State](1)   at (1,0) [circle,draw] {1} ; 
    % \node[State](2)   at (2,0) [circle,draw] {2} ;
    % %
    % \path[Link, MyGreen] (2) edge node[above] {} (1);
    % \path[Link, MyGreen] (1) edge node[above] {} (0);
    % \end{tikzpicture}
    % %-------
    % }
}{}{
    \underbrace{
    %-------
    \begin{tikzpicture}[scale=1]
    \centering
    \node[State](0)   at (0,0) [circle,draw] {0} ;
    \node[State](1)   at (1,0) [circle,draw] {1} ; 
    \node[State](dots)   at (2,0) [circle,draw] {$\dots$} ;
    \node[State](f)   at (3,0) [circle,draw] {f} ;
    \node[State](Dots)   at (4,0) [circle,draw] {$\dots$} ;
    \path[Link] (Dots) edge node[above] {} (f);
    \path[Link,MyGreen] (f) edge node[above] {} (dots);
    \path[Link,MyGreen] (dots) edge node[above] {} (1);
    \path[Link, MyGreen] (1) edge node[above] {} (0);
    \end{tikzpicture}
    %-------
    }_{
    {\g_{T}}_{\to \state{\bzero}}
    }
    \, \cdot \, 
    \frac{
    %-------
    \begin{tikzpicture}[scale=1]
    \centering
    \node[State](0)   at (0,0) [circle,draw] {0} ;
    \node[State](1)   at (1,0) [circle,draw] {1} ; 
    \node[State](dots)   at (2,0) [circle,draw] {$\dots$} ;
    \node[State](f)   at (3,0) [circle,draw] {f} ;
    \path[Link, CrimsonGlory] (0) edge node[above] {} (1);
    \path[Link, CrimsonGlory] (1) edge node[above] {} (dots);
    \path[Link, CrimsonGlory] (dots) edge node[above] {} (f);
    \end{tikzpicture}
    %-------
    }{
     %-------
    \begin{tikzpicture}[scale=1]
    \centering
    \node[State](0)   at (0,0) [circle,draw] {0} ;
    \node[State](1)   at (1,0) [circle,draw] {1} ; 
    \node[State](dots)   at (2,0) [circle,draw] {$\dots$} ;
    \node[State](f)   at (3,0) [circle,draw] {f} ;
    \path[Link, MyGreen] (1) edge node[above] {} (0);
    \path[Link, MyGreen] (dots) edge node[above] {} (1);
    \path[Link, MyGreen] (f) edge node[above] {} (dots);
    \end{tikzpicture}
    %-------
    }
}{}{\vdots}
%%%%%%%%%%%%%%%%%%%%%%%%%%%%%%%%%%%%%%%%%%%%%%%%%%%%%%%%%%%%%%%%%
\\ &= 
%%%%%%%%%%%%%%%%%%%%%%%%%%%%%%%%%%%%%%%%%%%%%%%%%%%%%%%%%%%%%%%%%
\frac{1}{\Zen^{F}}
\colvec{9}
{1}{}{
    \frac{
    %-------
    \begin{tikzpicture}[scale=1]
    \centering
    \node[State](0)   at (0,0) [circle,draw] {0} ;
    \node[State](1)   at (1,0) [circle,draw] {1} ; 
    \path[Link, CrimsonGlory] (0) edge node[above] {} (1);
    \end{tikzpicture}
    %-------
    }{
    %-------
    \begin{tikzpicture}[scale=1]
    \centering
    \node[State](0)   at (0,0) [circle,draw] {0} ;
    \node[State](1)   at (1,0) [circle,draw] {1} ; 
    \path[Link, MyGreen] (1) edge node[above] {} (0);
    \end{tikzpicture}
    %-------
    }
}{}{
    \frac{
    %-------
    \begin{tikzpicture}[scale=1]
    \centering
    \node[State](0)   at (0,0) [circle,draw] {0} ;
    \node[State](1)   at (1,0) [circle,draw] {1} ; 
    \node[State](2)   at (2,0) [circle,draw] {2} ;
    \path[Link, CrimsonGlory] (0) edge node[above] {} (1);
    \path[Link, CrimsonGlory] (1) edge node[above] {} (2);
    \end{tikzpicture}
    %-------
    }{
    %-------
    \begin{tikzpicture}[scale=1]
    \centering
    \node[State](0)   at (0,0) [circle,draw] {0} ;
    \node[State](1)   at (1,0) [circle,draw] {1} ; 
    \node[State](2)   at (2,0) [circle,draw] {2} ;
    \path[Link, MyGreen] (2) edge node[above] {} (1);
    \path[Link, MyGreen] (1) edge node[above] {} (0);
    \end{tikzpicture}
    %-------
    }
}{}{
    \frac{
    %-------
    \begin{tikzpicture}[scale=1]
    \centering
    \node[State](0)   at (0,0) [circle,draw] {0} ;
    \node[State](1)   at (1,0) [circle,draw] {1} ; 
    \node[State](2)   at (2,0) [circle,draw] {2} ;
    \node[State](3)   at (3,0) [circle,draw] {3} ;
    \path[Link, CrimsonGlory] (0) edge node[above] {} (1);
    \path[Link, CrimsonGlory] (1) edge node[above] {} (2);
    \path[Link, CrimsonGlory] (2) edge node[above] {} (3);
    \end{tikzpicture}
    %-------
    }{
    %-------
    \begin{tikzpicture}[scale=1]
    \centering
    \node[State](0)   at (0,0) [circle,draw] {0} ;
    \node[State](1)   at (1,0) [circle,draw] {1} ; 
    \node[State](2)   at (2,0) [circle,draw] {2} ;
    \node[State](3)   at (3,0) [circle,draw] {3} ;
    \path[Link, MyGreen] (3) edge node[above] {} (2);
    \path[Link, MyGreen] (2) edge node[above] {} (1);
    \path[Link, MyGreen] (1) edge node[above] {} (0);
    \end{tikzpicture}
    %-------
    }
}{}{\vdots}
\end{aligned}
\end{equation*}
%----------------------------------------------

\caption{Illustrating the fact, that the stationary solution for a finite linear chain can be expressed not only in terms of in-trees $\bP^{F}_* = \frac{1}{\Zen^{F}}\left(\g_{T_{\to f}} \right)_{f \in F}$, but also in terms the paths from- and to zero: $\bP^{F}_* = \frac{1}{\Zen^{F}}\left( \frac{\g_{0 \rightsquigarrow f}}{\g_{0 \leftlsquigarrow f}} \right)_{f \in F}$ }
\label{Fig_FiniteLinearChain} 
\end{center}
\end{figure}

By theorem \ref{Thm_CandidateForStationarySolutionOfIrreducibleNetworksWithDetailedBalance}, the only candidate for a stationary solution is $\frac{\bX_*}{\| \bX_* \|_1}$. When $\| \bX_* \|_1 = \infty$, no stationary solution exists, when $\| \bX_* \|_1 < \infty$, then the master equation of the linear chain with one open end is \emph{positive recurrent} (theorem \ref{Thm_CandidateForStationarySolutionOfIrreducibleNetworksWithDetailedBalance}) and its stationary solution can be approximated by corresponding finite subsystems in the thermodynamic limit (theorem \ref{Theorem_DetailedBalanceApproximatingStationarySolutionByFiniteSubsystems}).

When combining these results, we can conclude that for the linear chain with one open end we have: 

%---------------------------------------------------------------------------------------
\begin{equation}
\begin{aligned}
\lim\limits_{t \to \infty }  \, 
&\lim\limits_{|F| \to \infty } \, 
\bP^{F}(t \,| \, \bPF_0)
= 
\lim\limits_{t \to \infty } \bP(t \,| \, \bP_0) = 
%%%%%%%%%%%%%%%%%%%%%%%%%%%%%%%%
\\ &\xlongequal[\text{positive recurrent}]{\text{irreducible + }}
%%%%%%%%%%%%%%%%%%%%%%%%%%%%%%%%
\bP_*
\xlongequal[\text{detailed balance}]{\text{ thm } \ref{Theorem_DetailedBalanceApproximatingStationarySolutionByFiniteSubsystems}}
\lim\limits_{|F| \to \infty } \, 
\underbrace{
    \bPF_*
}_{
    \bP_\infty^{F}(\bPF_0)
}
%%%%%%%%%%%%%%%%%%%%%%%%%%%%%%%%
\\ &= 
%%%%%%%%%%%%%%%%%%%%%%%%%%%%%%%%
\lim\limits_{|F| \to \infty } \, 
\underbrace{
    \bP_\infty^{F}(\bPF_0)
}_{
    \lim\limits_{t \to \infty }  \, 
    \bP^{F}(t \,| \, \bPF_0)
}
=
\lim\limits_{|F| \to \infty } \, 
\lim\limits_{t \to \infty }  \, 
\bP^{F}(t \,| \, \bPF_0)
\end{aligned}
\end{equation}
%---------------------------------------------------------------------------------------

\begin{claim}[A simplified version for specific rates] $ $ \\ 

If for all $n \in \N_0$ we have

%--------------------------------------------------------------------------------------------
\begin{equation*}
\begin{aligned}
\g_{n \to n+1} 
&=
\lambda_{n+1} \, x \text{   and } 
%%%%%%%%%%%%%%%%%
\\
%%%%%%%%%%%%%%%%%
\g_{n \gets n+1} 
&=
\lambda_{n} \, (1-x)
\end{aligned}
\end{equation*}
%--------------------------------------------------------------------------------------------

for some $x \in (0,1)$ some sequence $(\lambda_n)_{n \in \N} \in l^1(N_0, \R_{> \, 0}) $ with $\lambda:= \lim\limits_{n \to \infty} \sqrt[n]{\lambda_{n}}$, then $\bX_* = \left( \frac{\lambda_n}{\lambda_0} \, \left(\frac{x}{1-x} \right)^n \right)_{n \in \N_0} $ and we have $\bX_* \in l^1(\N_0)$, if and only if $x \in \left(0, \frac{1}{1+\lambda}\right)$. 
\end{claim}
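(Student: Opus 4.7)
The plan is to combine the explicit product formula for $\bX_*$ from \eqref{Eq_CandidateForStationarySolution_LinearChainWithOneOpenEnd} with the Cauchy--Hadamard root test. First I would substitute $\g_{i \to i+1} = \lambda_{i+1}\, x$ and $\g_{i \gets i+1} = \lambda_{i}\, (1-x)$ into $X_*^{(n)} = \prod_{i=0}^{n-1}\g_{i \to i+1}/\g_{i \gets i+1}$ and exploit the telescoping of the $\lambda$-ratio: the factors $\lambda_{i+1}/\lambda_i$ collapse to $\lambda_n/\lambda_0$, while the $x$-dependence collects as $(x/(1-x))^n$, which yields the asserted closed form for $\bX_*$.

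Next I would decide summability by the root test applied to $\sum_n X_*^{(n)}$. Since $\sqrt[n]{\lambda_n/\lambda_0}\to \lambda$ by hypothesis (the constant factor $1/\lambda_0$ contributes $1$ in the $n$th root limit), one obtains $\limsup_{n\to\infty}\sqrt[n]{X_*^{(n)}} = \lambda\cdot x/(1-x)$. This quantity is strictly less than one precisely when $x(1+\lambda)<1$, i.e.\ $x < 1/(1+\lambda)$, and strictly greater than one when $x > 1/(1+\lambda)$. Hence $\bX_* \in l^1(\N_0)$ on the asserted open interval, while the series diverges for any larger $x \in (0,1)$. The resulting positive-recurrence conclusion and the thermodynamic-limit identity then follow directly by invoking theorem \ref{Thm_CandidateForStationarySolutionOfIrreducibleNetworksWithDetailedBalance} and theorem \ref{Theorem_DetailedBalanceApproximatingStationarySolutionByFiniteSubsystems} as in the preceding subsection.

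The main (really only nontrivial) point to watch is the endpoint $x = 1/(1+\lambda)$, where the root test is inconclusive: convergence there depends on the finer asymptotics of $(\lambda_n)$. For example, $\lambda_n = (n+1)^{-2}$ lies in $l^1$, has $\lambda = 1$, and yet the corresponding series converges at $x = 1/2$; so under the bare hypothesis $(\lambda_n)\in l^1$ the ``iff'' at the endpoint is not quite correct as stated. I would therefore either flag this boundary as requiring a separate argument (e.g.\ an additional matching lower bound $\liminf_n \lambda_n^{1/n} = \lambda$ that rules out subexponential decay of the summands at the critical $x$), or phrase the result as the strict-inequality dichotomy above, which is in any case exactly what is needed in the downstream applications.
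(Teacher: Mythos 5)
Your proposal is correct and follows essentially the same route as the paper: substitute the rates into the telescoping product $X_*^{(n)}=\prod_{i=0}^{n-1}\g_{i\to i+1}/\g_{i\gets i+1}=\frac{\lambda_n}{\lambda_0}\left(\frac{x}{1-x}\right)^n$ and then apply the root test, giving $\limsup_n \sqrt[n]{X_*^{(n)}}=\lambda\,\frac{x}{1-x}<1 \iff x<\frac{1}{1+\lambda}$, exactly as in the paper. Your additional remark about the endpoint $x=\frac{1}{1+\lambda}$ is a valid refinement: for instance $\lambda_n=(n+1)^{-2}$ gives $\lambda=1$ yet $\bX_*\in l^1(\N_0)$ at $x=\tfrac12$, so the ``only if'' direction of the stated equivalence indeed requires either the strict-inequality dichotomy you propose or a separate treatment of the critical value.
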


\begin{proof}
We have 
%--------------------------------------------------------------------------------------------
\begin{equation*}
\begin{aligned}
X_*^{(n)}
&=
\frac{\g_{0 \rightsquigarrow n }}{\g_{0 \leftlsquigarrow n }}
=
\prod\limits_{i=0}^{n-1}
\underbrace{
\frac{\g_{i \to i+1 }}{\g_{i \gets i+1 }}
}_{
\frac{\lambda_{i+1 } \, q}{ \lambda_{i} \, q }
}
=
\frac{\lambda_n}{\lambda_0} \, \left(\frac{x}{1-x} \right)^n \text{  and }
%%%%%%%%%%%%%%%%%%%%%%%%%%%%%%%%%%%%%%%%%%%%%
\\
%%%%%%%%%%%%%%%%%%%%%%%%%%%%%%%%%%%%%%%%%%%%%
\infty
&>
\| \bX_* \|_1
=
\sum\limits_{n \in \N_0}
X_*^{(n)} 
=
\sum\limits_{n \in \N_0}
\frac{\lambda_n}{\lambda_0} \, \left(\frac{x}{1-x} \right)^n
%%%%%%%%%%%%%%%%%%%%%%%%%%%%%%%%%%%%%%%%%%%%%
\\ \xLongleftrightarrow[\text{test}]{\text{root}}
%%%%%%%%%%%%%%%%%%%%%%%%%%%%%%%%%%%%%%%%%%%%%
1
&>
\limsup\limits_{n \to \infty}
\sqrt[n]{\frac{\lambda_n}{\lambda_0} \, \left(\frac{x}{1-x} \right)^n}
=
\lambda \, \left(\frac{x}{1-x} \right)
%%%%%%%%%%%%%%%%%%%%%%%%%%%%%%%%%%%%%%%%%%%%%
\\ \iff
%%%%%%%%%%%%%%%%%%%%%%%%%%%%%%%%%%%%%%%%%%%%%
x
&<
\frac{1}{1+\lambda}
\end{aligned}
\end{equation*}
%--------------------------------------------------------------------------------------------

\end{proof}
%#############################################################################################

%#############################################################################################
\subsection{Computing the kernel of the generator of the linear chain on $\N_0$ `by foot'}
\label{Section_ComputingTheKernelOfTheGeneratorOfTheLinearChainOnNByFoot}

Since the linear chain satisfies the detailed balance condition \eqref{Eq_StationarySolution_DetailedBalance}, we are able to directly write down the only candidate for the stationary solution, as we did in equation \eqref{Eq_CandidateForStationarySolution_LinearChainWithOneOpenEnd}.

With the help of the two lemmata \ref{Lemma_StrictPositivityForStationarySolutionsOfStronglyConnectedNetworks} and \ref{IfThermodynamicLimitOfStationaryStatesExists} we can validate this guess, without computing the kernel of $\G$ directly.

For illustration purposes, we do this now `by foot'. 

\begin{claim}\label{Claim_TildeKern_SpanPStar}
Let $\widetilde{\Kern(\G)} := \{ \bX \in \R^{\N_0} \,:\, \G \bX \text{ exists and }\G \bX = \bzero^{\N_0}\}$ (note that 
$\Kern(\G) = \widetilde{\Kern(\G)} \cap l^1(\N_0)$ ), then $\widetilde{\Kern(\G)} =   \Span(\bX_*) $, where $ \bX_*$ is given by

%--------------------------------------------------------------------------------------------
\begin{equation}\label{PStar_LinearChainOnN}
\begin{aligned}
\bX_* 
&=
\left(
\frac{\g_{0 \rightsquigarrow n}}{\g_{0 \leftlsquigarrow n}}
\right)_{n \in \N }
=
\left(1, \left( \prod\limits_{i=0}^{n-1} \frac{\g_{i \to i+1}}{\g_{i \gets i+1}}\right)_{n \geq \, 1} \right) 
%
% \\
% %
% &=
% %
% \left(1, \left( \frac{\g_{0 \to 1}}{\g_{n \to n-1}} \prod\limits_{i=1}^{n-1} \frac{\g_{i \to i+1}}{\g_{i \to i-1}}\right)_{n \geq \, 1} \right)  \\
%
\\ &=
 \left( 1, \frac{\g_{0 \to 1}}{\g_{1 \to 0}}, \frac{\g_{0 \to 1}}{\g_{2 \to 1}} \frac{\g_{1 \to 2}}{\g_{1 \to 0 }},  \frac{\g_{0 \to 1}}{\g_{3 \to 2}} \, \frac{\g_{1 \to 2} \, \g_{2 \to 3}}{\g_{2 \to 1}\,  \g_{1 \to 0 }} ,  \dots \right). 
=
\left(1, 
\frac{
    \g_{0 \to 1} 
}{
    \g_{0 \gets 1}
}, 
\frac{
    \g_{0 \to 1} \, \g_{1 \to 2}
}{
    \g_{0 \gets 1} \, \g_{1 \gets 2}
}, 
\frac{
    \g_{0 \to 1} \, \g_{1 \to 2} \, \g_{2 \to 3}
}{
    \g_{0 \gets 1} \, \g_{1 \gets 2} \, \g_{2 \gets 3}
}
\right)
\end{aligned}
\end{equation}
%--------------------------------------------------------------------------------------------

where $(a, (b_n)_{n\geq 1}):= (a, b_1, b_2, b_3, \dots )$.

\end{claim}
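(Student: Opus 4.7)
The plan is to exploit the tridiagonal structure of $\G$ to reduce the eigenvalue equation $\G \bX = \bzero$ from a second-order recurrence to a first-order one, and then solve it explicitly by induction.

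First, I would write out the system $\G \bX = \bzero$ componentwise using equation \eqref{Eq_LinearChainOnN0_DefGenerator_Componentwise}. The boundary row ($n=0$) gives
\begin{equation*}
-\g_{0 \to 1} X^{(0)} + \g_{1 \to 0} X^{(1)} = 0,
\end{equation*}
while for $n \geq 1$ the bulk equations read
\begin{equation*}
\g_{n-1 \to n} X^{(n-1)} - \bigl(\g_{n \to n-1} + \g_{n \to n+1}\bigr) X^{(n)} + \g_{n+1 \to n} X^{(n+1)} = 0.
\end{equation*}
The key observation is to define the \emph{probability flux} $J_n := \g_{n \to n+1} X^{(n)} - \g_{n+1 \to n} X^{(n+1)}$ for $n \in \N_0$. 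The bulk equation rewrites as $J_n = J_{n-1}$ for every $n \geq 1$, while the boundary equation is exactly $J_0 = 0$. A trivial induction then yields $J_n = 0$ for all $n \in \N_0$, which collapses the original two-step recurrence to the one-step recurrence
\begin{equation*}
X^{(n+1)} = \frac{\g_{n \to n+1}}{\g_{n+1 \to n}} \, X^{(n)}.
\end{equation*}

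Iterating this from $n=0$ gives $X^{(n)} = X^{(0)} \prod_{i=0}^{n-1} \frac{\g_{i \to i+1}}{\g_{i+1 \to i}} = X^{(0)} \, X_*^{(n)}$ for all $n \geq 0$, with the convention that the empty product equals $1$. Since the irreducibility of the chain guarantees $\g_{i \to i+1} > 0$ and $\g_{i+1 \to i} > 0$ for every $i$, each factor is strictly positive and $\bX_*$ has only strictly positive entries. Hence every $\bX \in \widetilde{\Kern(\G)}$ is determined by the single scalar $X^{(0)} \in \R$, so $\widetilde{\Kern(\G)} = \Span(\bX_*)$.

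I do not expect any serious obstacle: the argument is essentially a discrete version of the fact that a stationary solution of a birth–death chain must have vanishing probability current, and the tridiagonal structure makes the reduction from a two-step to a one-step recurrence completely mechanical. The only subtlety worth emphasizing is that we are working in $\widetilde{\Kern(\G)}$ rather than $\Kern(\G)$, so no summability of $\bX_*$ needs to be invoked — whether $\bX_* \in l^1(\N_0)$ (equivalently, whether the chain is positive recurrent) is exactly what distinguishes $\Span(\bX_*) \cap l^1(\N_0) = \Kern(\G)$ from being zero-dimensional or one-dimensional, but it does not affect the current claim.
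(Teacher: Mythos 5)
Your proposal is correct and in substance follows the same route as the paper: both arguments use the boundary equation at $n=0$ to collapse the three-term (second-order) recursion coming from the tridiagonal generator into the one-term recursion $X^{(n+1)} = \frac{\g_{n \to n+1}}{\g_{n+1 \to n}}\,X^{(n)}$, which forces $\bX = X^{(0)}\,\bX_*$. The paper does this by a direct induction in which the induction hypothesis $X^{(m)} = X^{(m-1)}\frac{\g_{m-1\to m}}{\g_{m\to m-1}}$ is substituted into the $m$-th component equation; your conserved-flux formulation $J_n := \g_{n\to n+1}X^{(n)} - \g_{n+1\to n}X^{(n+1)}$, with $J_n = J_{n-1}$ from the bulk equations and $J_0 = 0$ from the boundary row, is the same computation packaged more transparently (the standard zero-current argument for birth--death chains), and it has the small advantage that the equivalence $(\G\bX)^{(0)} = -J_0$, $(\G\bX)^{(n)} = J_{n-1}-J_n$ makes both directions visible at once. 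One point you should state explicitly: as written you only establish $\widetilde{\Kern(\G)} \subseteq \Span(\bX_*)$; for the claimed equality you also need $\bX_* \in \widetilde{\Kern(\G)}$, which the paper checks by a separate direct computation of $(\G\bX_*)^{(m)}$ for $m=0$ and $m\geq 1$. In your setup this is a one-line remark — $\bX_*$ satisfies the one-step recursion by construction, so all fluxes $J_n$ vanish and hence all component equations hold, while existence of $\G\bX_*$ is automatic because each row of $\G$ has only finitely many nonzero entries — but without that remark the conclusion $\widetilde{\Kern(\G)} = \Span(\bX_*)$ is not yet justified.
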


\begin{proof} [proof of "$\Span(\bX_*) = \widetilde{\Kern(\G)}$"]

Note: we have 

%--------------------------------------------------------------------------------------------
\begin{equation*}
\begin{aligned}
X_*^{(m)}
&=
\frac{\g_{0 \to 1}}{\g_{m \to m-1}} \, 
\prod\limits_{i=1}^{m-2} \frac{\g_{i \to i+1 }}{\g_{i \to i-1}}
\frac{\g_{m-1 \to m}}{\g_{m-1 \to m-2 }}
=
\underbrace{
\left(
\frac{\g_{0 \to 1}}{\g_{m-1 \to m-2 } } \, 
\prod\limits_{i=1}^{m-2} \frac{\g_{i \to i+1 }}{\g_{i \to i-1}}
\right)
}_{
X_*^{(m-1)}
}
\frac{\g_{m-1 \to m}}{\g_{m \to m-1}}
\end{aligned}
\end{equation*}
%--------------------------------------------------------------------------------------------

\begin{itemize}

\item[]
\item["$ \subseteq $": ]

%--------------------------------------------------------------------------------------------
\begin{equation*}
\begin{aligned}
(\G \bX_*)^{(m)}
&\xlongequal{m=0}
-\underbrace{
X_{*}^{(0)}
}_{
1
} \, \g_{0 \to 1} + 
\underbrace{X_*^{(1)}}_{
\frac{\g_{0 \to 1}}{\g_{1 \to 0}}
} \, \g_{1 \to 0} 
=
0 \\
%%%%%%%%%%%%%%%%%%%%%%%%%%%%%%%%%%%%%%%%%%%%%%%%%%%%%%%%%%%%%%%%%%%%%%%%%%%%%%%%%%%%%%%%%%%%%%%
%
&\xlongequal{m \geq 1}
X_{*}^{(m-1)} \, \g_{m-1 \to m} - 
\underbrace{
X_{*}^{(m)}
}_{
X_{*}^{(m-1)} \, \frac{\g_{{m-1} \to m}}{\g_{m \to m-1 }}
} \, {\begin{color}{blue} \g_{m \to} \end{color}}
+ \underbrace{
X_{*}^{(m+1)}
}_{
X_{*}^{(m-1)} \, \frac{\g_{m-1 \to m} }{\g_{m \to m-1 }} \, 
\frac{\g_{m} \to m+1}{
\begin{color}{red} \g_{m+1 \to m} \end{color}
} 
} \, {\begin{color}{red} \g_{m+1 \to m} \end{color} }\\
&=
X_{*}^{(m-1)} \, \g_{{m-1} \to m} 
\underbrace{
\left(1 - \frac{{\begin{color}{blue} \g_{m \to m-1} + \g_{m \to m+1} \end{color}}}{\g_{m \to m-1}} + \frac{\g_{m \to m+1}}{\g_{m \to m-1}} \right)
}_{
0
}
=
0. 
\end{aligned}
\end{equation*}
%--------------------------------------------------------------------------------------------

\item["$\supseteq $": ] Let $\bX \in \widetilde{\Kern(\G)}$, that is $ \G \bX = \bzero$. It suffices to show that $X^{(n+1)} = X^{(n)} \cdot \frac{\g_{n \to n+1}}{\g_{n+1 \to n}}$, which we will prove by induction: 
\begin{itemize}
\item[i-start: ]
%--------------------------------------------------------------------------------------------
\begin{equation*}
\begin{aligned}
0 
&\xlongequal{!}
(\G \bX)^{(0)} = -X^{(0)} \, \g_{0 \to 1} + X^{(1)} \,  \g_{1 \to 0} \\
%%%%%%%%%%%%
\Rightarrow 
X^{(1)} 
&=
X^{(0)} \cdot \frac{\g_{0 \to 1}}{\g_{1 \to 0}}
\end{aligned}
\end{equation*}
%--------------------------------------------------------------------------------------------

%%%%%%%%%%%%%%%%%%%%%%%%%%%%%%%%%%%%%%%%%%%%%%%%%%%%%%%%%
\item[i-step: ]
%--------------------------------------------------------------------------------------------
\begin{equation*}
\begin{aligned}
0 
&\xlongequal{!}
(\G \bX)^{(m)}
=
X^{(m-1)} \, \g_{m-1 \to m } - 
\underbrace{
X^{(m)} 
}_{X^{(m-1)} \, \frac{\g_{m-1 \to m} }{\g_{m \to m-1}}}
\begin{color}{blue} \g_{m \to } \end{color} + X^{(m+1)} \,  \g_{m+1 \to m} = \\
%%%%%%%%%%%%%%%%%%%%%%%%%%%%%%%%%%%
%
&\xlongequal{\text{i-hypothesis}}
X^{(m-1)}\, \g_{m-1 \to m } 
\underbrace{
\left( 1 - \frac{ \begin{color}{blue}  \g_{m \to m-1} + \g_{m \to m+1} \end{color} }{\g_{m \to m-1}} \right)
}_{
- \frac{\color{red} \g_{m \to m+1}  }{\g_{m \to m-1}}
}+ X^{(m+1)} \g_{m+1 \to m} \\
%%%%%%%%%%%%%%%%%%%%%%%%%%%%%%%%%%%
%
\Rightarrow 
X^{(m+1)}
&=
\frac{1}{\g_{m+1 \to m}}
\underbrace{
\left(
X^{(m-1)} \frac{\g_{m-1 \to m}}{\g_{m \to m-1}}
\right)}_{
X^{(m)}
}
{\color{red} \g_{m \to m+1} } = \\
%%%%%%%%%%%%%%%%%%%%%%%%%%%%%%%%%%%
%
&=
X^{(m)} \, \frac{\g_{m \to m+1}}{\g_{m+1 \to m}}. 
\end{aligned}
\end{equation*}
%--------------------------------------------------------------------------------------------

\end{itemize}

\end{itemize}

\end{proof}

This means, that we have 
%--------------------------------------------------------------------------------------------
\begin{equation}\label{KernOfTheGenOnN}\tag{KernOfTheGenOnN}
\begin{aligned}
\Kern(\G) 
= 
\underbrace{
\widetilde{\Kern(\G) }
}_{
\Span(\bX_*)
} 
\, \cap \,  l^1(\N_0)
=
\Span\left(
\frac{\bX_*}{\|\bX_*\|_1}
\right)
=
\begin{cases}
\Span(\bX_*) &\text{, if } \|\bX_*\| < \infty  \\
%%%%%%%%%%%%%%%
\{ \bzero \} &\text{, if } \|\bX_*\| = \infty. 
\end{cases}
\end{aligned}
\end{equation}
\section{The linear  chain with two open ends}\label{Chapter_ASecondExample_TheLinearChainOnZ}

In the following section we focus on  the following generator 

%--------------------------------------------------------------------------------------------
\begin{equation}\label{Eq_GeneratorOfTheLinearChainOnZ}
\begin{aligned}
\G 
&=
\begin{pmatrix}
\ddots         &0              &0              &\vdots        &\vdots        & \vdots         &\vdots  & \vdots         & \vdots \\   
\hdots         &\g_{-n\to-n-1} &0              & 0            &\vdots        & \vdots         &\vdots  & \vdots         & \vdots  \\   
\hdots         &-\g_{-n \to }  &\ddots         & 0            &0             & 0              &\vdots  & \vdots         & \vdots   \\   
\hdots         &\g_{-n\to-n+1} &\ddots         &\g_{-1 \to -2}&0             & 0              &\vdots  & \vdots         & \vdots    \\
\hdots         &0              &\ddots         &-\g_{-1 \to } &\g_{0 \to -1 }& 0              &\vdots  & \vdots         & \vdots     \\
\vdots         &0              &0              &\g_{-1 \to 0} &-\g_{0 \to }  & \g_{1 \to 0}   & \dots  & \vdots         & \vdots \\ 
\vdots         &\vdots         &0              &0             & \g_{0 \to 1} & -\g_{1 \to }   & \dots  & 0              & \vdots  \\ 
\vdots         &\vdots         &\vdots         &0             & 0            &  \g_{1 \to 2}  & \ddots & 0              & \vdots   \\
\vdots         &\vdots         &\vdots         &\vdots        & 0            & 0              & \ddots & \g_{n \to n-1} & \dots     \\ 
\vdots         &\vdots         &\vdots         &\vdots        & \vdots       & 0              & \ddots & -\g_{n \to }   & \dots      \\ 
\vdots         &\vdots         &\vdots         &\vdots        & \vdots       & \vdots         & \vdots & \g_{n \to n+1} & \dots       \\
\vdots         &\vdots         &\vdots         &\vdots        & \vdots       & \vdots         & \vdots & \vdots         & \ddots 
\end{pmatrix} \\
%%%%%%%%%%%%%%%%%%%%%%%%%%%%%%%%%%%%%%%%%%%%%%%%%%%%%%%%%%%%%%%%%%%%%%%%%%%%%%%%%%%%%%%%%%%%%%%%%%%%%%%%
&\text{or component-wise } \\
%%%%%%%%%%%%%%%%%%%%%%%%%%%%%%%%%%%%%%%%%%%%%%%%%%%%%%%%%%%%%%%%%%%%%%%%%%%%%%%%%%%%%%%%%%%%%%%%%%%%%%%%
\G^{(m, n)}
&=
\delta_{m, n-1} \, \g_{n \to n-1}  - \delta_{m,n}\, (\g_{n \to n-1} + \g_{n \to n+1}) + \delta_{m, n+1} \,  \g_{n \to n+1}  \\
&\text{for } m, n \in \Z,  \\ 
%%%%%%%%%%%%%%%%%%%%%%%%%%%%%%%%%%%%%%%%%%%%%%%%%%%%%%%%%%%%%%%%%%%%%%%%%%%%%%%%%%%%%%%%%%%%%%%%%%%%%%%%
&\text{which corresponds to an infinite long chain with \emph{two} open ends: } \\
%%%%%%%%%%%%%%%%%%%%%%%%%%%%%%%%%%%%%%%%%%%%%%%%%%%%%%%%%%%%%%%%%%%%%%%%%%%%%%%%%%%%%%%%%%%%%%%%%%%%%%%%
% Linear Chain on Z
%----------------------------------------------
&\begin{tikzpicture}[scale=2]
\node[State](-3)   at (-3,0) [circle,draw] {} ;
\node[State](-2)   at (-2,0) [circle,draw] {-2} ; 
\node[State](-1)   at (-1,0) [circle,draw] {-1} ;
\node[State](0)    at (0,0)  [circle,draw] {0} ;
\node[State](1)    at (1,0)  [circle,draw] {1} ;
\node[State](2)    at (2,0)  [circle,draw] {2} ;
\node[State](3)    at (3,0)  [circle,draw] {\dots} ;
\path[Link,bend left] (-3) edge node[above] {}  (-2);
\path[Link,bend left] (-2) edge node[above] {$\g_{-2 \to -1}$}  (-1);
\path[Link,bend left] (-1) edge node[above] {$\g_{-1 \to 0}$}  (0);
\path[Link,bend left] (0)  edge node[above] {$\g_{0 \to 1}$} (1);
\path[Link,bend left] (1)  edge node[above] {$\g_{1 \to 2}$} (2);
\path[Link,bend left] (2)  edge node[above] {} (3);
%%%
\path[Link,bend left] (3)  edge node[below] {}  (2);
\path[Link,bend left] (2)  edge node[below] {$\g_{2 \to 1}$}  (1);
\path[Link,bend left] (1)  edge node[below] {$\g_{1 \to 0}$}  (0);
\path[Link,bend left] (0)  edge node[below] {$\g_{0 \to- 1}$} (-1);
\path[Link,bend left] (-1) edge node[below] {$\g_{-1 \to- 2}$} (-2);
\path[Link,bend left] (-2) edge node[below] {} (-3);
\end{tikzpicture}
\end{aligned}
\end{equation}
%--------------------------------------------------------------------------------------------
%#############################################################################################

%#############################################################################################
\subsection{The thermodynamic limit for the linear chain on $\Z $ }\label{Section_TheThermodynamicLimitForTheLinearChainOnZ}

The master equation for the linear chain with \emph{two} open ends satisfies - similarly two the previous section - the (generalized) detailed balance condition of equation \eqref{Eq_StationarySolution_DetailedBalance} for 

%--------------------------------------------------------------------------------------------
\begin{equation} 
\begin{aligned}
\bX_*
:=
\left(
\left(
\prod\limits_{i\in \{0, \dots, z+1 \}}^{}
\frac{
\g_{i\to i-1}
}{
\g_{i\gets i-1}
}
\right)_{z \in \Z_{< \, 0}}, 1, 
\left(
\prod\limits_{i\in \{0, \dots, z-1 \}}^{}
\frac{
\g_{i\to i+1}
}{
\g_{i\gets i+1}
}
\right)_{z \in \Z_{> \, 0}}
\right)
=
\begin{cases}
\prod\limits_{i\in \{0, \dots, z-1 \}}^{}
\frac{
\g_{i\to i+1}
}{
\g_{i\gets i+1}
} &\text{  , if $z>0$} 
%%%%%%%%%%%%%%%
\\
%%%%%%%%%%%%%%%
\hspace*{7mm}1 &\text{  , if $z=0$} 
%%%%%%%%%%%%%%%
\\
%%%%%%%%%%%%%%%
\prod\limits_{i\in \{0, \dots, z+1 \}}^{}
\frac{
\g_{i\to i-1}
}{
\g_{i\gets i-1}
} &\text{  , if $z<0$} 
\end{cases}
\end{aligned}
\end{equation}
%--------------------------------------------------------------------------------------------

since we have: 

%--------------------------------------------------------------------------------------------
\begin{equation*}
\begin{aligned}
\underbrace{
X_*^{(z)}
}_{
\prod\limits_{i \in \{0, \dots, z{\color{red}\mp }1\}} \frac{\g_{i \to i+1}}{\g_{i \gets i+1}}
}
\, \g_{z \to z\pm 1 }
&=
\underbrace{
    \left(\prod\limits_{i \in \{0, \dots, z-1\}} \frac{\g_{i \to i+1}}{\g_{i \gets i+1}}
    \right)
    \,
    \frac{
    \g_{z \to z\pm 1 }
    }{
    \g_{z \gets z\pm 1 }
    }
}_{
    \prod\limits_{i \in \{0, \dots, z\pm 1 {\color{red}\mp }1\}} \frac{\g_{i \to i+1}}{\g_{i \gets i+1}}
}
\, \g_{z\pm 1\to z }
=
\underbrace{
    \left(
    \prod\limits_{i \in \{0, \dots, z\pm 1 {\color{red}\mp }1\}} \frac{\g_{i \to i+1}}{\g_{i \gets i+1}}
    \right)
}_{
X_*^{(z \pm 1)}
}
\, \g_{z\pm 1\to z }
%%%%%%%%%%%%%%%%%%%%%%%%%%%%%%%%%%%%%%%%%%%%%%%%%5
 \\ &=
%%%%%%%%%%%%%%%%%%%%%%%%%%%%%%%%%%%%%%%%%%%%%%%%%
X_*^{(z \pm 1)} \, \g_{z\pm 1\to z } \text{  for } z \in \Z_{\color{red} {>\,  0}\atop{<\,  0} }
\end{aligned}
\end{equation*}

Thus, the linear chain with \emph{two} open ends behaves analogously to the linear chain with \emph{one} open end: 

By theorem \ref{Thm_CandidateForStationarySolutionOfIrreducibleNetworksWithDetailedBalance}, the only candidate for a stationary solution is $\frac{\bX_*}{\| \bX_* \|_1}$. When $\| \bX_* \|_1 = \infty$, no stationary solution exists, when $\| \bX_* \|_1 < \infty$, then the master equation of the linear chain with one open end is \emph{positive recurrent} (theorem \ref{Thm_CandidateForStationarySolutionOfIrreducibleNetworksWithDetailedBalance}) and its stationary solution can be approximated by corresponding finite subsystems in the thermodynamic limit (theorem \ref{Theorem_DetailedBalanceApproximatingStationarySolutionByFiniteSubsystems}).

Similarly to the linear chain with one open end, we can conclude:

%---------------------------------------------------------------------------------------
\begin{equation}
\begin{aligned}
\lim\limits_{t \to \infty }  \, 
&\lim\limits_{|F| \to \infty } \, 
\bP^{F}(t \,| \, \bPF_0)
= 
\lim\limits_{t \to \infty } \bP(t \,| \, \bP_0) = 
%%%%%%%%%%%%%%%%%%%%%%%%%%%%%%%%
\\ &\xlongequal[\text{positive recurrent}]{\text{irreducible + }}
%%%%%%%%%%%%%%%%%%%%%%%%%%%%%%%%
\bP_*
\xlongequal[\text{detailed balance}]{\text{ thm } \ref{Theorem_DetailedBalanceApproximatingStationarySolutionByFiniteSubsystems}}
\lim\limits_{|F| \to \infty } \, 
\underbrace{
    \bPF_*
}_{
    \bP_\infty^{F}(\bPF_0)
}
%%%%%%%%%%%%%%%%%%%%%%%%%%%%%%%%
\\ &= 
%%%%%%%%%%%%%%%%%%%%%%%%%%%%%%%%
\lim\limits_{|F| \to \infty } \, 
\underbrace{
    \bP_\infty^{F}(\bPF_0)
}_{
    \lim\limits_{t \to \infty }  \, 
    \bP^{F}(t \,| \, \bPF_0)
}
=
\lim\limits_{|F| \to \infty } \, 
\lim\limits_{t \to \infty }  \, 
\bP^{F}(t \,| \, \bPF_0). 
\end{aligned}
\end{equation}
%---------------------------------------------------------------------------------------

\begin{claim}[A simplified version for specific rates] $ $ \\ 

If for all $ n \in \N_0 $ we have 

\begin{minipage}{0.49\textwidth}
    %--------------------------------------------------------------------------------------------
    \begin{equation*}
    \begin{aligned}
    \g_{n \to n+1} 
    &=
    \lambda_{n+1} \, x 
    %%%%%%%%%%%%%%%%%
    \\
    %%%%%%%%%%%%%%%%%
    \g_{n \gets n+1} 
    &=
    \lambda_{n} \, (1-x)
    \end{aligned}
    \end{equation*}
    %--------------------------------------------------------------------------------------------
\end{minipage}\text{   and } \begin{minipage}{0.49\textwidth}
    %--------------------------------------------------------------------------------------------
    \begin{equation*}
    \begin{aligned}
    \g_{-n \to -n-1} 
    &=
    \mu_{n+1} \, y 
    %%%%%%%%%%%%%%%%%
    \\
    %%%%%%%%%%%%%%%%%
    \g_{-n \gets -n-1} 
    &=
    \mu_{n} \, (1-y)
    \end{aligned}
    \end{equation*}
    %--------------------------------------------------------------------------------------------
\end{minipage}

for some $x, y>0$, some sequences $(\lambda_n)_{n \in \N}, (\mu_n)_{n \in \N} \in l^1(N_0, \R_{> \, 0}) $ with $\lambda:= \lim\limits_{n \to \infty}\sqrt[n]{\lambda_{n}}$ and $ \mu:= \lim\limits_{n \to \infty}\sqrt[n]{\mu_{n}} $, respectively, then $\bX_* = \left(
\left( \frac{\mu_n}{\mu} \, \left(\frac{y}{1-y} \right)^{-z}\right)_{z \in Z_{< \, 0}}, 1, \left( \frac{\lambda_n}{\lambda_0} \, \left(\frac{x}{1-x} \right)^{z}\right)_{z \in Z_{> \, 0}}\right)  $ and we have $\bX_* \in l^1(\Z)$, if and only if $(x,y) \in \left(0, \frac{1}{1+\lambda}\right) \times \left(0, \frac{1}{1+\mu}\right)$.  
\end{claim}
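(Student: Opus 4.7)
The proof plan mirrors the analogous claim for the linear chain with one open end, which was established via a telescoping product followed by the root test. I would proceed in three stages, treating the positive and negative half-lines separately and then combining them.

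First, I would compute the explicit form of $\bX_*$ by substituting the assumed rates into the detailed-balance product formula already derived above for the chain on $\Z$. For $z > 0$, the denominators cancel telescopically:
\begin{equation*}
X_*^{(z)} \;=\; \prod_{i=0}^{z-1} \frac{\g_{i \to i+1}}{\g_{i \gets i+1}} \;=\; \prod_{i=0}^{z-1} \frac{\lambda_{i+1}\, x}{\lambda_i\,(1-x)} \;=\; \frac{\lambda_z}{\lambda_0}\left(\frac{x}{1-x}\right)^{z},
\end{equation*}
and the analogous telescoping for $z < 0$ yields $X_*^{(z)} = \frac{\mu_{-z}}{\mu_0}\bigl(\frac{y}{1-y}\bigr)^{-z}$, giving exactly the asserted form of $\bX_*$.

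Second, since all entries are positive, $\|\bX_*\|_1 < \infty$ if and only if both half-sums
\begin{equation*}
S_+ \;:=\; \sum_{z=1}^{\infty} \frac{\lambda_z}{\lambda_0}\left(\frac{x}{1-x}\right)^{z}, \qquad S_- \;:=\; \sum_{z=1}^{\infty} \frac{\mu_z}{\mu_0}\left(\frac{y}{1-y}\right)^{z}
\end{equation*}
converge. I would then apply the Cauchy root test independently to each: using $\lim_n \sqrt[n]{\lambda_n} = \lambda$ (and hence $\lim_n \sqrt[n]{\lambda_n/\lambda_0} = \lambda$), one obtains
\begin{equation*}
\limsup_{z\to\infty} \sqrt[z]{\tfrac{\lambda_z}{\lambda_0}\bigl(\tfrac{x}{1-x}\bigr)^{z}} \;=\; \lambda \cdot \frac{x}{1-x},
\end{equation*}
and the inequality $\lambda\cdot \frac{x}{1-x} < 1$ is algebraically equivalent to $x < \frac{1}{1+\lambda}$. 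The symmetric argument for $S_-$ yields $y < \frac{1}{1+\mu}$.

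The expected obstacle is the boundary case. The root test gives convergence strictly when $\lambda \cdot \tfrac{x}{1-x} < 1$ and divergence strictly when $\lambda \cdot \tfrac{x}{1-x} > 1$, leaving the critical values $x = \tfrac{1}{1+\lambda}$ and $y = \tfrac{1}{1+\mu}$ undecided by this test alone. To recover the strict-interval characterization claimed, I would note that since $(\lambda_n)\in l^1(\N_0)$ forces $\lambda \leq 1$, and $\lambda_n>0$ for all $n$, one has a lower bound of the form $\lambda_n \geq c_n$ with $\sqrt[n]{c_n}\to \lambda$; combined with $\lambda_z/\lambda_0 \bigl(\frac{x}{1-x}\bigr)^z \not\to 0$ at the critical value (by the choice of $\lambda$ as the actual limit rather than a mere $\limsup$), divergence at $x = \tfrac{1}{1+\lambda}$ follows from the necessary condition that summands tend to zero --- or, more robustly, by a direct comparison with a geometric-like minorant built from the defining limit. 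The analogous argument handles the $\mu$-side, completing the equivalence.
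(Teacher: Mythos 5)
Your main line of argument --- telescoping the detailed-balance products separately on the two half-lines and then applying the root test to the resulting half-sums $S_+$ and $S_-$ --- is exactly the proof the paper gives; up to and including the equivalence $\lambda\,\tfrac{x}{1-x}<1 \iff x<\tfrac{1}{1+\lambda}$ (and its $\mu$-counterpart), the two arguments are identical.

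The additional step you propose to settle the boundary case does not work. From $\lim_{n}\sqrt[n]{\lambda_n}=\lambda$ it does \emph{not} follow that the summands $\tfrac{\lambda_z}{\lambda_0}\bigl(\tfrac{x}{1-x}\bigr)^{z}$ fail to tend to zero at $x=\tfrac{1}{1+\lambda}$: take $\lambda_n=\lambda^{n}/n^{2}$ with $\lambda\in(0,1]$, which is strictly positive, lies in $l^1(\N_0)$, and has $\sqrt[n]{\lambda_n}\to\lambda$ as a genuine limit; at the critical value $\tfrac{x}{1-x}=\tfrac{1}{\lambda}$ the summands become $\tfrac{1}{\lambda_0\,z^{2}}$, which tend to zero and even sum to a finite value. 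So divergence at the critical point cannot be deduced from the hypotheses, and in fact the asserted biconditional itself fails at the boundary for such sequences: the root test only yields convergence for $x<\tfrac{1}{1+\lambda}$ and divergence for $x>\tfrac{1}{1+\lambda}$, with the critical value genuinely undecided (and decided in favor of convergence in the example above). Your instinct that the boundary needs attention is sound --- the paper's own proof has the same blind spot, writing the root test as an equivalence without comment --- but the repair you sketch (a minorant built from the defining limit, or non-vanishing of summands) is not available, and no repair can restore the ``if and only if'' without additional assumptions on $(\lambda_n)$ and $(\mu_n)$, e.g.\ $\inf_n \lambda_n\lambda^{-n}>0$.
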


\begin{proof}
We have 
%--------------------------------------------------------------------------------------------
\begin{equation*}
\begin{aligned}
X_*^{(z)}
&=
\frac{\g_{0 \rightsquigarrow z }}{\g_{0 \leftlsquigarrow z }}
=
\begin{cases}
\prod\limits_{i=0}^{z-1}
\overbrace{
\frac{\g_{i \to i+1 }}{\g_{i \gets i-1 }}
}^{
\frac{\lambda_{i+1 } \, x}{ \lambda_{i} \, x }
}
=
\frac{\lambda_{z}}{\lambda_{0}} \, \left(\frac{q}{1-q} \right)^{z} &\text{  , if $z>0$,  }
%%%%%%%%%%%%%%%%%%%%%%%%%%%%%%%%%%%%%%%%%%%%%
\\
%%%%%%%%%%%%%%%%%%%%%%%%%%%%%%%%%%%%%%%%%%%%%
\hspace*{7mm} 1  &\text{  , if $z=0$ }
%%%%%%%%%%%%%%%%%%%%%%%%%%%%%%%%%%%%%%%%%%%%%
\\
%%%%%%%%%%%%%%%%%%%%%%%%%%%%%%%%%%%%%%%%%%%%%
\prod\limits_{i=0}^{z+1}
\underbrace{
\frac{\g_{i \to i-1 }}{\g_{i \gets i-1 }}
}_{
\frac{\mu_{i+1 } \, y}{ \mu_{i} \, y }
}
=
\frac{\mu_{z}}{\mu_{0}} \, \left(\frac{y}{1-y} \right)^{z} &\text{  , if $z<0$,  }
\end{cases}
\end{aligned}
\end{equation*}
%--------------------------------------------------------------------------------------------

and therefore

%--------------------------------------------------------------------------------------------
\begin{equation*}
\begin{aligned}
\infty
&>
\| \bX_* \|_1
=
\sum\limits_{z \in \Z_{< \, 0}}
X_*^{(z)} 
+1+
\sum\limits_{z \in \Z_{> \, 0}}
X_*^{(z)} 
=
\sum\limits_{n \in \N_0}
\frac{\mu_{n}}{\mu_{0}} \, \left(\frac{y}{1-y} \right)^n
+ 1 + 
\sum\limits_{n \in \N_0}
\frac{\lambda_{n}}{\lambda_{0}} \, \left(\frac{x}{1-x} \right)^n
%%%%%%%%%%%%%%%%%%%%%%%%%%%%%%%%%%%%%%%%%%%%%
\\ &\xLongleftrightarrow[\text{test}]{\text{root}}
%%%%%%%%%%%%%%%%%%%%%%%%%%%%%%%%%%%%%%%%%%%%%
\begin{cases}
1
>
\limsup\limits_{n \to \infty}
\sqrt[n]{\frac{\lambda_n}{\lambda_0} \, \left(\frac{x}{1-x} \right)^n}
=
\lambda \, \left(\frac{x}{1-x} \right)
\text{ and }
%%%%%%%%%%%%%%%%%%%%%%%%%%
\\
%%%%%%%%%%%%%%%%%%%%%%%%%%
1
>
\limsup\limits_{n \to \infty}
\sqrt[n]{\frac{\mu_n}{\mu_0} \, \left(\frac{y}{1-y} \right)^n}
=
\mu \, \left(\frac{y}{1-y} \right)
\end{cases}
%%%%%%%%%%%%%%%%%%%%%%%%%%%%%%%%%%%%%%%%%%%%%
\\ \\ &\iff
%%%%%%%%%%%%%%%%%%%%%%%%%%%%%%%%%%%%%%%%%%%%%
(x, y)
\in 
\left(0, \frac{1}{1+\lambda}\right) \times \left(0, \frac{1}{1+\mu}\right)
% x
% %
% &<
% %
% \frac{1}{1+\lambda}
% %
% \text{ and }
% %
% y
% %
% <
% %
% \frac{1}{1+\mu}. 
\end{aligned}
\end{equation*}
%--------------------------------------------------------------------------------------------

\end{proof}
%#############################################################################################

%#############################################################################################
\subsection{Computing the kernel of the generator of the linear chain on $\Z$ `by foot'}\label{Section_ComputingTheKernelOfTheGeneratorOfTheLinearChainOnZByFoot}

\begin{claim}\label{Claim_TildeKern_SpanPStar_Z}
Let $ \widetilde{\Kern(\G)} := \{ \bX \in \R^\Z \,:\, \G \bX \text{ exists and }\G \bX = \bzero^{\Z}\}$ be defined as above. Then $ \widetilde{\Kern(\G)} =   \Span(\bX_*, \bY_*) $, where $ \bX_*$ and $ \bY_*$ are given by

%--------------------------------------------------------------------------------------------
\begin{equation}\label{PStar_LinearChainOnZ}
\begin{aligned}
\bX_*
&=
\left(
\prod\limits_{\alpha \leq z-1} \g_{\alpha \to \alpha + 1} 
\, \cdot \, 
\prod\limits_{\beta \geq z+1} \g_{\beta \to \beta - 1}
\right)_{z \in \Z }
%%%%%%%%%%%%%%%%%%%%%%%%%%%%%%%%%%%%%%%%%%%%
%
\\ 
%
%%%%%%%%%%%%%%%%%%%%%%%%%%%%%%%%%%%%%%%%%%%%
\bY_*
&=
\left(
\left( 
(-1)
\sum\limits_{j=1}^{-m} 
\frac{
\prod\limits_{\beta=m+1}^{-j}\g_{\beta \to \beta-1}
}{
\prod\limits_{\alpha=m}^{-j}\g_{\alpha \to \alpha+1}
}
\right)_{m \in \Z_{<0}}
, 0, 
\left( 
\sum\limits_{j=1}^{n}
\frac{
\prod\limits_{\alpha=j}^{n-1} \g_{\alpha \to \alpha + 1} 
}{
\prod\limits_{\beta=j}^{n} \g_{\beta \to \beta - 1}
}
\right)_{n \in \Z_{>0}}
\right) = \\ 
&=
\left(
\left( 
\frac{(-1)}{\g_{m \to m+1}}
\sum\limits_{j=1}^{-m} 
\prod\limits_{\epsilon=m+1}^{-j}
\frac{\g_{\epsilon \to \epsilon-1}}{\g_{\epsilon \to \epsilon+1}}
\right)_{m \in \Z_{<0}}
, 0 , 
\left( 
\frac{1}{\g_{n \to n-1}}
\sum\limits_{j=1}^{n} 
\prod\limits_{\epsilon=j}^{n-1}
\frac{\g_{\epsilon \to \epsilon+1}}{\g_{\epsilon \to \epsilon-1}}
\right)_{n \in \Z_{>0}}
\right)
\end{aligned}
\end{equation}
%--------------------------------------------------------------------------------------------

\end{claim}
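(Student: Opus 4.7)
My approach is to reduce the kernel equation $\G\bX=\bzero$ on $\Z$ to a constant-current condition, and then verify that $\bX_*$ and $\bY_*$ correspond to the two distinct values of that constant. Concretely, I would introduce the \emph{probability current}
$F^{(m)}(\bX):=\g_{m\to m+1}\,X^{(m)}-\g_{m+1\to m}\,X^{(m+1)}$
and observe that the master equation $(\G\bX)^{(m)}=0$ rewrites as $F^{(m-1)}(\bX)=F^{(m)}(\bX)$. Hence $\bX\in\widetilde{\Kern(\G)}$ iff there exists $c\in\R$ with $F^{(m)}(\bX)\equiv c$. This current identity is itself a first-order recursion $X^{(m+1)}=\bigl(\g_{m\to m+1}X^{(m)}-c\bigr)/\g_{m+1\to m}$, so the entire sequence $(X^{(m)})_{m\in\Z}$ is determined by the pair $(X^{(0)},c)$, and $\dim\widetilde{\Kern(\G)}\le 2$ follows immediately.

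For $\bX_*$, the product formula yields at once $X_*^{(z+1)}/X_*^{(z)}=\g_{z\to z+1}/\g_{z+1\to z}$, which is precisely detailed balance and therefore gives $F^{(m)}(\bX_*)\equiv 0$; this step is the $\Z$-analog of the verification in claim \ref{Claim_TildeKern_SpanPStar} and uses the same manipulation. For $\bY_*$, I would first check the initial values $F^{(0)}(\bY_*)=-1$ (using $Y_*^{(0)}=0$, $Y_*^{(1)}=1/\g_{1\to 0}$) and $F^{(-1)}(\bY_*)=-1$ (using $Y_*^{(-1)}=-1/\g_{-1\to 0}$), and then verify that the explicit sum-of-products in the definition of $\bY_*$ satisfies the driven recursion $Y_*^{(n+1)}=\bigl(\g_{n\to n+1}Y_*^{(n)}+1\bigr)/\g_{n+1\to n}$ for $n\ge 0$ and the analogous recursion for $n\le -1$. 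Both verifications amount to pulling the $\epsilon=n-1$ (resp.\ $\epsilon=n+1$) factor out of the nested product, which collapses the sum exactly as required; the sign switch at $z=0$ in the definition of $\bY_*$ is chosen precisely so that the constant current $-1$ matches across $m=0$.

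Linear independence then follows immediately from the linear functional $\bX\mapsto F^{(0)}(\bX)$, which sends $\bX_*\mapsto 0$ and $\bY_*\mapsto -1$; combined with $\dim\widetilde{\Kern(\G)}\le 2$ from the first step, this yields $\widetilde{\Kern(\G)}=\Span(\bX_*,\bY_*)$.

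The main obstacle will be the notational bookkeeping in the telescoping verification $F^{(m)}(\bY_*)\equiv -1$, which is routine but must be written out carefully for both signs of $m$ because of the nested products running over variable ranges. A secondary, purely interpretive issue is that the formula for $\bX_*$ in \eqref{PStar_LinearChainOnZ} contains infinite products of rates that need not converge; since we work in the purely algebraic space $\R^\Z$ with no summability requirement, this is harmless. The vector $\bX_*$ is determined up to an overall (possibly formal) scalar, and only the ratios of consecutive entries enter the recursion. Equivalently, $\bX_*$ may be replaced by the normalized finite-product expressions $\prod_{i=0}^{z-1}\g_{i\to i+1}/\g_{i+1\to i}$ for $z>0$ and $\prod_{i=z}^{-1}\g_{i+1\to i}/\g_{i\to i+1}$ for $z<0$, with $X_*^{(0)}=1$, without changing the span.
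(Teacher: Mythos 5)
Your proposal is correct, and its core observation is the same as the paper's: the kernel equation on $\Z$ is equivalent to the constancy of the current $F^{(m)}(\bX)=\g_{m\to m+1}X^{(m)}-\g_{m+1\to m}X^{(m+1)}$, which is exactly the quantity $c^{(z)}$ in equation \eqref{Eq_DefOf_c}. Where you diverge is in how you finish. The paper, after establishing constant flow, derives by induction the explicit representation \eqref{Eq_Claim_ExplicitFormOfZ} of $Z_*^{(z\pm n)}$ in terms of $Z_*^{(0)}$ and $c$ (lemma \ref{Proof_Eq_Claim_ExplicitFormOfZ}) and then matches coefficients to exhibit $\bZ_*=\frac{Z_*^{(0)}}{X_*^{(0)}}\,\bX_*-c\,\bY_*$ (lemma \ref{lemma_ConsequencesFromConstantFlow}); you instead note that the constant-current relation is a first-order recursion, so the linear map $\bZ\mapsto\bigl(Z^{(0)},F^{(0)}(\bZ)\bigr)$ is injective on $\widetilde{\Kern(\G)}$, giving $\dim\widetilde{\Kern(\G)}\le 2$, and then conclude by producing two independent kernel elements with currents $0$ and $-1$. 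Your route buys brevity: it removes the two appendix inductions and the coefficient bookkeeping, and the independence of $\bX_*,\bY_*$ drops out of evaluating $F^{(0)}$. The paper's route buys an explicit formula expressing an arbitrary kernel element in the basis, which is what it actually records. Your membership check for $\bY_*$ via the driven recursion $Y_*^{(n+1)}=\bigl(\g_{n\to n+1}Y_*^{(n)}+1\bigr)/\g_{n+1\to n}$ (and its mirror for negative indices), anchored by $F^{(0)}(\bY_*)=F^{(-1)}(\bY_*)=-1$, is equivalent to the paper's three-case verification and checks out against the stated formulas ($Y_*^{(1)}=1/\g_{1\to 0}$, $Y_*^{(-1)}=-1/\g_{-1\to 0}$). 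Your remark that the infinite products defining $\bX_*$ are only formal and that one may replace $\bX_*$ by the normalized finite-ratio products without changing the span is a legitimate (and welcome) clarification; only the ratios $X_*^{(z+1)}/X_*^{(z)}=\g_{z\to z+1}/\g_{z+1\to z}$ enter the argument, for both you and the paper.
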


\begin{proof}
In order to prove this, we consider the following:

\begin{itemize}
\vspace*{5mm}
\item[$\bX_* \in \widetilde{\Kern(\G)} $]
%--------------------------------------------------------------------------------------------
\begin{equation*}
\begin{aligned}
X_*^{(z-1)} \g_{z-1 \to z} 
+
X_*^{(z+1)} \g_{z+1 \to z} 
%%%%%%%%%%%%%%%%%%%%%%%%%%%%%%%%%%%%
&=
%%%%%%%%%%%%%%%%%%%%%%%%%%%%%%%%%%%%
\left( 
\prod\limits_{\alpha \leq z-2} {\color{TUDa_9b} \g_{\alpha \to \alpha+1}} \prod\limits_{\beta \geq z} \g_{\beta \to \beta-1}
\right)
{\color{TUDa_9b} \g_{z-1 \to z} }
+
\left(
\prod\limits_{\alpha \leq z} \g_{\alpha \to \alpha+1} \prod\limits_{\beta \geq z+2} {\color{blue} \g_{\beta \to \beta-1} }
\right) 
{\color{blue} \g_{z+1 \to z} } 
%%%%%%%%%%%%%%%%%%%%%%%%%%%%%%%%%%%%
\\ &=
%%%%%%%%%%%%%%%%%%%%%%%%%%%%%%%%%%%%
\left( 
\prod\limits_{\alpha \leq z-1} 
{\color{TUDa_9b} \g_{\alpha \to \alpha+1}} 
\prod\limits_{\beta \geq z} \g_{\beta \to \beta-1}
\right) 
+
\left(
\prod\limits_{\alpha \leq z} \g_{\alpha \to \alpha+1} \prod\limits_{\beta \geq z+1} {\color{blue} \g_{\beta \to \beta-1} }
\right) 
%%%%%%%%%%%%%%%%%%%%%%%%%%%%%%%%%%%%
\\ &=
%%%%%%%%%%%%%%%%%%%%%%%%%%%%%%%%%%%%
\underbrace{
\left( 
\prod\limits_{\alpha \leq z-1} \g_{\alpha \to \alpha+1} \prod\limits_{\beta \geq z+1} \g_{\beta \to \beta-1}
\right)
}_{
X_*^{(z)}
}
\, \cdot \, 
\left(\g_{z \to z+1} +  \g_{z \to z-1}  \right) 
\end{aligned}
\end{equation*}
%--------------------------------------------------------------------------------------------

The most direct way to argue would be the following: If $\bX_* \in l^1(\Z) $, then $ \frac{\bX_*}{\left \| \bX_* \right \|_1} \in \Kern(\G) $ and since we know from lemma \ref{Section_UniquenessAndStrictPositivityOfStationarySolutionsForIrreducibleNetworks} that the dimension of the kernel of an irreducible system is at most one dimensional, we can conclude that the kernel of the generator is spanned by 
$ \frac{\bX_*}{\left \| \bX_* \right \|_1} $, or in short: 

%--------------------------------------------------------------------------------------------
\begin{equation*}
\begin{aligned}
\left.
\begin{array}{c c}
    \left.
    \begin{array}{c }
    \bX_* \in \widetilde{\Kern(\G)} \\
    \bX_* \in l^1(\Z) 
    \end{array}     
    \right\} \Longrightarrow &
    \bzero \neq \frac{\bX_*}{\left \| \bX_* \right \|_1} \in \Kern(\G) \\
   & \text{dim}(\Kern(\G)) =1 
\end{array}
\right\} \Longrightarrow
\Kern(\G) = \Span\left(\frac{\bX_*}{\left \| \bX_* \right \|_1} \right)
\end{aligned}
\end{equation*}
%--------------------------------------------------------------------------------------------

For illustration purposes, we will continue to show that while $\bY_* \in \widetilde{\G} $, it  is not normalizable ($\| \bY_* \|_1 = \infty $), 

\item[$\bY_* \in \widetilde{\Kern(\G)} $]

Since $ \bY_* $ looks different on the negative numbers than it does on the positive ones, we divide the proof in to three steps: 

\begin{itemize}
\item[($n > 0$)]
%--------------------------------------------------------------------------------------------
\begin{equation*}
\begin{aligned}
Y_*^{(n-1)} \g_{n-1 \to n}
+
Y_*^{(n+1)} \g_{n+1 \to n}
%%%%%%%%%%%%%%%%%%%%%%%%%%%%%%%%%%%%
&=
%%%%%%%%%%%%%%%%%%%%%%%%%%%%%%%%%%%%
\sum\limits_{j=1}^{n-1}
\underbrace{
    \left(
    \frac{
    \prod\limits_{\alpha=j}^{n-2} \g_{\alpha \to \alpha + 1} 
    }{
    \prod\limits_{\beta=j}^{n-1} \g_{\beta \to \beta - 1}
    }
    \g_{n-1 \to n}
    \right)
}_{
    \frac{
    \prod\limits_{\alpha=j}^{n-1} \g_{\alpha \to \alpha + 1} 
    }{
    \prod\limits_{\beta=j}^{n-1} \g_{\beta \to \beta - 1}
    }
}
+
\sum\limits_{j=1}^{n+1}
\underbrace{
    \left(
    \frac{
    \prod\limits_{\alpha=j}^{n} \g_{\alpha \to \alpha + 1} 
    }{
    \prod\limits_{\beta=j}^{n+1} \g_{\beta \to \beta - 1}
    }
    \g_{n+1 \to n}
    \right)
}_{
    \frac{
    \prod\limits_{\alpha=j}^{n} \g_{\alpha \to \alpha + 1} 
    }{
    \prod\limits_{\beta=j}^{n} \g_{\beta \to \beta - 1}
    }
}
%%%%%%%%%%%%%%%%%%%%%%%%%%%%%%%%%%%%
\\ &=
%%%%%%%%%%%%%%%%%%%%%%%%%%%%%%%%%%%%
\sum\limits_{j=1}^{n-1}
\frac{
\prod\limits_{\alpha=j}^{n-1} \g_{\alpha \to \alpha + 1} 
}{
\prod\limits_{\beta=j}^{n-1} \g_{\beta \to \beta - 1}
}
\underbrace{
\left[
1 + \frac{\g_{n \to n+1}}{\g_{n \to n-1}}  
\right]
}_{
\frac{
\g_{n \to n+1} + \g_{n \to n-1}
}{
\g_{n \to n-1}
}
}
+ 
\underbrace{
\left( \frac{\g_{n \to n+1}}{\g_{n \to n-1}} + 1 \right)
}_{
\frac{
\g_{n \to n+1} + \g_{n \to n-1}
}{
\g_{n \to n-1}
}
}
%%%%%%%%%%%%%%%%%%%%%%%%%%%%%%%%%%%%
\\ &=
%%%%%%%%%%%%%%%%%%%%%%%%%%%%%%%%%%%%
\left( 
\g_{n \to n+1} + \g_{n \to n-1}
\right)
\underbrace{
\left(
\sum\limits_{j=1}^{n}
\frac{
\prod\limits_{\alpha=j}^{n-1} \g_{\alpha \to \alpha + 1} 
}{
\prod\limits_{\beta=j}^{n} \g_{\beta \to \beta - 1}
}
\right)
}_{
Y_*^{(n)}
}
\end{aligned}
\end{equation*}
%--------------------------------------------------------------------------------------------

\item[$(n=0)$]
%--------------------------------------------------------------------------------------------
\begin{equation*}
\begin{aligned}
\underbrace{
Y_*^{(-1)}
}_{
\frac{-1}{\g_{-1 \to 0}}
}
 \, \g_{-1 \to 0} 
+
\underbrace{
Y_*^{(1)}
}_{
\frac{1}{\g_{1 \to 0}}
}
\, \g_{1 \to 0}
&=
0
=
Y_*^{(0)} \, \g_{0 \to}. 
\end{aligned}
\end{equation*}
%--------------------------------------------------------------------------------------------

\item[($m < 0$)]
%--------------------------------------------------------------------------------------------
\begin{equation*}
\begin{aligned}
Y_*^{(m-1)} \g_{m-1 \to m}
+
Y_*^{(m+1)} \g_{m+1 \to n}
%%%%%%%%%%%%%%%%%%%%%%%%%%%%%%%%%%%%
&=
%%%%%%%%%%%%%%%%%%%%%%%%%%%%%%%%%%%%
(-1)\sum\limits_{j=1}^{-m-1}
\underbrace{
\frac{
\prod\limits_{\beta=m}^{-j} \g_{\beta \to \beta - 1}
}{
\prod\limits_{\alpha=m-1}^{-j} \g_{\alpha \to \alpha + 1} 
}
\g_{m-1 \to m}
}_{
\frac{
\prod\limits_{\beta=m}^{-j} \g_{\beta \to \beta - 1}
}{
\prod\limits_{\alpha=m}^{-j} \g_{\alpha \to \alpha + 1} 
}
}
-
\sum\limits_{j=1}^{-m+1}
\underbrace{
\frac{
\prod\limits_{\beta=m+2}^{-j} \g_{\beta \to \beta - 1}
}{
\prod\limits_{\alpha=m+1}^{-j} \g_{\alpha \to \alpha + 1} 
}
\g_{m+1 \to m}
}_{
\frac{
\prod\limits_{\beta=m+1}^{-j} \g_{\beta \to \beta - 1}
}{
\prod\limits_{\alpha=m+1}^{-j} \g_{\alpha \to \alpha + 1} 
}
} 
%%%%%%%%%%%%%%%%%%%%%%%%%%%%%%%%%%%%
\\ &=
%%%%%%%%%%%%%%%%%%%%%%%%%%%%%%%%%%%%
(-1)\sum\limits_{j=1}^{-m+1}
\frac{
\prod\limits_{\beta=m+1}^{-j} \g_{\beta \to \beta - 1}
}{
\prod\limits_{\alpha=m+1}^{-j} \g_{\alpha \to \alpha + 1} 
}
\underbrace{
\left[
1 + \frac{\g_{m \to m-1}}{\g_{m \to m+1}}  
\right]
}_{
\frac{
\g_{m \to m-1} + \g_{m \to m+1}
}{
\g_{m \to m+1}
}
}
+ 
\underbrace{
\left[
1 + \frac{\g_{m \to m-1}}{\g_{m \to m+1}}  
\right]
}_{
\frac{
\g_{m \to m-1} + \g_{m \to m+1}
}{
\g_{m \to m+1}
}
}
%%%%%%%%%%%%%%%%%%%%%%%%%%%%%%%%%%%%
\\ &=
%%%%%%%%%%%%%%%%%%%%%%%%%%%%%%%%%%%%
\left( 
\g_{m \to m-1} + \g_{m \to m+1}
\right)
\underbrace{
\left(
(-1)
\sum\limits_{j=1}^{-m}
\frac{
\prod\limits_{\beta=m+1}^{-j} \g_{\beta \to \beta - 1}
}{
\prod\limits_{\alpha=m}^{-j} \g_{\alpha \to \alpha + 1} 
}
\right)
}_{
Y_*^{(m)}
}
\end{aligned}
\end{equation*}
%-------------------------------------------------------------------------------------------

\item["$ \subseteq $"]

Let $ \bZ_* \in \widetilde{\Kern(\G)} $. We first show that the following quantity

%--------------------------------------------------------------------------------------------
\begin{equation} \label{Eq_DefOf_c}
\begin{aligned}
c^{(z)}
:=
Z_*^{(z)} \, \g_{z \to z+1}
-
Z_*^{(z+1)} \, \g_{z+1 \to z}
\end{aligned}
\end{equation}
%--------------------------------------------------------------------------------------------

is independent of $z \in \Z$: 

%--------------------------------------------------------------------------------------------
\begin{equation*}
\begin{aligned}
0
&=
\left(
\G \, \bZ_* 
\right)^{(z)}
=
Z_*^{(z-1)} \, \g_{z-1 \to z}
- Z_*^{(z)} (\g_{z \to z-1} + \g_{z \to z+1})
+
Z_*^{(z+1)} \, \g_{z+1 \to z} \\
%%%%%%%%%%%%%
\Longrightarrow 
c^{(z)}
&=
Z_*^{(z)} \, \g_{z \to z+1}
-
Z_*^{(z+1)} \, \g_{z+1 \to z}
=
Z_*^{(z-1)} \, \g_{z-1 \to z}
-
Z_*^{(z)} \, \g_{z \to z-1}
=
c^{(z-1)}. 
\end{aligned}    
\end{equation*}
%--------------------------------------------------------------------------------------------

Let us now fix $z_0 \in \Z$. We can now write $ Z_*^{(z_0 \pm 1)} $ in terms of $Z_*^{(z_0)} $ and $c$

%--------------------------------------------------------------------------------------------
\begin{equation}
\label{ConstantFlow}
\begin{aligned}
c
&=
Z_*^{(z_0)} \, \g_{z_0 \to z_0+1}
-
Z_*^{(z_0+1)} \, \g_{z_0+1 \to z_0 }
% %%%%%%%%%%%%%%%%%%%%%%%%%%%%%%%%%%%%%%%%%%%%%%%%%%%%%%%% 
\\ \Longrightarrow
%%%%%%%%%%%%%%%%%%%%%%%%%%%%%%%%%%%%%%%%%%%%%%%%%%%%%%%% 
Z_*^{(z_0\pm1)}
&=
Z_*^{(z_0)}
\frac{
\g_{z_0 \to z_0\pm1}
}{
\g_{z_0\pm1 \to z_0}
}
\mp
\frac{
c
}{
\g_{z_0\pm1 \to z_0}
}
\end{aligned}
\end{equation}
%--------------------------------------------------------------------------------------------

and by induction, it follows that for all $z \in \Z$ and all $n \in \N$ we have

%--------------------------------------------------------------------------------------------
\begin{equation}\label{Eq_Claim_ExplicitFormOfZ}
\begin{aligned}
Z_*^{(z+n)}
&=
Z_*^{(z)} \, 
\frac{
\prod\limits_{\alpha=z}^{z+n-1} \g_{\alpha \to \alpha + 1}
}{
\prod\limits_{\beta=z+1}^{z+n} \g_{\beta \to \beta - 1}
}
- 
c
\sum\limits_{j=1}^{n}
\frac{
\prod\limits_{\alpha=z+j}^{z+n-1} \g_{\alpha \to \alpha + 1}
}{
\prod\limits_{\beta=z+j}^{z+n} \g_{\beta \to \beta - 1}
}  
%%%%%%%%%%%%%%%%%%%%%%%%%%%%%%%%%%%%%%%%%%%%%%%
\text{     and} \\ 
%%%%%%%%%%%%%%%%%%%%%%%%%%%%%%%%%%%%%%%%%%%%%%%
Z_*^{(z-n)}
&=
Z_*^{(z)} \, 
\frac{
\prod\limits_{\beta=z-n+1}^{z} \g_{\beta \to \beta - 1}
}{
\prod\limits_{\alpha=z-n}^{z-1} \g_{\alpha \to \alpha + 1}
}
+
c
\sum\limits_{j=1}^{n}
\frac{
\prod\limits_{\beta=z-n+1}^{z-j} \g_{\beta \to \beta - 1}
}{
\prod\limits_{\alpha=z-n}^{z-j} \g_{\alpha \to \alpha + 1}
} 
\end{aligned}
\end{equation}
%--------------------------------------------------------------------------------------------

(see lemma \ref{Proof_Eq_Claim_ExplicitFormOfZ} in the appendix for a proof). From this expression we can deduce that $\bZ_* = \frac{Z_*^{(0)}}{X_*^{(0)}} \, \bX_* - c \, \bY_* $ is a linear combination of $\bX_*$ and $\bY_*$, as the calculation in lemma \ref{lemma_ConsequencesFromConstantFlow} shows.

\end{itemize}

\end{itemize}

\end{proof}

We note that $\bY_* \notin l^1(\Z ) $, since for every $n \in \N $ we have

%--------------------------------------------------------------------------------------------
\begin{equation*}
\begin{aligned}
Y_*^{(n)}
&=
\frac{1}{\g_{n \to n-1}}
\sum\limits_{j=1}^{n} 
\prod\limits_{\epsilon=j}^{n-1}
\frac{\g_{\epsilon \to \epsilon+1}}{\g_{\epsilon \to \epsilon-1}}
\overset{j=n}{\geq}
\frac{1}{\g_{n \to n-1}}
\xlongrightarrow{n \to \infty}
\infty \text{      and } 
%%%%%%%%%%%%%%%%%%%%%%%%%%%%%%%%%%%%%%%%%%%%%%%
\\ 
%%%%%%%%%%%%%%%%%%%%%%%%%%%%%%%%%%%%%%%%%%%%%%%
Y_*^{(-n)}
&=
\frac{(-1)}{\g_{-n \to -n+1}}
\sum\limits_{j=1}^{n} 
\prod\limits_{\epsilon=-n+1}^{-j}
\frac{\g_{\epsilon \to \epsilon-1}}{\g_{\epsilon \to \epsilon+1}}
\overset{j=-n}{\leq}
-\frac{1}{\g_{-n \to -n+1}}
\xlongrightarrow{n \to \infty} 
-\infty. 
\end{aligned}
\end{equation*}
%--------------------------------------------------------------------------------------------

This means, that we have 

%--------------------------------------------------------------------------------------------
\begin{equation}\label{KernOfTheGenOnZ}\tag{KernOfTheGenOnZ}
\begin{aligned}
\Kern(\G) 
&= 
\underbrace{
\widetilde{\Kern(\G) }
}_{
\Span(\bX_*, \bY_*)
} 
\, \cap \,  l^1(\Z)
\xlongequal{\bY_* \notin l^1(\Z) }
\Span\left(
\frac{\bX_*}{\|\bX_*\|_1}
\right) 
\end{aligned}
\end{equation}
%--------------------------------------------------------------------------------------------
%%%%%%%%%%%%%%%%%%%%%%%%%%%%%%%
%#######################################################################################################

%#############################################################################################
\subsection{ Differences to the linear chain with one open end } \label{SubSection_ DifferencesToTheLinearChainWithOneOpenEnd}

The two linear chains with one- and two open ends have in common that they admit at most one stationary sequence in $(\R_{> \, 0})^{\Omega} $ (due to the fact that they are both strongly connected and hence irreducible, compare lemma  \ref{Lemma_StrictPositivityForStationarySolutionsOfStronglyConnectedNetworks}). 

However, for the generator of the linear chain with two open ends, there is an additional, non-normalizable sequence with both positive- and negative elements, $\bY_{*} \in (\R)^{\Z} \backslash l^{1} (\Z) $, which satisfies the equation $\G \, \bY_{*} = \bzero $ component-wise.

In this subsection we briefly discuss where this additional sequence appears from, why there is non analogue for the linear chain with one open end.

Since this phenomena can only appear in the countable, infinite dimensional setting, it is instructive to look at a finite subsystem $F_{N} := \{-N, \dots, N\} $ for some $N \in \N $, where we have: 

%--------------------------------------------------------------------------------------------
\begin{equation}
\begin{aligned}
\G^{[F_N]} \, \bY_{*}^{[F_N]}
&\xlongequal[\ref{Lemma_TheAdditionalStationarySequenceForTheLinearChainWithTwoOpenEnds}]{\text{lemma}}
\bE_{-N} - \bE_{N} 
\xlongrightarrow[\text{pointwise}]{N \to \infty}
\bzero. 
\end{aligned}
\end{equation}
%--------------------------------------------------------------------------------------------

This demonstrated nicely what the expression `$ \bY_{*} $ is not normalizable' actually means: It means that it is not a vector, since it is not contained in the vector space $ l^{1}(\Z) $. The expression $ \bY_{*} $ defined in equation \eqref{PStar_LinearChainOnZ} as a whole `does not make sense', even though every element is a finite number. This is indeed a phenomena of \emph{infinite} dimensional vector spaces, since every \emph{finite} dimensional vector with an infinite norm, would have to have a faulty entry, e.g. $(1, \infty, 3) $. 

It also demonstrates the difference between \emph{pointwise} convergence and a convergence with respect to the $1$-norm, since looking at the expression $ \G \, \bY_{*} $ \emph{component-wise} is nothing else but looking at the limit $ p-\lim\limits_{N \to \infty} \, \bigl( \G \, \bY_{*}^{[F_N]} \bigr) $: 

%--------------------------------------------------------------------------------------------
\begin{equation}
\begin{aligned}
\lightning
&=
\G \, 
\underbrace{
    \bigl( p-\lim\limits_{N \to \infty} \bY_{*}^{F_N} \bigr)
}_{
    \bY_{*}
}
 \\ &\mathlarger{\boldsymbol{\neq}}
p-\lim\limits_{N \to \infty} \, 
\underbrace{
    \bigl( \G \, \bY_{*}^{[F_N]} \bigr) 
}_{
    \G^{[F_N]} \, \bY_{*}^{[F_N]} 
}
=
p-\lim\limits_{N \to \infty} \, 
\underbrace{
   \bigl( \G^{[F_N]} \, \bY_{*}^{[F_N]}  \bigr) 
}_{
    \bE_{-N} - \bE_{N} 
}
=
p-\lim\limits_{N \to \infty} \,  \bigl(\bE_{-N} - \bE_{N} \bigr)
=
\bzero. 
\end{aligned}
\end{equation}
%--------------------------------------------------------------------------------------------

So the expression $\bY_{*} $ can be regarded as a artifact of pointwise limit, where the `probability' mass is accumulated at the boundary.

For the linear chain with \emph{one} open end, this does not appear, since in that case, not even the pointwise limit vanishes, since one of the boundaries is contained in all -sufficiently large - finite subsystems: 

%--------------------------------------------------------------------------------------------
\begin{equation}
\begin{aligned}
\G^{[F_N]} \, \bY_{*}^{[F_N]}
&\xlongequal[\ref{Lemma_NoAdditionalStationarySequenceForTheLinearChainWithOneOpenEnd}]{\text{lemma}}
\bE_{0} - \bE_{N} 
\xlongrightarrow[\text{pointwise}]{N \to \infty}
\bE_{0}. 
\end{aligned}
\end{equation}
\section{The countable, infinite dimensional hypercube}\label{Chapter_TheInfiniteDimensionalHypercube}

In this section we have a look at the countable, infinite dimensional hypercube, in the case that the network fulfils the - generalized - detailed balance condition \eqref{Eq_StationarySolution_DetailedBalance}. When on top of that, the conditions of equations \eqref{Eq_SufficientConditionForDetailedBalance} and \eqref{Eq_SufficientConditionForPositiveRecurrence} are satisfied, it is possible to write down an explicit expression for the stationary solution.

%#################################################################################################################################################################################
% \subsection{Setting up the system}

The countable, infinite dimensional hypercube is defined as $\System_{[Q_\infty]} =(\Omega_{[Q_\infty]}, \Edge_{[Q_\infty]})$, with  and 

%----------------------------------------------
\begin{equation}
\begin{aligned}
\Omega_{[Q_\infty]} 
&:=
\{0, 1\}^\infty = \{ \boldsymbol{\omega} \in \{0,1\}^{\N} \,:\, |\{i \in \N \,:\, \omega^{(i)} \neq 0 \}|< \infty\} \text{ and }
%%%%%%%%%%%%%%%%%%%%%%%%%%%%%%5
\\
%%%%%%%%%%%%%%%%%%%%%%%%%%
\Edge_{[Q_\infty]} 
&:=
\{(\alpha, \beta) \in (\Omega_{[Q_\infty]})^{2} \,:\, \text{there exists a natural number }i\in \N\,:\, \alpha-\beta = \pm \bE_i\}
\end{aligned}
\end{equation}
%----------------------------------------------

It can be interpreted as an infinite set of agents, each being able to occupy two states (spin up / down, or being susceptible / infected), while all but finitely many being in one of theses states. 

While it is always possible to write down the stationary solution of the \emph{finite} dimensional hypercube (compare equation 
\eqref{Eq_StationarySolutionForFiniteDimensionalHypercube}), in order to guarantee that the steady
states remain normalizable as the dimension tends to infinity, we need to make sure that the rates to higher dimensions decline rapidly enough: . 

With the stationary solution of \emph{finite} dimensional hypercubes be given by equation \eqref{Eq_StationarySolutionForFiniteDimensionalHypercube}, viewing $Q_{N}$ as a finite subsystem of $Q_{\infty}$ (with $\bP_*^{[Q_N]} = \bp^{[Q_N]} \times \{0\}^{\infty} $), we can compute the norm of the difference of two stationary solutions of two `consecutive' hypercubes to

%--------------------------------------------------------------------------------------------
\begin{equation}
\begin{aligned}
\Bigl\| 
\underbrace{
    \bP_*^{[Q_{N+1}]}
}_{
    \frac{(1, q^{c^{(N+1)}})}{1+ q^{c^{(N+1)}}} 
    \bigotimes
    \bp_*^{[Q_{N}]} 
    \times \{0\}^{\infty}
}
-
\underbrace{
    \bP_*^{[Q_{N}]}
}_{
    \be_1 \bigotimes \bp_*^{[Q_{N}]} \times \{0\}^{\infty}
}
\Bigr\|_1
%%%%%%%%%%%%%%%%%%%%%%%%%%%%%%%
&= 
%%%%%%%%%%%%%%%%%%%%%%%%%%%%%%%
\Bigl\|
\underbrace{
\frac{\left(1, q^{c^{(N+1)}}\right)}{1+ q^{c^{(N+1)}}} - (1,0)
}_{
\frac{(-q^{c^{(N+1)}}, q^{c^{(N+1)}})}{1+ q^{c^{(N+1)}}}
}
\Bigr\|_1
\cdot 
\underbrace{
\|\bp_*^{[Q_{N}]} \|_1 
}_{1}
%%%%%%%%%%%%%%%%%%%%%%%%%%%%%%%
\\ &= 
%%%%%%%%%%%%%%%%%%%%%%%%%%%%%%%
\frac{2 \, q^{c^{(N+1)}}}{1+ q^{c^{(N+1)}}}
\end{aligned}
\end{equation}
%--------------------------------------------------------------------------------------------

Now let $N, M \in \N$ be two arbitrary large natural numbers and $ \bP_*^{[Q_{N+M}]},\bP_*^{[Q_{N}]} $ be the stationary solutions of the hypercube $Q_{N+M}$ and $Q_{N}$, repectivly. 

%--------------------------------------------------------------------------------------------
\begin{equation}
\begin{aligned}
\Bigl\| 
\underbrace{
    \bP_*^{[Q_{N+M}]}
}_{
    \bp_*^{[Q_{N+M}]} \times \{0\}^{\infty}
}
-
\underbrace{
    \bP_*^{[Q_{N}]}
}_{
    \bigotimes\limits_{i=1}^{M} \be_1 \otimes \bp_*^{[Q_{N}]} \times \{0\}^{\infty}
}
\Bigr\|_1
&=
\Bigl\| 
\bp_*^{[Q_{N+M}]} - 
 \bigotimes\limits_{i=1}^{M} \be_1 \otimes \bp_*^{[Q_{N}]} 
\pm \sum\limits_{k=1}^{M} \bigotimes_{i=1}^{M-k} \be_1 \bigotimes \bp_*^{[Q_{N+k}]}
\times \{0\}^{\infty}
\Bigr\|_1
%%%%%%%%%%%%%%%%%%%%%%%%%%%%%%%
\\ & \leq 
%%%%%%%%%%%%%%%%%%%%%%%%%%%%%%%
\sum\limits_{k=0}^{M-1}
\left\|
\underbrace{
\bigotimes_{i=1}^{M-k} \be_1 \bigotimes \bp_*^{[Q_{N+k}]}
-
\bigotimes_{i=1}^{M-(k+1)} \be_1 \bigotimes \bp_*^{Q_{N+k+1}}
}_{
\bigotimes\limits_{i=1}^{M-(k+1)} \be_1 \bigotimes \left(
\e_1 \bigotimes \bp_*^{[Q_{N+k}]} - \bp_*^{Q_{N+k+1}}
\right) 
}
\right\|_1
%%%%%%%%%%%%%%%%%%%%%%%%%%%%%%%
\\ &=
%%%%%%%%%%%%%%%%%%%%%%%%%%%%%%%
\sum\limits_{k=0}^{M-1}
\underbrace{
\left\|
\bigotimes_{i=1}^{M-(k+1)} \be_1
\right\|_1
}_{
1
} \, \cdot \, 
\underbrace{
\left\|
\be_1 \bigotimes \bp_*^{[Q_{N+k}]} - \bp_*^{[Q_{N+k+1}]} 
\right\|_1
}_{
\frac{2 \, q^{c^{(N+k+1)}}}{1+ q^{c^{(N+k+1)}}}
}
%%%%%%%%%%%%%%%%%%%%%%%%%%%%%%%
\\ &\leq 
%%%%%%%%%%%%%%%%%%%%%%%%%%%%%%%
2 \, \sum\limits_{k=N+1}^{N+M} q^{c^{(k)}}. 
\end{aligned}
\end{equation}
%--------------------------------------------------------------------------------------------

For the right choice of $\bc$, $\left(\bP_*^{[Q_N]}\right)_{N \in \N} $ is a Cauchy sequence. 

Similarly to equation \eqref{Eq_SufficientConditionForPositiveRecurrence}, we get

%--------------------------------------------------------------------------------------------
\begin{equation*}
\begin{aligned}
\| 
\bP_*^{[Q_{N+M}]}
-
\bP_*^{[Q_{N}]}
\|_1
&=
2 \, \sum\limits_{k=N+1}^{N+M} q^{c^{(k)}} 
%%%%%%%%%%%%%%%%%%%%%%%%%%%%%%%
\\ &=
%%%%%%%%%%%%%%%%%%%%%%%%%%%%%%%
\begin{cases}
2 \, \sum\limits_{k=N+1}^{N+M} q^c
=
q^c \, M \xlongrightarrow{N, M \to \infty } \infty & \text{  , if } \bc = c \cdot \bone
%%%%%%%%%%
\\  \\ 
%%%%%%%%%%
2 \, \sum\limits_{k=N+1}^{N+M} q^{k}
=
q^{N+1} \sum\limits_{k=0}^{M-1} q^{k} \xlongrightarrow{N, M \to \infty } 0  & \text{  , if } \bc = (n)_{n \in \N_0}. 
\end{cases}
\end{aligned}
\end{equation*}
%--------------------------------------------------------------------------------------------

%#################################################################################################################################################################################

%#################################################################################################################################################################################
%#################################################################################################################################################################################
%#################################################################################################################################################################################
%#################################################################################################################################################################################
%#################################################################################################################################################################################

%#################################################################################################################################################################################
\section{A network without detailed balance}\label{Chapter_ExampleOfANetworkWithoutDetailedBalance}

In this section we look at a network without detailed balance as an example of a network, where the time limit of the \emph{infinite} dimensional systems exists, but the corresponding limiting state can not be approximated by stationary solutions of \emph{finite} subsystems in the thermodynamic limit. 

%%%%%%%%%%%%%%%%%%%%%%%%%%%%%%%%%%%%%%%%%%%%%%%%%%%%%%%%%%%%%%%%%%%%%%
\begin{figure}[H]
\begin{center}
%------------------------------------------------------------------------
\begin{tikzpicture}[scale=2]
\centering
\node[State](-4)   at (-4,0) [circle,draw] {$\dots$} ;
\node[State](-3)   at (-3,0) [circle,draw] {$-3$} ;
\node[State](-2)   at (-2,0) [circle,draw] {$-2$} ;
\node[State](-1)   at (-1,0) [circle,draw] {$-1$} ;
\node[State](0)   at (0,0) [circle,draw] {$0$} ;
\node[State](1)   at (1,0) [circle,draw] {$1$} ;
\node[State](2)   at (2,0) [circle,draw] {$2$} ;
\node[State](3)   at (3,0) [circle,draw] {$3$} ;
\node[State](4)   at (4,0) [circle,draw] {$\dots$} ;
\path[Link] (-4) edge node[above] {} (-3);
\path[Link] (-3) edge node[above] {$\g_{-3 \to -2}$} (-2);
\path[Link] (-2) edge node[above] {$\g_{-2 \to -1}$} (-1);
\path[Link] (-1) edge node[above] {$\g_{-1 \to 0}$} (0);
\path[Link] (0) edge node[above] {$\g_{0 \to 1}$} (1);
\path[Link] (1) edge node[above] {$\g_{1 \to 2}$} (2);
\path[Link] (2) edge node[above] {$\g_{2 \to 3}$} (3);
\path[Link] (3) edge node[above] {} (4);
\path[Link, bend left] (1) edge node[above] {$\g_{1 \to -1}$} (-1);
\path[Link, bend left] (2) edge node[above] {$\g_{2 \to -2}$} (-2);
\path[Link, bend left] (3) edge node[above] {$\g_{3 \to -3}$} (-3);
\path[Link, bend left] (4) edge node[above] {} (-4);
\end{tikzpicture}
%------------------------------------------------------------------------
\caption{Example of a network without detailed balance}
\label{Fig_ExampleNetwork_NoDetailedBalance} 
\end{center}
\end{figure}
%%%%%%%%%%%%%%%%%%%%%%%%%%%%%%%%%%%%%%%%%%%%%%%%%%%%%%%%%%%%%%%%%%%%%%

Let $F_N := \{-N, -N+1, \dots, -1, 0, 1, \dots, N\} $. The components $\bigl( \bP_*^{F_N} \bigr)^{(z)} $ of the stationary solution are given by the sum of the weights of all in-trees rooted in state $z$ \cite{mirzaev2013laplacian,fernengel2022obtaining}:

%----------------------------------------------------------------------------------------
\begin{equation*}
\begin{aligned}
(P_*^{F_N})^{(-n)}
&\propto
\frac{1}{\g_{-n \to -n+1}} \, 
 \prod\limits_{i=-N}^{n-1} \g_{i \to i+1}
\prod\limits_{i=n}^{N-1} 
\underbrace{
    (\g_{i \to i+1} + \g_{i \to -i})
}_{
    \left(1+\frac{\g_{i \to - f(i) }}{\g_{i \to i+1}} \right) \, \g_{i \to i+1}
}
\,  \g_{N \to -N} 
%%%%%%%%
\\
&\xlongequal{a_i = \frac{\g_{i \to - f(i) }}{\g_{i \to i+1}} }
%%%%%%%%
\left(
\prod\limits_{i=-N}^{N-1} \g_{i \to i+1} 
\right) \, \g_{N \to -N} \,
 \frac{
 \left(
    \prod\limits_{i=n}^{N-1} (1 + a_{i} )
\right)
 }{\g_{-n \to -n+1}  } 
%%%%%%%%%%%%%%%%%%%%%%%%%%%%%%%%%
\\ &= 
%%%%%%%%%%%%%%%%%%%%%%%%%%%%%%%%%
 \frac{
    \left(
    \prod\limits_{i=-N}^{N-1} \g_{i \to i+1} 
    \right) 
    \, \g_{N \to -N} \,
     \left(
        \prod\limits_{i=n}^{N-1} (1 + a_{i} )
    \right)
 }{
  \left(
    \prod\limits_{k=1}^{n-1} (1 - a_{k} )
\right) \, 
\g_{-n \to -n+1}  } &\text{  for $n \in \{1, \dots, N\}$ }  
%%%%%%%%%%%%%%%%%%%%%%%%%%%%%%%%%
\\
%%%%%%%%%%%%%%%%%%%%%%%%%%%%%%%%%
(P_*^{F_N})^{(n)}
&\propto
\prod\limits_{i=-N}^{n-1} \g_{i \to i+1} \, \prod\limits_{i=n+1}^{N-1} 
 \underbrace{
    (\g_{i \to i+1} + \g_{i \to -i})
}_{
    \left(1+\frac{\g_{i \to - f(i) }}{\g_{i \to i+1}} \right) \, \g_{i \to i+1}
}
\, \g_{N \to -N} 
%%%%%%%%
\\ &=
%%%%%%%%
\left(
\prod\limits_{i=-N}^{N-1} \g_{i \to i+1} 
\right) \, \g_{N \to -N} \,
 \frac{
 \left(
    \prod\limits_{i=n+1}^{N-1} (1 + a_{i} )
\right)
 }{\g_{n \to n+1}  } 
 %%%%%%%%%%%%%%%%%%%%%%%%%%%%%%%%%
\\ &= 
%%%%%%%%%%%%%%%%%%%%%%%%%%%%%%%%%
 \frac{
    \left(
    \prod\limits_{i=-N}^{N-1} \g_{i \to i+1} 
    \right) 
    \, \g_{N \to -N} \,
     \left(
        \prod\limits_{i=1}^{N-1} (1 + a_{i} )
    \right)
 }{
  \left(
    \prod\limits_{k=1}^{n} (1 - a_{k} )
\right) \, 
\g_{n \to n+1}  }
&\text{  for $n \in \{0, \dots, N-1\}$ } 
%%%%%%%%%%%%%%%%%%%%%%%%%%%%%%%%%
\\
%%%%%%%%%%%%%%%%%%%%%%%%%%%%%%%%%
(P_*^{F_N})^{(N)}
&\propto
\prod\limits_{i=-N}^{N-1} \g_{i \to i+1}
%%%%%%%%
\\ &=
%%%%%%%%
 \frac{
     \left(
    \prod\limits_{i=-N}^{N-1} \g_{i \to i+1} 
    \right) 
    \, \g_{N \to -N} \,
     \left(
        \prod\limits_{i=1}^{N-1} (1 + a_{i} )
    \right)
 }{
 \g_{N \to -N} \,     
 \left(
        \prod\limits_{i=1}^{N-1} (1 + a_{i} )
    \right)
 }  
\end{aligned}
\end{equation*}
%----------------------------------------------------------------------------------------

%----------------------------------------------------------------------------------------
\begin{equation*}
\begin{aligned}
\bP_*^{F_N}
&=
\frac{1}{\Zen^{F_N}}
\left(\bzero, 
    \left(
    \frac{1}{\g_{z \to z+1} \, \prod\limits_{k=1}^{|z|-1}(1+a_k)}   \right)_{z \in \{-N, \dots, -1\}},
\left(
    \frac{1}{\g_{z \to z+1} \, \prod\limits_{k=1}^{|z|}(1+a_k)}   
\right)_{z \in \{0, \dots, N-1\}},
\frac{1}{\g_{N \to -N} (1+a)^{N}}, \bzero
\right)
\end{aligned}
\end{equation*}
%----------------------------------------------------------------------------------------

For a special choice of rates, in particular

%----------------------------------------------------------------------------------------
\begin{equation}\label{Eq_ExampleNetworkWithoutDetailedBalance_ChoiceOfRates}
\begin{aligned}
\g_{z \to z+1}
&=
q^{|z|} \text{  for all $z \in \Z$ and some $q \in (0,1)$ and }
%%%%%%%%%%%%%%%%%%%%%%%%%%%%%%%%%%%%%%%%%%%%%%%%%%
\\
%%%%%%%%%%%%%%%%%%%%%%%%%%%%%%%%%%%%%%%%%%%%%%%%%%
\g_{k \to -k}
&=
a \, q^{k} \text{ for all $k\in \N$ and some $a>0$}, 
\end{aligned}
\end{equation}
%----------------------------------------------------------------------------------------

we get:

%----------------------------------------------------------------------------------------
\begin{equation*}
\begin{aligned}
\bP_*^{F_N}
&=
\frac{1}{\Zen^{F_N}}
\left(\bzero, 
\left( \frac{1+a}{[(1+a)\,q]^{|z|}}  \right)_{z \in \{-N, \dots, -1\}},
\left( \frac{1}{[(1+a)\,q]^{|z|}}   \right)_{z \in \{0, \dots, N-1\}}, 
\frac{1+a}{[(1+a)\,q]^{N} }, \bzero
\right)
%%%%%%%%%%%%%%%%%%%%%%%%%%%%%%
\\ &=
%%%%%%%%%%%%%%%%%%%%%%%%%%%%%%
\bP_*^{[F_N]} 
+ \frac{a}{[(1+a) \, q]^{N}} \,  \frac{\bE_{N} - \bP^{[F_N]} }{ \Zen^{F_N} }
\xlongrightarrow[\| \cdot \|_1 ]{N \to \infty}
\bP_*
%%%%%%%%%%%%%%%%%%%%%%%%%%%%%%
\\ \text{   with }
%%%%%%%%%%%%%%%%%%%%%%%%%%%%%%
\bP_*
&=
\frac{1}{\Zen}\,\left(
\left( \frac{1+a}{[(1+a)\,q]^{|z|}}  \right)_{z \in \Z_{< \, 0}},
\left( \frac{1}{[(1+a)\,q]^{|z|}}   \right)_{z \in \Z_{\geq \, 0}} 
\right), 
%%%%%%%%%%%%%%%%%%%%%%%%%%%%%%%%%%%%%%%%%%%%%%%%%%%%%%%% 
\\ 
%%%%%%%%%%%%%%%%%%%%%%%%%%%%%%%%%%%%%%%%%%%%%%%%%%%%%%%% 
\Zen
&=
\frac{
    (1+a) \, (1+q)
}{
    (1+a) \, q - 1
}. 
\end{aligned}
\end{equation*}
%----------------------------------------------------------------------------------------

and $\bP_* \in l^1(\Omega) \iff (1+a) \, q > 1$.

It can easily be verified that $\bP_* $ indeed lies in the kernel of the generator matrix:

%----------------------------------------------------------------------------------------
\begin{equation*}
\begin{aligned}
\bigl( \G \, \bP_* \bigr)^{(z)}
&=
\sum\limits_{s \in \Z} \bP_*^{(s)} \g_{s \to z} - \bP_*^{(z)} \g_{z \to } =
%%%%%%%%%%
\\ &\xlongequal{z=-n<0}
%%%%%%%%%%
\underbrace{
    P_*^{(-n-1)} 
}_{
    \frac{(1+a)}{[(1+a) \, q]^{n+1}}
}
\, 
\underbrace{
    \g_{-n-1 \to -n}
}_{ 
    q^{n+1}    
}
+
\underbrace{
    \g_{n \to -n}
}_{
    a \, q^{n}
}
-
\underbrace{
    P_*^{(n)}
}_{
    \frac{1}{[(1+a) \, q]^{n}}
} \, 
\underbrace{
    P_*^{(-n)}
}_{
    \frac{(1+a)}{[(1+a) \, q]^{n}}
}
\, 
\underbrace{
    \g_{-n \to -n+1}
}_{
    q^{n}
} 
%%%%%%%%%%%%%%%%%%%%%%%%%%%%%%%%%%%%%%%%%%%%%%%%%%%%%%
\\ & =
%%%%%%%%%%%%%%%%%%%%%%%%%%%%%%%%%%%%%%%%%%%%%%%%%%%%%%
\frac{1}{[(1+a) ]^{n}} + \frac{a}{[(1+a) ]^{n}} -  \frac{1+a}{[(1+a) ]^{n}}
=
0, 
%%%%%%%%%%%%%%%%%%%%%%%%%%%%%%%%%%%%%%%%%%%%%%%%%%%%%%
\\ \\ &\xlongequal{z=0}
%%%%%%%%%%%%%%%%%%%%%%%%%%%%%%%%%%%%%%%%%%%%%%%%%%%%%%
\underbrace{
    P_*^{(-1)}
}_{
    \frac{1}{q}
} \, 
\underbrace{
    \g_{-1 \to 0}
}_{
    q
}
- 
\underbrace{
    P_*^{(0)}
}_{
    1
}
\,
\underbrace{
    \g_{0 \to 1}
}_{
    q^{0}
}
=
0, 
%%%%%%%%%%%%%%%%%%%%%%%%%%%%%%%%%%%%%%%%%%%%%%%%%%%%%%
\\ \\ &\xlongequal{z=n>0}
%%%%%%%%%%%%%%%%%%%%%%%%%%%%%%%%%%%%%%%%%%%%%%%%%%%%%%
\underbrace{    
    P_*^{(n-1)}
}_{
    \frac{1}{[(1+a) \, q ]^{n-1}}
}
\,
\underbrace{
    \g_{n-1 \to n}
}_{
    q^{n-1}
}
-
\underbrace{
    P_*^{(n)}
}_{
    \frac{1}{[(1+a) \, q ]^{n}}
}
\,
\underbrace{
    \left[ \overbrace{\g_{n \to n+1}}^{q^n} + \overbrace{\g_{n \to -n}}^{a \, q^n} \right]
}_{
    (1+a) \, q^n
}
%%%%%%%%%%%%%%%%%%%%%%%%%%%%%%%%%%%%%%%%%%%%%%%%%%%%%%
\\ &= 
%%%%%%%%%%%%%%%%%%%%%%%%%%%%%%%%%%%%%%%%%%%%%%%%%%%%%%
\frac{1}{(1+a)^{n-1}} - \frac{(1+a)}{(1+a)^n}
=
0. 
\end{aligned}
\end{equation*}

%----------------------------------------------------------------------------------------

Hence, we can conclude that the time limit of the network depicted in figure \ref{Fig_ExampleNetwork_NoDetailedBalance} exists (that is $ \bP(t \,|\, \bP_0) \xlongrightarrow[\| \cdot \|_1]{t \to \infty} \bP_*$) due to theorem \ref{Thm_TheTimeLimitForCountableInfiniteDimensionalMasterEquationsExistsForAllNetworksWhichSAreIrreducibleAndPositiveRecurrent}, since it is both \emph{irreducible} and - if the rates are as in equation \eqref{Eq_ExampleNetworkWithoutDetailedBalance_ChoiceOfRates} - \emph{positive recurrent}.  

However, the \emph{thermodynamic} limit $\lim\limits_{|F| \to \infty } \bP_\infty^{F} $ does \emph{not} exist, since for every finite set $ F_{\epsilon} $, it is possible to choose a larger finite set $ F \supseteq F_{\epsilon}, F \in \Fin(\Omega) $ such that the limiting state coincides with some unit vector, $\bP_\infty^{F} = \bE_N $ for some $N \in \N $.

%#################################################################################################################################################################################
%#################################################################################################################################################################################
%#################################################################################################################################################################################

%#################################################################################################################################################################################
%#################################################################################################################################################################################
%#################################################################################################################################################################################
\section{The linear chain with one open end and one trapping state}\label{Chapter_TheLinearChainWithOneOpenEndAndOneTrappingState}

In this section we look at a network, where the thermodynamic limit of finite subsystems exists, but the time limit of the \emph{infinite} dimensional system does not. In other words, the finite subsystems wrongly suggest a candidate for a stationary solution, since they a lacking an essential feature of the countable, infinite dimensional system.  

%%%%%%%%%%%%%%%%%%%%%%%%%%%%%%%%%%%%%%%%%%%%%%%%%%%%%%%%%%%%%%%%%%%%%%%%%%%%%%%%
\begin{figure}[H]
\begin{center}
    % Linear Chain with one open end and one trapping state 
    %----------------------------------------------
    \scalebox{0.8}{
    \begin{tikzpicture}[scale=2]
    \node[State](0) at (0,0) [circle,draw] {0} ;
    \node[State](1) at (1,0) [circle,draw] {1} ; 
    \node[State](2) at (2,0) [circle,draw] {2} ;
    \node[State](3) at (3,0) [circle,draw] {3} ;
    \node[State](4) at (4,0) [circle,draw] {4} ;
    \node[State](5) at (5,0) [circle,draw] {$\dots$} ;
    \path[Link,bend left] (1) edge node[above] {$p \, q $} (2);
    \path[Link,bend left] (2) edge node[above] {$p \, q^2$} (3);
    \path[Link,bend left] (3) edge node[above] {$p \, q^3$} (4);
    \path[Link,bend left] (4) edge node[above] {} (5);
    \path[Link,bend left] (5) edge node[below] { } (4);
    \path[Link,bend left] (4) edge node[below] {$ (1-p) \, q^4  $ } (3);
    \path[Link,bend left] (3) edge node[below] {$ (1-p) \, q^3  $} (2);
    \path[Link,bend left] (2) edge node[below] {$ (1-p) \, q^2  $} (1);
    \path[Link] (1) edge node[below] {$(1-p) \, q $} (0);
    \end{tikzpicture}
    }
    %----------------------------------------------
\caption{ Network of a linear chain with an open end on one side and a trapping state on the other, for $q \in (0,1)$ and $p \in (\frac{1}{2}, 1) $.   }
\label{Fig_TheLinearChainWithOneOpenEndAndOneTrappingState} 
\end{center}
\end{figure}
%%%%%%%%%%%%%%%%%%%%%%%%%%%%%%%%%%%%%%%%%%%%%%%%%%%%%%%%%%%%%%%%%%%%%%%%%%%%%%%%

First, we note that the subnetwork $\N $ is \emph{recurrent} for $p \leq \frac{1}{2} $ and \emph{transient} for $p > \frac{1}{2} $ (compare section \ref{Chapter_AFirstNonTrivialExample_TheLinearChainOnN}). 

% the rates are chosen such that the subnetwork $\N \subsetneq \Omega $ is transient, meaning that the probability would just `flow towards infinity' and the solution of the master equation would \emph{pointwise} converge to zero, meaning that there is no stationary solution and no time limit (compare section \ref{Chapter_AFirstNonTrivialExample_TheLinearChainOnN}): $\bP(t)\xlongrightarrow[\text{pointwise}]{t \to \infty } \bzero $. 

Now, what is the effect of adding the trapping state $0$ ? First and most obviously, the network is no longer irreducible. The existence of the time limit now depends on the parameter $p \in (0,1)$: For $p \leq \frac{1}{2} $, the subnetwork $ \N $ is \emph{recurrent}, which means every trajectory visits the state $1$ infinitely often. With the trapping state $0$, this means that every trajectory is eventually being captured by $0$: 

%---------------------------------------------------------------------------------------
\begin{equation*}
\begin{aligned}
\Prob(\text{trajectory is \emph{not} captured after the n-th return})
&=
(1-q_{1 \to 0})^n \xlongrightarrow{n \to \infty} 0. 
\end{aligned}
\end{equation*}
%---------------------------------------------------------------------------------------

On the other hand, if $p > \frac{1}{2} $, then the state $1$ is visited only finitely often, which means that there is a non-zero chance of escaping `towards infinity', that is $ \bP(t)\xlongrightarrow[\text{pointwise}]{t \to \infty } (p_*, \bzero)  $ with some $p_* \in (0, 1) $ \cite{seneta2006non}. This means that the time limit of the \emph{infinite} dimensional system does \emph{not} exist.

However, for every $\epsilon > 0 $ we can choose a finite set $F_\epsilon \supseteq \{0, \dots, N_\epsilon \} $ such that $\bigl( \bP_0 \bigr) \Bigl(\{0, \dots, N_\epsilon \}\Bigr) \geq 1-\epsilon $, which results in $ \| \underbrace{\bP_*}_{\bE_0} - \bP_\infty^{F} \|_{1} < \epsilon $ for all finite sets $F \supseteq F_\epsilon$. This mean, that the \emph{thermodynamic limit} for \emph{finite} subsystems exists, even though the time limit for the \emph{infinite} dimensional system does not. This is of course due to the fact that in a finite system, the probability is restricted within a finite set and cannot escape `towards infinity'.

%#################################################################################################################################################################################
%#################################################################################################################################################################################
%#################################################################################################################################################################################

%#################################################################################################################################################################################
%#################################################################################################################################################################################
%#################################################################################################################################################################################
\section{Discussion and Conclusion}\label{Chapter_DiscussionAndConclusion}

We studied the differential Chapman-Kolmogorov equation - which is often referred to as the \emph{master equation} for a discrete, countable infinite system. While this is well understood in the finite dimensional case \cite{fernengel2022obtaining, van1992stochastic}, solving a \emph{countable, infinite dimensional} master equation is often avoided and one uses either approximations or entirely different methods.

While section \ref{Chaper_FromMarkovChainsToMasterEquations} serves as a reminder that master equations are closely related to continuous-time Markov chains, section \ref{Chaper_TheSolutionOfTheInfiniteDimensionalMasterEquation} focuses on the actual \emph{solution}, which can be shown to exist, if the rates $\g_{i \to j} $ satisfy certain assumptions. Moreover, it can be shown that - similarly to the finite dimensional case - the solutions remains a probability vector for all times and it is - element wise - strictly positive, for irreducible networks. 

From a physical perspective, one is not only interested in a unique solution, but also, if for a given finite time, one is able to approximate the behavior of an \emph{infinite} dimensional system by a corresponding \emph{finite} one. This \emph{thermodynamic limit} is addressed in section \ref{Chapter_TheThermodynamicLimitOfTheMasterEquations}. 

Another important aspect is the \emph{limiting} behavior of the master equation (see section \ref{Chapter_TheLongTermBehaviorOfAnInfiniteDimensionalMasterEquation}), that is the existence of the limit $t \to \infty $, as this determines the state that a system is after a long time. While this limit is always guaranteed in the finite dimensional case, this is not trivial in the Banach space $ l^1(\Omega) $. We were able to show that the limit $\lim\limits_{t \to \infty} \e^{t \, \G} \, \bP_0$ exists with respect to the $\| \cdot \|_1$-norm, if and only if the \emph{Cesaro mean} (a.k.a. the \emph{ergodic mean}) of the stochastic Markov transition matrix $\e^{\G} $ exists. This mean ergodic space can explicitly be characterized as the direct sum of of its stationary states and the closure of the image of the transition matrix minus the identity: 

%----------------------------------------------------------------------------
\begin{equation*}
\begin{aligned}
l^1_\text{m.e.}(\Omega, \e^{\G} )
&=
\Kern(\Id - \e^{\G}) \oplus \overline{ \Image(\Id - \e^{\G})}. 
\end{aligned}
\end{equation*}
%----------------------------------------------------------------------------

This mean ergodic space coincides with the whole space, if the corresponding system if both \emph{irreducible} and \emph{positive recurrent} (see theorem \ref{Thm_TheTimeLimitForCountableInfiniteDimensionalMasterEquationsExistsForAllNetworksWhichSAreIrreducibleAndPositiveRecurrent}).

Chapter \ref{Chapter_TheThermodynamicLimitOfTheStationarySolutionsOfTheMasterEquations} gives us a sufficient condition for the thermodynamic limit of \emph{stationary} solutions for finite subsystems, which is especially easy to verify if the system satisfies the - generalized - \emph{detailed balance} condition. Various examples are discussed in the sections \ref{Chapter_AFirstNonTrivialExample_TheLinearChainOnN}, \ref{Chapter_ASecondExample_TheLinearChainOnZ}, \ref{Chapter_TheInfiniteDimensionalHypercube}, \ref{Chapter_ExampleOfANetworkWithoutDetailedBalance} and \ref{Chapter_TheLinearChainWithOneOpenEndAndOneTrappingState}. 

A natural question to ask, is if and how it is possible to generalize our results. When we were considering the time limit for countable, infinite dimensional system, we required our system to be both \emph{irreducible} and \emph{positive recurrent}. While positive recurrence is required to guarantee the stationary solution (compare lemma \ref{Lemma_PositiveRecurrenceForCTMC}), the condition of irreducibility can be relaxed to some extend. Since $ P^{(\omega)} (t \,|\, \bP_0) \xlongrightarrow{t \to \infty } 0 $ for every transient state $\omega \in \Omega $ \cite{seneta2006non}, the theorems still work if the network has finitely many transient states. For arbitrary many transient state, further research is needed. 

It is also possible to have more than one strongly connected component (or more precise: more than one \emph{minimal absorbing set} \cite{fernengel2022obtaining}). Instead of requiring positive recurrence for the whole network, we now need to have positive recurrence for every minimal absorbing set. However, we note that minimal absorbing sets need not exist for infinite networks.

Table \ref{Table_AllPossibleCases} summarizes all possible cases, of whether the consecutive limits of the time- and system size exist $\checkmark$, \emph{not} exist $\times$ and coincide.

%%%%%%%%%%%%%%%%%%%%%%%%%%%%%%%%%%%%%%%%%%%%%%%%%%%%%%%%%%%%%%%%%%%%%%%%%%%
\begin{table}[H] 
\begin{tabular}{|c|c|c|c|}
\hline
& existence of $ \lim\limits_{t \to \infty}  \lim\limits_{|F| \to \infty} \bP^{F}(t) $ & 
existence of $ \lim\limits_{|F| \to \infty} \lim\limits_{t \to \infty}  \bP^{F}(t) $ & if they exist, do the limits coincide ? \\ 
\hline 
(i) & $\checkmark$ & $\checkmark$ & $\checkmark$ \\ 
\hline
(ii) & $\checkmark$ & $\checkmark$ & $\times$  \\ 
\hline
(iii) & $\times $ & $\times$ &  \\ 
\hline
(iv) & $\checkmark $ & $\times$ &  \\ 
\hline
(v) & $\times $ & $\checkmark$ &  \\ 
\hline
\end{tabular}
\caption{Listing all possible cases of the (non)-existence of the two consecutive limits: time- and system size  }
\label{Table_AllPossibleCases}
\end{table}
%%%%%%%%%%%%%%%%%%%%%%%%%%%%%%%%%%%%%%%%%%%%%%%%%%%%%%%%%%%%%%%%%%%%%%%%%%%

While this paper has mainly focused on case (i), that is providing sufficient conditions, which guarantee that both limits exist and commute, there are interesting cases, when they do not. The case (iii) is covered by the linear chain with one open end, in the case that $ \sum\limits_{n \in \N } \frac{\g_{0 \rightsquigarrow n}}{\g_{0 \leftlsquigarrow n} } = \infty $ (see section \ref{Chapter_AFirstNonTrivialExample_TheLinearChainOnN}). 

Section \ref{Chapter_ExampleOfANetworkWithoutDetailedBalance} addresses case (iv): A network without detailed balance, where the probability flows from left to right and is `reshuffled' back to the left side (see figure \ref{Fig_NetworkWithoutDetailedBalance} for a reminder). The time limit for this countable, infinite dimensional system exists for a suitable choice of rates (see equation \eqref{Eq_ExampleNetworkWithoutDetailedBalance_ChoiceOfRates}) - but the thermodynamic limit of the stationary solutions does not, due to topological reasons (compare figure \ref{Fig_Network_FlowingInACircle}).

This view can be challenged, by noting that it is possible to choose another increasing sequence of finite subsets $(F_N)_{N \in \N} \subseteq \Fin(\Omega)$, such that $\bP_\infty^{F_N} \xlongrightarrow{N \to \infty} \bP_\infty $. So one could say, that this is just matter of how we defined our thermodynamic limit. We further believe, that if the time limit for the countable, infinite dimensional system exists, it is always possible to choose such an increasing sequence of finite sets but this is just a hypothesis.

%%%%%%%%%%%%%%%%%%%%%%%%%%%%%%%%%%%%%%%%%%%%%%%%%%%%%%%%%%%%%%%%%%%%%%%%%%%%%%%%%%%%%%%%%%%%%%%%
\begin{figure}[H]
\begin{center}
\begin{subfigure}{0.49\textwidth}
   \subcaption{}
\label{Fig_NetworkWithoutDetailedBalance_a}
\scalebox{0.6}{
    %------------------------------------------------------------------------
    \begin{tikzpicture}[scale=2]
    % \node[State](-4)  at (-4,0) [circle,draw] {$\dots$} ;
    \node[State](-3)  at (-3,0) [circle,draw] {\dots} ;
    \node[State](-2)  at (-2,0) [circle,draw] {$-2$} ;
    \node[State](-1)  at (-1,0) [circle,draw] {$-1$} ;
    \node[State](0)   at (0,0) [circle,draw] {$0$} ;
    \node[State](1)   at (1,0) [circle,draw] {$1$} ;
    \node[State](2)   at (2,0) [circle,draw] {$2$} ;
    \node[State](3)   at (3,0) [circle,draw] {\dots} ;
    % \node[State](4)   at (4,0) [circle,draw] {$\dots$} ;
    %
    % \path[Link] (-4) edge node[above] {} (-3);
    \path[Link] (-3) edge node[above] {$ q^{3} $} (-2);
    \path[Link] (-2) edge node[above] {$ q^{2} $} (-1);
    \path[Link] (-1) edge node[above] {$ q^{1} $} (0);
    \path[Link] (0)  edge node[above] {$ q^{0} $} (1);
    \path[Link] (1)  edge node[above] {$ q^{1} $} (2);
    \path[Link] (2)  edge node[above] {$ q^{2} $} (3);
    % \path[Link] (3) edge node[above] {} (4);
    %
    \path[Link, bend left] (1) edge node[above] {$a \, q^{1}$} (-1);
    \path[Link, bend left] (2) edge node[above] {$ a \, q^{2}$} (-2);
    \path[Link, bend left] (3) edge node[above] {$ a \, q^{3}$} (-3);
    % \path[Link, bend left] (4) edge node[above] {} (-4);
    \end{tikzpicture}
    %------------------------------------------------------------------------
}
\end{subfigure}
\begin{subfigure}{0.49\textwidth}
   \subcaption{}
   \label{Fig_NetworkWithoutDetailedBalance_b}
   % \scalebox{0.8}{
   \begin{equation*}
\begin{pmatrix}
\vdots & \vdots  & \vdots  & \vdots           & \vdots \\
0      & 0       & 0       & 0                & 0  \\
-q^{2} & 0       & 0       & 0                & a \, q^{2}  \\ 
 q^{2} & - q^{1} &         & a \, q^{1}       & 0 \\ 
 0     &  q^{1}  & -q^{0}  & 0                & 0 \\
\vdots & \vdots  &  q^{0}  & -(1+a)  \, q^{1} & 0 \\ 
\vdots & 0       & 0       & q^{1}            &  -(a \, + q^{2}) \\ 
\vdots &  0      &  0      &    0             &  q^{2}
 \end{pmatrix}
\end{equation*}
   % }
\end{subfigure} \\ 
%%%%%%%%%%%%%%%%%%%%%%%%%%
%%%%%%%%%%%%%%%%%%%%%%%%%%
\begin{subfigure}{0.99\textwidth}
\centering
    \subcaption{}
   \label{Fig_NetworkWithoutDetailedBalance_c}
    %----------------------------------------------    
    \scalebox{0.6}{
    \begin{tikzpicture}[scale = 1]
    \draw [draw=black] (-5.0,-5.5) rectangle +(13.0,7.5);
    \node (tLessThanInfty)      at (0,1.5)  { \Large $ t< \infty $} ;
    \node (OmegaLessThanInfty)  at (-3.1,0) { \huge $    F \subseteq \Omega  \atop |F| < \infty  $} ;
    \node (OmegaEqualsInfty)    at (-3.1,-3.3)   { \Large $ |\Omega|= \infty $} ;
    \node (tEqualsInfty)        at (4.5,1.5)   {\Large $ ``t= \infty" $} ;
    %%%
    \draw[->, ultra thick]  (-2.0 , 1.8) -- (-2.0,-4.3) ; % down 
    \draw[->, ultra thick]  (-4.7 , 1.0) -- ( 6.0, 1.0);  % right
    \draw( 7.0 , 1.0) node[font=\large] {time limit} ;
    \draw(-2.1 ,-4.6) node[font=\large] {thermodynamic } ;
    \draw(-2.1 ,-5.0) node[font=\large] {limit} ;
    %
    %%%
    \node (bPFt)    at (0,0)  {$\Large \bP^{F}(t) $} ;
    \node(bPFinfty) at (4.7,0)  {$ \Large \bP^{F}_\infty $};
    \node(bPt)      at (0,-3.4) {$ \Large\bP(t)  $};
    \node(bPinfty)  at (4.3,-3.4) {$ \Large  \bP_\infty   $};
    \node(bP)       at (4.7,-2.7) {$ \Huge  \lightning  $};
    %%%
    \draw[very thick] (2.3, -2.7) -- (5.9, -3.3) ; 
    %%%
    \path[Link] (bPFt) edge node[above] {$ t \to \infty $} (bPFinfty);
    \path[Link] (bPt) edge node[above]  {$ t \to \infty $} (bPinfty);
    \path[Link] (bPFt) edge node[left]  {$ |F| \to \infty $} (bPt);
    \path[Link] (bPFinfty) edge node[right]  {$ |F| \to \infty $} (bP);
    \end{tikzpicture}
    }
    %---------------------------------------------- 
  \end{subfigure}
  
\caption{ Example of a network without detailed balance \ref{Fig_NetworkWithoutDetailedBalance_a}, together with the corresponding generator \ref{Fig_NetworkWithoutDetailedBalance_b} where the time limit of the countable, infinite dimensional network exists for a suitable choice of rates, but the \emph{thermodynamic} limit does not. This means that the diagram \ref{Fig_NetworkWithoutDetailedBalance_c} does not commute.   }  
\label{Fig_NetworkWithoutDetailedBalance} 
\end{center}
\end{figure}
%%%%%%%%%%%%%%%%%%%%%%%%%%%%%%%%%%%%%%%%%%%%%%%%%%%%%%%%%%%%%%%%%%%%%%%%%%%%%%%%%%%%%%%%%%%%%%%%

In section \ref{Chapter_TheLinearChainWithOneOpenEndAndOneTrappingState} we discussed case (v): A linear chain with an open end on one side and a trapping state on the other. The thermodynamic limit of the stationary solutions of the finite system always exists ($ \lim\limits_{|F| \to \infty} \lim\limits_{t \to \infty} \bP^{F}(t) = \bE_0 $), but the time limit of the \emph{infinite} network only exist, when the subnetwork $\N \subsetneq \Omega \hat{=} \N_0 $ is \emph{recurrent} (that is, when $p \leq \frac{1}{2} $, compare figure \ref{Fig_TheLinearChainWithAnOpenEndOnOneSideAndATrappingStateOnTheOther}). This has certain similarities with the phase transition in the Ising model, where the two limits (magnetization to zero and system size to infinity) only commute, when the inverse temperature is below a critical values: $\beta \leq \beta_c$ \cite{friedli2018statistical}. 

%%%%%%%%%%%%%%%%%%%%%%%%%%%%%%%%%%%%%%%%%%%%%%%%%%%%%%%%%%%%%%%%%%%%%%%%%%%%%%%%%%%%%%%%%%%%%%%%
\begin{figure}[H]
\begin{center}
\begin{subfigure}{0.49\textwidth}
   \subcaption{}
      \label{Fig_TheLinearChainWithAnOpenEndOnOneSideAndATrappingStateOnTheOther_a}
    % Linear Chain with one trapping state 
    %----------------------------------------------
    \scalebox{0.8}{
    \begin{tikzpicture}[scale=2]
    \node[State](0) at (0,0) [circle,draw] {0} ;
    \node[State](1) at (1,0) [circle,draw] {1} ; 
    \node[State](2) at (2,0) [circle,draw] {2} ;
    \node[State](3) at (3,0) [circle,draw] {3} ;
    \node[State](4) at (4,0) [circle,draw] {$\dots$} ;
    \path[Link,bend left] (1) edge node[above] {$p \, q $} (2);
    \path[Link,bend left] (2) edge node[above] {$p \, q^2$} (3);
    \path[Link,bend left] (3) edge node[above] {} (4);
    \path[Link,bend left] (4) edge node[below] { } (3);
    \path[Link,bend left] (3) edge node[below] {$ (1-p) \, q^3  $} (2);
    \path[Link,bend left] (2) edge node[below] {$ (1-p) \, q^2  $} (1);
    \path[Link] (1) edge node[below] {$(1-p) \, q $} (0);
    \end{tikzpicture}
    }
    %----------------------------------------------
\end{subfigure}
\begin{subfigure}{0.49\textwidth}
   \subcaption{}
   \label{Fig_TheLinearChainWithAnOpenEndOnOneSideAndATrappingStateOnTheOther_b}
\begin{equation*}
\begin{pmatrix}
0  & (1-p) \, q & 0            & 0             & \dots \\  
0  & -q         & (1-p) \, q^2 & 0             & \dots \\ 
0  &  p \, q    & -q^2         & (1-p) \, q^3  & \dots  \\ 
0  &  0         & p \, q^2     & -q^3          & \dots   \\
0  &    0       &    0         &  p \, q^3     & \dots  \\ 
\vdots & \vdots    & \vdots                         &  \dots                         & \ddots
\end{pmatrix}
\end{equation*}
\end{subfigure} 
%%%%%%%%%%
%%%%%%%%%%
\begin{subfigure}{0.49\textwidth}
   \subcaption{$p\leq \frac{1}{2} $}
      \label{CommutatingDiagrammForLinearChainWithOneOpenEndAndOneTrappingState_a}
    %----------------------------------------------
    \scalebox{0.6}{
        \begin{tikzpicture} [scale = 1]
        \draw [draw=black] (-5.0,-5.5) rectangle +(13.0,7.5);
        \node (tLessThanInfty)      at (0,1.5)  {$ \mathlarger{\mathlarger{ t< \infty } }$} ;
        % \node (OmegaLessThanInfty)  at (-3.5,0) {$   \mathlarger{\mathlarger{  |\Omega|< \infty }} $} ;
           \node (OmegaLessThanInfty)  at (-3.1,0) { \huge $    F \subseteq \Omega  \atop |F| < \infty  $} ;
        \node (tEqualsInfty)        at (4.5,1.5)   {$   \mathlarger{\mathlarger{ ``t= \infty" }}$} ;
        \node (OmegaEqualsInfty)    at (-3.5,-3)   {$   \mathlarger{\mathlarger{ |\Omega|= \infty }} $} ;
        %%%
        \draw[-, very thick]  (-2  , 2.0) -- (-2.0,-3.5);
        \draw[-, very thick] (-4.7 , 1.0) -- ( 6.0, 1.0);
        %%%
        \draw[->, ultra thick]  (-2.0 , 1.8) -- (-2.0,-4.3) ; % down 
        \draw[->, ultra thick]  (-4.7 , 1.0) -- ( 6.0, 1.0);  % right
        \draw( 7.0 , 1.0) node[font=\large] {time limit} ;
        \draw(-2.1 ,-4.6) node[font=\large] {thermodynamic } ;
        \draw(-2.1 ,-5.0) node[font=\large] {limit} ;
        \node (bPAt)    at (0,0)  {$ \Large \bP^{F}(t) $} ;
        \node(bPAinfty) at (4.5,0)  {$  \Large \bP^{F}_\infty  $};
        \node(bPt)      at (0,-3) {$ \Large \bP(t)  $};
        \node(bPinfty)  at (4.5,-3) {$ \Large \bP_\infty   $};
        %%%
        \path[Link] (bPAt) edge node[above]  {$ t \to \infty $} (bPAinfty);
        \path[Link] (bPt) edge node[above]  {$ t \to \infty $} (bPinfty);
        \path[Link] (bPAt) edge node[left]  {$ |F| \to \infty $} (bPt);
        \path[Link] (bPAinfty) edge node[right]  {$ |F| \to \infty$} (bPinfty);
        \end{tikzpicture}    
        }
    %----------------------------------------------
\end{subfigure}
\begin{subfigure}{0.49\textwidth}
   \subcaption{$p > \frac{1}{2} $}
   \label{CommutatingDiagrammForLinearChainWithOneOpenEndAndOneTrappingState_b}
    %----------------------------------------------    
    \scalebox{0.6}{
    \begin{tikzpicture}[scale = 1]
    \draw [draw=black] (-5.0,-5.5) rectangle +(13.0,7.5);
    \node (tLessThanInfty)      at (0,1.5)  { \Large $ t< \infty $} ;
    \node (OmegaLessThanInfty)  at (-3.1,0) { \huge $    F \subseteq \Omega  \atop |F| < \infty  $} ;
    \node (OmegaEqualsInfty)    at (-3.1,-3.3)   { \Large $ |\Omega|= \infty $} ;
    \node (tEqualsInfty)        at (4.5,1.5)   {\Large $ ``t= \infty" $} ;
    %%%
    \draw[->, ultra thick]  (-2.0 , 1.8) -- (-2.0,-4.3) ; % down 
    \draw[->, ultra thick]  (-4.7 , 1.0) -- ( 6.0, 1.0);  % right
    \draw( 7.0 , 1.0) node[font=\large] {time limit} ;
    \draw(-2.1 ,-4.6) node[font=\large] {thermodynamic } ;
    \draw(-2.1 ,-5.0) node[font=\large] {limit} ;
    %
    %%%
    \node (bPFt)    at (0,0)  {$\Large \bP^{F}(t) $} ;
    \node(bPFinfty) at (4.7,0)  {$ \Large \bP^{F}_\infty $};
    \node(bPt)      at (0,-3.4) {$ \Large\bP(t)  $};
    \node(bPinfty)  at (4.3,-3.4) {$ \Huge  \lightning   $};
    \node(bP)       at (4.7,-2.7) {$ \Large \bP_* = \bE_0  $};
    %%%
    \draw[very thick] (2.3, -2.7) -- (5.9, -3.3) ; 
    %%%
    \path[Link] (bPFt) edge node[above] {$ t \to \infty $} (bPFinfty);
    \path[Link] (bPt) edge node[above]  {$ t \to \infty $} (bPinfty);
    \path[Link] (bPFt) edge node[left]  {$ |F| \to \infty $} (bPt);
    \path[Link] (bPFinfty) edge node[right]  {$ |F| \to \infty $} (bP);
    \end{tikzpicture}
    }
    %---------------------------------------------- 
  \end{subfigure}

\caption{Network \ref{Fig_TheLinearChainWithAnOpenEndOnOneSideAndATrappingStateOnTheOther_a} and generator matrix \ref{Fig_TheLinearChainWithAnOpenEndOnOneSideAndATrappingStateOnTheOther_b} of a linear chain with an open end on one side, and a trapping state on the other, for $p, q \in (0, 1)$.  
The corresponding time limit and the thermodynamic limit are depicted for different values of $p$:  For $p \leq \frac{1}{2} $ (figure \ref{CommutatingDiagrammForLinearChainWithOneOpenEndAndOneTrappingState_a}) the diagram commutes, while for $p > \frac{1}{2} $ (figure \ref{CommutatingDiagrammForLinearChainWithOneOpenEndAndOneTrappingState_b}) the limit for the countable, infinite dimensional system does not exist: $ \lim\limits_{t \to \infty} \lim\limits_{|F| \to \infty}  \bP^{F}(t) = \lightning $, but the thermodynamic limit of the stationary solutions of \emph{finite} systems, does:  $\lim\limits_{|F| \to \infty} \lim\limits_{t \to \infty} \bP^{F}(t) = \bE_0  $.   
}
\label{Fig_TheLinearChainWithAnOpenEndOnOneSideAndATrappingStateOnTheOther} 
\end{center}
\end{figure}
%%%%%%%%%%%%%%%%%%%%%%%%%%%%%%%%%%%%%%%%%%%%%%%%%%%%%%%%%%%%%%%%%%%%%%%%%%%%%%%%%%%%%%%%%%%%%%%%

The case that we are still missing is case (ii). The question of whether it is possible that both consecutive limits exist, but do \emph{not} agree, is still open and while we could not find a network displaying this behavior, we also can not rule out that such a network exists.

Another interesting question would be to ask for sufficient condition, of when the iterated limit $\lim\limits_{{t \to \infty} \atop {|F| \to \infty}} \bP^{F}(t \,|\, \bPF_0)$ exists. From a physical point of view, this can be interpreted as gradually increasing the system as time passes. The existence of the iterated limit would mean that no matter how fast or slowly or in which way you increase your system, it would eventually approach a steady state. However, this too is a question for which we need future research. 

%#################################################################################################################################################################################
%#################################################################################################################################################################################
%#################################################################################################################################################################################

% % \input{FurtherStuff}

% https://tufind.hds.hebis.de/Record/HEB435607987

%#############################################################################################
\appendix

% ##############################################################################################################################
\section{Acknowledgment }

We thank Robert Haller from the TU Darmstadt and Benno Rumpf from Dallas for the interesting and constructive discussions. 
% ##############################################################################################################################

% ##############################################################################################################################
\section{Nomenclature }

\begin{center}
\begin{longtable}{| c | c | }

\hline 
$\N_{\leq N}$ & $\{1, \dots, N\}$ \\ 
\hline 
$ \ProbStates_n $ & the set of - finite dimensional - probability vectors \\ 
& $ (\R_{\geq 0})^{n} \cap B_{r=1}^{\| \cdot \|_1} (0) $ \\
          \hline 
$ \ProbStates $ & the set of - infinite dimensional - probability vectors \\ 
& $ (\R_{\geq 0})^{\N_0} \cap l^1(\N_0) \cap B_{r=1}^{\| \cdot \|_1} (0) $ \\
          \hline 
$c_{0,0} $      & the set of eventually vanishing sequences  \\
 & $\{X \in \R^\N \,:\, \exists N \in \N\,:\, \forall n\geq N X^{(n)} = 0\}$ \\ 
                    \hline 
$ \ProbStates_0 $ & the set of infinite dimensional probability vectors with a finite support \\
 & $ (\R_{\geq 0})^{\infty} \cap B_{r=1}^{\| \cdot \|_1} (0) $  \\ 
          \hline
$ \bp= \left(p^{(1)}, \dots, p^{(|\Omega|)} \right) $ & 
 $ \in  \ProbStates_{|\Omega|} $, for $ |\Omega| < \infty $ \\
e.g. $\bp_0, \bp_*, \bp_\infty(\bp_0) $ &  finite dimensional probability vectors \\
\hline 
        $ \bP= \left(P^{(n)}\right)_{n \in \N} $ &  $ \in \ProbStates $ \\ 
        e.g. $ \bP_0, \bP_*, \bP_\infty(\bP_0)  $ & 
        infinite dimensional probability vectors \\ 
\hline 
         $ \bp_0, \, \bP_0$ & initial states \\ 
\hline 
          $ \bp_*, \, \bP_*$ & stationary states \\ 
\hline 
          $ \bp_\infty(\bp_0)= \lim\limits_{t\to \infty} \e^{t \, \G} \bp_0 $ & limiting state if exists, \\
          $ \bP_\infty(\bP_0) = \lim\limits_{t\to \infty} \e^{t \, \G} \bP_0 $ & possibly depending on the initial condition   \\          
\hline 
         $\bP_0^{[F]} $  & (see \eqref{InitialStateFinSubNet}) \\
\hline 
         $\GF $  & (see \eqref{GenFinSubNet}) \\
\hline 
          $\bP^{F} (t\,|\, \bP_0^{[F]}) $ & Solution of the following initial value problem (see \eqref{MEqFinSubNet})  \\
          & $\frac{d}{dt} \bP^{F}(t) = \GF \, \bP^{F}(t)$ \\ 
          & $ \bP^{F}(t=0) = \bP^{[F]}_0$ \\ 
\hline 
$ \MM $ & $ \{M \subseteq \Omega \,:\,  M \text{ is a minimal absorbing set}\} $ \\
\hline
$ \g_{i \to } $ & $ \sum\limits_{j \in \Omega} \g_{i \to j} $ \\ 
\hline 
$\Fin(\Omega)$ & the set of all finite subsets of $\Omega$, that is  \\ 
& $\Fin(\Omega) := \{F \subseteq \Omega \,:\, |F|< \infty \}$ \\ 
\hline
$ (\bP) (A)$ & $ \sum\limits_{a \in A} P^{(a)}$ \\ 
\hline 
$ \NP:=\NP(\Omega) $ & the set of all null sets of $\Omega$ with respect to the probability measure $\bP$   \\
& $\NP(\Omega):= \{N \subseteq \Omega \,:\, (\bP)(N)=0 \} $ \\ 
\hline
$ \sigma(A) $ & the spectrum of the operator $A$ \\
& in the finite dimensional case, this is equivalent to the set of eigenvalues  \\ 
\hline
$ g(A, \lambda)$ & geometric multiplicity of the matrix $A$ and the eigenvalue $\lambda \in \sigma(A)$ \\ 
\hline
$ S_{g}(\G) $ & the \emph{spectral gap} of the generator matrix $ \G $ \\
 & $ S_{g}(\G) := \text{dist}\bigl( \Re(\sigma(\G) \backslash \{0\}), 0 \bigr) := \max\{\Re(\lambda) \,:\, \lambda \in \sigma(\G)\backslash \{0\} \}   $\\ 
\hline
$ j(A, \lambda, d)  $ & size of the Jordan block of the matrix $A$ to the eigenvalue $\lambda \in \sigma(A)$ of number $d \in \N_{\leq g(A, \lambda)} $ \\
\hline
$ j(A, \lambda)$ & $ \max\left\{ j(A, \lambda, d) \,:\, d \in \{1, \dots, g(A, \lambda)\} \right\}$ \\
& 
the size of the \emph{largest} Jordan block of the matrix $A$ to the eigenvalue $\lambda \in \sigma(A) $ \\
\hline
$J(A) $  &  $ \max\left\{ j(A, \lambda) \,:\, \lambda \in \sigma(A) \right\}$ \\
& the size of the largest Jordan block of the matrix $A$ \\
\hline
$ \Omega^{\infty} $ & the set of finite sequences of $\Omega$ \\
& $ \Bigl\{ (\omega^{(1)}, \omega^{(2)}, \dots) \in \Omega^{\N} \,:\, |\{i \in \N \,:\, \omega^{(i)}\neq 0 \}|<\infty \Bigr\} $ \\
\hline 
$ \bE_i $ & $ \Bigl( \underbrace{0, \dots, 0}_{i-1 \text{ times }}, 1, 0, 0, \dots \Bigr) $ \\ 
$ \bE_i' \in l^1(\Omega, \R)'  $ & $\bE_i'(\bX) = X^{(i)}$ \\ 
\hline
$ \bone  $ & $(1, 1, \dots ) = \{1\} ^{\Omega} $ \\ 
\hline 
$ \bone' : l^1(\Omega) \to \R  $  &  $\bone'(\bX) = \sum\limits_{i \in \Omega} X^{(i)} $ \\
\hline 
$ \CMean_M (\bX) = \CMean_M^{Q} (\bX) $ &  finite Cesaro mean
$\frac{1}{M} \, \sum\limits_{i=0}^{M-1} (Q)^{i}( \bX)$ 
\\ 
\hline 
 $\lme $ & mean ergodic subspace of $l^1(\Omega )$ where $ \lim\limits_{M \to \infty} \CMean_M(\bX)$ exists \\
 \hline 
$\ltb $ & subspace of $l^1(\Omega) $ where the long-term behavior $\lim\limits_{t \to \infty} \e^{t \, \G} \bX $ exists \\ 
\hline 
$ \bX \geq \bzero $ & $\bX \in (\R_{\geq \, 0})^{\Omega} \backslash \{\bzero\}$ \\
\hline 
$ \bX > \bzero $ & $\bX \in (\R_{> \, 0})^{\Omega} $ \\
\hline 
$\BL(l^1(\Omega))$ & the set of bounded, linear operators on $ l^1(\Omega) $ \\
\hline 
$ \| \bX \|_1 $ &  $ \sum\limits_{i \in \Omega} |X^{(i)}| $ \\ 
\hline 
$ \| A \|_{1}^\text{(op)} $ & $ \sup\limits_{ \|X\|_1 = 1} \| A \bX \|_1 = \sup\limits_{j \in \Omega} \sum\limits_{i \in \Omega} |A^{(i, j)}| $ \\ 
\hline 
$ \| A \|_{1,1}^\text{(op)} $ & $ \sum\limits_{i,j \in \Omega} |A^{(i,j)}| $  \\
\hline 
    \end{longtable}
\end{center}

%#############################################################################################################################

%#############################################################################################
%#################################################################################################################################################################################
%#################################################################################################################################################################################
%#################################################################################################################################################################################

\section{From Markov chains to master equations}\label{Chaper_FromMarkovChainsToMasterEquations}

Markov chains are special kinds of stochastic processes that satisfy the so-called \emph{Markov property}, which states that the probability for the next event depends only on the current state and not on the previous history. Both the underlying state space $\Omega$ as well as the time can be discrete or continuous. From here on, we will assume, that $ \Omega $ is a \emph{discrete, countable infinite} space. 

%#################################################################################################################################################################################
\subsection{Discrete-time Markov chains}\label{Sec_DiscreteTimeMarkovchains}

Let $(X_n)_{n \in \N_0}$ be a discrete-time Markov chain on $\Omega$, where the transition probabilities are given by 

%------------------------------------------------------
\begin{equation}
\begin{aligned}
Q^{(i,j)}
&:=
q_{j \to i}
:=
\Prob\left(X_{n+1} = i \,|\, X_n = j\right) \\
&=
\Prob\left( X_{n+1} = i \,|\, X_n = j, \dots, X_0 = j_0 \right) \text{  (compare \cite{privault2013understanding})}. 
\end{aligned}
\end{equation}
%------------------------------------------------------

This specifies the transition matrix $Q \in [0,1]^{\Omega \times \Omega}$, which is a column-stochastic, countable matrix that satisfies 

%------------------------------------------------------
\begin{equation}
\begin{aligned}
Q^{(i,j)} 
&\geq
0 \text{     , for all } i,j \in \Omega \text{  and }
%%%%%%%%%%%%%%%%%%
\\
%%%%%%%%%%%%%%%%%%
\sum\limits_{i \in \Omega}^{} Q^{(i,j)}
&=
1 \text{     , for all } j \in \Omega  . 
\end{aligned}
\end{equation}
%------------------------------------------------------

Our system $ \System $ is fully described by a \emph{countable, directed, weighted graph } $ \System = (\Omega, \Edge, q)$ (which we will call a \emph{network}), where the nodes are given by the set of states $ \Omega $ and the edges $ \Edge \subseteq \Omega \times \Omega $ (also called \emph{links}) are a set of \emph{ordered} pairs of states which indicate the transition between these states. The strength of a link is given by its weight function 

\begin{equation}
\begin{aligned}
q :\;\; \Edge &\to [0,1] \\
(j,i) &\mapsto q_{j \to i}. 
\end{aligned}
\end{equation}

When there is no transition from state $j$ to state $i$, the associated transition probability vanishes, $q_{j \to i} = 0 $. 

To keep the notation simple, we do not distinguish between the index $n \in \N $ and the state $\omega_n \in \Omega$ with index $n$: 
\begin{equation}
\begin{aligned}
\omega_n \;\; \, 
\widehat{=}& \,
\hspace*{5mm} n
%%%%%%%%%%%%%%%%%%%%%%%%%%%%%%%%%%%%%%%%
 \\
%%%%%%%%%%%%%%%%%%%%%%%%%%%%%%%%%%%%%%%%%
\Omega%
=
\{\omega_1, \omega_2. \dotsc,  \} \, 
\widehat{=}& \,
\{1, 2. \dotsc, \}
=
\N. 
\end{aligned}
\end{equation}

We call $ \bP \in (\R)^{\Omega} $ a \emph{probability vector} if $\bP \in [0, 1]^{\Omega} $ and $ \|\bP\|_1=1 $, that is a vector with non-negative entries that sum up to one. When $\bP(n)$ is a probability vector describing the probability distribution of $\Omega$ at the time $n\in \N_0$, then $ \bP(n+1):= Q \, \bP(n)$ is a probability distribution at the next time step $n+1$. 

\begin{MyDef}[\textbf{Stationary solution}]
\end{MyDef}

A probability vector is called \emph{stationary solution} of the Markov chain if its probability distribution does not change after another time step $(Q \, \bP = \bP)  $, that is, if it is an eigenvector of the transition matrix $ Q $ to the eigenvalue $ \lambda=1 $.

%#################################################################################################################################################################################

%#################################################################################################################################################################################
\subsection{Continuous-time Markov chains}\label{Sec_ContinuousTimeMarkovchains}

%%%%%%%%%%%%%%%%%%%%%%%%%%%%%%%%%%%%%%%%%%%%%%%%%%%%%%%%%%%%%%%%%%%%%%%%%%%%%%%%%%%%%%%%%%%%%%%%%%%%%%%%%%%
Given a time homogeneous, continuous-time Markov process $(X_t)_{t\geq \, 0} $ on the state space $\Omega $, then we have for all time points $ t_1 < \dots < t_n < t_{n+1}$ and all states $j_1, \dots, j_{n-1}, i, j \in \Omega$

%--------------------------------------------------------------------------------------------
\begin{equation} \label{Def_ContTimeMarkovChains}
\begin{aligned}
\Prob(X_{t_{n+1}}&=i \,|\,X_{t_n} = j, \dots, X_{t_0} = j_0 )
\xlongequal{\text{Markov}}
\Prob(X_{t_{n+1}}=i \,|\,X_{t_n} = j ). 
\end{aligned}
\end{equation}
%--------------------------------------------------------------------------------------------

Time homogeneous means that the transition matrix, whose components are defined as 

%--------------------------------------------------------------------------------------------
\begin{equation} \label{Def_TransitionMatrix_1}
\begin{aligned}
Q^{(i,j)}(t_2,t_1) &:= \Prob(X_{t_2} = i \,|\, X_{t_1} = j)
\text{  for } t_1 < t_2 
\end{aligned}
\end{equation}
%--------------------------------------------------------------------------------------------
depends only on the time difference, that is 

%--------------------------------------------------------------------------------------------
\begin{equation}\label{Def_TimeHomogeneous}
\begin{aligned}
Q^{(i,j)}(t_2, \, t_1) 
\xlongequal{\text{time homogeneous }} 
Q^{(i,j)}(t_2-t_1,0)=: Q^{(i,j)}(t_2-t_1). 
\end{aligned}
\end{equation}
%--------------------------------------------------------------------------------------------

We note that the stochastic matrices $(Q(t))_{t \geq 0} $ satisfy the functional equation $Q(t_1+t_2) = Q(t_2) \cdot Q(t_1) $, making it a semigroup: 

%--------------------------------------------------------------------------------------------
\begin{equation}\label{Equation_ExponentialPropertyOfTimeDependentTransitionMatrix}
\begin{aligned}
\bigl[
Q(t_2) \, &Q(t_1)
\bigr]^{(i, j)}
=
\sum\limits_{k \in \Omega} \, 
\underbrace{
Q^{(i,k)}(t_2)
}_{
\Prob(X_{t_2}=i \,|\, X_0 = k)
} \;\;\; 
\underbrace{
Q^{(k.j)}(t_1)
}_{
\Prob(X_{t_1}=k \,|\, X_0 = j) 
} \\
&\xlongequal{\text{Def } Q}
\sum\limits_{k \in \Omega} 
\underbrace{
\Prob(X_{t_2}=i \,|\, X_0 = k)
}_{
\Prob(X_{t_1 + t_2}=i \,|\, X_{t_1} = k)
}\, 
\Prob(X_{t_1}=k \,|\, X_0 = j) \\
%%%%%%%%%%%%%%%%%%%%%%%%%%%%%%%%%%%
%
&\xlongequal[\text{Markov property}]{\text{time homogeneity + } }
\sum\limits_{k \in \Omega} \, 
\underbrace{
\Bigl(
\Prob(X_{t_1 + t_2}=i \,|\, X_{t_1} = k, X_0 = j)
\Bigr)
}_{
\frac{
\Prob(X_{t_1 + t_2}=i,  X_{t_1} = k, X_0 = j)
}{
\Prob( X_{t_1} = k, X_0 = j)
}
} \;\;\;
\underbrace{
\Bigl(
\Prob(X_{t_1}=k \,|\,  X_0 = j)
\Bigr)
}_{
\frac{
\Prob(X_{t_1}=k ,  X_0 = j)
}{
\Prob( X_0 = j)
}
} \\
%%%%%%%%%%%%%%%%%%%%%%%%%%%%%%%%%
%
&=
\sum\limits_{k \in \Omega} \, 
\underbrace{
\left(
\frac{
\Prob(X_{t_1 + t_2}=i,  X_{t_1} = k, X_0 = j)
}{
\Prob( X_0 = j)
}
\right) 
}_{
\Prob(X_{t_1 + t_2}=i,  X_{t_1} = k \,|\,  X_0 = j)
} \\
%%%%%%%%%%%%%%%%%%%%%%%%%%%%%%%%%
%
&=
\Prob(X_{t_1 + t_2}=i \,|\,  X_0 = j) 
=
[Q(t_2+t_1)]^{(i, j)}. 
\end{aligned}
\end{equation}
%--------------------------------------------------------------------------------------------

% \Bernd{need to check}

When we additionally assume that the function 
%--------------------------------------------------------------------------------------------
\begin{equation}
\begin{aligned}
Q \;:\; \R_{\geq \, 0} 
&\to
\BL\Bigl(l^1(\Omega ) \Bigr)
%%%%%%%%%%%%%%%%%%%
\\
%%%%%%%%%%%%%%%%%%%
t
&\mapsto 
Q_t:= Q(t)
\end{aligned}
\end{equation}
%--------------------------------------------------------------------------------------------

is continuously differentiable, then $(Q_{t})_{t\geq \, 0}$ is even a \emph{uniformly continuous} $C_0$-semigroup, with the bounded generator $\G := \dot{Q}(t=0) $, that is  $Q_t = \e^{t \, \G } $, with $ \| \G \|_{1}^\text{op} < \infty$. We even have $Q(h) \xlongrightarrow[\| \cdot  \|_{1}^\text{op}]{h \to 0+} \Id $, not only pointwise, but also with respect to the operator norm \cite{engel2000one,batkai2017positive}.

When starting st state $j \in \Omega$, the probability for being at a different state $i \in \Omega \backslash\{j\} $ after a small time $h$ grows linearly with $h$, with the proportionality constant $\g_{j \to i} $ called the \emph{rate}. Since probabilities must sum up to one, we get: 

%------------------------------------------------
\begin{equation*}
\begin{aligned}
\Prob\left(
X_h = i \,|\, X_0 = j
\right)
=
\begin{cases}
h \, \g_{j \to i}  \hspace*{3mm}+ \smallO(h) &\text{  , if } i\neq j
%%%%%%%%%%%%%%%%%%%%%%%%%%%%%%%%%%%%%%%%%%%%%
\\
%%%%%%%%%%%%%%%%%%%%%%%%%%%%%%%%%%%%%%%%%%%%%
1-h \, \g_{j \to }+ \smallO(h) &\text{  , if } i= j. 
\end{cases}
\end{aligned}
\end{equation*}
%------------------------------------------------

This allows us to compute the entries of the generator matrix, as: 

%------------------------------------------------
\begin{equation}\label{Eq_GeneratorMatrix}\tag{generator}
\begin{aligned}
\G^{(i, j)}
&=
\begin{cases}
\g_{j\to i}  &\text{  , for } i \neq j 
%%%%%%%%%%%%%%%%%%%%%%%%%%%%%%%%%%%%%%%%%%%%%
\\
%%%%%%%%%%%%%%%%%%%%%%%%%%%%%%%%%%%%%%%%%%%%%
-\mathlarger{\sum}\limits_{k \in \Omega}^{ } \g_{j\to k} &\text{  , for } i = j\, .
\end{cases}
\end{aligned}
\end{equation}
%------------------------------------------------

Then the function 
%------------------------------------------------
\begin{equation*}
\begin{aligned}
\R_{\geq \, 0}
&\to 
l^1(\Omega)
%%%%%%%%%%%%%%%%%%%%%%%%%%%%%%%%%%%%%%%%%%%%%
\\
%%%%%%%%%%%%%%%%%%%%%%%%%%%%%%%%%%%%%%%%%%%%%
t
&\mapsto
\e^{t \, \G} \, \bP_0 
\end{aligned}
\end{equation*}
%------------------------------------------------

is the solution of the following differential equation, called the \emph{mater equation}: 

%------------------------------------------------
\begin{equation}\label{Eq_MasterEq}\tag{master eq.}
\begin{aligned}
\dot{\bP}(t) 
&=
\G \, \bP(t)
%%%%%%%%%%%%%%%%%%%%%%%%%%%%%%%%%%%%%%%%%%
\\
%%%%%%%%%%%%%%%%%%%%%%%%%%%%%%%%%%%%%%%%%%
\bP(t=0) 
&=
\bP_0
\end{aligned}
\end{equation}
%------------------------------------------------

Similar to the discrete-time case, we have an associated state transition network, where the states correspond to the nodes and the links (indicating transitions between the states) correspond to the edges. In contrast to the discrete-time case, the weights are only required to be non-negative (in contrast to lying in the interval $ [0,1] $) and we explicitly exclude self-loops: 

\begin{equation}
\begin{aligned}
\g : \;\; \Edge &\to \R_{\geq \, 0} \\
(j,i) &\mapsto \g_{j \to i}, \text{  with } \g_{i \to i } = 0. 
\end{aligned}
\end{equation}

This is due to the fact that for the continuous-time Markov chains we have \emph{transition rates} instead of \emph{transition probabilities}, which means that the probability for remaining in some state is always strictly positive: $ \Prob(X_t=i \,|\, X_0 = i)>0 $ for all times $t \geq 0 $ and all states $ i \in \Omega $.

The solution is given by $ \bP(t \,|\, \bP_0) := \e^{t \, \G } \, \bP_0$, with the initial state $\bP(t=0) = \bP_0 \in [0,1]^{\Omega} $ and the \emph{solution operator}

%-----------------------------------------------------------------------
\begin{equation} \label{SolutionOperator}
\begin{aligned}
\e^{t \, \G } := \mathlarger{\sum}\limits_{k\in \N_0} \frac{\G^k \, t^k}{k!} = \lim\limits_{n\to \infty} \left( \Id + \frac{t \, \G}{n} \right)^n. 
\end{aligned}
\end{equation}
\section{Markov chains: Existence and uniqueness of stationary solutions }\label{Chapter_MarkovChains}

In the following section we look at stationary solutions for both discrete- and continuous-time Markov chains. While an irreducible network guarantees the \emph{uniqueness} of a stationary solution, \emph{positive recurrence} - on an irreducible network - is equivalent to its existence. This is done by writing the \emph{expected visiting numbers} for each state in a vector and showing that this vector is - after normalization - the only candidate for a stationary solution.

%#################################################################################################################################################################################
\subsection{Discrete-time Markov chains}\label{Sec_DiscreteTimeMarkovChains}

% \subsubsection{Return time for DTMC}
%%%%%%%%%%%%%%%%%%%%%%%%%%%%%%%%%%%%%%%%%%%%%%%%%%%%%%%%%%%%%%%%%%%%%%%%%%%%%%%%%%%%%%%%%%%%%%
\begin{MyDef}[{Return time for DTMC} (DTMC)] \label{Def_ReturnTimeForDTMC}  $ $ \\

For a discrete-time Markov chain (\emph{DTMC}) $(X_n)_{n \in \N }$ starting from a state $X_0 = \state{\omega_0} \in \Omega $ (that is $\Prob(X_0 = \state{\omega_0}) =1$), we define the \textbf{return time} $T_R \in \N $ as follows:

%----------------------------------------------------------------------------------
\begin{equation}\label{Def_ReturnTimeTo_DTMC} \tag{return time, DTMC}
\begin{aligned}
T_R
&:=
T_R^{[DTMC]}\bigl(\state{\omega_0} \bigr)
&:=
\inf\limits_{k \in \N} 
\{
X_k = \state{\omega_0} = X_0 
\} 
\end{aligned}
\end{equation}
%----------------------------------------------------------------------------------

with $\inf(\emptyset) := \infty $.  Apparently we have $ X_0 = \state{\omega_0} = X_{T_R} $. 

\end{MyDef}
%%%%%%%%%%%%%%%%%%%%%%%%%%%%%%%%%%%%%%%%%%%%%%%%%%%%%%%%%%%%%%%%%%%%%%%%%%%%%%%%%%%%%%%%%%%%%%

%%%%%%%%%%%%%%%%%%%%%%%%%%%%%%%%%%%%%%%%%%%%%%%%%%%%%%%%%%%%%%%%%%%%%%%%%%%%%%%%%%%%%%%%%%%%%%
\begin{MyDef}[{(Positive) recurrence for DTMC} ] \label{Def_PositiveRecurrenceForDTMC}  $ $ \\ 

A DTMC is called \emph{recurrent}, if it almost surely returns to the initial state, that is 

%----------------------------------------------------------------------------------
\begin{equation}
\begin{aligned}
1
&=
\Prob(T_R < \infty)
=
\sum\limits_{n \in \N } \Prob\left(T_R = n \,|\, X_{0} = \omega_{0}  \right) 
\end{aligned}
\end{equation}
%----------------------------------------------------------------------------------

A DTMC is called \emph{positive recurrent}, if it is recurrent and its expectation value is finite: 

%----------------------------------------------------------------------------------
\begin{equation}
\begin{aligned}
\infty
&>
E[T_R,|\, X_{0} = \omega_{0}]
=
\sum\limits_{n \in \N} n \, \cdot \, \Prob\left(T_R = n \,|\, X_{0} = \omega_{0}  \right) 
=
\sum\limits_{n=1 }^{\infty}
\Prob(T_R \geq n) 
\end{aligned}
\end{equation}
%----------------------------------------------------------------------------------

\end{MyDef}
%%%%%%%%%%%%%%%%%%%%%%%%%%%%%%%%%%%%%%%%%%%%%%%%%%%%%%%%%%%%%%%%%%%%%%%%%%%%%%%%%%%%%%%%%%%%%%

\begin{MyDef}[{Expected visiting number for DTMC}] \label{Def_VisitingNumberForDTMC} $ $ \\ 
%%%%%%%%%%%%%%%%%%%%%%%%%%%%%%%%%%%%%%%%%%%%%%%%%%%%%%%%%%%%%%%%%%%%%%%%%%%%%%%%%%%%%%%%%%%%%%

Further, we define for a DTMC starting at state $ \state{\omega_0} \in \Omega $ the \emph{expected visiting number} $ \NN{\omega} $ to state $\omega \in \Omega$, before the return time $ T_R $, namely 

%----------------------------------------------------------------------------------
\begin{equation} \label{Def_DTMC_ExpectedVisitingNumber} \tag{expected visiting number}
\begin{aligned}
\NN{\omega}
&:=
E\left[
\sum\limits_{k=0}^{T_R-1} 1_{\{X_k = \state{\omega} \,|\, X_0 = \state{\omega_0} \}}
\right]
=
E\left[
\sum\limits_{k=1}^{T_R} 1_{\{X_k = \state{\omega} \,|\, X_0 = \state{\omega_0} \}}
\right] 
=
E\left[
\sum\limits_{k \in \N } 1_{\{X_k = \state{\omega}, k \leq T_R \,|\, X_0 = \state{\omega_0} \}}
\right] 
%%%%%%%%%%%%%%%%%%%%%%%%%%%%%% 
\\ &\xlongequal[\text{convergence}]{\text{monotone}}
%%%%%%%%%%%%%%%%%%%%%%%%%%%%%% 
\sum\limits_{k \in \N} \,
\underbrace{
E \left[
1_{\{X_k = \state{\omega}, \, T_R \geq k  \,|\, X_0 = \state{\omega_0} \}}
\right ]
}_{
\Prob \left(
X_k = \state{\omega}, \, T_R \geq k  \,|\, X_0 = \state{\omega_0} 
\right)
}
%%%%%%%%%%%%%%%%%%%%%%%%%%%%%% 
= 
%%%%%%%%%%%%%%%%%%%%%%%%%%%%%% 
\sum\limits_{k \in \N} \, 
\Prob \left(
X_k = \state{\omega}, \, T_R \geq k  \,|\, X_0 = \state{\omega_0} 
\right)
\geq 
0. 
\end{aligned}
\end{equation}
%----------------------------------------------------------------------------------

We point out that for the initial state $\omega_0$ we have $\NN{\omega_0} = 1$. 
\end{MyDef}

The reason for these two definitions is that they appear in the definition of the expected return time:

\begin{lemma}[Expected return time and expected visiting number] \label{Lemma_ExpectedReturnTimeAndExpectedVisitingNumber} $ $ \\ 
The expected return time equals the sum over all states of the expected visiting number of that state, that is $E
\left[
T_R \,|\, X_0 = \state{\omega_0}
\right] = \sum\limits_{\omega \in \Omega} \,
\NN{\omega}
 $. 
\end{lemma}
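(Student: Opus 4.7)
The plan is to reduce the statement to a Fubini/Tonelli interchange of the sum over states with the expectation, exploiting the fact that at every (random) time $k$ the chain occupies exactly one state, so the indicator identity
\[
\sum_{\omega \in \Omega} 1_{\{X_k = \omega\}} = 1
\]
holds pointwise on the sample space for every $k \in \N$.

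The key steps in order are as follows. First, I would start from the definition \eqref{Def_DTMC_ExpectedVisitingNumber} of the expected visiting number in its form
\[
\NN{\omega} = \sum_{k \in \N}\Prob\bigl(X_k = \omega,\, T_R \geq k \,\big|\, X_0 = \omega_0\bigr),
\]
i.e.\ as a double sum of non-negative probabilities. Second, I would sum over $\omega \in \Omega$ and apply Tonelli's theorem (equivalently, monotone convergence on partial sums) to swap the order of summation between $\omega$ and $k$; this is permitted because every summand is non-negative. Third, using that the events $\{X_k = \omega\}_{\omega \in \Omega}$ form a partition of the sample space for each fixed $k$, the inner sum collapses:
\[
\sum_{\omega \in \Omega} \Prob\bigl(X_k = \omega,\, T_R \geq k \,\big|\, X_0 = \omega_0\bigr) = \Prob\bigl(T_R \geq k \,\big|\, X_0 = \omega_0\bigr).
\]
Fourth, I would recognize the remaining sum $\sum_{k \in \N}\Prob(T_R \geq k \mid X_0 = \omega_0)$ as the standard tail-sum formula for the expectation of the $\N$-valued random variable $T_R$, already noted in Definition \ref{Def_PositiveRecurrenceForDTMC}, yielding $E[T_R \mid X_0 = \omega_0]$.

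The only subtle point — and thus the \emph{main obstacle} — is the justification of interchanging the countable sum over $\Omega$ with the expectation, since $\Omega$ is countably infinite and $T_R$ may itself be infinite with positive probability in the non-recurrent case. However, because all integrands are non-negative, Tonelli applies without restriction, and the identity holds in $[0,\infty]$; in particular, both sides are simultaneously finite (the positive recurrent case) or simultaneously $+\infty$. Apart from this, the argument is a routine bookkeeping exercise and requires no further tools beyond the definitions and monotone convergence.
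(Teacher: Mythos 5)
Your proposal is correct and is essentially the paper's own argument read in the reverse direction: the paper starts from $E[T_R \,|\, X_0 = \state{\omega_0}]$ via the tail-sum formula, inserts the partition $\sum_{\omega \in \Omega} 1_{\{X_n = \state{\omega}\}} = 1$, and swaps the two sums by monotone convergence to arrive at $\sum_{\omega \in \Omega} \NN{\omega}$, while you start from the definition of $\NN{\omega}$ and run the same three steps backwards. No gap; your added remark that the identity holds in $[0,\infty]$ without assuming recurrence is a harmless strengthening.
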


\begin{proof}
%----------------------------------------------------------------------------------
\begin{equation} \label{Eq_DTMC_ExpectedReturnTime} \tag{expected return time}
\begin{aligned}
E
\left[
T_R \,|\, X_0 = \state{\omega_0}
\right]
&=
\sum\limits_{n \in \N } \, 
\underbrace{
    \Prob 
    \left(
    T_R \geq n  \,|\, X_0 = \state{\omega_0}
    \right) 
}_{
    \sum\limits_{\omega \in \Omega}
    \Prob 
    \left(
    T_R \geq n, \, X_n = \state{\omega}  \,|\, X_0 = \state{\omega_0}
    \right) 
} 
=
\underbrace{
    \sum\limits_{n \in \N } \, 
    \sum\limits_{\omega \in \Omega}
}_{
    \sum\limits_{\omega \in \Omega} \, 
    \sum\limits_{n \in \N } 
} 
\, 
\underbrace{
    \Prob 
    \left(
    T_R \geq n, \, X_n = \state{\omega}  \,|\, X_0 = \state{\omega_0}
    \right) 
}_{
    E \left[
    1_{\{
    T_R \geq n, \, X_n = \state{\omega}  \,|\, X_0 = \state{\omega_0}
    \}}
    \right]
}
%%%%%%%%%%%%%%%%%%%%%%%%%
\\ &\xlongequal[\text{convergence}]{\text{monotone}}
%%%%%%%%%%%%%%%%%%%%%%%%%
\sum\limits_{\omega \in \Omega} \,
    E 
    \left[
    \sum\limits_{n\in \N }
    1_{
    \{T_R \geq n, 
    X_n = \state{\omega}  \,|\, X_0 = \state{\omega_0}
    \}
    }
    \right]
=
\sum\limits_{\omega \in \Omega} \,
\underbrace{
    E 
    \left[
    \sum\limits_{n=1}^{T_R}
    1_{
    \{
    X_n = \state{\omega}  \,|\, X_0 = \state{\omega_0}
    \}
    }
    \right]
}_{
    \NN{\omega}
}
%%%%%%%%%%%%%%%%%%%%%%%%%
\\ &= 
%%%%%%%%%%%%%%%%%%%%%%%%%
\sum\limits_{\omega \in \Omega} \,
\NN{\omega}
%%%%%%%%%%%%%%%%%%%%%%%%%
= 
%%%%%%%%%%%%%%%%%%%%%%%%%
\left\|
\underbrace{
\left(
\NN{\omega}
\right)_{\omega \in \Omega }
}_{
\bN 
}
\right\|_1
%%%%%%%%%%%%%%%%%%%%%%%%%
=
%%%%%%%%%%%%%%%%%%%%%%%%%
\left\|
\bN
\right\|_1. 
\end{aligned}
\end{equation}
%----------------------------------------------------------------------------------
    
\end{proof}

\begin{thm}[Vector of expected visiting numbers is candidate for stationary solution of DTMC ]\label{Thm_VectorOfExpectedVisitingNumbersIsCandidateForStationarySolutionOfDTMC} $ $ \\ 
\end{thm}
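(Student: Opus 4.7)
The plan is to show that the non-negative vector $\bN := (\NN{\omega})_{\omega \in \Omega}$ of expected visiting numbers defined in \eqref{Def_DTMC_ExpectedVisitingNumber} satisfies $Q \, \bN = \bN$, making it an eigenvector of the transition matrix to the eigenvalue $\lambda = 1$. If the chain is additionally \emph{positive recurrent}, then by lemma \ref{Lemma_ExpectedReturnTimeAndExpectedVisitingNumber} we have $\| \bN \|_1 = E\!\left[T_R \,|\, X_0 = \omega_0\right] < \infty$, so the normalization $\bN / \| \bN \|_1$ yields a bona fide stationary probability vector --- the claimed candidate.

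The heart of the argument is a conditioning-and-reindexing computation. Starting from the representation
\[
\NN{\omega} \;=\; \sum_{k=1}^{\infty} \Prob\!\left(X_k = \omega, \, T_R \geq k \,|\, X_0 = \omega_0 \right),
\]
I would decompose each summand by conditioning on the predecessor $X_{k-1} = \omega'$:
\[
\Prob\!\left(X_k = \omega, \, T_R \geq k \,|\, X_0 = \omega_0\right) \;=\; \sum_{\omega' \in \Omega} \Prob\!\left(X_k = \omega, \, X_{k-1} = \omega', \, T_R \geq k \,|\, X_0 = \omega_0\right).
\]
The crucial observation is that the event $\{T_R \geq k\}$ is measurable with respect to $\sigma(X_0, \ldots, X_{k-1})$, since it is equivalent to $\{X_1 \neq \omega_0, \ldots, X_{k-1} \neq \omega_0\}$. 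The Markov property from section \ref{Sec_DiscreteTimeMarkovchains} then yields $\Prob(X_k = \omega \,|\, X_{k-1} = \omega', T_R \geq k, X_0 = \omega_0) = Q^{(\omega, \omega')}$, letting me pull the transition probability out of the conditional.

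After this factorization I would swap the two summations --- justified by monotone convergence since every term is non-negative --- and substitute $j := k-1$, obtaining
\[
\NN{\omega} \;=\; \sum_{\omega' \in \Omega} Q^{(\omega, \omega')} \sum_{j=0}^{\infty} \Prob\!\left(X_j = \omega', \, T_R \geq j+1 \,|\, X_0 = \omega_0\right).
\]
The inner sum is exactly the expectation $E\!\left[\sum_{j=0}^{T_R - 1} 1_{\{X_j = \omega'\}}\,\big|\, X_0 = \omega_0\right] = \NN{\omega'}$, using the second representation of the visiting number in \eqref{Def_DTMC_ExpectedVisitingNumber}. Hence $\NN{\omega} = \sum_{\omega' \in \Omega} Q^{(\omega, \omega')} \, \NN{\omega'}$ for every $\omega \in \Omega$, which is exactly $Q \, \bN = \bN$ component-wise.

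The main obstacle is the bookkeeping of boundary cases. For $\omega = \omega_0$ both representations in \eqref{Def_DTMC_ExpectedVisitingNumber} must reconcile to $\NN{\omega_0} = 1$, and when $\omega' = \omega_0$ the inner sum collapses to the single term $j = 0$ because $\Prob(X_j = \omega_0, T_R \geq j+1 \,|\, X_0 = \omega_0) = 0$ for all $j \geq 1$ by the very definition of the return time. Verifying the identification $\{T_R > j\} = \{T_R \geq j+1\}$, together with the care needed to apply the Markov property on the augmented event $\{T_R \geq k\}$, is the only subtle point; beyond that, the interchange of summations is harmless thanks to non-negativity, and the rest is algebra.
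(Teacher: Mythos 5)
Your computation of the eigenvector identity $Q\,\bN = \bN$ is correct and is in substance the same as the paper's: the paper starts from $(Q\,\bN)^{(\beta)} = \sum_{\alpha} q_{\alpha \to \beta}\, \NN{\alpha}$, merges the factor $q_{\alpha\to\beta} = \Prob(X_{k+1}=\beta \,|\, X_k=\alpha, T_R > k, X_0=\omega_0)$ into a joint probability, sums over $\alpha$ and shifts the index, whereas you run the identical one-step conditioning backwards (decomposing on the predecessor $X_{k-1}$ and using that $\{T_R \geq k\}$ is determined by $X_1,\dots,X_{k-1}$). Both rest on the same use of the Markov property and Tonelli/monotone convergence, so this part is fine and not a genuinely different route.

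However, the identity $Q\,\bN=\bN$ is only part of what the theorem asserts, and the remainder is missing from your proposal. The theorem claims, for an \emph{irreducible} chain, (i) that every entry $\NN{\omega}$ is strictly positive and finite, and (ii) that the eigenspace of $Q$ to the eigenvalue $1$ is \emph{spanned} by $\bN/\|\bN\|_1$, i.e.\ that this is the \emph{only possible} stationary probability vector (with the dichotomy $\|\bN\|_1<\infty$ versus $\|\bN\|_1=\infty$). The paper obtains (i) from irreducibility: choosing $m,k$ with $(Q^m)^{(\alpha,\omega_0)}>0$ and $(Q^k)^{(\omega_0,\alpha)}>0$ and using $\NN{\omega_0}=1$ together with $Q^m\bN=\bN$, $Q^k\bN=\bN$ gives $0 < (Q^m)^{(\alpha,\omega_0)} \leq \NN{\alpha} \leq 1/(Q^k)^{(\omega_0,\alpha)} < \infty$; and it obtains (ii) by invoking the earlier result (lemma \ref{Lemma_StrictPositivityForStationarySolutionsOfStronglyConnectedNetworks}, section \ref{Section_UniquenessAndStrictPositivityOfStationarySolutionsForIrreducibleNetworks}) that for an irreducible chain the fixed space $\Kern(\Id-Q)$ is at most one-dimensional. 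Your appeal to positive recurrence to normalize $\bN$ is not what the theorem needs (positive recurrence is the subject of the \emph{next} lemma, which uses this theorem); without the positivity/finiteness argument and the one-dimensionality of the fixed space, you have shown that $\bN/\|\bN\|_1$ is \emph{a} candidate when it is normalizable, but not that it is the \emph{only} candidate, which is the actual content of the statement.
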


For an irreducible DTMC, every entry of $\bN $ is strictly positive and the eigenspace of the transition matrix $Q$ to the eigenvalue $\lambda = 1$ is spanned by $ \frac{\bN}{\| \bN \|_1 } $. This means that $ \frac{\bN}{\| \bN \|_1 } $ is the only possible probability vector that is a stationary solution of the Markov chain, that is 

%----------------------------------------------------------------------------------
\begin{equation*}
\begin{aligned}
\Kern(\Id - Q)
&=
\Span\left(\frac{\bN}{\|\bN\|_1}\right) \text{and }
%%%%%%%%%%%%%%%%%%%%%
\bN 
\in
(\R_{> \, 0})^{\Omega} \text{  , that is }
%%%%%%%%%%%%%%%%%%%%%%%%%%%%%%%%%%%%%%%%%%%%%%%%%%%%%%%%%%%%%%%%%%%%%%%%%%%%%%%%%%%%%%%%%%%%%%%%%%%
\\
%%%%%%%%%%%%%%%%%%%%%%%%%%%%%%%%%%%%%%%%%%%%%%%%%%%%%%%%%%%%%%%%%%%%%%%%%%%%%%%%%%%%%%%%%%%%%%%%%%%
\left \{\frac{\bN}{\left \| \bN \right \|_1} \right \}
&= \begin{cases}
\Kern(\Id - Q) \cap \ProbStates \cap (\R_{> \, 0})^{\Omega} 
&\text{, if } \| \bN \|_1 < \infty
  \text{ OR} 
%%%%%%%%%%%%%%%%%%%%%%%%%%%
\\ 
%%%%%%%%%%%%%%%%%%%%%%%%%%%
\{\bzero\} &\text{, if } \| \bN \|_1 = \infty
\end{cases}
\end{aligned}
\end{equation*}
%---------------------------------------------------------------------------------- 

\begin{proof}

We know from section \ref{Section_UniquenessAndStrictPositivityOfStationarySolutionsForIrreducibleNetworks} that the dimension of the kernel of $Q - 1 \, \cdot \Id $ is at most one dimensional. On the other hand, we have 

%----------------------------------------------------------------------------------
\begin{equation}
\begin{aligned}
\left(
Q \, \bN
\right)^{(\beta)}
&=
\sum\limits_{\alpha \in \Omega} 
\underbrace{
Q^{(\beta, \alpha)}
}_{
q_{\alpha \to \beta}
}\, \NN{\alpha}
=
\sum\limits_{\alpha \in \Omega} 
\underbrace{
    \NN{\alpha}
}_{
    \sum\limits_{k \in \N_0 } \, 
    \Prob\left(
    X_k = \state{\alpha}, \, T_R > k \, | \, X_0 = \state{\omega_0}
    \right)
}
\, q_{\alpha \to \beta} = 
%%%%%%%%%%%%%%%%%%%%%%%%%%%%%%%%%%%%%%%%%%%%%%%%%%%%
\\ &=
%%%%%%%%%%%%%%%%%%%%%%%%%%%%%%%%%%%%%%%%%%%%%%%%%%%%
\underbrace{
\sum\limits_{\alpha \in \Omega} \, 
\sum\limits_{k \in \N_0 } \, 
}_{
\sum\limits_{k \in \N_0 } \,
\sum\limits_{\alpha \in \Omega} \, , 
}
\Prob\left(
X_k = \state{\alpha}, \, T_R > k \, | \, X_0 = \state{\omega_0}
\right)
\;
\underbrace{
\overbrace{
\Prob \left(
X_{k+1} = \state{\beta} \,|\, X_{k} = \state{\alpha}
\right)
}^{
q_{\alpha \to \beta}
}
}_{
\Prob \left(
X_{k+1} = \state{\beta} \,|\, X_{k} = \state{\alpha}, \, X_0 = \state{\omega_0}, \, T_R>k
\right)
} = 
%%%%%%%%%%%%%%%%%%%%%%%%%%%%%%%%%%%%%%%%%%%%%%%%%%%%
\\ &\xlongequal[\text{chain}]{\text{Markov}}
%%%%%%%%%%%%%%%%%%%%%%%%%%%%%%%%%%%%%%%%%%%%%%%%%%%%
\sum\limits_{k \in \N_0 } \,
\sum\limits_{\alpha \in \Omega} \, 
\underbrace{
\Prob\left(
X_k = \state{\alpha}, \, T_R > k \, | \, X_0 = \state{\omega_0}
\right)
\,
\Prob \left(
X_{k+1} = \state{\beta} \,|\, X_{k} = \state{\alpha}, \, X_0 = \state{\omega_0}, \, T_R>k
\right)
}_{
\Prob\left(
X_k = \state{\alpha}, \, X_{k+1} = \state{\beta}, \, T_R > k \, | \, X_0 = \state{\omega_0}
\right)
} = 
%%%%%%%%%%%%%%%%%%%%%%%%%%%%%%%%%%%%%%%%%%%%%%%%%%%%
\\ &=
%%%%%%%%%%%%%%%%%%%%%%%%%%%%%%%%%%%%%%%%%%%%%%%%%%%%
\sum\limits_{k \in \N_0 } \,
\underbrace{
\sum\limits_{\alpha \in \Omega} \, 
\Prob\left(
X_k = \state{\alpha}, \, X_{k+1} = \state{\beta}, \, T_R > k \, | \, X_0 = \state{\omega_0}
\right)
}_{
\Prob\left(
X_{k+1} = \state{\beta}, \, 
\underbrace{
T_R > k
}_{
T_R \geq k+1
}
\, | \, X_0 = \state{\omega_0}
\right)
}
= 
%%%%%%%%%%%%%%%%%%%%%%%%%%%%%%%%%%%%%%%%%%%%%%%%%%%%
\\ &\xlongequal[j = k+1]{\text{index shift}}
%%%%%%%%%%%%%%%%%%%%%%%%%%%%%%%%%%%%%%%%%%%%%%%%%%%%
\sum\limits_{j \in \N } \, 
\Prob\left(
X_{j} = \state{\beta}, \, 
T_R \geq j
\, | \, X_0 = \state{\omega_0}
\right)
= 
%%%%%%%%%%%%%%%%%%%%%%%%%%%%%%%%%%%%%%%%%%%%%%%%%%%%
\\ &=
%%%%%%%%%%%%%%%%%%%%%%%%%%%%%%%%%%%%%%%%%%%%%%%%%%%%
\NN{\beta} \text{, that is }
%%%%%%%%%%%%%%%%%%%%%%%%%%%%%%%%%%%%%%%%%%%%%%%%%%%%
\\ 
%%%%%%%%%%%%%%%%%%%%%%%%%%%%%%%%%%%%%%%%%%%%%%%%%%%%
Q \, \bN 
&= 
\bN. 
%
% = \bigl(
% \bX_*
% \bigr)^{(\beta)}. 
\end{aligned}
\end{equation}
%----------------------------------------------------------------------------------

Since the Markov chain is irreducible, there exist two natural numbers $k, m \in \N $ such that 

%----------------------------------------------------------------------------------
\begin{align}
\left(Q^m \right)^{(\alpha, \omega_0)}
&=
\Prob\left( X_{m} = \state{\alpha} \,|\, X_{0} = \state{\omega_0} \right) > 0, \label{Q_m_alpha_omega0_0}
%%%%%%%%%%%%%%%%%%%%%%%%%%%%%%%%%%%%%%%%%%%%%
\\
%%%%%%%%%%%%%%%%%%%%%%%%%%%%%%%%%%%%%%%%%%%%%
\left(Q^k \right)^{(\omega_0, \alpha)}
&=
\Prob\left( X_{k} = \state{\omega_0} \,|\, X_{0} = \state{\alpha} \right) > 0 \label{Q_m_omega0_alpha_0}
\end{align}

%----------------------------------------------------------------------------------

Since $ Q \,\bN = \bN $ implies $ Q^k \, \bN = \bN $, we have: 

%----------------------------------------------------------------------------------
\begin{equation*}
\begin{aligned}
\NN{\alpha}
&=
\sum\limits_{\omega \in \Omega }
\left(Q^m \right)^{(\alpha, \omega)} \, \NN{\omega}
\geq
\underbrace{
\left(Q^m \right)^{(\alpha, \omega_0)}
}_{
> 0
}
\, 
\underbrace{
\NN{\omega_0}
}_{
1
}
\;
\overset{ \eqref{Q_m_alpha_omega0_0}}>
\;
0 \text{ and }
%%%%%%%%%%%%%%%%%%%%%%%%%%%%%%%%%%%%%%%%%%%%%
\\
%%%%%%%%%%%%%%%%%%%%%%%%%%%%%%%%%%%%%%%%%%%%%
1
&=
\NN{\omega_0}
=
\sum\limits_{\omega \in \Omega } \, 
\left(Q^k \right)^{(\omega_0, \omega)} \, \NN{\omega}
\geq
\underbrace{
\left(Q^k \right)^{(\omega_0, \alpha)}
}_{
> \, 0 
}\, 
\NN{\alpha}
%%%%%%%%%%%%%%%%%%%%%%%%%%%%%%%%%%%%%%%%%%%%%
\\
%%%%%%%%%%%%%%%%%%%%%%%%%%%%%%%%%%%%%%%%%%%%%
\Longrightarrow 
\NN{\alpha}
&\overset{\bigg\downarrow } \leq 
\frac{1}{\left(Q^k \right)^{(\omega_0, \alpha)}}
\;
\overset{\eqref{Q_m_omega0_alpha_0}}<\;
\infty. 
\end{aligned}
\end{equation*}
%----------------------------------------------------------------------------------

Uniqueness follow from the fact that for irreducible Markov chains we have $\Kern(\Id - Q) \leq 1$ (compare section \ref{Section_UniquenessAndStrictPositivityOfStationarySolutionsForIrreducibleNetworks} ).

\end{proof}

\begin{lemma}
For an irreducible, DTMC the following are equivalent: 

\begin{itemize}
\item[(i)] \emph{All} states in $\Omega $ are \emph{positive recurrent}
\item[(ii)] There exists a \emph{single} state $\state{\omega_0} \in \Omega $ which is positive recurrent 
\item[(iii) ]
There exists a \emph{stationary solution} 
\end{itemize}
\end{lemma}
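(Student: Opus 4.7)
My plan is to prove the three equivalences by establishing the cyclic chain of implications \textbf{(i)}~$\Rightarrow$~\textbf{(ii)}~$\Rightarrow$~\textbf{(iii)}~$\Rightarrow$~\textbf{(i)}, relying chiefly on Lemma~\ref{Lemma_ExpectedReturnTimeAndExpectedVisitingNumber} and Theorem~\ref{Thm_VectorOfExpectedVisitingNumbersIsCandidateForStationarySolutionOfDTMC}. Throughout the proof, for any fixed reference state $\omega \in \Omega$, I would write $\bN_\omega := (N^{(\alpha,\omega)})_{\alpha \in \Omega}$ for the vector of expected visiting numbers of the chain started at $X_0 = \omega$, and $T_R^{(\omega)}$ for the corresponding return time.

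The implication \textbf{(i)}~$\Rightarrow$~\textbf{(ii)} is immediate from the definition: if every state of $\Omega$ is positive recurrent, pick any one. For \textbf{(ii)}~$\Rightarrow$~\textbf{(iii)}, suppose $\omega_0 \in \Omega$ is positive recurrent. Then $E[T_R^{(\omega_0)} \,|\, X_0 = \omega_0] < \infty$, and by Lemma~\ref{Lemma_ExpectedReturnTimeAndExpectedVisitingNumber} this equals $\| \bN_{\omega_0} \|_1$. Since the chain is irreducible, Theorem~\ref{Thm_VectorOfExpectedVisitingNumbersIsCandidateForStationarySolutionOfDTMC} then tells me that $\frac{\bN_{\omega_0}}{\| \bN_{\omega_0} \|_1}$ is a strictly positive probability vector lying in $\Kern(\Id - Q)$, i.e.\ a stationary solution.

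The hard step is \textbf{(iii)}~$\Rightarrow$~\textbf{(i)}, since we must strengthen the existence of a single stationary distribution into positive recurrence of \emph{every} state. Here is the key idea: fix an arbitrary state $\omega \in \Omega$ and consider $\bN_\omega$. By Theorem~\ref{Thm_VectorOfExpectedVisitingNumbersIsCandidateForStationarySolutionOfDTMC} applied with $\omega$ as the reference state, the only candidate for a strictly positive stationary probability vector is $\frac{\bN_\omega}{\| \bN_\omega \|_1}$, and this candidate exists exactly when $\| \bN_\omega \|_1 < \infty$. Since (iii) guarantees that $\Kern(\Id - Q) \cap \ProbStates$ is non-empty, and since irreducibility forces any stationary solution to be strictly positive (Lemma~\ref{Lemma_StrictPositivityForStationarySolutionsOfStronglyConnectedNetworks}), we must have $\| \bN_\omega \|_1 < \infty$. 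Using Lemma~\ref{Lemma_ExpectedReturnTimeAndExpectedVisitingNumber} this is precisely $E[T_R^{(\omega)} \,|\, X_0 = \omega] < \infty$, so $\omega$ is positive recurrent.

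The main obstacle I anticipate is the subtle point that positive recurrence formally requires both $\Prob(T_R < \infty) = 1$ \emph{and} $E[T_R] < \infty$, and I will need to argue that finiteness of $E[T_R^{(\omega)}]$ already implies $T_R^{(\omega)} < \infty$ almost surely (otherwise the expectation would be infinite), so recurrence is automatic. A second delicate point is applying Theorem~\ref{Thm_VectorOfExpectedVisitingNumbersIsCandidateForStationarySolutionOfDTMC} with an \emph{arbitrary} reference state $\omega$ rather than only with the fixed $\omega_0$ from \textbf{(ii)}; I would emphasize that the theorem's conclusion holds for any choice of starting state, which is what lets the stationary distribution's existence propagate positive recurrence to every $\omega \in \Omega$.
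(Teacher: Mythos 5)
Your proposal is correct and follows essentially the same route as the paper: the cycle (i)$\Rightarrow$(ii)$\Rightarrow$(iii)$\Rightarrow$(i), using Lemma~\ref{Lemma_ExpectedReturnTimeAndExpectedVisitingNumber} to identify $E[T_R]$ with $\|\bN\|_1$ and Theorem~\ref{Thm_VectorOfExpectedVisitingNumbersIsCandidateForStationarySolutionOfDTMC} to force any stationary distribution to be $\bN/\|\bN\|_1$ for each choice of reference state. Your explicit remarks — that the theorem must be applied with an arbitrary reference state $\omega$ in (iii)$\Rightarrow$(i), and that $E[T_R^{(\omega)}]<\infty$ already yields $T_R^{(\omega)}<\infty$ almost surely — only make precise what the paper leaves implicit.
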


\begin{proof}
\item[$(i) \Rightarrow (ii)$] clear

\item[$(ii) \Rightarrow (iii)$]
By assumption we have: 

%----------------------------------------------------------------------------------
\begin{equation}
\begin{aligned}
\infty
&>
E[T_R \,|\, X_0 = \state{\omega_0}]
\xlongequal{\eqref{Eq_DTMC_ExpectedReturnTime}}
\sum\limits_{\omega \in \Omega} 
\NN{\omega}
=
\| \bN \|_1
%%%%%%%%%%%%%%%%%%%%%%%%%
\\ 
%%%%%%%%%%%%%%%%%%%%%%%%%
%
\Longrightarrow&
\left(
\frac{
\NN{\omega}
}{
\sum\limits_{\alpha \in \Omega}  \NN{\alpha}
}
\right)_{\omega \in \Omega }
=
\frac{
 \bN
}{
\left \| 
\bN
\right \|_1
} 
%%%%%%%%%%%%%%%%%
\text{is a stationary probability vector by lemma \ref{Thm_VectorOfExpectedVisitingNumbersIsCandidateForStationarySolutionOfDTMC} }
\end{aligned}
\end{equation}
%----------------------------------------------------------------------------------

\item[$(iii) \Rightarrow (i)$]

If there exists an invariant probability distribution, by lemma \ref{Thm_VectorOfExpectedVisitingNumbersIsCandidateForStationarySolutionOfDTMC} it must be of the form $\left( 
\frac{
\bN
}{
\left\| 
\bN 
\right\|_1
} \right)_{\omega \in \Omega } $. Hence, we have for all states $\omega_0 \in \Omega$

%----------------------------------------------------------------------------------
\begin{equation}
\begin{aligned}
\infty 
>
\left\| 
\bN
\right\|_1
=
\sum\limits_{\omega \in \Omega}
\NN{\omega}
=
E\left[
T_R \,| \, X_0 = \state{\omega_0} 
\right]. 
\end{aligned}
\end{equation}
%----------------------------------------------------------------------------------

\end{proof}

%#################################################################################################################################################################################

%#################################################################################################################################################################################
\subsection{Continuous-time Markov chains}\label{Sec_DiscreteTimeMarkovChains}

%%%%%%%%%%%%%%%%%%%%%%%%%%%%%%%%%%%%%%%%%%%%%%%%%%%%%%%%%%%%%%%%%%%%%%%%%%%%%%%%%%%%%%%%%%%%%%
\begin{MyDef}[Return time for CTMC] \label{Def_ReturnTimeForCTMC}  $ $ \\ 

For a continuous-time Markov chain (\emph{CTMC}) $\left(X_t\right)_{t \geq 0 } $ starting from a state $X_0 = \state{\omega_0}$ (that is $\Prob(X_0 = \state{\omega_0}) =1$), we define the \textbf{return time} $t_R \in \R_{>0} $ as follows:

%----------------------------------------------------------------------------------
\begin{equation}\label{Def_ReturnTimeTo_CTMC} \tag{return time, CTMC}
\begin{aligned}
t_R
&:=
t_R^{[CTMC]}\bigl(\state{\omega_0} \bigr)
&:=
\inf\limits_{t \in \R_{>0}} 
\{
X_t = \state{\omega_0} = X_0 \,:\, \exists  \tau \in (0, t) 
\,:\, X_{\tau} \neq \state{\omega_0}  \}. 
\end{aligned}
\end{equation}
%----------------------------------------------------------------------------------

Again we have $ X_0 = \state{\omega_0} = X_{t_R} $.

\end{MyDef}
%%%%%%%%%%%%%%%%%%%%%%%%%%%%%%%%%%%%%%%%%%%%%%%%%%%%%%%%%%%%%%%%%%%%%%%%%%%%%%%%%%%%%%%%%%%%%%

In analogy to the discrete-time case, we define recurrence and positive recurrence for a CTMC. 

%%%%%%%%%%%%%%%%%%%%%%%%%%%%%%%%%%%%%%%%%%%%%%%%%%%%%%%%%%%%%%%%%%%%%%%%%%%%%%%%%%%%%%%%%%%%%%
\begin{MyDef}[{(Positive) recurrence for CTMC} ] \label{Def_PositiveRecurrenceForCTMC}  $ $ \\ 

A CTMC is called \emph{recurrent}, if it almost surely returns to the initial state, that is 

%----------------------------------------------------------------------------------
\begin{equation}
\begin{aligned}
1
&=
\Prob(t_R < \infty)
=
\int\limits_{\R_{\geq 0 }} \rho_{t_R}(\tau) \, \d \tau. 
\end{aligned}
\end{equation}
%----------------------------------------------------------------------------------

A CTMC is called \emph{positive recurrent}, if it is recurrent and its expectation value is finite: 

%----------------------------------------------------------------------------------
\begin{equation}
\begin{aligned}
\infty
&>
E[t_R \,|\, X_0 = \state{\omega_0} ]
=
\int\limits_{\R_{\geq 0 }} \tau \,  \rho_{t_R}(\tau) \, \d \tau
=
\int\limits_{0 }^{\infty}
\Prob(t_R \geq \tau) \, \d \tau. 
\end{aligned}
\end{equation}
%----------------------------------------------------------------------------------

\end{MyDef}
%%%%%%%%%%%%%%%%%%%%%%%%%%%%%%%%%%%%%%%%%%%%%%%%%%%%%%%%%%%%%%%%%%%%%%%%%%%%%%%%%%%%%%%%%%%%%%

%%%%%%%%%%%%%%%%%%%%%%%%%%%%%%%%%%%%%%%%%%%%%%%%%%%%%%%%%%%%%%%%%%%%%%%%%%%%%%%%%%%%%%%%%%%%%%
\begin{MyDef}[{Expected visiting time for CTMC}] \label{Def_VisitingNumberForCTMC}  $ $ \\ 

In analogy to the \emph{expected visiting number} in DTMC, we define the \emph{expected visiting time} for CTMC as 

%----------------------------------------------------------------------------------
\begin{equation} \label{Def_CTMC_ExpectedVisitingTime} \tag{expected visiting time}
\begin{aligned}
\t{\omega}
&:= 
E\left[
\int\limits_{0}^{t_R} 1_{\{X_t = \state{\omega} \,|\, X_0 = \state{\omega_0} \}} \d \, t
\right]. 
\end{aligned}
\end{equation}
%----------------------------------------------------------------------------------

As in the case of (DTMC), the expected visiting number appears in the expected return time: 

%----------------------------------------------------------------------------------
\begin{equation} \label{Eq_CTMC_ExpectedReturnTime} 
\begin{aligned}
E 
\left[
t_R
\,|\, X_0 = \state{\omega_0}
\right]
&=
\int\limits_{0}^{\infty} 
\, 
\underbrace{
\Prob 
\left(
t_R > t  \,|\, X_0 = \state{\omega_0}
\right) 
}_{
\sum\limits_{\omega \in \Omega}
\Prob 
\left(
t_R > t, \, X_t = \state{\omega}  \,|\, X_0 = \state{\omega_0}
\right) 
} 
\d \, t
=
\underbrace{
\int\limits_{0}^{\infty} \, \d t \, 
\sum\limits_{\omega \in \Omega}
}_{
\sum\limits_{\omega \in \Omega} \, 
\int\limits_{0}^{\infty} \, \d t \, 
} 
\, 
\underbrace{
\Prob 
\left(
t_R > t, \, X_t = \state{\omega}  \,|\, X_0 = \state{\omega_0}
\right) 
}_{
E \left[
1_{\{
t_R > t, \, X_t = \state{\omega}  \,|\, X_0 = \state{\omega_0}
\}}
\right]
}
%%%%%%%%%%%%%%%%%%%%%%%%%
\\ &= 
%%%%%%%%%%%%%%%%%%%%%%%%%
\sum\limits_{\omega \in \Omega} \,
\underbrace{
E 
\left[
\int\limits_{0}^{t_R}
1_{
\{
X_n = \state{\omega}  \,|\, X_0 = \state{\omega_0}
\}
}
\right]
}_{
\t{\omega}
}
%%%%%%%%%%%%%%%%%%%%%%%%%
= 
%%%%%%%%%%%%%%%%%%%%%%%%%
\sum\limits_{\omega \in \Omega} \,
\t{\omega}
%%%%%%%%%%%%%%%%%%%%%%%%%
= 
%%%%%%%%%%%%%%%%%%%%%%%%%
\left\|
\underbrace{
\left(
\t{\omega}
\right)_{\omega \in \Omega }
}_{
\bT 
}
\right\|_1
%%%%%%%%%%%%%%%%%%%%%%%%%
\\ &= 
%%%%%%%%%%%%%%%%%%%%%%%%%
\left\|
\bT
\right\|_1.  
\end{aligned}
\end{equation}
%----------------------------------------------------------------------------------

\end{MyDef}
%%%%%%%%%%%%%%%%%%%%%%%%%%%%%%%%%%%%%%%%%%%%%%%%%%%%%%%%%%%%%%%%%%%%%%%%%%%%%%%%%%%%%%%%%%%%%%

%%%%%%%%%%%%%%%%%%%%%%%%%%%%%%%%%%%%%%%%%%%%%%%%%%%%%%%%%%%%%%%%%%%%%%%%%%%%%%%%%%%%%%%%%%%%%%
\begin{lemma}[Expected visiting \emph{time} of CTMC and expected visiting \emph{number} of DTMC] \label{Lemma_ExpectedVisitingTimeOfCTMCAndExpectedVisitingNumberOfDTMC}
\end{lemma}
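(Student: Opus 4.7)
The plan is to relate the continuous-time quantity $\t{\omega}$ to the discrete-time quantity $\NN{\omega}$ via the embedded jump chain of the CTMC. Given the CTMC $(X_t)_{t \geq 0}$ with generator $\G$, I would introduce the associated discrete-time jump chain $(Y_n)_{n \in \N_0}$, whose transition probabilities are $q_{j \to i} := \g_{j \to i}/\g_{j \to}$ for $i \neq j$ (well-defined whenever $\g_{j \to}>0$, which is guaranteed for recurrent states in an irreducible network). The jump times $0 = \tau_0 < \tau_1 < \tau_2 < \dotsc$ satisfy the property that, conditional on $Y_n = \omega$, the holding time $\tau_{n+1} - \tau_n$ is exponentially distributed with parameter $\g_{\omega \to}$ and is independent of the future state $Y_{n+1}$.

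First I would express the continuous integral in \eqref{Def_CTMC_ExpectedVisitingTime} as a sum of holding times. Writing $T_R$ for the return time of the jump chain $(Y_n)$ and $t_R$ for the return time of the CTMC, one has the pathwise identity
\begin{equation*}
\int_{0}^{t_R} 1_{\{X_t = \state{\omega}\}} \d t = \sum_{n=0}^{T_R - 1} (\tau_{n+1} - \tau_n) \, 1_{\{Y_n = \state{\omega}\}}.
\end{equation*}
Taking expectations conditional on $X_0 = \state{\omega_0}$, and using the tower property by conditioning on the full trajectory $(Y_n)_{n \geq 0}$ of the jump chain (which determines the event $\{T_R \geq n+1\}$ and the indicators $1_{\{Y_n = \state{\omega}\}}$, while leaving the holding times conditionally independent exponentials), I obtain
\begin{equation*}
\t{\omega} = \sum_{n \in \N_0} E\bigl[(\tau_{n+1}-\tau_n) \, 1_{\{Y_n = \state{\omega}, \, T_R \geq n+1\}}\bigr] = \sum_{n \in \N_0} \frac{1}{\g_{\omega \to}} \, \Prob(Y_n = \state{\omega}, \, T_R \geq n+1),
\end{equation*}
where the interchange of summation and expectation is justified by monotone convergence since every summand is non-negative.

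Comparing the right hand side with Definition~\ref{Def_VisitingNumberForDTMC} (applied to the jump chain $(Y_n)$) yields the claimed identity
\begin{equation*}
\t{\omega} = \frac{\NN{\omega}}{\g_{\omega \to}}.
\end{equation*}
I would then close the proof by observing that, as a byproduct, summing over $\omega$ gives the analogue of lemma~\ref{Lemma_ExpectedReturnTimeAndExpectedVisitingNumber} for CTMCs: $E[t_R \,|\, X_0 = \state{\omega_0}] = \sum_{\omega \in \Omega} \NN{\omega}/\g_{\omega \to}$.

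The main obstacle I anticipate is the careful bookkeeping needed to justify the conditioning step: one has to argue that, conditional on the entire trajectory of $(Y_n)$, the holding times remain independent exponentials with the correct parameters, even at the (random) stopping index $T_R$. This is a standard strong-Markov argument for the jump chain, but it must be spelled out because $T_R$ is a stopping time with respect to $(Y_n)$, not with respect to a fixed time grid. A secondary technical point is handling absorbing states with $\g_{\omega \to} = 0$; under the standing assumption that the CTMC is recurrent and irreducible these do not occur, so the formula is well-defined throughout.
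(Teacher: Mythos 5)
Your proposal is correct and follows essentially the same route as the paper's proof: decompose the occupation integral over $[0,t_R]$ into a sum of holding times indexed by the embedded jump chain, interchange sum and expectation by monotone convergence, and factor out $E[\tau_\omega]=\frac{1}{\g_{\omega\to}}$ to recognize the remaining sum as $\NN{\omega}$. The only difference is that you spell out the tower-property/conditional-independence argument behind the factorization (which the paper asserts in its step $(*)$), a welcome refinement rather than a different approach.
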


Let $\t{\omega} $ be the expected waiting \textbf{time} of the CTMC and $ \NN{\omega} $ the expected visiting \textbf{number} of the embedded, DTMC. Then we have: 

%----------------------------------------------------------------------------------
\begin{equation}
\begin{aligned}
% VT^{[\leq T_R^{[CTMV]}(\omega_0)]} (\omega)
\t{\omega}
&=
%
% VN^{[\leq T_R^{[e. \, DTMV]}(\omega_0)]} (\omega)
\NN{\omega} 
\, \cdot \, 
\frac{1}{\g_{\omega \to}}. 
\end{aligned}
\end{equation}
%----------------------------------------------------------------------------------

\begin{proof}

%----------------------------------------------------------------------------------
\begin{equation} 
\begin{aligned}
\t{\omega}
&:=
E\left[
\underbrace{
\int\limits_{0}^{\infty} 
1_{
\{X_t = \state{\omega},  t_R \geq t \,|\, X_0 = \state{\omega_0} \}
}
\d \, t
}_{
\sum\limits_{k \in \N_0} 
1_{
\{X_{t_k}= \state{\omega},  T_R \geq k \,|\, X_0 = \state{\omega_0} \}
} 
\, \cdot \, (t_{k+1} - t_{k})
}
\right] = 
%
%%%%%%%%%%%%%%%%%%%%%%%%%%%%%%%%%%%%%%%%%
\\ &\xlongequal[\text{convergence}]{\text{monotone}}
%%%%%%%%%%%%%%%%%%%%%%%%%%%%%%%%%%%%%%%%%
%
\sum\limits_{k \in \N_0} \, 
\underbrace{
E\left[
1_{
\{X_{t_k}= \state{\omega},  T_R \geq k \,|\, X_0 = \state{\omega_0} \}
} 
\, \cdot \, (t_{k+1} - t_{k})
\right]
}_{
E\left[
1_{
\{X_{t_k}= \state{\omega},  T_R \geq k \,|\, X_0 = \state{\omega_0} \}
} 
\right]
\, \cdot \, 
E[\tau_{\omega} ]
}
%%%%%%%%%%%%%%%%%%%%%%%%%%%%%%%%%%%%%%%%%
\\ &\overset{(*)}=
%%%%%%%%%%%%%%%%%%%%%%%%%%%%%%%%%%%%%%%%%
\underbrace{
\sum\limits_{k \in \N_0} \, 
E\left[
1_{
\{X_{t_k}= \state{\omega},  T_R \geq k \,|\, X_0 = \state{\omega_0} \}
} 
\right]
}_{
\NN{\omega}
}
\, \cdot \, 
\underbrace{
E[\tau_{\omega} ]
}_{
\frac{1}{\g_{\omega \to }}
}
%%%%%%%%%%%%%%%%%%%%%%%%%%%%%%%%%%%%%%%%%
\\ &= 
%%%%%%%%%%%%%%%%%%%%%%%%%%%%%%%%%%%%%%%%%
\NN{\omega}
\, \cdot \, 
\frac{1}{\g_{\omega \to }}. 
\end{aligned}
\end{equation}
%----------------------------------------------------------------------------------

In the first step, we used the fact that the return time $t_R $ to the initial state $\omega_0$ of  the CTMC is in the interval $[t_{k}, t_{k+1}) $, if and only if the return time $T_R$ of the embedded, DTMC equals $k$ ($t_R \in [t_{k}, t_{k+1})  \iff T_R = k$ for some $k \in \N_0 $) and in step (*) we made use of the fact that the waiting time distribution $t_{k+1} - t_{k} $ in a certain state $\state{\omega} $ depends only on that state and is independent on $k \in \N_0$. The expectation value of the waiting time is given by $E[\tau] = \frac{1}{\g_{\omega \to}} $.

\end{proof}

% \begin{color}{red}
% \begin{lemma}
% For an irreducible CTMC, we have $ \bT  \in \left(\R_{>0}\right)^{\Omega} $ and the vector $ \bT $ is either identical to the zero vector, or an eigenvector to the generator matrix to the eigenvalue $\lambda=0$, that is $ \bT \in \Kern(\G - 0 \, \Id)$ or $\G \, \bT = \bzero $. In particular, we have one of the following two cases: 
% %----------------------------------------------------------------------------------
% \begin{equation*}
% \begin{aligned}
% \frac{
% \bT
% }{
% \| \bT \|_1
% }  
% \begin{cases}
% = \bzero &\text{, if } \| \bT \|_1 = \infty \text{ OR}
% %%%%%%%%%%%%%%%%%%
% \\ 
% %%%%%%%%%%%%%%%%%%
%  \in \ProbStates \cap \Kern(\G - 0 \cdot \Id ) &\text{, if } \| \bT \|_1 < \infty
% \end{cases}
% \end{aligned}
% \end{equation*}
% %---------------------------------------------------------------------------------- 
% \end{lemma}
% \end{color}

\begin{lemma}[Vector of expected visiting \emph{time} is candidate for stationary solution of CTMC ]\label{Lemma_VectorOfExpectedVisitingTimeIsCandidateForStationarySolutionOfCTMC} $ $ \\ 
\end{lemma}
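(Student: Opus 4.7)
The plan is to carry over the DTMC result (Theorem \ref{Thm_VectorOfExpectedVisitingNumbersIsCandidateForStationarySolutionOfDTMC}) to the continuous-time setting via the bridge identity $\t{\omega} = \NN{\omega}/\g_{\omega \to}$ established in Lemma \ref{Lemma_ExpectedVisitingTimeOfCTMCAndExpectedVisitingNumberOfDTMC}. The embedded DTMC has transition probabilities $q_{\alpha \to \beta} = \g_{\alpha \to \beta}/\g_{\alpha \to}$ for $\alpha \neq \beta$ and $q_{\alpha \to \alpha}=0$, so the off-diagonal part of the generator factorizes as $\G^{(\beta, \alpha)} = \g_{\alpha \to} \, q_{\alpha \to \beta}$ for $\alpha \neq \beta$, while the diagonal part is $\G^{(\beta,\beta)} = -\g_{\beta \to}$.

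First, I would verify that $\bT$ lies in $\Kern(\G)$ by a direct component-wise calculation. For each $\beta \in \Omega$,
\begin{equation*}
\begin{aligned}
\bigl(\G \, \bT\bigr)^{(\beta)}
&= \sum_{\alpha \in \Omega \atop \alpha \neq \beta} \g_{\alpha \to \beta} \, \t{\alpha} \;-\; \g_{\beta \to } \, \t{\beta} \\
&\xlongequal{\text{Lemma } \ref{Lemma_ExpectedVisitingTimeOfCTMCAndExpectedVisitingNumberOfDTMC}}
\sum_{\alpha \in \Omega \atop \alpha \neq \beta} \g_{\alpha \to \beta} \, \frac{\NN{\alpha}}{\g_{\alpha \to}} \;-\; \g_{\beta \to } \, \frac{\NN{\beta}}{\g_{\beta \to}} \\
&= \sum_{\alpha \in \Omega} q_{\alpha \to \beta} \, \NN{\alpha} \;-\; \NN{\beta}
\;=\; \bigl(Q \, \bN\bigr)^{(\beta)} - \NN{\beta}
\;\xlongequal{\text{Theorem } \ref{Thm_VectorOfExpectedVisitingNumbersIsCandidateForStationarySolutionOfDTMC}}\; 0.
\end{aligned}
\end{equation*}
Here I used that $q_{\beta \to \beta}=0$ so the term with $\alpha = \beta$ may be freely added to the sum, and that in an irreducible CTMC we have $\g_{\omega \to} > 0$ for every state (otherwise that state would be absorbing, contradicting irreducibility on a space with $|\Omega| \geq 2$).

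Next, I would transfer strict positivity and uniqueness. Strict positivity of every entry of $\bT$ follows from strict positivity of every entry of $\bN$ (established in Theorem \ref{Thm_VectorOfExpectedVisitingNumbersIsCandidateForStationarySolutionOfDTMC}) together with $\g_{\omega \to} \in (0, \infty)$. Uniqueness up to scaling is furnished by Lemma \ref{Lemma_StrictPositivityForStationarySolutionsOfStronglyConnectedNetworks}: for irreducible networks with bounded generator, $\dim(\Kern(\G)) \leq 1$, so $\Span(\bT) = \Kern(\G)$ whenever $\bT \neq \bzero$. Combining these, the only candidate for a stationary probability vector is $\bT / \|\bT\|_1$, which is a genuine probability vector precisely when $\|\bT\|_1 = E[t_R \,|\, X_0 = \state{\omega_0}] < \infty$, i.e., when the chain is positive recurrent.

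The main obstacle is somewhat subtle: the computation above silently interchanges the infinite sum $\sum_{\alpha \in \Omega}$ with the application of $\G$, and it relies on $\sum_{\alpha} \g_{\alpha \to \beta} \NN{\alpha}/\g_{\alpha \to}$ being an absolutely convergent series so that the rearrangement into $(Q\bN)^{(\beta)}$ is justified. In the bounded-generator setting $\| \G \|_{1,1}^{\text{(op)}} < \infty$ assumed throughout the paper this is automatic, since $\t{\alpha} \leq \NN{\alpha}/\inf_\omega \g_{\omega \to}$ bounds need not be uniform but the row-wise finiteness of $\G$ suffices entry-by-entry; still, the statement of the lemma should make the hypothesis on $\G$ explicit, matching the hypotheses used in Lemma \ref{Lemma_StrictPositivityForStationarySolutionsOfStronglyConnectedNetworks} so that the uniqueness conclusion applies.
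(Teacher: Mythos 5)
Your proof is correct and follows essentially the same route as the paper: it reduces to the embedded DTMC via $\t{\omega} = \NN{\omega}/\g_{\omega\to}$, verifies $\G\,\bT = \bzero$ component-wise using $Q\,\bN = \bN$, gets strict positivity from $\bN \in (\R_{>0})^\Omega$, and invokes the one-dimensionality of $\Kern(\G)$ for irreducible networks. Your closing remark about justifying the rearrangement of the infinite sum and stating the boundedness hypothesis on $\G$ explicitly is a reasonable refinement, not a different argument.
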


For an irreducible CTMC, every entry of $\bT $ is strictly positive and the eigenspace of the generator matrix $\G$ to the eigenvalue $\lambda = 0$ is spanned by $ \frac{\bT}{\| \bT \|_1 } $. This means that $ \frac{\bT}{\| \bT \|_1 } $ is the only possible probability vector that is a stationary solution of the Markov chain: 

%----------------------------------------------------------------------------------
\begin{equation*}
\begin{aligned}
\G \,  \bT 
&=
\bzero \text{  and }
%%%%%%%%%%%%%%%%%%%%%
\bT
\in
(\R_{> \, 0})^{\Omega} \text{  , that is }
%%%%%%%%%%%%%%%%%%%%%%%%%%%%%%%%%%%%%%%%%%%%%%%%%%%%%%%%%%%%%%%%%%%%%%%%%%%%%%%%%%%%%%%%%%%%%%%%%%%
\\
%%%%%%%%%%%%%%%%%%%%%%%%%%%%%%%%%%%%%%%%%%%%%%%%%%%%%%%%%%%%%%%%%%%%%%%%%%%%%%%%%%%%%%%%%%%%%%%%%%%
\frac{\bT}{\left \| \bT \right \|_1} 
&\in \begin{cases}
\Kern(\G) \cap \ProbStates \cap (\R_{> \, 0})^{\Omega} 
&\text{, if } \| \bT \|_1 < \infty
  \text{ OR} 
%%%%%%%%%%%%%%%%%%%%%%%%%%%
\\ 
%%%%%%%%%%%%%%%%%%%%%%%%%%%
\{\bzero\} &\text{, if } \| \bT \|_1 = \infty
\end{cases}
\end{aligned}
\end{equation*}
%---------------------------------------------------------------------------------- 

% %----------------------------------------------------------------------------------
% \begin{equation*}
% \begin{aligned}
% \Kern(\G - 0 \, \cdot \, \Id) \cap \ProbStates
% %
% &=
% %
% \begin{cases}
% \left\{
% \frac{\bT}{\left \| \bT \right \|_1}
% \right\} &\text{, if } \| \bT \|_1 < \infty
% %%%%%%%%%%%%%%%%%%%%%%%%%%%
% \\ 
% %%%%%%%%%%%%%%%%%%%%%%%%%%%
% \bzero &\text{, if } \| \bT \|_1 = \infty \text{ OR}
% \end{cases}
% \end{aligned}
% \end{equation*}
% %---------------------------------------------------------------------------------- 

%%%%%%%%%%%%%%%%%%%%%%%%%%%%%%%%%%%%%%%%%%%%%%%%%%%%%%%%%%%%%%%%%%%%%%%%%
\begin{proof}

We know from section \ref{Section_UniquenessAndStrictPositivityOfStationarySolutionsForIrreducibleNetworks} that the dimension of the kernel of $\G $ is at most one dimensional. On the other hand, we have

%----------------------------------------------------------------------------------
\begin{equation}
\begin{aligned}
\left(
\G \, \bT 
\right)^{(\alpha)}
&=
\sum\limits_{\beta \in \Omega}
\G^{(\alpha, \beta)} \, 
\underbrace{
\t{\beta} 
}_{
\frac{
\NN{\beta}
}{
\g_{\beta \to}
}
}
=
\sum\limits_{\beta \in \Omega \atop \beta \neq \alpha}
\NN{\beta} \, 
\underbrace{
\left(
\frac{
\G^{(\alpha, \beta)}
}{
\g_{\beta \to}
}
\right)
}_{
q_{\beta \to \alpha }
}
\, + \, 
\frac{
\NN{\alpha} 
}{
\g_{\alpha \to}
}
 \, 
\underbrace{
\G^{(\alpha, \alpha)}
}_{
-\g_{\alpha \to}
}  
%%%%%%%%%%%%%%%%%%%%%%%%%%%%%%%%%%%%%%%%%%%%%%%%%
\\ &= 
%%%%%%%%%%%%%%%%%%%%%%%%%%%%%%%%%%%%%%%%%%%%%%%%%
\underbrace{
\left(
\sum\limits_{\beta \in \Omega \atop \beta \neq \alpha} 
% X_*^{(j)} \, q_{j \to i}
\NN{\beta} \, q_{\beta \to \alpha}
\right)
}_{
\NN{\alpha}
}
\, - \, 
\NN{\alpha}
%%%%%%%%%%%%%%%%%%%%%%%%%%%%%%%%%%%%%%%%%%%%%%%%%
\\ &= 
%%%%%%%%%%%%%%%%%%%%%%%%%%%%%%%%%%%%%%%%%%%%%%%%%
0 . 
%%%%%%%%%%%%%%%%%%%%%%%%%%%%%%%%%%%%%%%%%%%%%%%%%
\\ 
%%%%%%%%%%%%%%%%%%%%%%%%%%%%%%%%%%%%%%%%%%%%%%%%%
\Longrightarrow 
\G \, \bT 
&=
\bzero. 
\end{aligned}
\end{equation}
%----------------------------------------------------------------------------------

Since we have $ \bN \in (\R_{>0})^\Omega  $ and $\left(\frac{1}{\g_{\omega \to}} \right)_{\omega \in \Omega} \in (\R_{>0})^\Omega $, we can conclude that 
%---------------------------------------------------------------------------------- 
\begin{equation}
\begin{aligned}
% Y_*^{(\omega)}
\t{\omega}
&=
\underbrace{
% Y_*^{(\omega)}
\NN{\omega}
}_{
\in (0, \infty) 
}
\, \cdot \, 
\underbrace{
\frac{1}{\g_{\omega \to }}
}_{
\in (0, \infty) 
}
\in 
(0, \infty). 
\end{aligned}
\end{equation}
%---------------------------------------------------------------------------------- 

Since we assumed that the Markov chain was irreducible, and we know from section \ref{Section_UniquenessAndStrictPositivityOfStationarySolutionsForIrreducibleNetworks} that an irreducible Markov chain has at most one stationary solution, the kernel of the generator must be spanned by $\left(\frac{\bT}{\| \bT \|_1} \right) $.

\end{proof}
%%%%%%%%%%%%%%%%%%%%%%%%%%%%%%%%%%%%%%%%%%%%%%%%%%%%%%%%%%%%%%%%%%%%%%%%%

%%%%%%%%%%%%%%%%%%%%%%%%%%%%%%%%%%%%%%%%%%%%%%%%%%%%%%%%%%%%%%%%%%%%%%%%%
\begin{lemma}[Positive recurrence for CTMC] \label{Lemma_PositiveRecurrenceForCTMC}
For an irreducible, CTMC the following are equivalent: 

\begin{itemize}
\item[(i)] \emph{All} states in $\Omega $ are positive recurrent 
\item[(ii)] There exists a \emph{single} state $\state{\omega_0} \in \Omega $ which is positive recurrent 
\item[(iii) ]
There exists a \emph{stationary solution}  
\end{itemize}
\end{lemma}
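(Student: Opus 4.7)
The plan is to mirror the proof structure of the analogous DTMC statement, leveraging the two key pieces already established: the identity $\|\bT\|_1 = E[t_R\,|\, X_0 = \state{\omega_0}]$ from \eqref{Eq_CTMC_ExpectedReturnTime} and Lemma \ref{Lemma_VectorOfExpectedVisitingTimeIsCandidateForStationarySolutionOfCTMC}, which provides both $\G \, \bT = \bzero$ and $\bT \in (\R_{>0})^{\Omega}$ together with the one-dimensionality of $\Kern(\G)$ for irreducible CTMCs. The implication (i) $\Rightarrow$ (ii) is trivial, so the real content lies in the cyclic pair (ii) $\Rightarrow$ (iii) $\Rightarrow$ (i).

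For (ii) $\Rightarrow$ (iii): Fix the positive recurrent state $\state{\omega_0}$ and form the vector $\bT = \bT^{(\omega_0)}$ of expected visiting times starting at $\state{\omega_0}$. Positive recurrence gives
\begin{equation*}
\|\bT\|_1 \;=\; E[t_R \,|\, X_0 = \state{\omega_0}] \;<\; \infty,
\end{equation*}
so $\frac{\bT}{\|\bT\|_1}$ is a well-defined probability vector, and by Lemma \ref{Lemma_VectorOfExpectedVisitingTimeIsCandidateForStationarySolutionOfCTMC} it lies in $\Kern(\G) \cap \ProbStates \cap (\R_{>0})^{\Omega}$; that is, it is a (strictly positive) stationary solution.

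For (iii) $\Rightarrow$ (i): Let $\bP_*$ be a stationary probability vector. By the uniqueness established in Section \ref{Section_UniquenessAndStrictPositivityOfStationarySolutionsForIrreducibleNetworks} (applied to the irreducible semigroup $(\e^{t \, \G})_{t \geq 0}$), we have $\Kern(\G) = \Span(\bP_*)$. Now fix any state $\state{\omega_0} \in \Omega$ and build the associated $\bT^{(\omega_0)}$. Lemma \ref{Lemma_VectorOfExpectedVisitingTimeIsCandidateForStationarySolutionOfCTMC} yields $\G \, \bT^{(\omega_0)} = \bzero$, hence $\bT^{(\omega_0)} = c_{\omega_0} \cdot \bP_*$ for some $c_{\omega_0} \geq 0$; since every entry of $\bT^{(\omega_0)}$ is strictly positive, $c_{\omega_0} > 0$ and in particular $\|\bT^{(\omega_0)}\|_1 = c_{\omega_0} \cdot \|\bP_*\|_1 = c_{\omega_0} < \infty$. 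Applying \eqref{Eq_CTMC_ExpectedReturnTime} again, this means $E[t_R \,|\, X_0 = \state{\omega_0}] < \infty$, i.e., $\state{\omega_0}$ is positive recurrent. Since $\state{\omega_0}$ was arbitrary, all states are positive recurrent.

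I do not expect any real obstacle: all the analytic and combinatorial work (strict positivity of $\bT$, the identity $\G \bT = \bzero$, and the one-dimensionality of $\Kern(\G)$) has been carried out in the preceding sections, so the proof is essentially a bookkeeping exercise translating the DTMC argument to continuous time. The only mildly subtle point is in (iii) $\Rightarrow$ (i), where one must use the \emph{uniqueness} of the stationary solution (rather than its mere existence) to conclude finiteness of $\|\bT^{(\omega_0)}\|_1$ for \emph{every} starting state $\state{\omega_0}$ — this is precisely where irreducibility enters, via the one-dimensionality of $\Kern(\G)$.
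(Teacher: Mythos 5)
Your implications (i)$\Rightarrow$(ii) and (ii)$\Rightarrow$(iii) coincide with the paper's proof: positive recurrence of $\state{\omega_0}$ gives $\|\bT\|_1 = E[t_R\,|\,X_0=\state{\omega_0}]<\infty$ by \eqref{Eq_CTMC_ExpectedReturnTime}, and Lemma \ref{Lemma_VectorOfExpectedVisitingTimeIsCandidateForStationarySolutionOfCTMC} then makes $\frac{\bT}{\|\bT\|_1}$ a strictly positive stationary probability vector. The problem is the way you close the loop in (iii)$\Rightarrow$(i). You argue: $\G\,\bT=\bzero$ and $\Kern(\G)=\Span(\bP_*)$, hence $\bT = c_{\omega_0}\,\bP_*$, hence $\|\bT\|_1<\infty$. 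But the one-dimensionality you invoke (Lemma \ref{Lemma_StrictPositivityForStationarySolutionsOfStronglyConnectedNetworks}, Section \ref{Section_UniquenessAndStrictPositivityOfStationarySolutionsForIrreducibleNetworks}) is a statement about the kernel of $\G$ \emph{inside} $l^1(\Omega)$; its proof works with $\|\bX\|_1$ and constrains only summable solutions of $\G\bX=\bzero$. Whether the vector $\bT$ of expected visiting times lies in $l^1(\Omega)$ is precisely what you are trying to establish, so the step ``it lies in the kernel, hence it is proportional to $\bP_*$'' is circular as written. The paper itself shows that the distinction is real: for the chain on $\Z$ (Claim \ref{Claim_TildeKern_SpanPStar_Z}) the space $\widetilde{\Kern(\G)}$ of \emph{all} solutions of $\G\bX=\bzero$ is two-dimensional even though the $l^1$-kernel is at most one-dimensional, so ``$\G\bX=\bzero$ together with $\dim\Kern(\G)\le 1$'' does not by itself place $\bX$ in $\Span(\bP_*)$.

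The paper's own (iii)$\Rightarrow$(i) avoids this by leaning on the \emph{statement} of Lemma \ref{Lemma_VectorOfExpectedVisitingTimeIsCandidateForStationarySolutionOfCTMC}: $\frac{\bT}{\|\bT\|_1}$ is the only possible stationary probability vector, and in the case $\|\bT\|_1=\infty$ no stationary probability vector exists at all. Hence the mere existence of a stationary solution forces $\|\bT\|_1<\infty$, i.e.\ $E[t_R\,|\,X_0=\state{\omega_0}]<\infty$, and since $\state{\omega_0}$ was arbitrary all states are positive recurrent. To repair your argument, replace the proportionality step by this dichotomy (or, alternatively, prove a uniqueness statement for nonnegative solutions of $\G\bX=\bzero$ that is not restricted to $l^1(\Omega)$ — something the paper has not provided). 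With that substitution your proof matches the paper's.
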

%%%%%%%%%%%%%%%%%%%%%%%%%%%%%%%%%%%%%%%%%%%%%%%%%%%%%%%%%%%%%%%%%%%%%%%%%

%%%%%%%%%%%%%%%%%%%%%%%%%%%%%%%%%%%%%%%%%%%%%%%%%%%%%%%%%%%%%%%%%%%%%%%%%
\begin{proof} 
\vspace*{5mm}
\begin{itemize}
\vspace*{5mm}
\item[]
\item[$(i) \Rightarrow (ii)$] clear

\item[$(ii) \Rightarrow (iii)$]
By assumption we have: 

%----------------------------------------------------------------------------------
\begin{equation}
\begin{aligned}
\infty
&>
E[t_R \,|\, X_0 = \state{\omega_0}]
\xlongequal{\eqref{Eq_CTMC_ExpectedReturnTime}}
\sum\limits_{\omega \in \Omega} 
\t{\omega}
%%%%%%%%%%%%%%%%%%%%%%%%%
\\
%%%%%%%%%%%%%%%%%%%%%%%%%
%
&\Longrightarrow 
%
% \left( 
\frac{
\bT
}{
\left \| 
\bT 
\right \|_1
} 
=
\left( 
\frac{
\t{\omega}
}{
\sum\limits_{\alpha \in \Omega} 
\t{\alpha}
} \right)_{\omega \in \Omega }
%%%%%%%%%%%%%%%%%
\text{is a stationary probability vector by lemma  \ref{Lemma_VectorOfExpectedVisitingTimeIsCandidateForStationarySolutionOfCTMC}}
\end{aligned}
\end{equation}
%----------------------------------------------------------------------------------

 \item[$(iii) \Rightarrow (i) $]
If there exists an invariant probability distribution, it must be of the form $\left( 
\frac{
\bT 
}{
\left\| 
\bT 
\right\|_1
} \right)_{\omega \in \Omega } $. Hence, we have 

%----------------------------------------------------------------------------------
\begin{equation}
\begin{aligned}
\infty 
>
\left\| 
\bT 
\right\|_1
=
\sum\limits_{\omega \in \Omega}
\t{\omega}
=
E\left[
T_R \,| \, X_0 = \state{\omega_0} 
\right]
\end{aligned}
\end{equation}
%----------------------------------------------------------------------------------

\end{itemize}

\end{proof}
%%%%%%%%%%%%%%%%%%%%%%%%%%%%%%%%%%%%%%%%%%%%%%%%%%%%%%%%%%%%%%%%%%%%%%%%%

%#################################################################################################################################################################################

%######################################################################################################
%######################################################################################################
%######################################################################################################

%#############################################################################################

\section{Lemmata}

%#############################################################################################
\subsection{Existence and uniqueness of the master equation}\label{Section_ExistenceAndUniquenessOfTheMasterEquation}

\begin{lemma}\label{Lemma_ExistenceAndUniquenessOfTheMasterEquation}[Existence and uniqueness of the master equation]
\end{lemma}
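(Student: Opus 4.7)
The plan is to adapt the Picard-Lindelöf iteration scheme to the Banach space $l^1(\Omega)$, exploiting the fact that the generator $\G$ is a bounded operator on $l^1(\Omega)$ with $\|\G\|_{1}^{\text{(op)}} < \infty$. First I would rewrite the master equation in its integral form \eqref{Eq_MasterEquation_IntegralForm} and define recursively $\bP_0 := \bP(t=0)$ together with $\bP_{n+1}(t) := \bP_0 + \int_0^t \G \, \bP_n(\tau) \, \d \tau$. Because $\G$ is a bounded endomorphism of $l^1(\Omega)$, each iterate is a well-defined continuous function from $\R_{\geq 0}$ into $l^1(\Omega)$, which reduces the problem to a fixed-point question in a classical Banach space rather than a genuine operator-semigroup construction.

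The heart of the argument will be two inductive estimates. The first is a uniform-in-$n$ bound of the form $\|\bP_{n+1}(t)\|_1 \leq \|\bP_0\|_1 \, \exp\bigl( t \, \|\G\|_{1}^{\text{(op)}} \bigr)$, obtained by inserting the definition of $\bP_{n+1}$, pulling the norm inside the integral, and then invoking the operator-norm estimate $\|\G \, \bX\|_1 \leq \|\G\|_{1}^{\text{(op)}} \|\bX\|_1$. The second is the geometric-type difference bound $\|(\bP_n - \bP_{n-1})(t)\|_1 \leq (t \, \|\G\|_{1}^{\text{(op)}})^n / n!$, established by induction by telescoping one factor of $\G$ through the integral at each step. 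These correspond exactly to the boundedness and consecutive-difference claims anticipated in the outline.

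With these two ingredients in hand, a telescoping argument yields $\|(\bP_{n+m} - \bP_n)(t)\|_1 \leq \sum_{k=n+1}^{n+m} (t \, \|\G\|_{1}^{\text{(op)}})^k / k!$, which is the tail of a convergent exponential series and therefore vanishes as $n \to \infty$, so $(\bP_n(t))_{n \in \N}$ is Cauchy in the complete space $l^1(\Omega)$ and produces a limit $\bP(t) \in l^1(\Omega)$. To verify that this limit actually solves the integral equation I would justify the interchange $\lim_{n \to \infty} \int_0^t \G \, \bP_n(\tau) \, \d \tau = \int_0^t \G \, \bP(\tau) \, \d \tau$ via the estimate $\|\int_0^t \G \, (\bP_n - \bP)(\tau) \, \d \tau\|_1 \leq t \, \|\G\|_{1}^{\text{(op)}} \, \sup_{\tau \in [0,t]} \|\bP_n(\tau) - \bP(\tau)\|_1$, where the supremum vanishes because the Cauchy estimate is uniform on every compact time interval.

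Uniqueness will then be a direct consequence of the integral form of Grönwall's inequality (lemma \ref{Lemma_Groenwall_IntegralVersion}): if $\tilde{\bP}$ is a second solution, the function $u(t) := \|\bP(t) - \tilde{\bP}(t)\|_1$ satisfies $u(t) \leq \int_0^t \|\G\|_{1}^{\text{(op)}} \, u(\tau) \, \d \tau$, which forces $u \equiv 0$. The delicate point of the proof is not the algebraic manipulations but rather the repeated use of $\|\G \, \bX\|_1 \leq \|\G\|_{1}^{\text{(op)}} \, \|\bX\|_1$ together with the compatibility of the Bochner-type integral with the $\|\cdot\|_1$-norm; this is precisely where the standing assumption $\|\G\|_{1}^{\text{(op)}} < \infty$ is indispensable, since without it the Picard iterates would not remain in $l^1(\Omega)$ and one would be forced to replace the whole scheme by the much more technical Hille–Yosida machinery for unbounded generators.
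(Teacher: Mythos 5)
Your proposal is correct and follows essentially the same route as the paper: the same Picard iteration in integral form, the same two inductive estimates (exponential boundedness of the iterates and the factorial bound on consecutive differences), the Cauchy/completeness argument in $l^1(\Omega)$, the interchange of limit and integral via the operator-norm bound $\|\G\,\bX\|_1 \leq \|\G\|_{1}^{\text{(op)}}\,\|\bX\|_1$, and uniqueness from the integral form of Grönwall's lemma. No substantive differences beyond the paper writing the bounds with $\int_0^t \|\G\|_{1}^{\text{(op)}}\,\d\tau$ in place of $t\,\|\G\|_{1}^{\text{(op)}}$, which is the same thing for a time-independent generator.
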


\begin{claim}\label{Lemma_ExistenceAndUniquenessOfTheMasterEquation_BoundednessOfTheSolution}

For %----------------------------------------------------------------------------------
\begin{equation*}
\begin{aligned}
\bP_{n+1}(t) 
&:=
\bP_0 + \int_0^t \G(\tau) \,  \bP_n(\tau) \d \tau
\end{aligned}
\end{equation*}
%----------------------------------------------------------------------------------

%---------------------------------------------------------------------------------------
\begin{equation}
\begin{aligned}
\|\bP_n(t) \|_1
&\leq
\|\bP_0 \|_1 
\, \cdot \, 
\exp{\left[
\int_0^t
\|
\G(\tau)
\|_{1, 1}
\d \tau
\right]}
\end{aligned}
\end{equation}
%---------------------------------------------------------------------------------------
\end{claim}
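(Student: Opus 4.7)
The plan is a straightforward induction on $n \in \N_0$, exploiting the submultiplicativity of the induced operator norm and the fact that the integrand on the right-hand side is precisely the derivative of an exponential.

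For the base case $n=0$, the statement is immediate since $ \bP_0(t) \equiv \bP_0 $ (using the natural convention that the zeroth iterate is the constant initial condition), and the exponential factor on the right is at least one because $ \| \G(\tau) \|_{1,1}^\text{(op)} \geq 0 $. For the inductive step, assume the bound holds for $n$. Then I would estimate
\begin{equation*}
\| \bP_{n+1}(t) \|_1
\leq \| \bP_0 \|_1 + \int_0^t \| \G(\tau) \, \bP_n(\tau) \|_1 \, \d \tau
\leq \| \bP_0 \|_1 + \int_0^t \| \G(\tau) \|_{1,1}^\text{(op)} \, \| \bP_n(\tau) \|_1 \, \d \tau,
\end{equation*}
where the first inequality is the triangle inequality for the Bochner integral, and the second uses the compatibility of the $1$-norm with the operator norm (lemma \ref{Lemma_Comparing_1NormOfMatrices_with_Vector1NormOfMatrices} in the main text).

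Plugging in the induction hypothesis and writing $ f(\tau) := \int_0^\tau \| \G(s) \|_{1,1}^\text{(op)} \, \d s $, so that $ f'(\tau) = \| \G(\tau) \|_{1,1}^\text{(op)} $, the bound becomes
\begin{equation*}
\| \bP_{n+1}(t) \|_1
\leq \| \bP_0 \|_1 + \| \bP_0 \|_1 \int_0^t f'(\tau) \, \e^{f(\tau)} \, \d \tau
= \| \bP_0 \|_1 + \| \bP_0 \|_1 \bigl( \e^{f(t)} - \e^{f(0)} \bigr)
= \| \bP_0 \|_1 \, \e^{f(t)},
\end{equation*}
which is exactly the desired bound for $n+1$. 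The middle equality uses $ \e^{f(0)} = \e^0 = 1 $ since $f(0) = 0$.

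There is essentially no hard step here: the whole argument is just an induction combined with the chain rule in reverse. The only minor subtlety is keeping track of which norm is which --- using the operator norm $ \| \G(\tau) \|_{1,1}^\text{(op)} $ when pulling $\G$ out of the vector $1$-norm --- but that is handled by the compatibility lemma already established. Note also that the time-dependent setting $\G = \G(\tau)$ is treated for free by the same computation, since the integral $f(t)$ absorbs the time dependence; the result specializes to $\| \bP_n(t) \|_1 \leq \| \bP_0 \|_1 \, \e^{t \, \| \G \|_{1,1}^\text{(op)}}$ in the time-homogeneous case used in the main text.
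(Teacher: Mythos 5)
Your proof is correct and follows essentially the same route as the paper's: induction on $n$, the triangle inequality for the integral, the compatibility of the operator norm with $\| \cdot \|_1$, and recognizing the integrand $\| \G(\tau) \| \, \e^{\int_0^\tau \| \G \|}$ as the derivative of the exponential so that the integration telescopes to $\e^{f(t)} - 1$. The only cosmetic difference is your explicit auxiliary function $f$ and your use of $\| \G \|_{1,1}^\text{(op)}$ throughout, where the paper's claim and proof mix the $\| \cdot \|_{1,1}$ and $\| \cdot \|_{1}^\text{(op)}$ notations; since these are compatible by lemma \ref{Lemma_Comparing_1NormOfMatrices_with_Vector1NormOfMatrices}, this changes nothing of substance.
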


\begin{proof}
For the induction start, we get: 

\begin{itemize}
\item[i-start]

%---------------------------------------------------------------------------------------
\begin{equation}
\begin{aligned}
\|\bP_0 \|_1
&\leq
\|\bP_0 \|_1 
\, \cdot \, 
\underbrace{
\exp{\left[
\int_0^t
\|
\G(\tau)
\|_{1}^\text{(op)}
\d \tau
\right]}
}_{
\geq 1
}. 
\end{aligned}
\end{equation}
%---------------------------------------------------------------------------------------

\item[i-step]
%---------------------------------------------------------------------------------------
\begin{equation}
\begin{aligned}
\|\bP_{n+1}(t) \|_1
&\xlongequal{\text{Def}}
\Bigl\|
\bP_0 + \int_0^t \G(\tau) \,  \bP_n(\tau) \d \tau
\Bigr\|_1
%%%%%%%%%%%%%%%%%%%%%%%%%%%%%%%%%%%%%%%%%%%%%
\\ &\leq
%%%%%%%%%%%%%%%%%%%%%%%%%%%%%%%%%%%%%%%%%%%%%
\|\bP_0 \|_1 
+
\int_0^t 
\underbrace{
    \| \G(\tau) \bP_n(t) \|_{1}
}_{ 
    \leq \, \| \G(\tau) \|_{1}^\text{(op)} \, \| \bP_n(t) \|_{1}
}
\d \tau 
%%%%%%%%%%%%%%%%%%%%%%%%%%%%%%%%%%%%%%%%%%%%%
\\ &\leq
%%%%%%%%%%%%%%%%%%%%%%%%%%%%%%%%%%%%%%%%%%%%%
\int_0^t 
\| \G(\tau) \|_{1}^\text{(op)}
\, 
\underbrace{
\| \bP_n(\tau) \|_1
}_{
\|\bP_0 \|_1 
\, \cdot \, 
\exp{\left[
\int_0^\tau
\|
\G(\xi)
\|^\text{(op)}_{1}
\d \xi
\right]}
}
\d \tau 
%%%%%%%%%%%%%%%%%%%%%%%%%%%%%%%%%%%%%%%%%%%%%
\\ &\xlongequal{\text{i- hypothesis}}
%%%%%%%%%%%%%%%%%%%%%%%%%%%%%%%%%%%%%%%%%%%%%
\|\bP_0 \|_1 
\, \cdot \,
\left( 
1 + 
\int_0^t
\underbrace{
\| \G(\tau) \|_{1}^\text{(op)}
\, \cdot \,
\exp{\left[
\int_0^\tau
\|
\G(\xi)
\|_{1}^\text{(op)}
\d \xi
\right]}
}_{
\frac{\d}{\d \tau}
\exp{\left[
\int_0^\tau
\|
\G(\xi)
\|_{1}^\text{(op)}
\d \xi
\right]}
}
\d \tau
\right)
%%%%%%%%%%%%%%%%%%%%%%%%%%%%%%%%%%%%%%%%%%%%%
\\ &=
%%%%%%%%%%%%%%%%%%%%%%%%%%%%%%%%%%%%%%%%%%%%%
\|\bP_0 \|_1 
\, \cdot \,
\left( 
1 + 
\exp{\left[
\int_0^t
\|
\G(\tau)
\|_{1}^\text{(op)}
\d \tau
\right]}
-
1
\right)
\end{aligned}
\end{equation}
%---------------------------------------------------------------------------------------

\end{itemize}
\end{proof}

\begin{claim}\label{Lemma_ExistenceAndUniquenessOfTheMasterEquation_EstimatingDifferenceBetweenTwoConsecutiveElements}
%----------------------------------------------------------------------------------
\begin{equation*}
\begin{aligned}
\| \left( \bP_{n} - \bP_{n-1}\right) (t) \|_1
&\leq
\frac{\left(\int_0^t \|\G(\tau)\|_{1}^\text{(op)} \d \tau  \right)^n}{n!}
\end{aligned}
\end{equation*}
%----------------------------------------------------------------------------------
\end{claim}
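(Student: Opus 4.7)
My plan is to prove the claim by induction on $n$, exploiting the fact that each $\bP_{n+1}-\bP_n$ is obtained by applying the integral operator $\int_0^t \G(\tau)\,\cdot\, d\tau$ to $\bP_n - \bP_{n-1}$. The key identity is that the right-hand side bound has the form of a repeated integral of $\|\G(\tau)\|_1^{\text{(op)}}$, which is already set up for a Grönwall-style telescoping.

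First, for the induction base $n = 1$, I would compute
\begin{equation*}
\bP_1(t) - \bP_0 = \int_0^t \G(\tau)\,\bP_0\,d\tau,
\end{equation*}
take norms, and use $\|\G(\tau)\bP_0\|_1 \leq \|\G(\tau)\|_1^{\text{(op)}}\|\bP_0\|_1 = \|\G(\tau)\|_1^{\text{(op)}}$ (since $\bP_0 \in \ProbStates$ implies $\|\bP_0\|_1 = 1$) to obtain the bound $\int_0^t \|\G(\tau)\|_1^{\text{(op)}}\,d\tau$, matching the claim for $n=1$.

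For the induction step, I would subtract the defining formulas for $\bP_{n+1}$ and $\bP_n$ to get
\begin{equation*}
\bP_{n+1}(t) - \bP_n(t) = \int_0^t \G(\tau)\bigl(\bP_n(\tau) - \bP_{n-1}(\tau)\bigr)\,d\tau,
\end{equation*}
then apply the triangle inequality for integrals and the operator-norm bound to estimate
\begin{equation*}
\|\bP_{n+1}(t) - \bP_n(t)\|_1 \leq \int_0^t \|\G(\tau)\|_1^{\text{(op)}}\,\|\bP_n(\tau) - \bP_{n-1}(\tau)\|_1\,d\tau.
\end{equation*}
Plugging in the induction hypothesis yields
\begin{equation*}
\|\bP_{n+1}(t) - \bP_n(t)\|_1 \leq \int_0^t \|\G(\tau)\|_1^{\text{(op)}}\,\frac{\left(\int_0^\tau \|\G(s)\|_1^{\text{(op)}}\,ds\right)^n}{n!}\,d\tau.
\end{equation*}

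The closing step — and the only nontrivial manipulation — is the substitution $f(\tau) := \int_0^\tau \|\G(s)\|_1^{\text{(op)}}\,ds$, so that $f'(\tau) = \|\G(\tau)\|_1^{\text{(op)}}$ and the integrand becomes $f'(\tau)\,\frac{f(\tau)^n}{n!} = \frac{d}{d\tau}\left(\frac{f(\tau)^{n+1}}{(n+1)!}\right)$. The fundamental theorem of calculus together with $f(0)=0$ then gives
\begin{equation*}
\int_0^t \frac{d}{d\tau}\left(\frac{f(\tau)^{n+1}}{(n+1)!}\right)\,d\tau = \frac{f(t)^{n+1}}{(n+1)!} = \frac{\left(\int_0^t \|\G(\tau)\|_1^{\text{(op)}}\,d\tau\right)^{n+1}}{(n+1)!},
\end{equation*}
completing the induction. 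There is no real obstacle here; the only point of care is recognizing the antiderivative structure of the integrand, which is essentially the continuous analogue of repeated Duhamel iteration and is what makes the Picard-Lindelöf bound tight.
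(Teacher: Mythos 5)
Your proposal is correct and follows essentially the same route as the paper: induction on $n$, with the base case using $\|\bP_0\|_1 = 1$, the step obtained by subtracting the Picard iterates, applying the operator-norm bound inside the integral, and recognizing the integrand as $\frac{d}{d\tau}\bigl(\tfrac{1}{(n+1)!}\bigl(\int_0^\tau \|\G(s)\|_{1}^{\text{(op)}}\,\mathrm{d}s\bigr)^{n+1}\bigr)$. No gaps; this matches the paper's argument, including the antiderivative step it uses in the induction.
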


\begin{proof}
For the induction start we get: 

\begin{itemize}
\item[i-start]

%----------------------------------------------------------------------------------
\begin{equation*}
\begin{aligned}
\| \left( \bP_{1} - \bP_{0}\right) (t) \|_1
&\leq
\int_0^t \d \tau 
\underbrace{
    \| \G(\tau) \, \bP_0 \|_{1}
}_{
    \| \G(\tau)\|_{1}^\text{(op)} \, \|\bP_0 \|_{1}
}
%%%%%%%%%%%%%%%%%%%%%%%%%%%%%%%%%%%%%%%%
&\leq
%%%%%%%%%%%%%%%%%%%%%%%%%%%%%%%%%%%%%%%%
\int_0^t
\| \G(\tau) \|_{1}^\text{(op)}
\, \d \tau 
\, 
\underbrace{
\| \bP_0 \|_1
}_{1}
%%%%%%%%%%%%%%%%%%%%%%%%%%%%%%%%%%%%%
\\ &=
%%%%%%%%%%%%%%%%%%%%%%%%%%%%%%%%%%%%%
\frac{\left(
\int_0^t
\| \G(\tau) \|_{1}^\text{(op)}
\right)^1}{1!}
\end{aligned}
\end{equation*}
%----------------------------------------------------------------------------------

\item[i-step]
%----------------------------------------------------------------------------------
\begin{equation*}
\begin{aligned}
\| 
\bigl(\bP_{n+1} - \bP_{n} \bigr)(t)
\|_1
&=
\int_0^t  
\underbrace{
    \left\|
    \G(\tau) \,
    \bigl(\bP_{n} - \bP_{n-1} \bigr)(\tau)
    \right \|_1
}_{
    \leq \, \|\G(\tau) \|_{1}^\text{(op)} \|  \,
\bigl(\bP_{n} - \bP_{n-1} \bigr)(t)
\|_1
}
\d \tau
%%%%%%%%%%%%%%%%%%%%%%%%%%%%%%%%%%%%%
\\ &=
%%%%%%%%%%%%%%%%%%%%%%%%%%%%%%%%%%%%%
\int_0^t
\|
\G(\tau)
\|_{1}^\text{(op)}
\, 
\underbrace{
    \|
    \bigl(\bP_{n} - \bP_{n-1} \bigr)(\tau)
    \|_1
}_{
    \leq \, 
    \frac{\left(\int_0^\tau \|\G(\xi)\|_{1}^\text{(op)} \d \xi  \right)^n}{n!}
    }
\d \tau 
%%%%%%%%%%%%%%%%%%%%%%%%%%%%%%%%%%%%%
\\ & \overset{\text{i-hypothesis}}{\leq} 
%%%%%%%%%%%%%%%%%%%%%%%%%%%%%%%%%%%%%
\int_0^t
\underbrace{
    \| \G(\tau) \|^\text{(op)}_{1} \, \frac{\left(\int_0^\tau \|\G(\xi)\|_{1}^\text{(op)} \d \xi  \right)^n}{n!}
}_{
    \frac{\d}{ \d \tau} 
    \,
    \frac{\left(\int_0^\tau \| \G(\xi)\|_{1}^\text{(op)} \d \xi  \right)^{n+1}}{(n+1)!}
}
\d \tau 
%%%%%%%%%%%%%%%%%%%%%%%%%%%%%%%%%%%%%
\\ &=
%%%%%%%%%%%%%%%%%%%%%%%%%%%%%%%%%%%%%
\frac{\left(\int_0^t \|\G(\xi)\|_{1}^\text{(op)} \d \xi  \right)^{n}}{(n+1)!}. 
\end{aligned}
\end{equation*}
%----------------------------------------------------------------------------------

\end{itemize}
\end{proof}

\begin{claim} \label{Lemma_ExistenceAndUniquenessOfTheMasterEquation_CauchySequence}
The sequence $ \bigl( \bP_n(t) \bigr)_{n \in \N} $ is a Cauchy sequence in $ l^p(\Omega) $, that is, there exists a $\bP \in l^p(\Omega) $ such that $ \bP_n(t) \xlongrightarrow[\|\, \|_1]{n \to \infty} \bP(t) $
\end{claim}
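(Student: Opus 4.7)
The plan is to combine the bound from the previous claim with the triangle inequality and the completeness of $l^1(\Omega)$. First I would write the difference $\bP_{n+m} - \bP_n$ as a telescoping sum of consecutive differences, estimate each term by the bound from Claim \ref{Lemma_ExistenceAndUniquenessOfTheMasterEquation_EstimatingDifferenceBetweenTwoConsecutiveElements}, and then recognize the resulting sum as the tail of the exponential series.

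Concretely, for $n, m \in \N$ I would argue
\begin{equation*}
\begin{aligned}
\| (\bP_{n+m} - \bP_n)(t)\|_1
&=
\Bigl\| \sum_{k=n+1}^{n+m} (\bP_k - \bP_{k-1})(t) \Bigr\|_1
\leq
\sum_{k=n+1}^{n+m}
\underbrace{
\| (\bP_k - \bP_{k-1})(t) \|_1
}_{
\leq \frac{\bigl(\int_0^t \|\G(\tau)\|_1^{\text{(op)}} \d\tau\bigr)^k}{k!}
}
\\
&\leq
\sum_{k=n+1}^{n+m}
\frac{\bigl(\int_0^t \|\G(\tau)\|_1^{\text{(op)}} \d\tau\bigr)^k}{k!}
\leq
\sum_{k=n+1}^{\infty}
\frac{\bigl(\int_0^t \|\G(\tau)\|_1^{\text{(op)}} \d\tau\bigr)^k}{k!}.
\end{aligned}
\end{equation*}
Since $\|\G\|_1^{\text{(op)}} < \infty$ and $t$ is fixed, the quantity $x := \int_0^t \|\G(\tau)\|_1^{\text{(op)}} \d\tau$ is finite, so $\sum_{k=0}^{\infty} x^k/k! = \e^x < \infty$. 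The tail of a convergent series tends to zero, so the right hand side tends to zero uniformly in $m$ as $n \to \infty$. This shows that $(\bP_n(t))_{n \in \N}$ is Cauchy in $l^1(\Omega)$.

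Since $l^1(\Omega)$ is a Banach space, completeness provides a limit $\bP(t) \in l^1(\Omega)$ with $\bP_n(t) \xlongrightarrow[\|\cdot\|_1]{n \to \infty} \bP(t)$, which finishes the proof. There is no real obstacle here — all the work has already been done in the previous two claims; this step only collects the geometric-type bound into a convergent majorant and invokes completeness of $l^1$.
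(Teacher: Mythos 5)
Your argument is correct and is essentially identical to the paper's proof: the same telescoping decomposition, the same term-by-term bound from the previous claim, recognition of the sum as a tail of the exponential series, and completeness of $l^1(\Omega)$ to obtain the limit. No differences worth noting.
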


\begin{proof}
%----------------------------------------------------------------------------------
\begin{equation*}
\begin{aligned}
\left\|
\bigl(\bP_{n+m} - \bP_{n} \bigr)(t)
\right \|_1
&=
\left\|
\sum\limits_{k = n+1}^{n+m}
\bigl(\bP_{k} - \bP_{k-1} \bigr)(t)
\right \|_1
\leq
\sum\limits_{k=n+1}^{n+m}
\left\|
\bigl(\bP_{k} - \bP_{k-1} \bigr)(t)
\right\|_1
%%%%%%%%%%%%%%%%%%%%%%%%%%%%%%%%%%%%%
\\ 
%%%%%%%%%%%%%%%%%%%%%%%%%%%%%%%%%%%%%
&\overset{\ref{Lemma_ExistenceAndUniquenessOfTheMasterEquation_EstimatingDifferenceBetweenTwoConsecutiveElements}}{\leq} 
%%%%%%%%%%%%%%%%%%%%%%%%%%%%%%%%%%%%%
\sum\limits_{k=n}^{n+m-1}
\|
\bP_0 
\|_1
\,
\frac{\left(\int_0^t \|\G(\xi)\|_{1}^\text{(op)} \d \xi  \right)^{k}}{(k)!}
%%%%%%%%%%%%%%%%%%%%%%%%%%%%%%%%%%%%%
\\ &\leq
%%%%%%%%%%%%%%%%%%%%%%%%%%%%%%%%%%%%%
\exp{
\left[
\int_0^t \|\G(\xi)\|_{1}^\text{(op)} \d \xi  
\right]
}
<
\infty, \text{   and hence }
%%%%%%%%%%%%%%%%%%%%%%%%%%%%%%%%%%%%%
\\ 
%%%%%%%%%%%%%%%%%%%%%%%%%%%%%%%%%%%%%
\left\|
\bigl(\bP_{n+m} - \bP_{n} \bigr)(t)
\right \|_1
&\xlongrightarrow{n, m \to \infty}
0
\end{aligned}
\end{equation*}
%----------------------------------------------------------------------------------
\end{proof}

%#############################################################################################
% end of "\subsection{Existence and uniqueness of the master equation}"

%#############################################################################################
\subsection{Properties of the master equation and its generator} \label{Section_PropertiesOfTheMasterEquationAndItsGenerator}

%%%%%%%%%%%%%%%%%%%%%%%%%%%%%%%%%%%%%%%%%%%%%%%%
%%%%%%%%%%%%%%%%%%%%%%%%%%%%%%%%%%%%%%%%%%%%%%%%
%%%%%%%%%%%%%%%%%%%%%%%%%%%%%%%%%%%%%%%%%%%%%%%%

\begin{lemma}\label{StationaryStatesCoincideWithTheKernelOfTheGenerator}[Connection between stationary states and the kernel of the generator]
\end{lemma}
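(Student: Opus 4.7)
The plan is to show the biconditional: a probability vector $\bP_* \in \ProbStates$ satisfies $\bP(t \,|\, \bP_*) = \bP_*$ for all $t \geq 0$ if and only if $\bP_* \in \Kern(\G)$. The proof will be short, because the main work has already been done in establishing that the solution operator is the uniformly continuous semigroup $\e^{t\,\G}$ given by its absolutely convergent power series, as guaranteed by $\|\G\|_{1,1}^\text{(op)} < \infty$ together with equation \eqref{Eq_FormOfTheSemiGroupOfTheMasterEquation}.

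For the implication ``stationary $\Rightarrow$ $\bP_* \in \Kern(\G)$'', I would start from the hypothesis $\e^{t\,\G}\bP_* = \bP_*$ for all $t\geq 0$, subtract $\bP_*$ on both sides, divide by $t > 0$, and take the limit $t \to 0^+$. Since $\G$ is a bounded operator on $l^1(\Omega)$, the map $t \mapsto \e^{t\,\G}\bP_*$ is differentiable at $t=0$ with derivative $\G\,\bP_*$, so the left hand side tends to $\G\,\bP_*$ while the right hand side is identically zero; hence $\G\,\bP_* = \bzero$.

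For the reverse implication, I would use the power series \eqref{SolutionOperator}: if $\G\,\bP_* = \bzero$, then $\G^k\,\bP_* = \bzero$ for every $k \geq 1$, so termwise
\begin{equation*}
\e^{t\,\G}\,\bP_* \;=\; \sum\limits_{k \in \N_0} \frac{t^k\,\G^k\,\bP_*}{k!} \;=\; \bP_* + \sum\limits_{k \geq 1} \frac{t^k \cdot \bzero}{k!} \;=\; \bP_*,
\end{equation*}
where the interchange of the sum with the action on $\bP_*$ is justified by $\|\G\|_{1,1}^\text{(op)} < \infty$, which makes the series converge absolutely in the operator norm on $l^1(\Omega)$.

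I do not anticipate any real obstacle here: the only place where care is required is making sure that the differentiation at $t=0$ in the first direction, and the termwise evaluation of the power series on $\bP_*$ in the second direction, are legitimate — both of which follow immediately from the uniform continuity of the semigroup $(\e^{t\,\G})_{t \geq 0}$ established earlier in the paper. I would conclude by noting that, combined with lemma \ref{Lemma_StrictPositivityForStationarySolutionsOfStronglyConnectedNetworks}, this gives a one-to-one correspondence between stationary probability vectors and normalized elements of $\Kern(\G) \cap \ProbStates$, and in the irreducible case there is at most one such element.
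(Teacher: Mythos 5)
Your proof is correct and follows essentially the same route as the paper: the direction $\Kern(\G)\Rightarrow$ stationary via the power series $\e^{t\,\G}=\Id+\sum_{k\geq 1}\frac{(t\,\G)^k}{k!}$ with $\G^k\bP_*=\bzero$, and the converse via $\G\,\bP_*=\lim_{t\to 0^+}\frac{\e^{t\,\G}-\Id}{t}\,\bP_*=\bzero$, both justified by $\|\G\|_{1,1}^\text{(op)}<\infty$. Your closing remark tying this to lemma \ref{Lemma_StrictPositivityForStationarySolutionsOfStronglyConnectedNetworks} is a harmless addition beyond what the paper states here.
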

When $\bp \in \Kern(\G)$, then $\e^{\G \, t} \, \bp = \left( \Id + \sum\limits_{n\geq 1} \frac{(\G \, t)^n}{n!} \right) \, \bp = \bp$, so $\bp$ is stationary. When on the other hand, $\bp$ is stationary, that is $\e^{t \, \G t} \, \bp = \bp$ for all times $t \geq 0$, we have $\G \, \bp = \lim\limits_{t \to 0^+} \frac{\e^{\G \, t} - \Id}{t} \, \bp = \bzero$, so $\bp$ lies in the kernel of $\G$.  

%%%%%%%%%%%%%%%%%%%%%%%%%%%%%%%%%%%%%%%%%%%%%%%%
%%%%%%%%%%%%%%%%%%%%%%%%%%%%%%%%%%%%%%%%%%%%%%%%
%%%%%%%%%%%%%%%%%%%%%%%%%%%%%%%%%%%%%%%%%%%%%%%%

%%%%%%%%%%%%%%%%%%%%%%%%%%%%%%%%%%%%%%%%%%%%%%%%
%%%%%%%%%%%%%%%%%%%%%%%%%%%%%%%%%%%%%%%%%%%%%%%%
%%%%%%%%%%%%%%%%%%%%%%%%%%%%%%%%%%%%%%%%%%%%%%%%
\begin{lemma}
\label{Lemma_Comparing_1NormOfMatrices_with_Vector1NormOfMatrices}
[Compatibility of the $1$-$1$-norm on $\R^{\N \times \N} $ with the $1$-norm on $\R^{\N}$]
\end{lemma}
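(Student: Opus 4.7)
The plan is to establish the compatibility statement by a direct chain of estimates, reducing everything to two observations: (i) the triangle inequality applied to each coordinate of $A\bX$ combined with a swap of the order of summation, and (ii) the elementary inequality $\sup_j a_j \leq \sum_j a_j$ for non-negative $a_j$, which relates the column-sup operator norm to the total $\ell^1$-mass of the matrix.

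More concretely, I would first unfold the definitions: for $A \in \R^{\Omega\times\Omega}$ with $\|A\|_{1,1}^{\text{(op)}} := \sum_{i,j\in\Omega}|A^{(i,j)}| < \infty$ and $\bX \in l^1(\Omega)$, write
\[
\|A\bX\|_1 = \sum_{i \in \Omega} \Big| \sum_{j\in\Omega} A^{(i,j)} X^{(j)} \Big| \leq \sum_{i\in\Omega}\sum_{j\in\Omega} |A^{(i,j)}|\,|X^{(j)}|.
\]
Since every summand is non-negative, Tonelli's theorem (or monotone convergence for counting measure) allows me to interchange the order of summation without any convergence concerns, yielding
\[
\|A\bX\|_1 \leq \sum_{j\in\Omega} |X^{(j)}| \left(\sum_{i\in\Omega} |A^{(i,j)}|\right) \leq \Big(\sup_{j\in\Omega}\sum_{i\in\Omega}|A^{(i,j)}|\Big)\,\|\bX\|_1 = \|A\|_1^{\text{(op)}}\,\|\bX\|_1.
\]
Finally, since $\sup_{j}\sum_{i}|A^{(i,j)}|$ is dominated by the double sum $\sum_{i,j}|A^{(i,j)}|$, I conclude $\|A\|_1^{\text{(op)}} \leq \|A\|_{1,1}^{\text{(op)}}$, and therefore $\|A\bX\|_1 \leq \|A\|_{1,1}^{\text{(op)}}\,\|\bX\|_1$, which is the compatibility assertion invoked in the proof of Theorem \ref{ThermodynamicLimitForMasterEquations}.

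There is essentially no obstacle here; the argument is a one-line computation once the two norms are unpacked. The only subtle point worth flagging is that the interchange of summation is justified precisely because we sum absolute values, so no hypothesis beyond $\|A\|_{1,1}^{\text{(op)}} < \infty$ is needed, and indeed the same manipulation shows the identity $\|A\|_1^{\text{(op)}} = \sup_{j}\sum_{i}|A^{(i,j)}|$ used implicitly in the nomenclature. This confirms that $\|\cdot\|_{1,1}^{\text{(op)}}$ is an admissible surrogate for the operator norm in all the estimates of Section \ref{Chapter_TheThermodynamicLimitOfTheMasterEquations}.
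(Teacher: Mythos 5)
Your proposal is correct, and it proves slightly more than the paper does: you establish $\|A\bX\|_1 \leq \bigl(\sup_{j}\sum_{i}|A^{(i,j)}|\bigr)\|\bX\|_1 \leq \|A\|_{1,1}^{\text{(op)}}\|\bX\|_1$ for an \emph{arbitrary} matrix $A$ by the triangle inequality, Tonelli, and $\sup\leq\sum$, which in one stroke also contains the column-sup characterization of $\|\cdot\|_1^{\text{(op)}}$ that the paper proves separately (lemma \ref{TheOperator1Norm}, via the same H\"older-type estimate). The paper's own proof of lemma \ref{Lemma_Comparing_1NormOfMatrices_with_Vector1NormOfMatrices} instead specializes to the generator structure of $\G$: starting from $\|\G\|_1^{\text{(op)}}=\sup_j\|\G\,\bE_j\|_1$ it computes both norms explicitly, namely $\|\G\|_1^{\text{(op)}}=2\sup_j\g_{j\to}=2\|(\g_{j\to})_j\|_\infty$ and $\|\G\|_{1,1}^{\text{(op)}}=2\sum_j\g_{j\to}=2\|(\g_{j\to})_j\|_1$, and then applies the same elementary inequality $\|\cdot\|_\infty\leq\|\cdot\|_1$. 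So the two arguments rest on the identical kernel ($\sup\leq\sum$); what your route buys is generality and self-containedness (no appeal to the generator form, no separate operator-norm lemma), while the paper's route additionally produces the explicit expressions $2\sup_j\g_{j\to}$ and $2\sum_j\g_{j\to}$, which it reuses elsewhere (e.g.\ in the section on the appropriate norm for the generator and in the remark that $\g_{n\to}\to 0$ can suffice). Your flag about Tonelli being the only subtlety is accurate: since all summands are absolute values, no hypothesis beyond $\|A\|_{1,1}^{\text{(op)}}<\infty$ is needed.
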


The operator-$1$-norm $\| A \|_1^\text{(op)}$ of an infinitely large matrix $A \in \R^{\Omega \times \Omega}$ is less or equal than the operator-$1$-$1$-norm $ \|A \|_{1, 1}^\text{(op)} :=  \sum\limits_{i, j \in \N} |A^{(i,j)}| $ of the same matrix, making the $1$-$1$-norm $\| \cdot \|_{1, 1}^\text{(op)}$ on $\R^{\N \times \N}$ compatible with the $1$-norm $\| \cdot \|_1$ on $\R^\Omega$:

\begin{proof}

%--------------------------------------------------------------------------------------------
\begin{equation}\label{Eq_Comparing_1NormOfMatrices_with_Vector1NormOfMatrices}
\begin{aligned}
\| \G \|^\text{(op)}_1
&=
%
% \| \G \|_{\infty, 1}
% %
% =
% %
\sup\limits_{j \in \N} 
\Bigl\| 
\underbrace{
\G \bE_j
}_{
\left( 
(\g_{j \to k})_{k \in \{1, \dots, j-1\}} , -\g_{j \to}, 
(\g_{j \to k})_{k \in \N_{\geq \, j}}
\right)
}
\Bigr\|_1 \\
%%%%%%%%%%%%%%%%%%%%%%%%%%%%%%%%%%%%%%%%%%%%%%%%%%%%%
%
&=
\sup\limits_{j \in \N} 
\left(
\Bigl(
\sum\limits_{k \in \N\backslash \{j\}} \g_{j \to k}
\Bigr)
+ \g_{j \to }\right) 
=
2 \, \sup\limits_{j \in \N} \g_{j \to }
=
2 \, 
\|
(\g_{j \to })_{j \in \N}
\|_\infty
\leq \\
%%%%%%%%%%%%%%%%%%%%%%%%%%%%%%%%%%%%%%%%%%%%%%%%%%%%%
%
&\leq 
2 \, 
\|
(\g_{j \to })_{j \in \N}
\|_1
=
2 \, \sum\limits_{j \in \N} \g_{j \to }
=
\sum\limits_{j \in \N} \, 
\left(
\sum\limits_{{i \in \N}\atop{i \neq j}} |\G^{(i,j)}|
\right)
+ 
|\G^{(j,j)}|
=
\sum\limits_{i,j \in \N} |\G^{(i, j)}|
= \\
%%%%%%%%%%%%%%%%%%%%%%%%%%%%%%%%%%%%%%%%%%%%%%%%%%%%%
%
&\xlongequal{\text{def}}
\| \G \|_{1, 1}^\text{(op)}. 
\end{aligned}
\end{equation}
%--------------------------------------------------------------------------------------------

This means that the norm $\| \cdot \|_{1, 1}^\text{(op)} $ on $\R^{\Omega \times \Omega}$ is compatible with  $\| \cdot \|_1 $ on $\R^{\Omega }$:

%--------------------------------------------------------------------------------------------
\begin{equation}\label{Eq_CompatibilityOf_Vector1Norm_and_1Norm}
\begin{aligned}
\|\G \, \bX \|_1
\leq
\underbrace{
\|\G  \|_1^\text{(op)}
}_{
\leq \, \|\G  \|_{1, 1}^\text{(op)}
}
\, \cdot \, \| \bX \|_1
\overset{\eqref{Eq_Comparing_1NormOfMatrices_with_Vector1NormOfMatrices}}{ \leq }
\|\G  \|_{1, 1}^\text{(op)}
\, \cdot \, 
\| \bX \|_1
\end{aligned}
\end{equation}
%--------------------------------------------------------------------------------------------

\end{proof}

%%%%%%%%%%%%%%%%%%%%%%%%%%%%%%%%%%%%%%%%%%%%%%%%
%%%%%%%%%%%%%%%%%%%%%%%%%%%%%%%%%%%%%%%%%%%%%%%%
%%%%%%%%%%%%%%%%%%%%%%%%%%%%%%%%%%%%%%%%%%%%%%%%

%%%%%%%%%%%%%%%%%%%%%%%%%%%%%%%%%%%%%%%%%%%%%%%%
%%%%%%%%%%%%%%%%%%%%%%%%%%%%%%%%%%%%%%%%%%%%%%%%
%%%%%%%%%%%%%%%%%%%%%%%%%%%%%%%%%%%%%%%%%%%%%%%%
\begin{lemma}
\label{Lemma_ApproximatingTheFullGeneratorWithGeneratorsOfFiniteSubnetworks}
[Approximating (countable infinitely large) generators with generators of finite subnetworks]
\end{lemma}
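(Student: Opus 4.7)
The plan is to prove that $\|\G - \G^{[F]}\|_{1,1}^{\text{(op)}} \xlongrightarrow{|F| \to \infty} 0$, where convergence is understood as a net over $\Fin(\Omega)$ ordered by inclusion. The key ingredient will be the assumption $\| \G \|_{1,1}^{\text{(op)}} < \infty$, which is exactly the hypothesis that the double series $\sum_{j \in \Omega} \g_{j \to}$ converges.

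First I would compute the entries of $\G - \G^{[F]}$ directly from the definition \eqref{GenFinSubNet}. For $i \neq j$, the $(i,j)$-entry equals $\g_{j \to i}$ unless both $i, j \in F$, in which case it vanishes. For $i = j \in F$, the $(j,j)$-entry equals $-\g_{j \to} + \sum_{f \in F \setminus \{j\}} \g_{j \to f} = -\sum_{f \notin F} \g_{j \to f}$, while for $j \notin F$ it is simply $-\g_{j \to}$.

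Next I would sum the absolute values. Grouping by whether $j \in F$, the off-diagonal and diagonal contributions both produce the same quantity, which collapses neatly into
\begin{equation*}
\| \G - \G^{[F]} \|_{1,1}^{\text{(op)}}
= 2 \sum\limits_{j \in F} \sum\limits_{f \notin F} \g_{j \to f}
+ 2 \sum\limits_{j \notin F} \g_{j \to}
= 2 \sum\limits_{{(j,f) \in \Omega \times \Omega}\atop{j \neq f, \, (j,f) \notin F \times F}} \g_{j \to f}.
\end{equation*}
The equality on the right is the main algebraic observation: the ``diagonal correction term'' for $j \in F$ exactly combines with the ``off-diagonal leakage'' to form the complement of $F \times F$ in the double sum.

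Finally, to conclude, I would use the fact that $\sum_{(j,f): j \neq f} \g_{j \to f} = \sum_{j} \g_{j \to} = \tfrac{1}{2} \|\G\|_{1,1}^{\text{(op)}} < \infty$ is an absolutely convergent sum indexed by the countable set $\{(j,f) \in \Omega^2 : j \neq f\}$. By the standard fact that absolutely summable series over a countable index set converge unconditionally, for every $\epsilon > 0$ there exists $F_\epsilon \in \Fin(\Omega)$ with $\sum_{(j,f) \notin F_\epsilon \times F_\epsilon, \, j \neq f} \g_{j \to f} < \epsilon/2$. Because the collection $(F \times F)_{F \in \Fin(\Omega)}$ is monotone in $F$, any $F \supseteq F_\epsilon$ satisfies $(F \times F)^c \subseteq (F_\epsilon \times F_\epsilon)^c$, and hence $\| \G - \G^{[F]} \|_{1,1}^{\text{(op)}} < \epsilon$. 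I do not anticipate a serious obstacle here; the main subtlety, and the only place that requires real care, is the bookkeeping in the entrywise computation to see that the off-diagonal leakage and the diagonal correction combine into the clean complement-of-$F \times F$ expression above. Everything else is a routine application of absolute summability.
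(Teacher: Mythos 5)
Your proposal is correct and takes essentially the same route as the paper: compute $\left\| \G - \GF \right\|_{1,1}^{\text{(op)}}$ entrywise from the definition \eqref{GenFinSubNet} and let the absolute summability $\sum_{j \in \Omega} \g_{j \to} = \tfrac{1}{2}\|\G\|_{1,1}^{\text{(op)}} < \infty$ force the tail to zero along the net of finite subsets. Your bookkeeping is in fact slightly more careful than the paper's displayed computation, which for columns $j \notin F$ omits the contributions $\g_{j \to i}$ with $i \in F$ and ends with $2\sum_{i \in \Omega\setminus F}\sum_{j \neq i}\g_{j \to i}$ rather than your complement-of-$F\times F$ sum; both quantities tend to zero under the same hypothesis, so the conclusion is unaffected.
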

The generator $\GF$ of a finite subnetwork $ F \in \Fin(\Omega)$ (defined in \eqref{GenFinSubNet}) converges in the thermodynamic limit to the generator $\G$ of the full, infinitely large network, with respect to the $1$-$1$-norm: $ \GF \xlongrightarrow[\| \cdot \|_1^\text{(op)}]{F \in \Fin(\Omega)} \G $:

%---------------------------------------------------------------------------------------
\begin{equation}
\begin{aligned}
\left \| \G - \GF \right \|_{1, 1}^\text{(op)}
&=
\sum\limits_{j \in \Omega} 
\left(
\underbrace{
\sum\limits_{{i \in \Omega}\atop{i \neq j}}
\left|
\G^{(i,j)} - (\GF)^{(i,j)}
\right|
}_{
\sum\limits_{i \in \Omega \backslash (F \cup \{j\}) } \g_{j \to i}
}
+
\underbrace{
\left|
\G^{(j,j)} - (\GF)^{(j,j)}
\right|
}_{
\sum\limits_{k \in \Omega \backslash (F \cup \{j\})  } \g_{j \to k}
}
\right)
=
2 \, \sum\limits_{i \in \Omega \backslash F}\;\; \sum\limits_{j \in \Omega \backslash\{i\}}  \g_{j \to i} \xlongrightarrow{F \in \Fin(\Omega)}
0. 
\end{aligned}
\end{equation}
%---------------------------------------------------------------------------------------
%%%%%%%%%%%%%%%%%%%%%%%%%%%%%%%%%%%%%%%%%%%%%%%%
%%%%%%%%%%%%%%%%%%%%%%%%%%%%%%%%%%%%%%%%%%%%%%%%
%%%%%%%%%%%%%%%%%%%%%%%%%%%%%%%%%%%%%%%%%%%%%%%%

% %%%%%%%%%%%%%%%%%%%%%%%%%%%%%%%%%%%%%%%%%%%%%%%%
% %%%%%%%%%%%%%%%%%%%%%%%%%%%%%%%%%%%%%%%%%%%%%%%%
% %%%%%%%%%%%%%%%%%%%%%%%%%%%%%%%%%%%%%%%%%%%%%%%%
\begin{lemma}
\label{TheOperator1Norm}
[The operator 1-norm]
\end{lemma}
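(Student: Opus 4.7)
The statement is the standard identification of the operator $1$-norm with the supremum of the column sums, namely $\| A \|_{1}^{\text{(op)}} = \sup_{j \in \Omega} \sum_{i \in \Omega} |A^{(i,j)}|$, a formula already invoked (without proof) in the argument for Lemma~\ref{Lemma_Comparing_1NormOfMatrices_with_Vector1NormOfMatrices}. My plan is to split the proof into the two inequalities ``$\leq$'' and ``$\geq$'', both of which follow from essentially one-line manipulations once the sums are rearranged correctly.

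For the upper bound ``$\leq$'', I would take an arbitrary $\bX \in l^1(\Omega)$ with $\|\bX\|_1 = 1$ and estimate
\begin{equation*}
\| A \bX \|_1 = \sum_{i \in \Omega} \Bigl| \sum_{j \in \Omega} A^{(i,j)} X^{(j)} \Bigr| \leq \sum_{i \in \Omega} \sum_{j \in \Omega} |A^{(i,j)}| \, |X^{(j)}|.
\end{equation*}
Because the double sum on the right has non-negative summands, Tonelli's theorem for counting measure allows us to interchange the order of summation, yielding
\begin{equation*}
\| A \bX \|_1 \leq \sum_{j \in \Omega} |X^{(j)}| \sum_{i \in \Omega} |A^{(i,j)}| \leq \Bigl( \sup_{j \in \Omega} \sum_{i \in \Omega} |A^{(i,j)}| \Bigr) \, \|\bX\|_1,
\end{equation*}
and the supremum over all such $\bX$ delivers ``$\leq$''.

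For the lower bound ``$\geq$'', I would test the operator on the canonical unit vectors $\bE_j$, for which $\|\bE_j\|_1 = 1$ and $A \bE_j = (A^{(i,j)})_{i \in \Omega}$, so that $\|A \bE_j\|_1 = \sum_{i \in \Omega} |A^{(i,j)}|$. Taking the supremum over $j \in \Omega$ on the left-hand side produces the reverse inequality. Combining both bounds gives the claimed equality.

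The only subtlety I anticipate is the interchange of summation in the upper-bound step, but this is harmless here since all summands are non-negative (so Tonelli applies unconditionally), and one does not need to assume a priori that $A$ be bounded: the identity holds in the extended sense, with both sides possibly infinite simultaneously. No results beyond elementary Banach-space arguments and Tonelli's theorem are needed, so the proof should be short and self-contained.
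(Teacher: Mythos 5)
Your proposal is correct and follows essentially the same route as the paper: the lower bound by testing on the unit vectors $\bE_j$, and the upper bound by the triangle inequality, interchanging the (non-negative) double sum, and pairing $\sum_j |X^{(j)}|$ against the supremum of the column sums (the paper phrases this last step as H\"older with the $\ell^\infty$ norm, which is the same estimate). Your explicit appeal to Tonelli merely makes precise an interchange the paper leaves implicit.
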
 

The operator norm induced by the $1$-norm, can be computed as the supremum over the sums of the absolute values of each column, namely 
%---------------------------------------------------------------------------------------
\begin{equation*}
\begin{aligned}
\|A\|^\text{(op)}_1
:=
\sup\limits_{\|\bX\|_1 = 1} \|A \, \bX\|_1
\xlongequal{\text{claim}}
\sup\limits_{j \in \N}
\|A \, \bE_j\|_1
= 
\sup\limits_{j \in \N} \, 
\sum\limits_{i \in \N } 
|A^{(i,j)}|
\end{aligned}
\end{equation*}
%---------------------------------------------------------------------------------------

\begin{proof}
$\sup\limits_{\|\bX\|_1 = 1} \|A \, \bX\|_1 = \sup\limits_{j \in \N}
\|A \, \bE_j\|_1$ \\

\begin{itemize}
\item[$\geq$]
This is clear, since for the sequences $\bE_j = (\overbrace{0, \dots, 0}^{j-1 \text{ times}}, 1, 0^\N)$ we have $\|\bE_j\|_1=1$ 

\item[$\leq$]
%---------------------------------------------------------------------------------------
\begin{equation*}
\begin{aligned}
\sup\limits_{\|\bX\|_1 = 1} \|A \, \bX \|_1 
&=
\textbf{}
\sum\limits_{i \in \N} 
\Bigl|
\underbrace{
(A \, \bX)^{(i)}
}_{
\sum\limits_{j \in \N} A^{(i,j)} X^{(j)}
} 
\Bigr|
\leq 
\sup\limits_{\|\bX\|_1 = 1}
\sum\limits_{j \in \N} |X^{(j)}| \, \left(  \sum\limits_{i \in \N} |A^{(i,j)}| \right) \leq \\
%%%%%%%%%%%%%%%%%%%%%%%%%%%%%5
%
&\overset{\text{Hoelder}}{\leq}
\underbrace{
\sup\limits_{\|\bX\|_1 = 1}  \|\bX\|_1 
}_{1} \, 
\underbrace{
\left\|
\left( \sum\limits_{i \in \N} |A^{(i,j)}| \right)_{j \in \N} 
\right\|_\infty
}_{
\sup\limits_{j \in \N}
\|A \, \bE_j\|_1
}
=
\sup\limits_{j \in \N}
\|A \, \bE_j\|_1. 
\end{aligned}
\end{equation*}
%---------------------------------------------------------------------------------------
\end{itemize}

\end{proof}

%---------------------------------------------------------------------------------------
\begin{equation*}
\begin{aligned}
\end{aligned}
\end{equation*}
%---------------------------------------------------------------------------------------

% %%%%%%%%%%%%%%%%%%%%%%%%%%%%%%%%%%%%%%%%%%%%%%%%
% %%%%%%%%%%%%%%%%%%%%%%%%%%%%%%%%%%%%%%%%%%%%%%%%
% %%%%%%%%%%%%%%%%%%%%%%%%%%%%%%%%%%%%%%%%%%%%%%%%

%%%%%%%%%%%%%%%%%%%%%%%%%%%%%%%%%%%%%%%%%%%%%%%%
%%%%%%%%%%%%%%%%%%%%%%%%%%%%%%%%%%%%%%%%%%%%%%%%
%%%%%%%%%%%%%%%%%%%%%%%%%%%%%%%%%%%%%%%%%%%%%%%%
\begin{lemma}
\label{lemma_ApproximatingProbabilityStates}
[Approximating probability states]
\end{lemma}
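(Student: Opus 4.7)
Based on the usage in the proof of Lemma~\ref{ASufficientConditionForTheApproximationOfTheLongTermBehaviorOfProbabilityStates} (step (ii), where the claim ``$\|\bP_* - \bP_*^{[F]}\|_1 < \epsilon$ for all $F \supseteq F_\epsilon^{(2)}$'' is attributed to this lemma), the statement to prove is that for every probability vector $\bP \in \ProbStates$, the net of truncated-and-renormalized probability vectors $\bigl(\bPF\bigr)_{F \in \Fin(\Omega) \setminus \NP}$ defined by \eqref{ProbSeq_FinSubNet} converges to $\bP$ in the thermodynamic limit, i.e. $\bPF \xlongrightarrow[\|\cdot\|_1]{|F| \to \infty} \bP$.

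My plan is to prove this by a direct computation of $\|\bP - \bPF\|_1$, splitting the sum into the contributions from indices inside $F$ and from indices outside $F$. For $n \notin F$ the summand is simply $P^{(n)}$ since $(\PF)^{(n)} = 0$ there, contributing the tail mass $\bP(\Omega \setminus F) = 1 - \bP(F)$. For $n \in F$, the summand equals $P^{(n)} \cdot \bigl|1 - \frac{1}{\bP(F)}\bigr|$, and pulling the common factor out of the sum over $F$ gives another copy of $1 - \bP(F)$. The resulting identity is
\[
\|\bP - \bPF\|_1 \;=\; 2\,\bigl(1 - \bP(F)\bigr) \;=\; 2\sum_{n \in \Omega \setminus F} P^{(n)}.
\]

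Once this identity is established, the convergence argument is immediate: since $\bP \in l^1(\Omega)$ with $\|\bP\|_1 = 1$, for any prescribed $\epsilon > 0$ one can choose a finite set $F_\epsilon \in \Fin(\Omega)$ with $\bP(F_\epsilon) > 1 - \epsilon/2$ (possible because $\bigl(\bP(F)\bigr)_{F \in \Fin(\Omega)}$ is a monotone net converging to $\|\bP\|_1 = 1$ by absolute convergence of $\sum_n P^{(n)}$), and any $F \supseteq F_\epsilon$ in $\Fin(\Omega) \setminus \NP$ then satisfies the bound. This matches the net-convergence formulation of \eqref{Eq_ThermodLimit}.

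I do not expect serious obstacles: the only technical point is keeping track of the fact that $\bPF$ is only defined when $F \notin \NP$, which is why the indexing set is restricted to $\Fin(\Omega) \setminus \NP$. Since every sufficiently large $F$ (any $F$ containing a single index $n_0$ with $P^{(n_0)} > 0$) is automatically outside $\NP$, this restriction does not obstruct the net-theoretic argument. The whole proof should occupy essentially one displayed line of algebra followed by one sentence invoking $\bP \in l^1(\Omega)$.
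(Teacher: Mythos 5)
Your proposal is correct and follows essentially the same route as the paper: the paper's proof of Lemma \ref{lemma_ApproximatingProbabilityStates} is exactly the direct computation you describe, splitting $\|\bPF - \bP\|_1$ into the contributions from $F$ and from $\Omega \backslash F$ and arriving at the identity $2\sum_{j \in \Omega \backslash F} P^{(j)} \to 0$. Your additional remarks on the restriction to $F \notin \NP$ and on choosing $F_\epsilon$ via the monotone net $\bigl(\bP(F)\bigr)_{F \in \Fin(\Omega)}$ are consistent with, and slightly more explicit than, the paper's one-line conclusion.
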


For $ \bP \in \ProbStates $ we have $ \bPF \xlongrightarrow[\|\cdot \|_1]{ |F| \to \infty } \bP $, since

%---------------------------------------------------------------------------------------
\begin{equation*}
\begin{aligned}
\| \bPF - \bP \|_1
&=
\left\| 
\frac{\bP \cdot \bone_F}{\|\bP \cdot \bone_F \|_1} - \bP
\right\|_1
=
\sum\limits_{i \in \Omega } \Bigl| \frac{(\bP \cdot \bone_F)^{(i)}}{\|\bP \cdot \bone_F \|_1} - \bP^{(i)}  \Bigr|  \\
&=
\sum\limits_{f \in F} 
\underbrace{
    \Bigl| \frac{P^{(f)}}{\sum\limits_{i \in F} P^{(i)}} - P^{(f)}  \Bigr| 
}_{
    \frac{P^{(f)}}{\sum\limits_{\beta \in F} P^{(\beta)}} \Bigl| 1 - \sum\limits_{i \in F} P^{(\beta)} \Bigr| 
}
+
\sum\limits_{i \in \Omega \backslash F} P^{(i)}
\\ &=
\underbrace{
\left( \frac{\sum\limits_{b \in F} P^{(b)}}{\sum\limits_{i \in F} P^{(\beta)}} \right)
}_{
1
}
\,\cdot \, \sum\limits_{j \in \Omega \backslash F} P^{(j)} + \sum\limits_{i \in \Omega \backslash F} P^{(i)}
\\ &=
2 \sum\limits_{j \in \Omega \backslash F} P^{(j)} \xlongrightarrow{|F| \to \infty } 0. 
\end{aligned}
\end{equation*}
%---------------------------------------------------------------------------------------

% An analogous calculation shows that we also have $ \bP^{[\N_{\leq N}]} \xlongrightarrow{N \to \infty} \bP $, that is: For all $\epsilon >0$ there exists a natural number $N_\epsilon \in \N $ such that for all $N \in \N_{>N_{\epsilon}} $ we have $ \bigl(\bP\bigr) \, (\N_{\leq N}) > 0 $ and $ \| \bP^{[\N_{\leq N}]}  - \bP \|_1 < \epsilon $. 

%%%%%%%%%%%%%%%%%%%%%%%%%%%%%%%%%%%%%%%%%%%%%%%%
%%%%%%%%%%%%%%%%%%%%%%%%%%%%%%%%%%%%%%%%%%%%%%%%
%%%%%%%%%%%%%%%%%%%%%%%%%%%%%%%%%%%%%%%%%%%%%%%%

%#############################################################################################
% end of "\subsection{Properties of the master equation and its generator} "

%#############################################################################################
\subsection{Groenwalls's inequalities} \label{Section_GroenwallssInequalities}

%%%%%%%%%%%%%%%%%%%%%%%%%%%%%%%%%%%%%%%%%%%%%%%%
%%%%%%%%%%%%%%%%%%%%%%%%%%%%%%%%%%%%%%%%%%%%%%%%
%%%%%%%%%%%%%%%%%%%%%%%%%%%%%%%%%%%%%%%%%%%%%%%%
\begin{lemma}
\label{Lemma_Groenwall_IntegralVersion}
[Groenwalls's inequality - differential version]
\end{lemma}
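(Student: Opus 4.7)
The plan is to prove the differential form of Grönwall's inequality via the classical integrating-factor trick, which turns a differential inequality into a simple monotonicity statement. Assuming the usual hypothesis that $u:[0,T]\to\R$ is differentiable (or absolutely continuous) with $\dot{u}(t) \leq \beta(t)\, u(t)$ for some non-negative, locally integrable $\beta$ and initial value $u(0)=u_0$, the target estimate is $u(t) \leq u_0 \, \exp\left(\int_0^t \beta(s) \d s\right)$.

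First I would introduce the auxiliary function $v(t) := u(t) \, \exp\left(-\int_0^t \beta(s) \d s\right)$ and compute its derivative by the product rule. The prefactor has been chosen precisely so that the linear term on the right-hand side of the assumed inequality cancels, giving $\dot{v}(t) = \bigl[\dot{u}(t) - \beta(t) \, u(t) \bigr]\, \exp\left(-\int_0^t \beta(s) \d s\right) \leq 0$. Hence $v$ is non-increasing, so $v(t) \leq v(0) = u_0$, and multiplying both sides by $\exp\left(\int_0^t \beta(s) \d s\right)$ recovers the claimed exponential bound on $u(t)$.

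An arguably cleaner alternative would be to reduce to the already-established integral version: integrating the pointwise inequality $\dot{u}(\tau) \leq \beta(\tau) \, u(\tau)$ from $0$ to $t$ immediately yields $u(t) \leq u_0 + \int_0^t \beta(\tau) \, u(\tau) \d\tau$, which is exactly the hypothesis of Lemma~\ref{Lemma_Groenwall_IntegralVersion}. A single application of that lemma then produces the desired exponential bound with no further work, which has the additional advantage of making the present statement a direct corollary.

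The main obstacle is not really in the estimates themselves but in pinning down the regularity class one wants to state the lemma in; if $u$ is merely absolutely continuous (which is the setting one typically encounters when the differential inequality comes from a Banach-space-valued ODE as in Section~\ref{Section_ExistenceAndUniquenessOfTheSolution}), the product-rule computation for $v$ still goes through via the fundamental theorem of calculus for absolutely continuous functions, and the reduction to the integral version is insensitive to the distinction. I would therefore state the hypothesis in the absolutely continuous setting so that the lemma is directly applicable where the paper needs it.
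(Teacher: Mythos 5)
Your main argument---introducing $v(t) := u(t)\,\e^{-\int_a^t \beta}$, checking $\dot v \leq 0$ by the product rule, and concluding $v(t) \leq v(a)$---is exactly the paper's proof, which differentiates the quotient $u(t)/\e^{\int_a^t \beta}$ and uses its monotonicity, so the approaches coincide. Your alternative reduction to the integral version is also sound for constant $\alpha = u(a)$, though it needs the integral lemma's hypothesis $\beta \geq 0$, whereas the integrating-factor computation (and the paper's statement) imposes no sign condition on $\beta$.
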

Let $u, \, \beta :[a, b] \to \R $ be continuous function and let $u$ on top be differentiable, If $u'(t) \leq \beta(t) \,\cdot \, u(t) \Rightarrow u(t) \leq u(a) \,\cdot\, \e^{\int_a^t \beta} $. 

\begin{proof}
It is enough to show that

%---------------------------------------------------------------------------------------
\begin{equation*}
\begin{aligned}
\frac{d}{dt}
\left( 
\frac{u(t)}{\e^{\int_a^t \beta}}
\right)
\leq
0, 
\end{aligned}
\end{equation*}
%---------------------------------------------------------------------------------------

since then we have 

%---------------------------------------------------------------------------------------
\begin{equation*}
\begin{aligned}
\left( 
\frac{u(t)}{\e^{\int_a^t \beta}}
\right)
\overset{\text{decreasing}}{\leq}
\left( 
\frac{u(t)}{\e^{\int_a^t \beta}}
\right)\Bigl|_{t=a}
=
\frac{u(a)}{1}, 
\end{aligned}
\end{equation*}
%---------------------------------------------------------------------------------------

and indeed we have: 

%---------------------------------------------------------------------------------------
\begin{equation*}
\begin{aligned}
\frac{d}{dt}
\left( 
\frac{u(t)}{ \e^{\int_a^t \beta} }
\right)
&=
\frac{u' \,  \e^{\int_a^t \beta} - u \, \beta \,  \e^{\int_a^t \beta}}{\left( \e^{\int_a^t \beta} \right)^2}
=
\underbrace{
\e^{-\int_a^t \beta}
}_{
\geq \, 0
}
\, \cdot \, 
\underbrace{
\left( u' - \beta \, u \right) 
}_{
\leq \, 0
}
\leq 
\, 0. 
\end{aligned}
\end{equation*}
%---------------------------------------------------------------------------------------

\end{proof}

%%%%%%%%%%%%%%%%%%%%%%%%%%%%%%%%%%%%%%%%%%%%%%%%
%%%%%%%%%%%%%%%%%%%%%%%%%%%%%%%%%%%%%%%%%%%%%%%%
%%%%%%%%%%%%%%%%%%%%%%%%%%%%%%%%%%%%%%%%%%%%%%%%

%%%%%%%%%%%%%%%%%%%%%%%%%%%%%%%%%%%%%%%%%%%%%%%%
%%%%%%%%%%%%%%%%%%%%%%%%%%%%%%%%%%%%%%%%%%%%%%%%
%%%%%%%%%%%%%%%%%%%%%%%%%%%%%%%%%%%%%%%%%%%%%%%%
\begin{lemma} \label{Lemma_Groenwall_IntegralVersion} 
[Groenwall's inequality - integral version version]
\end{lemma}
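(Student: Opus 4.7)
The plan is to reduce the integral form to the differential version just proven. Assume the hypothesis is that $u, \alpha, \beta \colon [a,b] \to \R$ are continuous with $\beta \geq 0$ and
\[
u(t) \;\leq\; \alpha(t) + \int_a^t \beta(\tau)\, u(\tau)\, \d\tau,
\]
and that the conclusion sought is $u(t) \leq \alpha(t) + \int_a^t \alpha(\tau)\beta(\tau)\exp\bigl(\int_\tau^t \beta\bigr)\,\d\tau$, which in the constant- or non-decreasing-$\alpha$ case collapses to $u(t) \leq \alpha(t)\exp\bigl(\int_a^t \beta\bigr)$. The latter form is precisely what is invoked inside the proof of Theorem~\ref{ThermodynamicLimitForMasterEquations}.

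The first step would be to introduce the auxiliary function $v(t) := \int_a^t \beta(\tau)\, u(\tau)\,\d\tau$, which is differentiable with $v(a)=0$ and $v'(t) = \beta(t)\, u(t)$. Substituting the hypothesis $u(t) \leq \alpha(t) + v(t)$ and using $\beta \geq 0$ yields the linear differential inequality
\[
v'(t) - \beta(t)\, v(t) \;\leq\; \beta(t)\, \alpha(t).
\]
The second step is to multiply through by the integrating factor $\exp\bigl(-\int_a^t \beta\bigr)$ and recognize the left-hand side as $\tfrac{\d}{\d t}\bigl(v(t)\,\exp(-\int_a^t \beta)\bigr)$. Integrating from $a$ to $t$ and using $v(a)=0$ gives the pointwise bound on $v(t)$, which plugged back into $u(t) \leq \alpha(t) + v(t)$ delivers the claimed inequality.

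The main obstacle is mostly presentational: one must decide whether to (a) directly invoke the differential Gronwall lemma \ref{Lemma_Groenwall_IntegralVersion} applied to the auxiliary function $w(t) := v(t) + \alpha(t)$ together with a monotonicity assumption on $\alpha$, or (b) run the integrating-factor argument from scratch as above. Approach (b) is cleaner because the differential version as stated requires the \emph{strict} differential inequality $u'(t) \leq \beta(t) u(t)$, whereas here we have an inhomogeneous term $\beta(t)\alpha(t)$ on the right; forcing it into homogeneous form needs $\alpha$ non-decreasing (which is fine in the application inside Theorem~\ref{ThermodynamicLimitForMasterEquations}, where $\alpha(t)$ is affine with non-negative slope, but would limit generality). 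I would therefore present (b) for the general integral statement and note the immediate simplification to $u(t) \leq \alpha(t)\exp\bigl(\int_a^t \beta\bigr)$ in the monotone case, matching the form actually used later in the paper.
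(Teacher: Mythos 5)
Your proposal is correct and matches the paper's own argument: the paper likewise works with $v(s)=\int_a^s\beta\,u$, applies the integrating factor $\e^{-\int_a^s\beta}$ to obtain $\tfrac{\d}{\d s}\bigl(\e^{-\int_a^s\beta}\,v(s)\bigr)\leq\alpha(s)\,\beta(s)\,\e^{-\int_a^s\beta}$, integrates, substitutes back into $u\leq\alpha+v$, and only then invokes the non-decreasing hypothesis on $\alpha$ to collapse the bound to $u(t)\leq\alpha(t)\,\e^{\int_a^t\beta}$. Your choice of route (b) over invoking the differential version is exactly what the paper does, so no gap remains.
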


Let $I$ be an interval,  $I \in \{[a, b], [a, b)\}$, with $a \in \R, \, b \in \R_{> \, a} \cup \{\infty\}$ and let $\alpha, \beta, \, u : I \to \R $ be continuous functions \textbf{with $\alpha$ being non-decreasing}. Further, let $\beta  \geq 0 $ and $ u(t) \leq \alpha(t) + \int_a^t \beta \, u $ for all $ t \in I $. Then the following holds true:

%---------------------------------------------------------------------------------------
\begin{equation*}
\begin{aligned}
u(t) 
\leq 
\alpha(t) \,\cdot \,  \e^{\int_a^t \beta}
\end{aligned}
\end{equation*}
%---------------------------------------------------------------------------------------

\begin{proof}
We have the following estimation for all $s \in [a, s)$: 
\begin{itemize}
\item[(i)]
%---------------------------------------------------------------------------------------
\begin{equation*}
\begin{aligned}
\frac{\d}{\d s}
\left(
\e^{-\int_a^s \beta} \, \int_a^s \beta u
\right)
&=
\left(
- \beta(s) \, \e^{-\int_a^s \beta} \, \int_a^s \beta \, u
+
\e^{-\int_a^s \beta} \, \beta(s) \, u(s) 
\right) 
=
\e^{-\int_a^s \beta} \, \beta(s) 
\underbrace{
\left[
u(s) - \int_a^s \beta \, u
\right]
}_{
\leq \, \alpha(s)
} \\
%%%%%%%%%%%%%%%%%%%%%%%%%%%%%%%%%%%%%%%%%%%%%%%%%%%%%%%%%%%%
%
&\overset{\text{assumption}}{\leq} 
\alpha(s) \, \beta(s) \, \e^{- \int_a^s \beta }
\end{aligned}
\end{equation*}
%---------------------------------------------------------------------------------------

\item[(ii)]
%---------------------------------------------------------------------------------------
\begin{equation*}
\begin{aligned}
\left(
\e^{-\int_a^t \beta} \, \int_a^t \beta u
\right)
&=
\bigintssss_a^t \frac{\d }{\d s} 
\underbrace{
\left(
\e^{-\int_a^t \beta} \, \int_a^t \beta u
\right)
}_{
\leq \, \alpha(s) \, \beta(s) \,\e^{-\int_a^s \beta}
}
\d s  \leq \\
%%%%%%%%%%%%%%%%%%%%%%%%%%%%%%%%%%%%%%%%%%%%%%%%%%%%%%%
%
&\overset{\text{(i)}}{\leq} 
\bigintssss_a^t
\alpha(s) \, \beta(s) \,\e^{-\int_a^s \beta}
\d s 
\end{aligned}
\end{equation*}
%---------------------------------------------------------------------------------------

\item[iii)]

%---------------------------------------------------------------------------------------
\begin{equation*}
\begin{aligned}
\int_a^t \beta \, u
=
\e^{ \int_a^t \beta } \,
\underbrace{
\left( 
\e^{-\int_a^t \beta} \, \int_a^t \beta \, u
\right)
}_{
\leq \, \int_a^t \alpha(s) \, \beta(s) \, \e^{-\int_a^s \beta } \d s
}
\overset{\text{(ii)}}{\leq} 
\int_a^t \alpha(s) \, \beta(s) 
\underbrace{
\left(
\e^{\int_a^t \beta} \, \e^{-\int_a^s \beta}
\right)
}_{
\e^{\int_s^t \beta}
}
=
\int_a^t 
\alpha(s) \, \beta(s) \, 
\e^{\int_s^t \beta}
\d s
\end{aligned}
\end{equation*}
%---------------------------------------------------------------------------------------

\item[iv)]
%---------------------------------------------------------------------------------------
\begin{equation*}
\begin{aligned}
u(t) 
&\overset{\text{assumption}}{\leq}
\alpha(t)
\,+\, 
\underbrace{
\left(
\int_a^t \beta \, u
\right)
}_{
\leq \, 
\int_a^t 
\alpha(s) \, \beta(s) \, 
\e^{\int_s^t \beta}
\d s
}
\overset{\text{(iii)}}{\leq }
\alpha(t)
+
\int_a^t 
\underbrace{
\alpha(s)
}_{
\leq \, \alpha(t) 
}\, \beta(s) \, 
\e^{\int_s^t \beta}
\d s \leq \\
%%%%%%%%%%%%%%%%%%%%%%%%%%%%%%%%%%%%%%%%%%%%%%%%%%%%%%%%%%%%%%%
%
&\overset{\text{$\alpha$ non-decreasing}}\leq 
\alpha(t) \, \cdot \, 
\left[
1 + 
\int_a^t  
\underbrace{
\beta(s) \, 
\e^{-\int_t^s \beta}
}_{
(-1) \, \frac{\d}{\d s}
\e^{-\int_t^s \beta}
}
\d s
\right]
=
\alpha(t) \, \cdot \, 
\left[
1 + 
\underbrace{
\left(
(-1) \,\cdot \,
\e^{-\int_t^s \beta} \right)\Bigl|_{s=a}^{s=t}
}_{
(-1) + 
\e^{\int_a^t \beta}
}
\right] \\
%%%%%%%%%%%%%%%%%%%%5
%
=&
\alpha(t) \, \cdot \, \e^{\bigintssss_a^t \beta}
\end{aligned}
\end{equation*}
%---------------------------------------------------------------------------------------

\end{itemize}
%---------------------------------------------------------------------------------------
\begin{equation*}
\begin{aligned}
\end{aligned}
\end{equation*}
%---------------------------------------------------------------------------------------
\end{proof}

%%%%%%%%%%%%%%%%%%%%%%%%%%%%%%%%%%%%%%%%%%%%%%%%
%%%%%%%%%%%%%%%%%%%%%%%%%%%%%%%%%%%%%%%%%%%%%%%%
%%%%%%%%%%%%%%%%%%%%%%%%%%%%%%%%%%%%%%%%%%%%%%%%

%#############################################################################################
% end of "\subsection{Groenwalls's inequalities} "

%%%%%%%%%%%%%%%%%%%%%%%%%%%%%%%%%%%%%%%%%%%%%%%%
%%%%%%%%%%%%%%%%%%%%%%%%%%%%%%%%%%%%%%%%%%%%%%%%
%%%%%%%%%%%%%%%%%%%%%%%%%%%%%%%%%%%%%%%%%%%%%%%%
\begin{MyDef}
%----------------------------------------------------------------------------------
\begin{equation}
\begin{aligned}
\bX^{\pm}
:=
\left(
\max\{0, X^{(\omega)}\}
\right)_{\omega \in \Omega }
\end{aligned}
\end{equation}
%----------------------------------------------------------------------------------

In particular, we have: 

%----------------------------------------------------------------------------------
\begin{equation}
\begin{aligned}
\bX
&=
\bX^{+} - \bX^{-}
=
\begin{cases}
\bzero &\text{  , if $\bX = \bzero$ }
%%%%%%%%%%%%%
\\
%%%%%%%%%%%%%
\pm \| \bX \|_1 \cdot \frac{\bX}{\| \bX \|_1 } &\text{  , if $\bX \in (\R_{\geq \, 0})^{\Omega} \cup (\R_{\leq \, 0})^{\Omega} $ }
%%%%%%%%%%%%%
\\
%%%%%%%%%%%%%
\| \bX^{+} \|_1 \cdot \frac{\bX^{+}}{\| \bX^{+} \|_1 }
-
\| \bX^{-} \|_1 \cdot \frac{\bX^{-}}{\| \bX^{-} \|_1 }
&\text{  , else  }
\end{cases}
\end{aligned}
\end{equation}
%----------------------------------------------------------------------------------

\end{MyDef}
\subsection{Infinite dimensional semigroups}\label{SectionInfiniteDimensionalSemigroups}

%%%%%%%%%%%%%%%%%%%%%%%%%%%%%%%%%%%%%%%%%%%%%%%%
%%%%%%%%%%%%%%%%%%%%%%%%%%%%%%%%%%%%%%%%%%%%%%%%
%%%%%%%%%%%%%%%%%%%%%%%%%%%%%%%%%%%%%%%%%%%%%%%%

\begin{lemma}[Convergence of positive, irreducible semigroups on $l^p(\Omega)$ \cite{arendt2020positive}]\label{ConvergenceOfPositiveIrreducibleSemigroupsOnLp} $ $ \\ 
Let $(\e^{t \, \G})_{t \geq 0}$ be a bounded, positive and irreducible $C_0$-semigroup on $l^p(\Omega)$ for $p \in [1, \infty)$ with generator $\G$, which is eventually norm continuous. If the kernel of the generator is non-trivial $(\Kern(\G) \neq \{0\})$, then there exist a (element-wise) strictly positive element $\bX_* \in l^p(\Omega), \bX > \bzero $ and a strictly positive functional in the dual space $\bX_*' \in (l^p)'(\Omega)$ such that $ \e^{t \, \G} \bX  \xlongrightarrow[\| \cdot \|_1]{t \to \infty } (\bX_* \bigotimes \bX_*') \bX := \bX_* \cdot  \bX'_*(\bX) $ for all $\bX \in l^p(\Omega) $. 

In the case of  the master equation, this means that 

%----------------------------------------------------------------------------------
\begin{equation*}
\begin{aligned}    
\e^{t \, \G} &\xlongrightarrow[\text{strongly}]{t \to \infty} \bP_* \bigotimes \bone' \text{  , that is }
%%%%%%%%%%%%%%%%%%%%%%%%%%%%%%%%%%%%%%%%%%%%%%%%
\\
%%%%%%%%%%%%%%%%%%%%%%%%%%%%%%%%%%%%%%%%%%%%%%%%
\e^{t \, \G} \, \bX_0 &\xlongrightarrow[\| \cdot \|_1]{t \to \infty} \bP_* \cdot  \bone' (\bX_0) 
\end{aligned}
\end{equation*}
%----------------------------------------------------------------------------------

W.l.o.g, we can assume that $\bX_*$ has norm one and call it $\bP_* \in \ProbStates $. The fact that $\bX_*' = \bone'$ follows from normalization: 

\begin{itemize}
\item[$ \bX_0 \in \ProbStates$]

%----------------------------------------------------------------------------------
\begin{equation*}
\begin{aligned}    
\left|
\underbrace{
    \sum\limits_{i \in \Omega } X_*'^{(i)} X_0^{(i)}
}_{
    \bX_*'(\bX_0)
}
\right|
&=
\left| \bX_*'(\bX_0) \right| \, \cdot \, 
\underbrace{
    \| \bP_* \|_1
}_{
    1
}
=
\| \bP_* \, \bX_*'(\bX_0) \|_1 
=
\|
\underbrace{
(\bP_*\bigotimes \bone') \bX_0
}_{
\lim\limits_{t \to \infty} \e^{t \, \G} \, \bX_0
}
\|_1 
%%%%%%%%%%%%%%%%%%%%%%%%%%%%%%%%%%%%%%%%%%%%%
\\ &=
%%%%%%%%%%%%%%%%%%%%%%%%%%%%%%%%%%%%%%%%%%%%%
\lim\limits_{t \to \infty}
\underbrace{
\| \e^{t \, \G} \, \bX_0 \|_1
}_{
1
}
=
1. 
\end{aligned}
\end{equation*}
%----------------------------------------------------------------------------------

By choosing $\bX_0 = \bE_{m} $, we can conclude that $X_*'^{(m)} = 1$ for all $m \in \Omega$.

\item[$ \pm\bX_0 \geq \bzero $]
With the limit operator $\LL$ we have:

%----------------------------------------------------------------------------------
\begin{equation*}
\begin{aligned}    
\LL(\bX)
&=
\LL \left(
\| \bX \|_1
 \, \frac{\pm \, \bX}{\|\bX \|_1}
 \right)
=
\pm \, \| \bX \|_1
\underbrace{
    \LL \Bigl(
    \overbrace{
    \frac{\bX}{\|\bX \|_1}
    }^{\in \ProbStates }
    \Bigr)
}_{
\bP_*
}
%%%%%%%%%%%%%%%%%%%%%%%%%%%%%%%%%%%%%%%%%%%%%
\\ &=
%%%%%%%%%%%%%%%%%%%%%%%%%%%%%%%%%%%%%%%%%%%%%
\pm \, \bP_*
\underbrace{
\sum\limits_{i \in \Omega} X^{(i)}
}_{
\bone'(\bX)
}
=
\pm \, \bP_* \cdot \bone'(\bX). 
\end{aligned}
\end{equation*}
%----------------------------------------------------------------------------------

\item[$ \bX_0 \in l^p(\Omega ) $]

%----------------------------------------------------------------------------------
\begin{equation*}
\begin{aligned}    
\LL(\bX)
&=
\LL \left(
\|\bX_{+}\|_1 \frac{\bX_{+}}{\|\bX_{+}\|_1} 
-
\|\bX_{-}\|_1 \frac{\bX_{-}}{\|\bX_{-}\|_1} 
\right)
=
\|\bX_{+}\|_1 \, \cdot \, 
\underbrace{
\LL \left(
\frac{\bX_{+}}{\|\bX_{+}\|_1} 
\right)
}_{
\bP_*
}
-
\|\bX_{-}\|_1 \, \cdot \, 
\underbrace{
\LL \left(
\frac{\bX_{-}}{\|\bX_{-}\|_1} 
\right)
}_{
\bP_*
}
%%%%%%%%%%%%%%%%%%%%%%%%%%%%%%%%%%%%%%%%%%%%%
\\ &=
%%%%%%%%%%%%%%%%%%%%%%%%%%%%%%%%%%%%%%%%%%%%%
\bP_* \, \cdot \, 
\sum\limits_{i \in \Omega}
\underbrace{
\left(
X_{+}^{(i)} - X_{-}^{(i)}
\right)
}_{
X^{(i)}
}
=
\bP_* \, \cdot \, \bone'( \bX_0)
\end{aligned}
\end{equation*}
%----------------------------------------------------------------------------------

\end{itemize}

\end{lemma}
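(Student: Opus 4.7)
The plan is to reduce the claim to the strong-operator convergence of the semigroup to a projection of rank one, and then to identify the structure of that projection. First I would invoke the results developed in Section \ref{Section_ThePositivityOfTheSolutionOperatorForAStronglyConnectedNetwork} and Lemma \ref{Lemma_StrictPositivityForStationarySolutionsOfStronglyConnectedNetworks} to conclude that, under irreducibility, $\Kern(\G)$ has dimension at most one, and if it is non-trivial it is spanned by an element-wise strictly positive vector $\bX_*\in l^p(\Omega)$. Normalizing, I may assume $\|\bX_*\|_1=1$, and in the master equation setting I will write $\bX_*=\bP_*$.

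Next I would reduce convergence of $\e^{t\,\G}$ to a spectral statement. Exactly as in the proof of Theorem \ref{Thm_TheTimeLimitForCountableInfiniteDimensionalMasterEquationsExistsForAllNetworksWhichSAreIrreducibleAndPositiveRecurrent}, eventual norm continuity of $(\e^{t\,\G})_{t\geq 0}$ allows me to apply the spectral mapping theorem (Lemma \ref{Lemma_VersionOfTheSpectralMappingThm}) and the Katznelson–Tzafriri theorem (Lemma \ref{Lemma_Katznelson_Tzafriri_1984}) to the discrete operator $Q_\tau:=\e^{\tau\,\G}$ for a fixed $\tau>0$. The key ingredient is that positivity plus irreducibility force the peripheral spectrum of $\G$ on the imaginary axis to reduce to $\{0\}$ (Lemmas \ref{Lemma_BoundarySpectrumContainsOnlyTheSpectralBound} and \ref{Lemma_GrowthBoundAndSpectralBoundOfPositiveSemigroups}), so that the unimodular spectrum of $Q_\tau$ equals $\{1\}$. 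Combined with boundedness of the semigroup, this yields norm-convergence of $Q_\tau^{n}-Q_\tau^{n+1}\to 0$ and therefore strong convergence of $\e^{t\,\G}\,\bX$ as $t\to\infty$ for every $\bX$ in $\overline{\Image(\Id-Q_\tau)}$; on $\Kern(\Id-Q_\tau)=\Kern(\G)$ convergence is trivial.

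Using Theorem \ref{Thm_ExplicitFormOfTheMeanErgodicSubspaceInl1} for $Q_\tau$, the whole space $l^p(\Omega)$ decomposes as $\Kern(\G)\oplus\overline{\Image(\Id-Q_\tau)}$, hence the limit $\LL\bX:=\lim_{t\to\infty}\e^{t\,\G}\bX$ exists for every $\bX\in l^p(\Omega)$. The operator $\LL$ is a bounded projection whose image is $\Kern(\G)=\Span(\bX_*)$, so it has rank one. Any rank-one bounded projection onto $\Span(\bX_*)$ has the form $\LL=\bX_*\otimes\bX_*'$ for a unique $\bX_*'\in(l^p)'(\Omega)$, and $\bX_*'$ must be positive because $\LL$ is the strong limit of positive operators. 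Irreducibility then upgrades positivity to strict positivity of $\bX_*'$: if $\bX_*'(\bE_m)=0$ for some $m$, then $\LL\bE_m=0$, so $\e^{t\,\G}\bE_m$ would converge to zero while its $1$-norm equals $1$ for all $t$, a contradiction.

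Finally, to obtain the explicit form $\bX_*'=\bone'$ in the master-equation case, I would use that $\e^{t\,\G}$ preserves the total probability: for any probability vector $\bX_0$ one has $\|\LL\bX_0\|_1=\lim_t\|\e^{t\,\G}\bX_0\|_1=1$, so $|\bX_*'(\bX_0)|=1$; choosing $\bX_0=\bE_m$ gives $\bX_*'^{(m)}=1$ for all $m\in\Omega$. The extension from probability vectors to all of $l^1(\Omega)$ is then obtained by splitting $\bX=\bX_+-\bX_-$ and using linearity, exactly as sketched in the excerpt. The delicate step, where I expect the main obstacle to lie, is the spectral analysis: verifying that irreducibility together with eventual norm continuity really forces $\sigma(Q_\tau)\cap\{|z|=1\}\subseteq\{1\}$, since this is where the abstract Perron–Frobenius-type theorem for positive $C_0$-semigroups (on the non-reflexive space $l^1(\Omega)$) is used in an essential way, and without it neither Katznelson–Tzafriri nor the rank-one structure of $\LL$ would follow.
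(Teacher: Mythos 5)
Your proposal is correct in substance for the case that actually matters here, but it takes a genuinely different route from the paper. The paper does not prove the abstract convergence statement at all: it imports it from \cite{arendt2020positive} (Corollary 3.7), and the only thing it proves inside the lemma is the identification $\bX_*' = \bone'$ for the master equation, via norm preservation on probability vectors, the choice $\bX_0 = \bE_m$, and the splitting $\bX = \bX_{+} - \bX_{-}$ — exactly the computation you reproduce at the end. What you do differently is to reconstruct the convergence itself from the paper's internal machinery (spectral mapping theorem, Katznelson--Tzafriri, the mean-ergodic decomposition), which is in effect the proof of Theorem \ref{Thm_TheTimeLimitForCountableInfiniteDimensionalMasterEquationsExistsForAllNetworksWhichSAreIrreducibleAndPositiveRecurrent}, and then to observe that the strong limit $\LL$ is a bounded projection with one-dimensional range $\Kern(\G) = \Span(\bP_*)$, hence of the form $\bP_* \otimes \bX_*'$ with $\bX_*'$ positive, strictly positive by the norm-preservation argument. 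This is a clean, self-contained derivation of the lemma in the stochastic $l^1$ setting that avoids the external citation, and there is no circularity, since the paper's proof of that theorem nowhere uses the present lemma.

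Two points need repair. First, the whole-space decomposition $l^1(\Omega) = \Kern(\Id - Q_\tau) \oplus \overline{\Image(\Id - Q_\tau)}$ is not Theorem \ref{Thm_ExplicitFormOfTheMeanErgodicSubspaceInl1}, which only identifies the mean ergodic subspace; you also need Theorem \ref{Thm_Irreducible_PositiveRecurrent_l1IsMeanErgodic}, and its hypothesis of positive recurrence has to be extracted from the assumption $\Kern(\G) \neq \{0\}$: by Lemma \ref{Lemma_StrictPositivityForStationarySolutionsOfStronglyConnectedNetworks} a non-trivial kernel of an irreducible generator is spanned by a strictly positive vector, whose normalization is a stationary probability vector, and Lemma \ref{Lemma_PositiveRecurrenceForCTMC} then gives positive recurrence. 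Second, your argument establishes the lemma only in the master-equation case: the mean-ergodic theorems are proved for stochastic matrices on $l^1$, the vanishing of the spectral bound in Lemma \ref{Lemma_GrowthBoundAndSpectralBoundOfPositiveSemigroups}, the strict positivity of the kernel element, and your strict-positivity argument for $\bX_*'$ (via $\| \e^{t\,\G}\bE_m\|_1 = 1$) all use column-stochasticity, so the statement for general bounded, positive, irreducible, eventually norm continuous semigroups on $l^p(\Omega)$, $p \in [1, \infty)$, is not covered — that generality is precisely what the paper delegates to the cited literature. For the use made of the lemma in this paper the restriction is harmless, but it should be made explicit.
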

%%%%%%%%%%%%%%%%%%%%%%%%%%%%%%%%%%%%%%%%%%%%%%%%
%%%%%%%%%%%%%%%%%%%%%%%%%%%%%%%%%%%%%%%%%%%%%%%%
%%%%%%%%%%%%%%%%%%%%%%%%%%%%%%%%%%%%%%%%%%%%%%%%

%%%%%%%%%%%%%%%%%%%%%%%%%%%%%%%%%%%%%%%%%%%%%%%%
%%%%%%%%%%%%%%%%%%%%%%%%%%%%%%%%%%%%%%%%%%%%%%%%
%%%%%%%%%%%%%%%%%%%%%%%%%%%%%%%%%%%%%%%%%%%%%%%%
\begin{lemma}[A version of the spectral mapping theorem]\label{Lemma_VersionOfTheSpectralMappingThm}

For a strongly continuous semigroup $(\e^{t \, \G})_{t \geq 0}$ , which is eventually norm continuous, we have:

%----------------------------------------------------------------------------------
\begin{equation*}
\begin{aligned}    
\sigma\left(
\e^{t \, \G}
\right) \backslash \{0\}
&=
\e^{t \, \sigma(\G)}. 
\end{aligned}
\end{equation*}
%----------------------------------------------------------------------------------

In particular, we have $\sigma(e^{t \, \G}) \in \left \{\e^{t \, \sigma(\G)}, \e^{t \, \sigma(\G)} \cup \{0\}  \right \} $

Spectral mapping theorems relate the spectrum of a semigroup $ (\e^{t \, \G})_{t \geq 0} $ to that of its generator $A$. They come in all variations and generalizations, the one we are using is from \cite{arendt1986one} (Part A III, Corollary 6.7). In our case, the boundedness of the generator $ \G $ implies that the corresponding semigroup is not only \emph{eventually norm continuous}, but even \emph{uniformly continuous}. 

\end{lemma}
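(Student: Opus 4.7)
The plan is to prove the two inclusions
$\e^{t\,\sigma(\G)}\subseteq\sigma(\e^{t\,\G})\setminus\{0\}$ and
$\sigma(\e^{t\,\G})\setminus\{0\}\subseteq \e^{t\,\sigma(\G)}$ separately; the first is the spectral inclusion theorem (valid for \emph{every} $C_0$--semigroup), whereas the second is the genuinely delicate half and is where eventual norm continuity must enter.

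For the spectral inclusion ``$\supseteq$'' I would start from the operator identity
\[
\e^{t\,\G}-\e^{t\,\lambda}\,\Id
=(\G-\lambda\,\Id)\int_0^t\e^{(t-s)\G}\,\e^{s\,\lambda}\,\d s
=\int_0^t\e^{(t-s)\G}\,\e^{s\,\lambda}\,\d s\;(\G-\lambda\,\Id),
\]
which holds pointwise on the domain of $\G$ and extends to $\BL(l^1(\Omega))$ by boundedness in our setting. From this identity it follows immediately that if $\e^{t\,\G}-\e^{t\,\lambda}\,\Id$ has a bounded inverse on $l^1(\Omega)$, then so does $\G-\lambda\,\Id$, so any $\lambda\in\sigma(\G)$ forces $\e^{t\,\lambda}\in\sigma(\e^{t\,\G})$. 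This part is routine and requires no assumption beyond strong continuity.

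For the hard inclusion ``$\subseteq$'', fix $\mu\in\sigma(\e^{t\,\G})\setminus\{0\}$ and write $\mu=\e^{t\,\lambda_0}$ for some $\lambda_0\in\C$. The goal is to show that at least one element of the ``logarithmic fibre''
\[
\Lambda:=\bigl\{\lambda_0+\tfrac{2\pi i k}{t}\,:\,k\in\Z\bigr\}
\]
belongs to $\sigma(\G)$. I would argue by contradiction: assume $\Lambda\subseteq\rho(\G)$ and construct an explicit bounded inverse of $\e^{t\,\G}-\mu\,\Id$, which would contradict $\mu\in\sigma(\e^{t\,\G})$. The candidate for the inverse is obtained by averaging resolvents along the fibre,
\[
R_\mu:=\frac{1}{t}\sum_{k\in\Z} R\!\left(\lambda_0+\tfrac{2\pi ik}{t},\G\right)
\bigl(\Id-\e^{-t(\G-\lambda_0)}\bigr)^{-1},
\]
or, more precisely, by combining the classical formula
\[
\bigl(\e^{t\,\G}-\e^{t\,\lambda}\,\Id\bigr)^{-1}
=\frac{1}{1-\e^{t\,(\lambda-\G)}}\,R(\lambda,\G)\cdot(\text{correction})
\]
with a Fourier/Riesz projection decomposition along the vertical line $\Re(\lambda)=\Re(\lambda_0)$. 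The key analytic issue is \emph{convergence} of the series over $k\in\Z$: this is where eventual norm continuity of $(\e^{t\,\G})_{t\geq 0}$ is indispensable, because it guarantees that $\|R(\lambda_0+\tfrac{2\pi ik}{t},\G)\|\to 0$ as $|k|\to\infty$ (a standard consequence of eventual norm continuity, see e.g.\ \cite{engel2000one,arendt1986one}). This decay makes the series absolutely convergent in $\BL(l^1(\Omega))$, and a direct computation using the commuting functional calculus for $\G$ shows that the resulting operator inverts $\e^{t\,\G}-\mu\,\Id$, contradicting $\mu\in\sigma(\e^{t\,\G})$.

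The main obstacle is precisely this last step: pinning down the correct closed-form ``averaged resolvent'' and verifying its convergence in operator norm. In our application $\G$ is a bounded generator (so the semigroup is uniformly, hence eventually, norm continuous), and one can bypass some of the technical work by invoking Dunford's holomorphic functional calculus directly: since $\G\in\BL(l^1(\Omega))$ has compact spectrum, the ordinary spectral mapping theorem $f(\sigma(\G))=\sigma(f(\G))$ applies to the entire function $f(z)=\e^{t z}$, which yields $\sigma(\e^{t\,\G})=\e^{t\,\sigma(\G)}$ immediately (with no ``$\setminus\{0\}$'' needed). I would therefore state the general version for eventually norm continuous semigroups with a sketch of the fibre-averaging argument above and a reference to \cite{engel2000one}, then observe that in the bounded-generator setting needed in the paper the result is a one-line consequence of Dunford's calculus.
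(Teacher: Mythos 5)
Your proposal is correct for what the paper actually needs, but it takes a genuinely different route: the paper does not prove this lemma at all, it simply cites \cite{arendt1986one} (Part A III, Corollary 6.7) and remarks that boundedness of $\G$ upgrades eventual norm continuity to uniform continuity, whereas you supply an actual argument. Your easy inclusion $\e^{t\,\sigma(\G)}\subseteq\sigma(\e^{t\,\G})$ via the integral identity is the standard spectral inclusion theorem and is complete; your closing observation is the real payoff for this paper: since $\G\in\BL\bigl(l^1(\Omega)\bigr)$, Dunford's holomorphic functional calculus applied to the entire function $z\mapsto\e^{tz}$ gives $\sigma(\e^{t\,\G})=\e^{t\,\sigma(\G)}$ outright (and $0\notin\sigma(\e^{t\,\G})$ since $\e^{-t\,\G}$ is a bounded inverse), which is a cleaner and self-contained justification than the citation, given that $\|\G\|_{1}^\text{(op)}<\infty$ is assumed throughout the paper. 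By contrast, your sketch of the hard inclusion for general eventually norm continuous semigroups is only a sketch: the ``averaged resolvent'' $R_\mu$ and the formula with the unspecified ``correction'' are not pinned down, and the claimed decay $\|R(\lambda_0+\tfrac{2\pi ik}{t},\G)\|\to 0$ along the vertical line through $\lambda_0$ needs more care (the standard decay statements are for lines in right half-planes $\Re\lambda> \omega_0$, and one must also know the whole fibre lies in $\rho(\G)$ before summing), so as written that half would not stand alone. Since the paper itself defers the general case to the literature and only the bounded-generator case is used, this gap is harmless here; if you keep the general statement, defer it to \cite{arendt1986one} or \cite{engel2000one} as the paper does, and let the Dunford-calculus argument carry the case actually invoked.
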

%%%%%%%%%%%%%%%%%%%%%%%%%%%%%%%%%%%%%%%%%%%%%%%%
%%%%%%%%%%%%%%%%%%%%%%%%%%%%%%%%%%%%%%%%%%%%%%%%
%%%%%%%%%%%%%%%%%%%%%%%%%%%%%%%%%%%%%%%%%%%%%%%%

%%%%%%%%%%%%%%%%%%%%%%%%%%%%%%%%%%%%%%%%%%%%%%%%
%%%%%%%%%%%%%%%%%%%%%%%%%%%%%%%%%%%%%%%%%%%%%%%%
%%%%%%%%%%%%%%%%%%%%%%%%%%%%%%%%%%%%%%%%%%%%%%%%
\begin{lemma}\label{Lemma_GrowthBoundAndSpectralBoundOfPositiveSemigroups}[Growth bound and spectral bound of positive semigroups] $ $ \\

The growth bound of a $C_0-$ semigroup is defined as: 

%----------------------------------------------------------------------------------
\begin{equation*}
\begin{aligned}
\omega_0(\e^{t \, \G})
&:=
\inf\limits_{\omega \in \R} \{\exists M \geq 1 \,:\, \| \e^{t \, \G} \| \leq M \, \e^{t \, \omega} \text{  for all } t\geq 0 \}, 
\end{aligned}
\end{equation*}
%----------------------------------------------------------------------------------

whereas the \emph{spectral bound} is defined as 

%----------------------------------------------------------------------------------
\begin{equation*}
\begin{aligned}
s(A)
&:=
\sup \{ \Re[\lambda] \,:\, \lambda \in \sigma(\G) \}
\end{aligned}
\end{equation*}
%----------------------------------------------------------------------------------

The space $l^1(\Omega) $ has the property, that 
$ \| \bX + \bY \| = \|\bX\| + \| \bY \|$ for $\bX, \bY \in (\R_{> \, 0})^{\Omega} $), making it a so called \emph{abstract $L$-space}. 

While the spectral bound is in general only less or equal to the growth bound (compare \cite{batkai2017positive}, Proposition 9.33), they coincide for positive $C_0$-semigroups on abstract $L$-spaces (see \cite{batkai2017positive}, Theorem 12.17).

For master equations, we have: 

%----------------------------------------------------------------------------------
\begin{equation*}
\begin{aligned}
\| \e^{t \, \G} \|^\text{(op)}_{1}
&\xlongequal{\text{def}}
\sup\limits_{\| \bX \|_1 = 1} 
\underbrace{
\| \e^{t \, \G}  \, \bX \|_{1}
}_{
\sum\limits_{i \in \Omega}
\left| \sum\limits_{j \in \Omega} (\e^{t \, \G})^{(i,j)} \, X^{(j)} \right|
}
\leq 
\sup\limits_{\| \bX \|_1 = 1} 
\sum\limits_{j \in \Omega} |X^{(j)}|
\underbrace{
\sum\limits_{i \in \Omega} (\e^{t \, \G})^{(i,j)}
}_{1}
%%%%%%%%%%%%%%%%%%%%%%%%%%%%%%%%%%%%%%%%%%%%5
\\ &=
%%%%%%%%%%%%%%%%%%%%%%%%%%%%%%%%%%%%%%%%%%%%
\sup\limits_{\| \bX \|_1 = 1} 
\underbrace{
\left(\sum\limits_{j \in \Omega} |X^{(j)}| \right)
}_{
| \bX \|_1
}
=
1
=
1 \cdot \e^{t \, 0}. 
\end{aligned}
\end{equation*}
%----------------------------------------------------------------------------------

This means that the growth bound (and hence, also the spectral bound) vanish: $ 0 = \omega_0(\e^{t \, \G}) = s(\G) $. 
\end{lemma}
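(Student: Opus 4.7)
\medskip

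\textbf{Proof plan.} The plan is to establish the two equalities $s(\G) = \omega_0(\e^{t\,\G}) = 0$ separately, relying on a known spectral/growth-bound identity for positive semigroups on abstract $L$-spaces. First I would verify that $l^1(\Omega)$ is an abstract $L$-space: for $\bX, \bY \in (\R_{\geq 0})^{\Omega}$, every coordinate of $\bX + \bY$ is non-negative, so $\|\bX + \bY\|_1 = \sum_{i \in \Omega}(X^{(i)} + Y^{(i)}) = \|\bX\|_1 + \|\bY\|_1$. Combined with the observation (from Section~\ref{Section_ThePositivityOfTheSolutionOperatorForAStronglyConnectedNetwork} and Section~\ref{Section_SolutionRemainsAProbabilitySequence}) that $\e^{t\,\G}$ is a positive operator and $(\e^{t\,\G})_{t\geq 0}$ is a $C_0$-semigroup (in fact uniformly continuous, since $\|\G\|_1^{(\mathrm{op})} < \infty$), all hypotheses of the theorem that $s(\G) = \omega_0(\e^{t\,\G})$ for positive $C_0$-semigroups on abstract $L$-spaces are satisfied. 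Thus it suffices to compute one of the two bounds.

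Next I would show that the growth bound equals zero by a two-sided squeeze. The upper bound $\omega_0 \leq 0$ follows from the already performed calculation reproduced in the lemma body: for any $\bX \in l^1(\Omega)$ with $\|\bX\|_1 = 1$, one has
\begin{equation*}
\|\e^{t\,\G}\,\bX\|_1 \;\leq\; \sum_{j \in \Omega} |X^{(j)}|\,\Bigl(\sum_{i \in \Omega}(\e^{t\,\G})^{(i,j)}\Bigr) \;=\; \|\bX\|_1 \;=\; 1,
\end{equation*}
since every column of $\e^{t\,\G}$ sums to one (each column is the image of a unit vector $\bE_j$, which is a probability vector, and the solution remains normalized by Section~\ref{Section_SolutionRemainsAProbabilitySequence}). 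Hence $\|\e^{t\,\G}\|_1^{(\mathrm{op})} \leq 1 = 1\cdot \e^{t\cdot 0}$, giving $\omega_0 \leq 0$. The matching lower bound $\omega_0 \geq 0$ is obtained by choosing any probability vector $\bP_0 \in \ProbStates$ and noting $\|\e^{t\,\G}\bP_0\|_1 = \|\bP_0\|_1 = 1$, whence $\|\e^{t\,\G}\|_1^{(\mathrm{op})} \geq 1$ for every $t \geq 0$. Together with the upper bound one concludes $\|\e^{t\,\G}\|_1^{(\mathrm{op})} = 1$ and therefore $\omega_0(\e^{t\,\G}) = 0$.

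Finally I would invoke the cited theorem from Bátkai, Kramar Fijav\v{z} and Rhandi (Theorem 12.17) to transfer the equality from the growth bound to the spectral bound: for a positive $C_0$-semigroup on an abstract $L$-space, $s(\G) = \omega_0(\e^{t\,\G})$. Combined with the previous step this yields $s(\G) = 0$, completing the chain $s(\G) = \omega_0(\e^{t\,\G}) = 0$. The main obstacle is not technical but bibliographic: one must verify that the quoted theorem is stated in the required generality (countable infinite state space, bounded generator on $l^1$), which is exactly the setting covered in the cited references; everything else reduces to the two routine norm estimates above.
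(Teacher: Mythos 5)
Your proposal is correct and follows essentially the same route as the paper: verify that $l^1(\Omega)$ is an abstract $L$-space, invoke the cited theorem identifying spectral and growth bound for positive $C_0$-semigroups, and bound $\| \e^{t \, \G} \|_1^\text{(op)}$ via the unit column sums of the solution operator. The only addition is your explicit lower bound $\| \e^{t \, \G} \|_1^\text{(op)} \geq 1$ from norm preservation on probability vectors, which the paper leaves implicit but which is needed to get $\omega_0 = 0$ rather than merely $\omega_0 \leq 0$ — a worthwhile small completion.
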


%%%%%%%%%%%%%%%%%%%%%%%%%%%%%%%%%%%%%%%%%%%%%%%%
%%%%%%%%%%%%%%%%%%%%%%%%%%%%%%%%%%%%%%%%%%%%%%%%
%%%%%%%%%%%%%%%%%%%%%%%%%%%%%%%%%%%%%%%%%%%%%%%%

%%%%%%%%%%%%%%%%%%%%%%%%%%%%%%%%%%%%%%%%%%%%%%%%
%%%%%%%%%%%%%%%%%%%%%%%%%%%%%%%%%%%%%%%%%%%%%%%%
%%%%%%%%%%%%%%%%%%%%%%%%%%%%%%%%%%%%%%%%%%%%%%%%
\begin{MyDef}\label{Lemma_ImaginaryAdditiveCyclic}[Imaginary additive cyclic]

A subset $ M\subseteq \C $ is called \emph{imaginary additive cyclic} - or short \emph{cyclic} - if for each $m \in M $ we have $ \Re[m] + i \, \Im[m] \, \Z = \{ \Re[m] + i \, \Im[m] \, z \,:\, z \in \Z\} \subseteq M $. 

Note that a cyclic set can only be bounded, if it is contained in the real line, that is $M \subseteq \R $. 

\end{MyDef}
%%%%%%%%%%%%%%%%%%%%%%%%%%%%%%%%%%%%%%%%%%%%%%%%
%%%%%%%%%%%%%%%%%%%%%%%%%%%%%%%%%%%%%%%%%%%%%%%%
%%%%%%%%%%%%%%%%%%%%%%%%%%%%%%%%%%%%%%%%%%%%%%%%

%%%%%%%%%%%%%%%%%%%%%%%%%%%%%%%%%%%%%%%%%%%%%%%%
%%%%%%%%%%%%%%%%%%%%%%%%%%%%%%%%%%%%%%%%%%%%%%%%
%%%%%%%%%%%%%%%%%%%%%%%%%%%%%%%%%%%%%%%%%%%%%%%%
\begin{lemma}\label{Lemma_BoundarySpectrumContainsOnlyTheSpectralBound}[The boundary spectrum of a positive $C_0$-semigroup with bounded generator contains only the spectral bound] $ $ \\

From \cite{arendt1986one} (part III, "Spectral Theory", Theorem 2.11) we know that the boundary spectrum $ \sigma_b(A) := \{\lambda \in \C \,:\, \Re[\lambda] = s(A)\} $ of a uniformly continuous semigroup is cyclic. Since the spectrum of a bounded operator is again bounded, $\sigma_b(A) $ must be contained in the real line by the comment of definition \ref{Lemma_ImaginaryAdditiveCyclic}. Since the boundary spectrum intersects the real line only at the spectral value, we have $\sigma_b(A) = \{s(A)\} $.

In the case of the master equation, we have: 

%----------------------------------------------------------------------------------
\begin{equation}
\begin{aligned}
\sigma_b(\G)
&=
\sigma(\G) \cap \{z \in \C \,:\, \Re[z] = \underbrace{s(\G)}_{0}\}
\xlongequal[\text{and bounded}]{\sigma_b(\G)\text{ is cyclic}}
\{s(\G)\}
\xlongequal{\text{lemma } \ref{Lemma_GrowthBoundAndSpectralBoundOfPositiveSemigroups}}
\{0\}. 
\end{aligned}
\end{equation}
%----------------------------------------------------------------------------------

\end{lemma}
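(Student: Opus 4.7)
The plan is to combine two ingredients: the classical cyclicity of the boundary spectrum of a positive semigroup, and the fact that the spectrum of a bounded generator is itself bounded in $\C$. By definition $\sigma_b(\G)$ lies on the vertical line $\{z \in \C \,:\, \Re[z] = s(\G)\}$, so the task is to rule out any imaginary part.

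First, I would invoke the cyclicity theorem for the boundary spectrum of positive $C_0$-semigroups on abstract $L$-spaces, as found in \cite{arendt1986one}: if $\lambda \in \sigma_b(\G)$, then $s(\G) + i \, \Im[\lambda] \, \Z \subseteq \sigma_b(\G)$. Since $\ell^1(\Omega)$ is an abstract $L$-space and our assumption $\|\G\|_{1,1}^{\text{(op)}} < \infty$ makes the semigroup uniformly continuous and positive, this theorem applies directly. Next I would observe that because $\G$ is a bounded operator, $\sigma(\G) \subseteq \overline{B_{\|\G\|}(0)}$ is a bounded subset of $\C$, so \emph{a fortiori} $\sigma_b(\G)$ is bounded.

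These two facts together force $\sigma_b(\G) \subseteq \R$: if some $\lambda \in \sigma_b(\G)$ had $\Im[\lambda] \neq 0$, then by cyclicity the arithmetic progression $\{s(\G) + i \, \Im[\lambda] \, n \,:\, n \in \Z\}$ would lie in $\sigma_b(\G)$, contradicting boundedness (this is exactly the remark recorded after Definition \ref{Lemma_ImaginaryAdditiveCyclic}). Combining $\sigma_b(\G) \subseteq \R$ with the defining condition $\Re[\lambda] = s(\G)$ leaves only the single point $\{s(\G)\}$. Since the spectral bound of a positive semigroup on an abstract $L$-space is itself a spectral value (the growth bound equals the spectral bound, by Lemma \ref{Lemma_GrowthBoundAndSpectralBoundOfPositiveSemigroups}, and both are attained), $s(\G) \in \sigma_b(\G)$ and equality holds.

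Specializing to the master equation, Lemma \ref{Lemma_GrowthBoundAndSpectralBoundOfPositiveSemigroups} gives $s(\G) = 0$, so $\sigma_b(\G) = \{0\}$. The only genuine obstacle here is the cyclicity theorem itself, which is deep and is simply cited; everything else is a short topological argument (bounded $\cap$ cyclic $\Rightarrow$ real) and an appeal to the already-established identity between growth bound and spectral bound on $\ell^1$.
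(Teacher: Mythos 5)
Your proposal is correct and follows essentially the same route as the paper: cite the cyclicity of the boundary spectrum from \cite{arendt1986one}, use boundedness of $\sigma(\G)$ (since $\G$ is bounded) to force $\sigma_b(\G) \subseteq \R$ and hence $\sigma_b(\G) = \{s(\G)\}$, then apply $s(\G) = 0$ from lemma \ref{Lemma_GrowthBoundAndSpectralBoundOfPositiveSemigroups}. Your extra remark that $s(\G)$ is actually attained in the spectrum (so the set is nonempty) is a small point the paper leaves implicit, but the argument is otherwise identical.
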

%%%%%%%%%%%%%%%%%%%%%%%%%%%%%%%%%%%%%%%%%%%%%%%%
%%%%%%%%%%%%%%%%%%%%%%%%%%%%%%%%%%%%%%%%%%%%%%%%
%%%%%%%%%%%%%%%%%%%%%%%%%%%%%%%%%%%%%%%%%%%%%%%%

%%%%%%%%%%%%%%%%%%%%%%%%%%%%%%%%%%%%%%%%%%%%%%%%
%%%%%%%%%%%%%%%%%%%%%%%%%%%%%%%%%%%%%%%%%%%%%%%%
%%%%%%%%%%%%%%%%%%%%%%%%%%%%%%%%%%%%%%%%%%%%%%%%
\begin{lemma}[lemma of Katznelson-Tzafriri (1984)]\label{Lemma_Katznelson_Tzafriri_1984}
Let $R$ be a linear operator on a Banach space such that $\sup\limits_{n \in \N} \| R^{n}\| < \infty $. Then the intersection of the spectrum of $R$ with the unit circle contains at most the number one, if and only if the difference of two consecutive powers of $R$ converges to zero, that is 

%----------------------------------------------------------------------------------
\begin{equation*}
\begin{aligned}
\sigma(R) \cap \{z \in \C \,:\, |z| = 1 \} \subseteq \{1\} 
\iff
\lim\limits_{n \to \infty} \| R^{n} - R^{n+1} \| = 0.
\end{aligned}
\end{equation*}
%----------------------------------------------------------------------------------

This Theorem was prooven by Katznelson and Tzafriri in 1984 \cite{katznelson1986power}.

\end{lemma}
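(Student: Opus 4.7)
The plan is to split into the two implications, which differ enormously in difficulty.

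For the easy direction ($\Leftarrow$), suppose $\|R^{n}-R^{n+1}\|\to 0$, and pick $\lambda\in\sigma(R)$ with $|\lambda|=1$. The polynomial spectral mapping theorem gives $\lambda^{n}(1-\lambda)=\lambda^{n}-\lambda^{n+1}\in\sigma(R^{n}-R^{n+1})$, hence $|1-\lambda|=|\lambda^{n}|\,|1-\lambda|\leq \|R^{n}-R^{n+1}\|\xrightarrow{n\to\infty}0$, forcing $\lambda=1$. This is essentially the same trick I used in lemma \ref{Lemma_StrictPositivityForStationarySolutionsOfStronglyConnectedNetworks}, just on the unit circle rather than on $\R_{\geq 0}$.

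The forward direction ($\Rightarrow$) is the content of the Katznelson--Tzafriri paper, and this is where the real work lies. My approach would go through a Wiener-algebra functional calculus. Set $C:=\sup_{n}\|R^{n}\|<\infty$ and, for $f\in A(\mathbb{T})$ (functions on the unit circle with absolutely convergent Fourier series, equipped with the norm $\|f\|_{A}=\sum_{k\in\Z}|\hat f(k)|$), define
\[
f(R) \;:=\; \sum_{k\in\Z}\hat f(k)\,R^{k},
\]
where $R^{-1}$ is only formal; one first verifies that for $f$ a trigonometric polynomial involving only non-negative frequencies (a polynomial in $z$), this agrees with the usual polynomial calculus, and then extends by density using $\|f(R)\|\leq C\,\|f\|_{A}$. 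Since $R^{n}-R^{n+1}=R^{n}g(R)$ with $g(z)=1-z$, the goal becomes to show $\|R^{n}g(R)\|\to 0$.

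The pivotal observation is that $g$ vanishes on $\sigma(R)\cap\mathbb{T}=\{1\}$. The key lemma I would establish is: \emph{for every} $f\in A(\mathbb{T})$ that vanishes on $\sigma(R)\cap\mathbb{T}$, one has $\|R^{n}f(R)\|\to 0$. The argument has two layers. First, reduce to functions of the form $f(z)=(1-\overline{\zeta}z)h(z)$ with $h\in A(\mathbb{T})$ and $\zeta\notin\sigma(R)\cap\mathbb{T}$: for such $\zeta$ the element $(\overline{\zeta}\,\mathrm{Id}-R)$ is not invertible only when $\overline{\zeta}\in\sigma(R)$, so outside $\{1\}$ the factor $1-\overline{\zeta}R$ has an explicit inverse in the Wiener calculus, making $R^{n}f(R)$ manifestly small. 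Second, approximate a general $f$ vanishing at $1$ by such building blocks: choose a sequence $f_{\varepsilon}\in A(\mathbb{T})$ that vanishes in an \emph{open} neighborhood of $1$ and with $\|f-f_{\varepsilon}\|_{A}<\varepsilon$; on the compact annular region where $f_{\varepsilon}$ is supported, one can uniformly approximate by trigonometric polynomials in $(z-\zeta_{j})$ for $\zeta_{j}$ off the spectrum, producing a decomposition to which the first layer applies.

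The main obstacle is this last approximation step: constructing, for $f(z)=1-z$, a sequence $f_{\varepsilon}$ that (i) vanishes on a neighborhood of the single spectral point $\zeta=1$, (ii) still lies in $A(\mathbb{T})$ with controlled norm, and (iii) satisfies $\|f-f_{\varepsilon}\|_{A}\to 0$. The delicacy is that spectral synthesis \emph{fails} in $A(\mathbb{T})$ for general closed sets (Malliavin), so one cannot blindly approximate. What saves us is that the spectrum $\sigma(R)\cap\mathbb{T}\subseteq\{1\}$ is a single point, for which synthesis does hold, and one can produce $f_{\varepsilon}$ explicitly by convolving $1-z$ with a smooth bump whose Fourier coefficients are summable, then shifting away from $1$. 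Once this approximation is in place, combining $\|R^{n}f_{\varepsilon}(R)\|\to 0$ with $\|R^{n}(f-f_{\varepsilon})(R)\|\leq C^{2}\varepsilon$ yields $\limsup_{n}\|R^{n}f(R)\|\leq C^{2}\varepsilon$, and sending $\varepsilon\to 0$ finishes the proof.
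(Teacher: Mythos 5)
The paper never proves this lemma: it is imported as a black box with a citation to Katznelson--Tzafriri (1984), so there is no internal argument to compare yours against. Your backward implication is correct and complete: for a bounded $R$ the polynomial spectral mapping theorem gives $\lambda^{n}(1-\lambda)\in\sigma(R^{n}-R^{n+1})$ for every unimodular $\lambda\in\sigma(R)$, hence $|1-\lambda|\leq\|R^{n}-R^{n+1}\|\to 0$ forces $\lambda=1$.

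The forward implication, which is the actual content of Katznelson--Tzafriri, has genuine gaps as you sketch it. First, the functional calculus you posit does not exist in this generality: for a power-bounded but possibly non-invertible $R$, the expression $\sum_{k\in\Z}\hat f(k)R^{k}$ is meaningless when $f$ has nonzero negative-frequency coefficients, and your density argument only extends the polynomial calculus to the analytic Wiener algebra $A^{+}(\mathbb{T})$ (absolutely convergent Taylor series), since polynomials in $z$ are not dense in $A(\mathbb{T})$. Second, once confined to $A^{+}$, your approximation layer collapses: a nonzero element of $A^{+}$ lies in the disk algebra, and such a function cannot vanish on an arc of the circle (log-integrability of boundary values of Hardy-class functions), so there is no $f_{\varepsilon}\in A^{+}$ vanishing on a neighborhood of $1$ that is norm-close to $1-z$. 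The genuine proof approximates in the full algebra $A(\mathbb{T})$ and then handles the analytic projection of $z^{n}g$ (or one follows the later alternative routes of Esterle--Strouse--Zouakia or V\~u's isometric limit semigroup); your two layers as written do not carry through. Third, your ``first layer'' is unjustified and false as stated: if $f(z)=(1-\bar\zeta z)h(z)$ with $1-\bar\zeta R$ invertible, then $R^{n}f(R)=(1-\bar\zeta R)\,R^{n}h(R)$, and its smallness requires exactly the decay of $R^{n}h(R)$ that is to be proved; concretely, with $|\bar\zeta|<1$ and $h(z)=(1-\bar\zeta z)^{-1}\in A^{+}$ one gets $f\equiv 1$, which has the prescribed factored form, yet $R^{n}f(R)=R^{n}$ does not tend to zero (take $R=\Id$), even though $\sigma(R)\cap\{z\in\C\,:\,|z|=1\}\subseteq\{1\}$. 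Since the paper only uses the statement, citing it as done there is the cleanest course; a self-contained proof would have to follow the original spectral-synthesis argument (or one of its streamlined successors) rather than the sketch above.
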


%%%%%%%%%%%%%%%%%%%%%%%%%%%%%%%%%%%%%%%%%%%%%%%%
%%%%%%%%%%%%%%%%%%%%%%%%%%%%%%%%%%%%%%%%%%%%%%%%
%%%%%%%%%%%%%%%%%%%%%%%%%%%%%%%%%%%%%%%%%%%%%%%%

%%%%%%%%%%%%%%%%%%%%%%%%%%%%%%%%%%%%%%%%%%%%%%%%
%%%%%%%%%%%%%%%%%%%%%%%%%%%%%%%%%%%%%%%%%%%%%%%%
%%%%%%%%%%%%%%%%%%%%%%%%%%%%%%%%%%%%%%%%%%%%%%%%
\begin{claim}\label{Lemma_SpectralMappingAppliedToBoundaryOfTheUnitCircle}
Let $\emptyset \neq M \subseteq \C $. Then $ \e^{M} \cap \{z \in \C \,:\, |z| = 1 \} = \e^{M \cap i \, \R} $. 

\end{claim}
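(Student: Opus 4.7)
The plan is to prove the two set inclusions separately, using only the elementary identity $|e^{a+ib}| = e^a$ for real $a, b$, which implies that $|e^z| = 1$ if and only if $\mathrm{Re}(z) = 0$, i.e. $z \in i\mathbb{R}$.

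For the inclusion ``$\supseteq$'', I would take an arbitrary $w \in e^{M \cap i\mathbb{R}}$ and write $w = e^{z}$ with $z \in M \cap i\mathbb{R}$. Since $\mathrm{Re}(z)=0$, I get $|w|=|e^{z}|=e^{\mathrm{Re}(z)}=e^{0}=1$, so $w$ lies on the unit circle. Moreover, since $z \in M$, we have $w \in e^{M}$, which yields $w \in e^{M} \cap \{z \in \mathbb{C}\,:\,|z|=1\}$.

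For the reverse inclusion ``$\subseteq$'', I would take $w \in e^{M} \cap \{z\in\mathbb{C}\,:\,|z|=1\}$ and pick a preimage $z \in M$ with $w=e^{z}$. From $|w|=1$ and the identity $|e^{z}|=e^{\mathrm{Re}(z)}$, it follows that $\mathrm{Re}(z)=0$, hence $z \in M \cap i\mathbb{R}$, and therefore $w=e^{z} \in e^{M \cap i\mathbb{R}}$.

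There is essentially no obstacle here: the whole content is the observation $|e^{z}|=e^{\mathrm{Re}(z)}$, applied in both directions. The only thing worth being a bit careful about is that in the ``$\subseteq$'' step one must choose \emph{some} preimage $z \in M$ (not any preimage in $\mathbb{C}$), but this is free since $w \in e^{M}$ means precisely that such a $z \in M$ exists. The claim is really just a convenient rephrasing of the fact that the exponential map sends the imaginary axis onto the unit circle.
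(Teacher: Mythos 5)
Your proof is correct and follows essentially the same route as the paper: both inclusions are handled via the identity $|\e^{z}| = \e^{\Re z}$, with the only content being that $|\e^{z}|=1$ forces $\Re z = 0$ and conversely. Your remark about choosing the preimage $z$ inside $M$ is exactly the point the paper's argument also relies on, so nothing is missing.
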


\begin{proof} 
$ $ \, 
\begin{itemize}
\item["$\subseteq$"]

Let $x \in \e^{M} \cap \{z \in \C \,:\, |z| = 1 \} $. Then there exists a $m \in M $ such that $x = \e^{m} $ and we have: 

%----------------------------------------------------------------------------------
\begin{equation*}
\begin{aligned}    
1
&\xlongequal{|x|=1 }
|x|
&\xlongequal{x \in \e^{M} }
|e^{z}|
=
\e^{\Re[z]}, 
\end{aligned}
\end{equation*}
%----------------------------------------------------------------------------------

which implies that the real part of $ m $ must vanish and we have $ m = i \, \Im[z] \in M \cap i \, \R $. But this implies $ x = \e^{m} = \e^{i \Im[z]} \in \e^{M \cap i \, \R} $, 

\item["$\supseteq$"]
On the other hand, let $ x \in \e^{M \cap i \, \R }$, which means there exists a $\mu \in \R $ such that $i \, \mu \in M \cap i \, \R $  and $x = \e^{i \, \mu } $. But then we have $ x = \e^{i \, \mu} \in M $, since $i \mu \in M $ and $ | x| = |\e^{i \, \mu}| = 1 $, since $ \mu \in \R $. 

\end{itemize}
\end{proof}
%%%%%%%%%%%%%%%%%%%%%%%%%%%%%%%%%%%%%%%%%%%%%%%%
%%%%%%%%%%%%%%%%%%%%%%%%%%%%%%%%%%%%%%%%%%%%%%%%
%%%%%%%%%%%%%%%%%%%%%%%%%%%%%%%%%%%%%%%%%%%%%%%%

%%%%%%%%%%%%%%%%%%%%%%%%%%%%%%%%%%%%%%%%%%%%%%%%
%%%%%%%%%%%%%%%%%%%%%%%%%%%%%%%%%%%%%%%%%%%%%%%%
%%%%%%%%%%%%%%%%%%%%%%%%%%%%%%%%%%%%%%%%%%%%%%%%

\begin{lemma}\label{Lemma_ltbClosedIn_l1} [$ \ltb $ is closed in $l^1(\Omega )$] $ $ \\ 

First, we note that the limit operator $\LL$ is bounded, since 
%----------------------------------------------------------------------------------
\begin{equation*}
\begin{aligned}
\| 
\underbrace{
\LL(\bX)
}_{
\lim\limits_{t \to \infty} \e^{t \, \G } \, \bX 
}
\|_1
&=
\lim\limits_{t \to \infty } 
\underbrace{
\| \e^{t \, \G } \, \bX \|_1
}_{
\leq \, \| \e^{t \, \G } \|_1 \,  \|\bX \|_1
}
\leq 
\lim\limits_{t \to \infty } 
\underbrace{
\| \e^{t \, \G } \|_1
}_{1}
\,  \|\bX \|_1
\leq
\| \bX \|_1. 
\end{aligned}
\end{equation*}
%----------------------------------------------------------------------------------

Let $ \bX \in \overline{ \ltb } $, then we know that there exists a sequence $(\bX_n)_{n\in \N} \subseteq \ltb $ converging to $\bX$  and a sequence $ (\bX_{n, *})_{n \in \N} $ such that $ \lim\limits_{t \to \infty} \LL(\bX_n) = \bX_{n, *}$.

First, we show that the sequence $  (\bX_{n, *})_{n \in \N} $  is Cauchy. We can choose the numbers $n$ and $m$ large enough, such that 
%----------------------------------------------------------------------------------
\begin{equation*}
\begin{aligned}
\| \bX_{n, *} - \bX_{m, *}\|_1
&=
\| \bX_{n, *}  \pm \LL (\bX_n) \pm \LL (\bX_m) - \bX_{m, *}\|_1
%%%%%%%%%%%%%%%%%%%%%%%%%%%%%%%%%%%%%%
\\ &\leq 
%%%%%%%%%%%%%%%%%%%%%%%%%%%%%%%%%%%%%%
\underbrace{
\| \bX_{n, *} - \LL (\bX_n) \|_1
}_{
< \, \epsilon
}
+
\underbrace{
\| \LL \|_1 \,
}_{
1
} \, 
\underbrace{ \| \bX_n - \bX_m  \|_1
}_{
< \, \epsilon
}
+
\underbrace{
\| \LL (\bX_m) - \bX_{m, *} \|_1
}_{
< \, \epsilon
}, 
\end{aligned}
\end{equation*}
%----------------------------------------------------------------------------------

hence the limit $ \bX_* := \lim\limits_{n \to \infty} \bX_{n, *} $ exists in $l^1(\N)$. Further, we have:

%----------------------------------------------------------------------------------
\begin{equation*}
\begin{aligned}
 \|  \LL (\bX) - \bX_* \|_1
%%%%%%%%%%%%%%%%%%%%%%%%%%%%%%%%%%%%%%%%%%%%%%%%%%%%%%%%%%%%%%%%%%
&=
%%%%%%%%%%%%%%%%%%%%%%%%%%%%%%%%%%%%%%%%%%%%%%%%%%%%%%%%%%%%%%%%%%
\| \LL (\bX) \pm \LL (\bX_n) \pm \bX_{n, *} - \bX_* \|_1
%%%%%%%%%%%%%%%%%%%%%%%%%%%%%%%%%%%%%%%%%%%%%%%%%%%%%%%%%%%%%%%%%%
 \\ & \leq 
%%%%%%%%%%%%%%%%%%%%%%%%%%%%%%%%%%%%%%%%%%%%%%%%%%%%%%%%%%%%%%%%%%
\underbrace{
\| \LL  \|_1
}_{ \leq \, 1}
\, \cdot \,
\underbrace{
\|\bX - \bX_n \|_1
}_{< \, \epsilon}
+
\underbrace{
\| \LL (\bX_n) - (\bX_{n, *}) \|_1
}_{< \, \epsilon}
+
\underbrace{
\| \bX_{n,*} - \bX_*\|_1
}_{
< \, \epsilon
}, 
\end{aligned}
\end{equation*}
%----------------------------------------------------------------------------------

that is $ \LL (\bX) \xlongrightarrow[\| \, \|_1]{M \to \infty} \bX_*$ and hence $\bX \in \ltb$. This shows that $ \lme $ is closed in $l^1(\Omega) $.
\end{lemma}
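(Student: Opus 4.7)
The plan is to mimic the proof of Lemma~\ref{Lemma_PropertiesOfCMean} (where closedness of $\lme$ was established for the Cesaro average), replacing the finite Cesaro mean $\CMean_M$ by the solution operator $\e^{t\,\G}$ and the operator $\CMean$ by the pointwise limit operator $\LL$. The key structural ingredient that makes this transposition work is that $(\e^{t\,\G})_{t\geq 0}$ is a contraction semigroup on $l^1(\Omega)$, i.e.\ $\|\e^{t\,\G}\|_1^{\text{(op)}}\leq 1$ for every $t\geq 0$ (this was shown implicitly in the proof that the solution remains a probability vector). Consequently, the candidate limit map $\LL$ is itself a contraction, $\|\LL\|_1^{\text{(op)}}\leq 1$.

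Concretely, I would proceed as follows. Pick any $\bX \in \overline{\ltb}$ and a sequence $(\bX_n)_{n\in\N} \subseteq \ltb$ with $\bX_n \xlongrightarrow[\|\cdot\|_1]{n\to\infty} \bX$. By definition, each $\bX_{n,*} := \LL(\bX_n) = \lim_{t\to\infty} \e^{t\,\G}\bX_n$ exists in $l^1(\Omega)$. The first step is to show that $(\bX_{n,*})_{n\in\N}$ is Cauchy: insert $\pm\,\e^{t\,\G}\bX_n$ and $\pm\,\e^{t\,\G}\bX_m$ into $\|\bX_{n,*}-\bX_{m,*}\|_1$ and use the triangle inequality to split it into three terms, two controlled by the defining convergence of $\LL(\bX_n)$ and $\LL(\bX_m)$ (for sufficiently large $t$), and the middle one estimated by $\|\e^{t\,\G}\|_1^{\text{(op)}}\,\|\bX_n-\bX_m\|_1 \leq \|\bX_n-\bX_m\|_1$, which is small because $(\bX_n)$ is Cauchy. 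Completeness of $l^1(\Omega)$ then provides the limit $\bX_* := \lim_{n\to\infty}\bX_{n,*}$.

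The second step is to show $\e^{t\,\G}\bX \xlongrightarrow[\|\cdot\|_1]{t\to\infty}\bX_*$. Insert $\pm\,\e^{t\,\G}\bX_n$ and $\pm\,\bX_{n,*}$ inside $\|\e^{t\,\G}\bX - \bX_*\|_1$ and split again into three pieces: $\|\e^{t\,\G}\|_1^{\text{(op)}}\,\|\bX-\bX_n\|_1$, $\|\e^{t\,\G}\bX_n-\bX_{n,*}\|_1$, and $\|\bX_{n,*}-\bX_*\|_1$. Choose $n$ large to kill the first and third terms uniformly in $t$, then choose $t$ large to kill the middle term. This yields $\bX \in \ltb$, proving closedness.

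I do not expect a genuine obstacle: the argument is purely abstract and only relies on (i) completeness of $l^1(\Omega)$ and (ii) the uniform bound $\|\e^{t\,\G}\|_1^{\text{(op)}}\leq 1$. The one subtlety worth flagging is that, unlike for $\CMean_M$ where $M\in\N$ is discrete, here the parameter $t$ is continuous, so ``sufficiently large'' must be interpreted as $t\geq T_{n,\varepsilon}$ for some threshold depending on $n$ and $\varepsilon$; but since the estimates decouple $n$ and $t$, the standard $\varepsilon/3$-argument goes through without change.
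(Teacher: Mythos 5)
Your proposal is correct and follows essentially the same route as the paper, which itself obtains this lemma by transposing the closedness argument for the Cesaro mean (Lemma~\ref{Lemma_PropertiesOfCMean}): contraction bound $\|\e^{t\,\G}\|_1^{\text{(op)}}\leq 1$, a Cauchy argument for $(\bX_{n,*})_{n\in\N}$ via inserted terms, and an $\varepsilon/3$ estimate to conclude $\bX\in\ltb$. If anything, your final step is slightly more careful than the paper's, since you estimate $\|\e^{t\,\G}\bX-\bX_*\|_1$ directly with an explicit threshold $t\geq T_{n,\varepsilon}$ rather than writing $\LL(\bX)$ before its existence has been established.
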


%%%%%%%%%%%%%%%%%%%%%%%%%%%%%%%%%%%%%%%%%%%%%%%%
%%%%%%%%%%%%%%%%%%%%%%%%%%%%%%%%%%%%%%%%%%%%%%%%
%%%%%%%%%%%%%%%%%%%%%%%%%%%%%%%%%%%%%%%%%%%%%%%%

%%%%%%%%%%%%%%%%%%%%%%%%%%%%%%%%%%%%%%%%%%%%%%%%
%%%%%%%%%%%%%%%%%%%%%%%%%%%%%%%%%%%%%%%%%%%%%%%%
%%%%%%%%%%%%%%%%%%%%%%%%%%%%%%%%%%%%%%%%%%%%%%%%
%%%%%%%%%%%%%%%%%%%%%%%%%%%%%%%%%%%%%%%%%%%%%%%%
\begin{lemma}[For probability vectors: pointwise convergence implies $\|\cdot \|_{1} $-convergence ] \label{Lemma_ForProbabilityVectors_PointwiseConvergenceImplies1NormConvergence} $ $ \\

Let $\bP_{*} \in \ProbStates $ be a probability vector and $ (\bP_{n})_{ n \in \N } \subseteq \ProbStates $ be a sequence of probability vectors converging pointwise to $\bP_*$, that is $ \bP_{n} \xlongrightarrow[\text{pointwise}]{n \to \infty} \bP_{*} $. Then this convergence is also with rexpect to the $\| \cdot \|_{1}$-norm: 
$ \bP_{n} \xlongrightarrow[\| \cdot \|_{1}]{n \to \infty} \bP_{*} $: 

\end{lemma}
%%%%%%%%%%%%%%%%%%%%%%%%%%%%%%%%%%%%%%%%%%%%%%%%

%%%%%%%%%%%%%%%%%%%%%%%%%%%%%%%%%%%%%%%%%%%%%%%%
\begin{proof}

Let $F \subseteq \Omega $ be a finite set such that 

\begin{itemize}
\item[(i)] 
$ \Bigl| P_{n}^{(f)} - P_{*}^{(f)} \Bigr| < \frac{\epsilon}{|F|}  $ for all $f \in F $, 

\item[(ii)] 
$ \sum\limits_{f \in F} P_{n}^{(f)} \geq 1 - \epsilon $

\item[(iii)] 
$ \sum\limits_{f \in F} P_{*}^{(f)} \geq 1 - \epsilon $

\end{itemize}

%----------------------------------------------
\begin{equation*}
\begin{aligned}
\| \bP_{n} - \bP_{*} \|_{1} 
&=
\sum\limits_{\omega \in \Omega} \Bigl| P_{n}^{(\omega)} - P_{*}^{(\omega)} \Bigr|
%%%%%%%%%%%%%%%%%%%%%%%%%%% 
\\ &= 
%%%%%%%%%%%%%%%%%%%%%%%%%%% 
\sum\limits_{ f \in F } \Bigl| P_{n}^{(f)} - P_{*}^{(f)} \Bigr| + 
\sum\limits_{\omega \in \Omega \backslash F } 
\underbrace{
    \Bigl| P_{n}^{(\omega)} - P_{*}^{(\omega)} \Bigr|
}_{
    \leq \, P_{n}^{(\omega)} + P_{*}^{(\omega)}
}
%%%%%%%%%%%%%%%%%%%%%%%%%%% 
\\ &\leq 
%%%%%%%%%%%%%%%%%%%%%%%%%%% 
\underbrace{
        \sum\limits_{ f \in F } 
    \overbrace{
        \Bigl| P_{n}^{(f)} - P_{*}^{(f)} \Bigr|
        }^{
        < \frac{\epsilon}{|F|}
        } 
}_{
    < \, \epsilon 
}
+
\underbrace{
    \left( \sum\limits_{\omega \in \Omega \backslash F }  P_{n}^{(\omega)} \right)
}_{
    \leq \, \epsilon
}
+
\underbrace{
    \left( \sum\limits_{\omega \in \Omega \backslash F }  P_{*}^{(\omega)} \right)
}_{
    \leq \, \epsilon
}
\overset{(i)-(iii)}<
%%%%%%%%%%%%%%%%%%%%%%%%%%% 
\\ &<
%%%%%%%%%%%%%%%%%%%%%%%%%%% 
3 \, \epsilon. 
\end{aligned}
\end{equation*}
%----------------------------------------------

\end{proof}
%%%%%%%%%%%%%%%%%%%%%%%%%%%%%%%%%%%%%%%%%%%%%%%%

%%%%%%%%%%%%%%%%%%%%%%%%%%%%%%%%%%%%%%%%%%%%%%%%
%%%%%%%%%%%%%%%%%%%%%%%%%%%%%%%%%%%%%%%%%%%%%%%%
%%%%%%%%%%%%%%%%%%%%%%%%%%%%%%%%%%%%%%%%%%%%%%%%

% %%%%%%%%%%%%%%%%%%%%%%%%%%%%%%%%%%%%%%%%%%%%%%%%
% %%%%%%%%%%%%%%%%%%%%%%%%%%%%%%%%%%%%%%%%%%%%%%%%
% %%%%%%%%%%%%%%%%%%%%%%%%%%%%%%%%%%%%%%%%%%%%%%%%
% \begin{lemma}[Tychonoff's theorem]\label{Thm_Tychonoff} $ $ \\

% The product $\prod\limits_{i \in I} X_i $ of any collection of compact topological spaces $X_i$ is compact with respect to the product topology. 
    
% \end{lemma}

% %%%%%%%%%%%%%%%%%%%%%%%%%%%%%%%%%%%%%%%%%%%%%%%%
% %%%%%%%%%%%%%%%%%%%%%%%%%%%%%%%%%%%%%%%%%%%%%%%%
% %%%%%%%%%%%%%%%%%%%%%%%%%%%%%%%%%%%%%%%%%%%%%%%%

%###########################################################################################################
% end of "\subsection{Infinite dimensional semigroups}"

%###########################################################################################################
\subsection{Auxiliary calculations}\label{Section_AuxiliaryCalculations}

%%%%%%%%%%%%%%%%%%%%%%%%%%%%%%%%%%%%%%%%%%%%%%%%%%%%%%%%%%%%%%%%%%%
\begin{lemma} \label{Fact_ApplyingRatioTest}
The expression

%--------------------------------------------------------------------------------------------
\begin{equation}
\begin{aligned}
\sum\limits_{N \in \N} a_{N}
:=
\sum\limits_{N \in \N}
q^{c^{(N)}} \prod\limits_{i=1}^{N-1} (1+q^{c^{(i)}}). 
\end{aligned}
\end{equation}
%--------------------------------------------------------------------------------------------

converges, if $\bc \in \{ (n)_{n \in \N }, (n^2)_{n \in \N }, (2^{n})_{n \in \N } \} $ and diverges, if $\bc = c \cdot \bone$. 

\end{lemma}

\begin{proof}
%--------------------------------------------------------------------------------------------
\begin{equation*}
\begin{aligned}
\frac{a_{N+1}}{a_N}
&=
\frac{q^{c^{(N+1)}}\, \prod\limits_{i=1}^{N} (1+q^{c^{i}}) }{q^{c^{(N)}}\, \prod\limits_{i=1}^{N-1} (1+q^{c^{i}}) }
=
\frac{q^{c^{(N+1)}}\, (1+q^{c^{(N)}}) }{q^{c^{(N)}}}
=
\begin{cases}
1+q^{c}>1 &\text{  , if } \bc = c \, \bone 
%%%
\\
%%%
q \, (1+q^{N}) \xlongrightarrow{N \to \infty} q < 1 &\text{  , if } \bc = (n)_{n \in \N}
\end{cases}
\end{aligned}
\end{equation*}

%--------------------------------------------------------------------------------------------
\end{proof}

%%%%%%%%%%%%%%%%%%%%%%%%%%%%%%%%%%%%%%%%%%%%%%%%%%%%%%%%%%%%%%%%%%%

%%%%%%%%%%%%%%%%%%%%%%%%%%%%%%%%%%%%%%%%%%%%%%%%
%%%%%%%%%%%%%%%%%%%%%%%%%%%%%%%%%%%%%%%%%%%%%%%%
%%%%%%%%%%%%%%%%%%%%%%%%%%%%%%%%%%%%%%%%%%%%%%%%
\begin{lemma}[proof of equation \ref{Eq_Claim_ExplicitFormOfZ}] \label{Proof_Eq_Claim_ExplicitFormOfZ}
\end{lemma}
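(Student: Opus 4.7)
The plan is a straightforward induction on $n\in\N$ for each of the two formulas in \eqref{Eq_Claim_ExplicitFormOfZ}, using the one-step recurrence \eqref{ConstantFlow} as the engine. Recall we already showed that the quantity $c^{(z)}$ defined in \eqref{Eq_DefOf_c} is independent of $z$, so we may write $c^{(z)}=c$; rearranging this defining relation gives the base-case identities
\[
Z_*^{(z+1)} \;=\; Z_*^{(z)}\,\frac{\g_{z\to z+1}}{\g_{z+1\to z}} \;-\; \frac{c}{\g_{z+1\to z}}, \qquad
Z_*^{(z-1)} \;=\; Z_*^{(z)}\,\frac{\g_{z\to z-1}}{\g_{z-1\to z}} \;+\; \frac{c}{\g_{z-1\to z}},
\]
which are exactly the $n=1$ instances of \eqref{Eq_Claim_ExplicitFormOfZ} (note that the products collapse and each sum contains only the $j=1$ term).

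First I would handle the forward direction. Fix $z\in\Z$ and assume the first line of \eqref{Eq_Claim_ExplicitFormOfZ} holds for some $n\geq 1$. Applying the base-case identity at the point $z+n$ (so with $z_0=z+n$), I plug the induction hypothesis into
\[
Z_*^{(z+n+1)} \;=\; Z_*^{(z+n)}\,\frac{\g_{z+n\to z+n+1}}{\g_{z+n+1\to z+n}} \;-\; \frac{c}{\g_{z+n+1\to z+n}}
\]
and then absorb the new factor $\g_{z+n\to z+n+1}/\g_{z+n+1\to z+n}$ into the products. The $Z_*^{(z)}$-coefficient becomes
$\bigl(\prod_{\alpha=z}^{z+n}\g_{\alpha\to\alpha+1}\bigr)\big/\bigl(\prod_{\beta=z+1}^{z+n+1}\g_{\beta\to\beta-1}\bigr)$, as required. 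For the $c$-piece, multiplying each summand of $\sum_{j=1}^{n}(\cdots)$ by the same factor extends the products from $n-1$ to $n$ in the numerator and from $n$ to $n+1$ in the denominator, while the extra term $-c/\g_{z+n+1\to z+n}$ is precisely the missing $j=n+1$ summand. Reindexing gives the formula at $n+1$.

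Next I would run the mirror-image induction for the backward formula: the same argument with $z_0=z-n$ and the second base-case identity, plus a sign flip on the $c$-term, yields the second line of \eqref{Eq_Claim_ExplicitFormOfZ}. Both inductions are purely algebraic and the real work is bookkeeping.

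There is no conceptual obstacle; the only danger is arithmetic: getting the index ranges on the products $\prod_{\alpha=z+j}^{z+n-1}$ and $\prod_{\beta=z+j}^{z+n}$ to shift correctly when $n\to n+1$, and checking that the new term peeled off by the $c/\g_{z\pm (n+1)\to z\pm n}$ contribution matches the $j=n+1$ term of the extended sum after reindexing. Once these index shifts are verified in the first case, the second case follows by symmetry (swap $\g_{\cdot\to\cdot+1}\leftrightarrow\g_{\cdot\to\cdot-1}$ and flip the sign of $c$), so no new ideas are needed.
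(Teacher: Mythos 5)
Your proposal is correct and follows essentially the same route as the paper's own proof: induction on $n$ in both directions, using the one-step recurrence \eqref{ConstantFlow} to peel off the new factor and identify the extra $c$-term with the $j=n+1$ summand (the paper merely starts the induction trivially at $n=0$ instead of $n=1$, and your sign bookkeeping is the one consistent with \eqref{Eq_Claim_ExplicitFormOfZ}). No gaps.
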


For the induction start we get $ Z_*^{(0)} = Z_*^{(0)} $. For the induction step we consider 

%---------------------------------------------------------------------------------------
\begin{equation*}
\begin{aligned}
Z_*^{(k+n+1)}
%%%%%%%%%%%%%%%%%%%%%%%%%%%%%%%%%%%%%%%%%%%%%%%%
&=
%%%%%%%%%%%%%%%%%%%%%%%%%%%%%%%%%%%%%%%%%%%%%%%%
Z_*^{(k+n)} \, \cdot \, 
\frac{
\g_{k+n \to k+n+1}
}{
\g_{k+n+1 \to k+n}
}
+ \frac{c}{\g_{k+n+1 \to k+n}}
%%%%%%%%%%%%%%%%%%%%%%%%%%%%%%%%%%%%%%%%%%%%%%%%
 \\ &=
%%%%%%%%%%%%%%%%%%%%%%%%%%%%%%%%%%%%%%%%%%%%%%%%
\left[
Z_*^{(k)} \, 
\frac{
\prod\limits_{\alpha=k}^{k+n-1} \g_{\alpha \to \alpha + 1}
}{
\prod\limits_{\beta=k+1}^{k+n} \g_{\beta \to \beta - 1}
}
+ 
c
\sum\limits_{j=1}^{n}
\frac{
\prod\limits_{\alpha=k+j}^{k+n-1} \g_{\alpha \to \alpha + 1}
}{
\prod\limits_{\beta=k+j}^{k+n} \g_{\beta \to \beta - 1}
}  
\right]
\cdot \, 
\frac{
\g_{k+n \to k+n+1}
}{
\g_{k+n+1 \to k+n}
}
+ \frac{c}{\g_{k+n+1 \to k+n}}
%%%%%%%%%%%%%%%%%%%%%%%%%%%%%%%%%%%%%%%%%%%%%%%%
 \\ &=
%%%%%%%%%%%%%%%%%%%%%%%%%%%%%%%%%%%%%%%%%%%%%%%%
\left[
Z_*^{(k)} \, 
\frac{
\prod\limits_{\alpha=k}^{k+n} \g_{\alpha \to \alpha + 1}
}{
\prod\limits_{\beta=k+1}^{k+n+1} \g_{\beta \to \beta - 1}
}
+
c
\sum\limits_{j=1}^{n+1}
\frac{
\prod\limits_{\alpha=k+j}^{k+n} \g_{\alpha \to \alpha + 1}
}{
\prod\limits_{\beta=k+j}^{k+n+1} \g_{\beta \to \beta - 1}
}  
\right]
%%%%%%%%%%%%%%%%%%%%%%%%%%%%%%%%%%%%%%%%%%%%%%%%
 \\
%%%%%%%%%%%%%%%%%%%%%%%%%%%%%%%%%%%%%%%%%%%%%%%%
 \text{     and }
%%%%%%%%%%%%%%%%%%%%%%%%%%%%%%%%%%%%%%%%%%%%%%%%
 \\ 
%%%%%%%%%%%%%%%%%%%%%%%%%%%%%%%%%%%%%%%%%%%%%%%%
Z_*^{(k-n-1)}
%%%%%%%%%%%%%%%%%%%%%%%%%%%%%%%%%%%%%%%%%%%%%%%%
&=
%%%%%%%%%%%%%%%%%%%%%%%%%%%%%%%%%%%%%%%%%%%%%%%%
Z_*^{(k-n)} 
\, \cdot \, 
\frac{
\g_{k-n \to k-n-1}
}{
\g_{k-n-1 \to k-n}
}
- \frac{c}{\g_{k-n-1 \to k-n}}
%%%%%%%%%%%%%%%%%%%%%%%%%%%%%%%%%%%%%%%%%%%%%%%%
 \\ &=
%%%%%%%%%%%%%%%%%%%%%%%%%%%%%%%%%%%%%%%%%%%%%%%%
\left[
Z_*^{(k)} \, 
\frac{
\prod\limits_{\beta=k-n}^{k} \g_{\beta \to \beta - 1}
}{
\prod\limits_{\alpha=k-n-1}^{k-1} \g_{\alpha \to \alpha + 1}
}
-
c
\sum\limits_{j=1}^{n}
\frac{
\prod\limits_{\beta=k-n+1}^{k-j}\g_{\beta \to \beta - 1}
}{
\prod\limits_{\alpha=k-n}^{k-j} \g_{\alpha \to \alpha + 1}
}  
\right]
\, \cdot \, 
\frac{
\g_{k-n \to k-n-1}
}{
\g_{k-n-1 \to k-n}
}
- \frac{c}{\g_{k-n-1 \to k-n}}
%%%%%%%%%%%%%%%%%%%%%%%%%%%%%%%%%%%%%%%%%%%%%%%%
 \\ &=
%%%%%%%%%%%%%%%%%%%%%%%%%%%%%%%%%%%%%%%%%%%%%%%%
Z_*^{(k)} \, 
\frac{
\prod\limits_{\beta=k-n}^{k} \g_{\beta \to \beta - 1}
}{
\prod\limits_{\alpha=k-n-1}^{k-1} \g_{\alpha \to \alpha + 1}
}
-
c
\sum\limits_{j=1}^{n+1}
\frac{
\prod\limits_{\beta=k-n}^{k-j}\g_{\beta \to \beta - 1}
}{
\prod\limits_{\alpha=k-n-1}^{k-j} \g_{\alpha \to \alpha + 1}
}  
%%%%%%%%%%%%%%%%%%%%%%%%%%%%%%%%%%%%%%%%%%%%%%%%
%  \\ &=
% %%%%%%%%%%%%%%%%%%%%%%%%%%%%%%%%%%%%%%%%%%%%%%%%
% Z_*^{(k-n-1)}
\end{aligned}
\end{equation*}
%---------------------------------------------------------------------------------------

%%%%%%%%%%%%%%%%%%%%%%%%%%%%%%%%%%%%%%%%%%%%%%%%
%%%%%%%%%%%%%%%%%%%%%%%%%%%%%%%%%%%%%%%%%%%%%%%%
%%%%%%%%%%%%%%%%%%%%%%%%%%%%%%%%%%%%%%%%%%%%%%%%

%%%%%%%%%%%%%%%%%%%%%%%%%%%%%%%%%%%%%%%%%%%%%%%%
%%%%%%%%%%%%%%%%%%%%%%%%%%%%%%%%%%%%%%%%%%%%%%%%
%%%%%%%%%%%%%%%%%%%%%%%%%%%%%%%%%%%%%%%%%%%%%%%%
\begin{lemma}\label{lemma_ConsequencesFromConstantFlow}[Proof of $\bZ_* = \frac{Z_*^{(0)}}{X_*^{(0)}} - c \, \tilde{\bP}_* $ ]

%--------------------------------------------------------------------------------------------
\begin{equation*}
\begin{aligned}
Z_*^{(k+n)}
&=
X_k \, 
\frac{
\prod\limits_{\alpha=k}^{k+n-1} \g_{\alpha \to \alpha + 1}
}{
\prod\limits_{\beta=k+1}^{k+n} \g_{\beta \to \beta - 1}
}
- 
c
\sum\limits_{j=1}^{n}
\frac{
\prod\limits_{\alpha=k+j}^{k+n-1} \g_{\alpha \to \alpha + 1}
}{
\prod\limits_{\beta=k+j}^{k+n} \g_{\beta \to \beta - 1}
} 
%%%%%%%%%%%%%%%%%%%%%%%%%%%%%%%%%%%%%%%%%%%%%%%%%%%%%%%%%%%%%%
\\ \xLongrightarrow[\eqref{ConstantFlow}]{k=0} 
 % \\ %&= % \xlongequal{k=0}
%%%%%%%%%%%%%%%%%%%%%%%%%%%%%%%%%%%%%%%%%%%%%%%%%%%%%%%%%%%%%%
Z_*^{(n)}
&=
Z_*^{(0)} \, 
\frac{
\prod\limits_{\alpha=0}^{n-1} \g_{\alpha \to \alpha + 1}
}{
\prod\limits_{\beta=1}^{n} \g_{\beta \to \beta - 1}
} 
\, \cdot \, 
\frac{
\prod\limits_{\alpha \leq -1} \g_{\alpha \to \alpha + 1} \,
\prod\limits_{\beta \geq 1} \g_{\beta \to \beta - 1}
}{
\underbrace{
\prod\limits_{\alpha \leq -1} \g_{\alpha \to \alpha + 1} \,
\prod\limits_{\beta \geq 1} \g_{\beta \to \beta - 1}
}_{
X_*^{(0)}
}
}
- 
c
\underbrace{
\sum\limits_{j=1}^{n}
\frac{
\prod\limits_{\alpha=j}^{n-1} \g_{\alpha \to \alpha + 1}
}{
\prod\limits_{\beta=j}^{n} \g_{\beta \to \beta - 1}
} 
}_{
Y_*^{(n)}
}
%%%%%%%%%%%%%%%%%%%%%%%%%%%%%%%%%%%%%%%%%%%%%%%%%%%%%%%%%%%%%%
\\ &=
%%%%%%%%%%%%%%%%%%%%%%%%%%%%%%%%%%%%%%%%%%%%%%%%%%%%%%%%%%%%%%
\frac{
Z_*^{(0)} \, 
}{
X_*^{(0)}
}
 \, 
\underbrace{
\left(
\prod\limits_{\alpha \leq n-1} \g_{\alpha \to \alpha + 1} \, 
\prod\limits_{\beta \geq n+1}  \g_{\beta \to \beta - 1}
\right)
}_{
X_*^{(n)}
}
- 
c \, 
Y_*^{(n)}
%%%%%%%%%%%%%%%%%%%%%%%%%%%%%%%%%%%%%%%%%%%%%%%%%%%%%%%%%%%%%%
\\ \text{ and } \\ 
%%%%%%%%%%%%%%%%%%%%%%%%%%%%%%%%%%%%%%%%%%%%%%%%%%%%%%%%%%%%%%
Z_*^{(k-n)}
%ss
&=
Z_*^{(k)} \, 
\frac{
\prod\limits_{\beta=k-n+1}^{k} \g_{\beta \to \beta - 1}
}{
\prod\limits_{\alpha=k-n}^{k-1} \g_{\alpha \to \alpha + 1}
}
+
c
\sum\limits_{j=1}^{n}
\frac{
\prod\limits_{\beta=k-n+1}^{k-j} \g_{\beta \to \beta - 1}
}{
\prod\limits_{\alpha=k-n}^{k-j} \g_{\alpha \to \alpha + 1}
} 
%%%%%%%%%%%%%%%%%%%%%%%%%%%%%%%%%%%%%%%%%%%%%%%%%%%%%%%%%%%%%%
\\ \xLongrightarrow{k=0}
%%%%%%%%%%%%%%%%%%%%%%%%%%%%%%%%%%%%%%%%%%%%%%%%%%%%%%%%%%%%%%
Z_*^{(-n)}
&=
Z_*^{(0)}
\, 
\frac{
\prod\limits_{\beta=-n+1}^{0} \g_{\beta \to \beta - 1}
}{
\prod\limits_{\alpha=-n}^{-1} \g_{\alpha \to \alpha + 1}
}
\, \cdot \, 
\frac{
\prod\limits_{\alpha \leq -1} \g_{\alpha \to \alpha + 1} \, 
\prod\limits_{\beta \geq 1}  \g_{\beta \to \beta - 1}
}{
\underbrace{
\left(
\prod\limits_{\alpha \leq -1} \g_{\alpha \to \alpha + 1} \, 
\prod\limits_{\beta \geq 1}  \g_{\beta \to \beta - 1}
\right)
}_{
X_*^{(0)}
}
}
+
c \, 
\underbrace{
\sum\limits_{j=1}^{n}
\frac{
\prod\limits_{\beta=-n+1}^{-j} \g_{\beta \to \beta - 1}
}{
\prod\limits_{\alpha=-n}^{-j} \g_{\alpha \to \alpha + 1}
} 
}_{
-Y_*^{(-n)}
}
%%%%%%%%%%%%%%%%%%%%%%%%%%%%%%%%%%%%%%%%%%%%%%%%%%%%%%%%%%%%%%
\\ &=
%%%%%%%%%%%%%%%%%%%%%%%%%%%%%%%%%%%%%%%%%%%%%%%%%%%%%%%%%%%%%%
%
\frac{
Z_*^{(0)}
}{
X_*^{0}
}
\, 
\underbrace{
\left(
\prod\limits_{\alpha\leq -n-1} \g_{\alpha\to \alpha + 1}
\prod\limits_{\beta \geq -n+1} \g_{\beta\to\beta - 1}
\right)
}_{
X_*^{(-n)}
}
-
c \, Y_*^{(-n)}
%%%%%%%%%%%%%%%%%%%%%%%%%%%%%%%%%%%%%%%%%%%%%%%%%%%%%%%%%%%%%%
\\ 
%%%%%%%%%%%%%%%%%%%%%%%%%%%%%%%%%%%%%%%%%%%%%%%%%%%%%%%%%%%%%%
\Longrightarrow 
\bZ_*
&=
\frac{
Z_*^{(0)} \, 
}{
X_*^{(0)}
}\,
\bX_*
-
c \, \tilde{\bP}_*
\in \Span(\bX_*, \tilde{\bP}_*)
\end{aligned}    
\end{equation*}
%--------------------------------------------------------------------------------------------

\end{lemma}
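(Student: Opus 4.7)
The goal is to show that every $\bZ_* \in \widetilde{\Kern(\G)}$ can be written as $\bZ_* = \frac{Z_*^{(0)}}{X_*^{(0)}}\, \bX_* - c\, \bY_*$, where $c$ is the constant ``probability flow'' from equation \eqref{Eq_DefOf_c} and $\bX_*,\bY_*$ are as in \eqref{PStar_LinearChainOnZ}. Since the right-hand side is completely explicit while the left-hand side is just any solution of $\G\bZ_*=\bzero$, the plan is to take the already-derived recursion formulas \eqref{Eq_Claim_ExplicitFormOfZ} for $Z_*^{(z\pm n)}$, specialize them to the pivot $z=0$, and then recognize the two emerging terms individually as scalar multiples of $\bX_*$ and $\bY_*$.

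First I would write down, for $n\in \N$, the two identities obtained from \eqref{Eq_Claim_ExplicitFormOfZ} at $z=0$:
\[
Z_*^{(n)} = Z_*^{(0)}\cdot \frac{\prod_{\alpha=0}^{n-1}\g_{\alpha\to\alpha+1}}{\prod_{\beta=1}^{n}\g_{\beta\to\beta-1}} \;-\; c\sum_{j=1}^{n}\frac{\prod_{\alpha=j}^{n-1}\g_{\alpha\to\alpha+1}}{\prod_{\beta=j}^{n}\g_{\beta\to\beta-1}},
\]
and the symmetric expression for $Z_*^{(-n)}$. The coefficient of $c$ is, by definition \eqref{PStar_LinearChainOnZ}, exactly $Y_*^{(n)}$ (respectively $-Y_*^{(-n)}$ on the negative half-line), so the second summand already equals $-c\,\bY_*$ component-wise. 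That is the easy half.

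The heart of the argument is identifying the coefficient of $Z_*^{(0)}$ with $X_*^{(n)}/X_*^{(0)}$. Recall $X_*^{(z)} = \prod_{\alpha\leq z-1}\g_{\alpha\to\alpha+1}\cdot \prod_{\beta\geq z+1}\g_{\beta\to\beta-1}$; although these are formal infinite products, the ratio
\[
\frac{X_*^{(n)}}{X_*^{(0)}} \;=\; \frac{\prod_{\alpha\leq n-1}\g_{\alpha\to\alpha+1}\cdot \prod_{\beta\geq n+1}\g_{\beta\to\beta-1}}{\prod_{\alpha\leq -1}\g_{\alpha\to\alpha+1}\cdot \prod_{\beta\geq 1}\g_{\beta\to\beta-1}}
\]
collapses, after cancelling the identical infinite tails in numerator and denominator, to the \emph{finite} ratio $\prod_{\alpha=0}^{n-1}\g_{\alpha\to\alpha+1}/\prod_{\beta=1}^{n}\g_{\beta\to\beta-1}$ that appears in the recursion. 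An analogous cancellation handles $n<0$: the infinite tail on the left of $-n$ and the infinite tail on the right of $0$ both survive the quotient $X_*^{(-n)}/X_*^{(0)}$ only as the finite shift $\prod_{\beta=-n+1}^{0}\g_{\beta\to\beta-1}/\prod_{\alpha=-n}^{-1}\g_{\alpha\to\alpha+1}$. Matching components one by one then yields $\bZ_* = \tfrac{Z_*^{(0)}}{X_*^{(0)}}\bX_* - c\,\bY_*$.

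The only real obstacle is bookkeeping: one has to be careful that the infinite products defining $\bX_*$ are treated as formal symbols whose \emph{ratios} between distinct components are always well-defined finite products, so that the equation should be read as an equality of sequences in $\R^\Z$ (equivalently, as the statement that $\bZ_*$ lies in the two-dimensional space $\Span(\bX_*,\bY_*)$). With that in mind, the entire proof reduces to a mechanical index computation based on \eqref{Eq_Claim_ExplicitFormOfZ}, without any further analytic input.
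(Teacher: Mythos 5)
Your proposal is correct and follows essentially the same route as the paper's own argument: specialize the recursion \eqref{Eq_Claim_ExplicitFormOfZ} to the pivot $z=0$, read off the coefficient of $c$ as $Y_*^{(\pm n)}$, and identify the coefficient of $Z_*^{(0)}$ with the ratio $X_*^{(\pm n)}/X_*^{(0)}$ by cancelling the common infinite tails, exactly as the paper does by multiplying and dividing by $X_*^{(0)}$. Your added caveat about treating the infinite products in $\bX_*$ formally, so that only the component ratios matter, is the same implicit convention the paper uses and does not change the argument.
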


%%%%%%%%%%%%%%%%%%%%%%%%%%%%%%%%%%%%%%%%%%%%%%%%
%%%%%%%%%%%%%%%%%%%%%%%%%%%%%%%%%%%%%%%%%%%%%%%%
%%%%%%%%%%%%%%%%%%%%%%%%%%%%%%%%%%%%%%%%%%%%%%%%

%%%%%%%%%%%%%%%%%%%%%%%%%%%%%%%%%%%%%%%%%%%%%%%%
%%%%%%%%%%%%%%%%%%%%%%%%%%%%%%%%%%%%%%%%%%%%%%%%
%%%%%%%%%%%%%%%%%%%%%%%%%%%%%%%%%%%%%%%%%%%%%%%%
\begin{lemma}[The `additional' stationary sequence for the linear chains with two open ends]\label{Lemma_TheAdditionalStationarySequenceForTheLinearChainWithTwoOpenEnds} $ $ \\ 

Let $\G$ the generator of the linear chain with one open end and $F = \{-N, \dots, -1, 0, 1, \dots, N\} $ be a finite subnetwork for some $N \in \N$. For

%--------------------------------------------------------------------------------------------
\begin{equation*}
\begin{aligned}
\bY_*
&=
\left(
\left( 
(-1)
\sum\limits_{j=1}^{-m} 
\frac{
\prod\limits_{\beta=m+1}^{-j}\g_{\beta \to \beta-1}
}{
\prod\limits_{\alpha=m}^{-j}\g_{\alpha \to \alpha+1}
}
\right)_{m \in \Z_{<0}}
, 0, 
\left( 
\sum\limits_{j=1}^{n}
\frac{
\prod\limits_{\alpha=j}^{n-1} \g_{\alpha \to \alpha + 1} 
}{
\prod\limits_{\beta=j}^{n} \g_{\beta \to \beta - 1}
}
\right)_{n \in \Z_{>0}}
\right) = \\ 
&=
\left(
\left( 
\frac{(-1)}{\g_{m \to m+1}}
\sum\limits_{j=1}^{-m} 
\prod\limits_{\epsilon=m+1}^{-j}
\frac{\g_{\epsilon \to \epsilon-1}}{\g_{\epsilon \to \epsilon+1}}
\right)_{m \in \Z_{<0}}
, 0 , 
\left( 
\frac{1}{\g_{n \to n-1}}
\sum\limits_{j=1}^{n} 
\prod\limits_{\epsilon=j}^{n-1}
\frac{\g_{\epsilon \to \epsilon+1}}{\g_{\epsilon \to \epsilon-1}}
\right)_{n \in \Z_{>0}}
\right)
\end{aligned}
\end{equation*}
%--------------------------------------------------------------------------------------------

we have $ \G^{[F]} \, \bY_{*}^{F} = \bE_{-N} - \bE_{N} \xlongrightarrow[\text{pointwise}]{N \to \infty} \bzero $. 

\end{lemma}

\begin{proof} $ $ \\ 

\begin{itemize}

\item[$n=-N$]

%--------------------------------------------------------------------------------------------
\begin{equation*}
\begin{aligned}
Y_{*}^{(-N+1)} \, &\g_{-N+1 \to -N } - Y_{*}^{(-N)}\, \g_{-N \to -N+1}
%%%%%%%%%%%%%%%%%%%%%%%%%%%%%%%%%%%%%%%%%%%%%%%%%%%%%%%%%%%%%%%%%%%%%%%%
= \\ &=
%%%%%%%%%%%%%%%%%%%%%%%%%%%%%%%%%%%%%%%%%%%%%%%%%%%%%%%%%%%%%%%%%%%%%%%%
(-1) \, \cdot \, 
    \underbrace{
    \frac{
        \g_{ -N+1 \to -N}
    }{
        \g_{-N+1 \to -N+2}
    } \, 
    \sum\limits_{j=1}^{N-1}
    \prod\limits_{\epsilon=-N+2}^{-j}
    \frac{
        \g_{\epsilon\to \epsilon - 1}
    }{
        \g_{\epsilon\to \epsilon + 1}
    }
}_{
\sum\limits_{j=1}^{N-1}
 \prod\limits_{\epsilon=-N+1}^{-j}
     \frac{
        \g_{\epsilon\to \epsilon - 1}
    }{
        \g_{\epsilon\to \epsilon + 1}
    }
}
+
\underbrace{
\frac{
    \g_{-N \to -N+1}
    }{
    \g_{-N \to -N+1}
    }
}_{1} \, 
\sum\limits_{j=1}^{N} \prod\limits_{\epsilon=-N+1}^{-j} 
\frac{
    \g_{\epsilon\to \epsilon - 1}
}{
    \g_{\epsilon\to \epsilon + 1}
}
%%%%%%%%%%%%%%%%%%%%%%%%%%%%%%%%%%%%%%%%%%%%%%%%%%%%%%%%%%%%%%%%%%%%%%%%
\\ &=
%%%%%%%%%%%%%%%%%%%%%%%%%%%%%%%%%%%%%%%%%%%%%%%%%%%%%%%%%%%%%%%%%%%%%%%%
 \prod\limits_{\epsilon=-N+1}^{-N} 
\frac{
    \g_{\epsilon\to \epsilon - 1}
}{
    \g_{\epsilon\to \epsilon + 1}
}
%%%%%%%%%%%%%%%%%%%%%%%%%%%%%%%%%%%%%%%%%%%%%%%%%%%%%%%%%%%%%%
\\ &= 
%%%%%%%%%%%%%%%%%%%%%%%%%%%%%%%%%%%%%%%%%%%%%%%%%%%%%%%%%%%%%%
1. 
\end{aligned}
\end{equation*}
%--------------------------------------------------------------------------------------------

\item[$n \in \{-N+1, \dots, -1\}$]
%--------------------------------------------------------------------------------------------
\begin{equation*}
\begin{aligned}
Y_*^{(n-1)} \g_{n-1 \to n}
+
Y_*^{(n+1)} \g_{n+1 \to n}
%%%%%%%%%%%%%%%%%%%%%%%%%%%%%%%%%%%%
&=
%%%%%%%%%%%%%%%%%%%%%%%%%%%%%%%%%%%%
\sum\limits_{j=1}^{n-1}
\underbrace{
    \left(
    \frac{
    \prod\limits_{\alpha=j}^{n-2} \g_{\alpha \to \alpha + 1} 
    }{
    \prod\limits_{\beta=j}^{n-1} \g_{\beta \to \beta - 1}
    }
    \g_{n-1 \to n}
    \right)
}_{
    \frac{
    \prod\limits_{\alpha=j}^{n-1} \g_{\alpha \to \alpha + 1} 
    }{
    \prod\limits_{\beta=j}^{n-1} \g_{\beta \to \beta - 1}
    }
}
+
\sum\limits_{j=1}^{n+1}
\underbrace{
    \left(
    \frac{
    \prod\limits_{\alpha=j}^{n} \g_{\alpha \to \alpha + 1} 
    }{
    \prod\limits_{\beta=j}^{n+1} \g_{\beta \to \beta - 1}
    }
    \g_{n+1 \to n}
    \right)
}_{
    \frac{
    \prod\limits_{\alpha=j}^{n} \g_{\alpha \to \alpha + 1} 
    }{
    \prod\limits_{\beta=j}^{n} \g_{\beta \to \beta - 1}
    }
}
%%%%%%%%%%%%%%%%%%%%%%%%%%%%%%%%%%%%
\\ &=
%%%%%%%%%%%%%%%%%%%%%%%%%%%%%%%%%%%%
\sum\limits_{j=1}^{n-1}
\frac{
\prod\limits_{\alpha=j}^{n-1} \g_{\alpha \to \alpha + 1} 
}{
\prod\limits_{\beta=j}^{n-1} \g_{\beta \to \beta - 1}
}
\underbrace{
\left[
1 + \frac{\g_{n \to n+1}}{\g_{n \to n-1}}  
\right]
}_{
\frac{
\g_{n \to n+1} + \g_{n \to n-1}
}{
\g_{n \to n-1}
}
}
+ 
\underbrace{
\left( \frac{\g_{n \to n+1}}{\g_{n \to n-1}} + 1 \right)
}_{
\frac{
\g_{n \to n+1} + \g_{n \to n-1}
}{
\g_{n \to n-1}
}
}
%%%%%%%%%%%%%%%%%%%%%%%%%%%%%%%%%%%%
\\ &=
%%%%%%%%%%%%%%%%%%%%%%%%%%%%%%%%%%%%
\left( 
\g_{n \to n+1} + \g_{n \to n-1}
\right)
\underbrace{
\left(
\sum\limits_{j=1}^{n}
\frac{
\prod\limits_{\alpha=j}^{n-1} \g_{\alpha \to \alpha + 1} 
}{
\prod\limits_{\beta=j}^{n} \g_{\beta \to \beta - 1}
}
\right)
}_{
Y_*^{(n)}
}
\end{aligned}
\end{equation*}
%--------------------------------------------------------------------------------------------

\item[$(n=0)$]
%--------------------------------------------------------------------------------------------
\begin{equation*}
\begin{aligned}
\underbrace{
Y_*^{(-1)}
}_{
\frac{-1}{\g_{-1 \to 0}}
}
 \, \g_{-1 \to 0} 
+
\underbrace{
Y_*^{(1)}
}_{
\frac{1}{\g_{1 \to 0}}
}
\, \g_{1 \to 0}
&=
0
=
\underbrace{
    Y_*^{(0)}
}_{
    0
}
\, \g_{0 \to}. 
\end{aligned}
\end{equation*}
%--------------------------------------------------------------------------------------------

\item[$n \in \{1, \dots, N-1\}$]
%--------------------------------------------------------------------------------------------
\begin{equation*}
\begin{aligned}
Y_*^{(n-1)} \g_{n-1 \to n}
+
Y_*^{(n+1)} \g_{n+1 \to n}
%%%%%%%%%%%%%%%%%%%%%%%%%%%%%%%%%%%%
&=
%%%%%%%%%%%%%%%%%%%%%%%%%%%%%%%%%%%%
(-1)\sum\limits_{j=1}^{-n-1}
\underbrace{
\frac{
\prod\limits_{\beta=n}^{-j} \g_{\beta \to \beta - 1}
}{
\prod\limits_{\alpha=n-1}^{-j} \g_{\alpha \to \alpha + 1} 
}
\g_{n-1 \to n}
}_{
\frac{
\prod\limits_{\beta=n}^{-j} \g_{\beta \to \beta - 1}
}{
\prod\limits_{\alpha=n}^{-j} \g_{\alpha \to \alpha + 1} 
}
}
-
\sum\limits_{j=1}^{-n+1}
\underbrace{
\frac{
\prod\limits_{\beta=n+2}^{-j} \g_{\beta \to \beta - 1}
}{
\prod\limits_{\alpha=n+1}^{-j} \g_{\alpha \to \alpha + 1} 
}
\g_{n+1 \to n}
}_{
\frac{
\prod\limits_{\beta=n+1}^{-j} \g_{\beta \to \beta - 1}
}{
\prod\limits_{\alpha=n+1}^{-j} \g_{\alpha \to \alpha + 1} 
}
} 
%%%%%%%%%%%%%%%%%%%%%%%%%%%%%%%%%%%%
\\ &=
%%%%%%%%%%%%%%%%%%%%%%%%%%%%%%%%%%%%
(-1)\sum\limits_{j=1}^{-n+1}
\frac{
\prod\limits_{\beta=n+1}^{-j} \g_{\beta \to \beta - 1}
}{
\prod\limits_{\alpha=n+1}^{-j} \g_{\alpha \to \alpha + 1} 
}
\underbrace{
\left[
1 + \frac{\g_{n \to n-1}}{\g_{n \to n+1}}  
\right]
}_{
\frac{
\g_{n \to n-1} + \g_{n \to n+1}
}{
\g_{n \to n+1}
}
}
+ 
\underbrace{
\left[
1 + \frac{\g_{n \to n-1}}{\g_{n \to n+1}}  
\right]
}_{
\frac{
\g_{n \to n-1} + \g_{n \to n+1}
}{
\g_{n \to n+1}
}
}
%%%%%%%%%%%%%%%%%%%%%%%%%%%%%%%%%%%%
\\ &=
%%%%%%%%%%%%%%%%%%%%%%%%%%%%%%%%%%%%
\left( 
\g_{n \to n-1} + \g_{n \to n+1}
\right)
\underbrace{
\left(
(-1)
\sum\limits_{j=1}^{-n}
\frac{
\prod\limits_{\beta=n+1}^{-j} \g_{\beta \to \beta - 1}
}{
\prod\limits_{\alpha=n}^{-j} \g_{\alpha \to \alpha + 1} 
}
\right)
}_{
Y_*^{(n)}
}
\end{aligned}
\end{equation*}
%-------------------------------------------------------------------------------------------

\item[$n=N$]
%--------------------------------------------------------------------------------------------
\begin{equation*}
\begin{aligned}
Y_{*}^{(N-1)} \, &\g_{N-1 \to N } - Y_{*}^{(N)} \, \g_{N \to N-1}
%%%%%%%%%%%%%%%%%%%%%%%%%%%%%%%%%%%%%%%%%%%%%%%%%%%%%%%%%%%%%%%%%%%%%%%%
= \\ &=
%%%%%%%%%%%%%%%%%%%%%%%%%%%%%%%%%%%%%%%%%%%%%%%%%%%%%%%%%%%%%%%%%%%%%%%%
\underbrace{
    \frac{
        \g_{ N-1 \to N}
    }{
        \g_{N-1 \to N-2}
    } \, 
    \sum\limits_{j=1}^{N-1}
    \prod\limits_{\epsilon=j}^{N-2}
    \frac{
        \g_{\epsilon\to \epsilon + 1}
    }{
        \g_{\epsilon\to \epsilon - 1}
    }
}_{
\sum\limits_{j=1}^{N-1}
 \prod\limits_{\epsilon=j}^{N-1}
     \frac{
        \g_{\epsilon\to \epsilon + 1}
    }{
        \g_{\epsilon\to \epsilon - 1}
    }
}
-
\underbrace{
\frac{
    \g_{N \to N-1}
    }{
    \g_{N \to N-1}
    }
}_{1} \, 
\sum\limits_{j=1}^{N} \prod\limits_{\epsilon=j}^{N-1} 
\frac{
    \g_{\epsilon\to \epsilon + 1}
}{
    \g_{\epsilon\to \epsilon - 1}
}
%%%%%%%%%%%%%%%%%%%%%%%%%%%%%%%%%%%%%%%%%%%%%%%%%%%%%%%%%%%%%%%%%%%%%%%%
\\ &=
%%%%%%%%%%%%%%%%%%%%%%%%%%%%%%%%%%%%%%%%%%%%%%%%%%%%%%%%%%%%%%%%%%%%%%%%
(-1) \, \cdot \,  \prod\limits_{\epsilon=N}^{N-1} 
\frac{
    \g_{\epsilon\to \epsilon + 1}
}{
    \g_{\epsilon\to \epsilon - 1}
}
%%%%%%%%%%%%%%%%%%%%%%%%%%%%%%%%%%%%%%%%%%%%%%%%%%%%%%%%%%%%%%
\\ &= 
%%%%%%%%%%%%%%%%%%%%%%%%%%%%%%%%%%%%%%%%%%%%%%%%%%%%%%%%%%%%%%
-1. 
\end{aligned}
\end{equation*}
%--------------------------------------------------------------------------------------------

\end{itemize}

\end{proof}

%%%%%%%%%%%%%%%%%%%%%%%%%%%%%%%%%%%%%%%%%%%%%%%%
%%%%%%%%%%%%%%%%%%%%%%%%%%%%%%%%%%%%%%%%%%%%%%%%
%%%%%%%%%%%%%%%%%%%%%%%%%%%%%%%%%%%%%%%%%%%%%%%%

%%%%%%%%%%%%%%%%%%%%%%%%%%%%%%%%%%%%%%%%%%%%%%%%
%%%%%%%%%%%%%%%%%%%%%%%%%%%%%%%%%%%%%%%%%%%%%%%%
%%%%%%%%%%%%%%%%%%%%%%%%%%%%%%%%%%%%%%%%%%%%%%%%
\begin{lemma}[No `additional' stationary sequence for the linear chain with \emph{one} open end]\label{Lemma_NoAdditionalStationarySequenceForTheLinearChainWithOneOpenEnd} $ $ \\

Let $\G$ the generator of the linear chain with one open end and $F = \{0, 1, \dots, N\} $ be a finite subnetwork for some $N \in \N$. For 

%--------------------------------------------------------------------------------------------
\begin{equation*}
\begin{aligned}
\bY_{*}
&=
\left(
0, 
\left( 
\sum\limits_{j=1}^{n}
\frac{
\prod\limits_{\alpha=j}^{n-1} \g_{\alpha \to \alpha + 1} 
}{
\prod\limits_{\beta=j}^{n} \g_{\beta \to \beta - 1}
}
\right)_{n \in \N }
\right) 
=
\left(
 0 , 
\left( 
\frac{1}{\g_{n \to n-1}}
\sum\limits_{j=1}^{n} 
\prod\limits_{\epsilon=j}^{n-1}
\frac{\g_{\epsilon \to \epsilon+1}}{\g_{\epsilon \to \epsilon-1}}
\right)_{n \in \N } 
\right)
%%%%%%%%%%%%%%%%%%%%%%%%%%%%%%%%%%%%%%%%%%%%%%%%%%%%%%%%%%%%%%
\\ &= 
%%%%%%%%%%%%%%%%%%%%%%%%%%%%%%%%%%%%%%%%%%%%%%%%%%%%%%%%%%%%%%
\left(
0, \frac{1}{\g_{1 \to 0 }}, 
\frac{1}{\g_{2 \to 1 } }  \Bigl( \frac{\g_{1 \to 2}}{\g_{1 \to 0}} + 1 \Bigr) , \dots 
\right)
\end{aligned}
\end{equation*}
%--------------------------------------------------------------------------------------------

we have $ \G^{[F]} \, \bY_{*}^{F} = \bE_{0} - \bE_{N} \xlongrightarrow[\text{pointwise}]{N \to \infty} \bE_{0}$. 
\end{lemma}

\begin{proof} $ $ \\

\begin{itemize}

\item[$n=0$]

%--------------------------------------------------------------------------------------------
\begin{equation*}
\begin{aligned}
\bigl(\G^{[F]} \, \bY_{*}^{F}\bigr)^{(n)}
&=
\underbrace{ 
    Y_{*}^{(1)}
}_{
    \frac{1}{\g_{1 \to 0} }
}
\, \g_{1 \to 0}
-
\underbrace{
    Y_{*}^{(0)}
}_{
    0
}\, \g_{0 \to 1}
=
1. 
\end{aligned}
\end{equation*}
%--------------------------------------------------------------------------------------------

\item[$n \in \{1, \dots, N-1\}$]

%--------------------------------------------------------------------------------------------
\begin{equation*}
\begin{aligned}
Y_*^{(n-1)} \g_{n-1 \to n}
+
Y_*^{(n+1)} \g_{n+1 \to n}
%%%%%%%%%%%%%%%%%%%%%%%%%%%%%%%%%%%%
&=
%%%%%%%%%%%%%%%%%%%%%%%%%%%%%%%%%%%%
\sum\limits_{j=1}^{n-1}
\underbrace{
    \left(
    \frac{
    \prod\limits_{\alpha=j}^{n-2} \g_{\alpha \to \alpha + 1} 
    }{
    \prod\limits_{\beta=j}^{n-1} \g_{\beta \to \beta - 1}
    }
    \g_{n-1 \to n}
    \right)
}_{
    \frac{
    \prod\limits_{\alpha=j}^{n-1} \g_{\alpha \to \alpha + 1} 
    }{
    \prod\limits_{\beta=j}^{n-1} \g_{\beta \to \beta - 1}
    }
}
+
\sum\limits_{j=1}^{n+1}
\underbrace{
    \left(
    \frac{
    \prod\limits_{\alpha=j}^{n} \g_{\alpha \to \alpha + 1} 
    }{
    \prod\limits_{\beta=j}^{n+1} \g_{\beta \to \beta - 1}
    }
    \g_{n+1 \to n}
    \right)
}_{
    \frac{
    \prod\limits_{\alpha=j}^{n} \g_{\alpha \to \alpha + 1} 
    }{
    \prod\limits_{\beta=j}^{n} \g_{\beta \to \beta - 1}
    }
}
%%%%%%%%%%%%%%%%%%%%%%%%%%%%%%%%%%%%
\\ &=
%%%%%%%%%%%%%%%%%%%%%%%%%%%%%%%%%%%%
\sum\limits_{j=1}^{n-1}
\frac{
\prod\limits_{\alpha=j}^{n-1} \g_{\alpha \to \alpha + 1} 
}{
\prod\limits_{\beta=j}^{n-1} \g_{\beta \to \beta - 1}
}
\underbrace{
\left[
1 + \frac{\g_{n \to n+1}}{\g_{n \to n-1}}  
\right]
}_{
\frac{
\g_{n \to n+1} + \g_{n \to n-1}
}{
\g_{n \to n-1}
}
}
+ 
\underbrace{
\left( \frac{\g_{n \to n+1}}{\g_{n \to n-1}} + 1 \right)
}_{
\frac{
\g_{n \to n+1} + \g_{n \to n-1}
}{
\g_{n \to n-1}
}
}
%%%%%%%%%%%%%%%%%%%%%%%%%%%%%%%%%%%%
\\ &=
%%%%%%%%%%%%%%%%%%%%%%%%%%%%%%%%%%%%
\left( 
\g_{n \to n+1} + \g_{n \to n-1}
\right)
\underbrace{
\left(
\sum\limits_{j=1}^{n}
\frac{
\prod\limits_{\alpha=j}^{n-1} \g_{\alpha \to \alpha + 1} 
}{
\prod\limits_{\beta=j}^{n} \g_{\beta \to \beta - 1}
}
\right)
}_{
Y_*^{(n)}
} \text{   , hence} 
%%%%%%%%%%%%%%%%%%%%%%%%%%%%%%%%%%%%%%%%%%%%%%%%%%%%%%%%%%%%%%%%%%%%%%%
\\ 
%%%%%%%%%%%%%%%%%%%%%%%%%%%%%%%%%%%%%%%%%%%%%%%%%%%%%%%%%%%%%%%%%%%%%%%
\bigl(\G^{[F]} \, \bY_{*}^{F}\bigr)^{(n)}
&=
0. 
\end{aligned}
\end{equation*}
%--------------------------------------------------------------------------------------------

\item[$n=N $]

%--------------------------------------------------------------------------------------------
\begin{equation*}
\begin{aligned}
\bigl(\G^{[F]} \, \bY_{*}^{F}\bigr)^{(n)}
&= 
Y_{*}^{(N-1)} \, \g_{N-1 \to N}
-
\underbrace{
    Y_{*}^{(N)}
}_{
    0
}\, \g_{N \to N-1}
%%%%%%%%%%%%%%%%%%%%%%%%%%%%%%%%%%%%%%%%%%%%%
\\ &= 
%%%%%%%%%%%%%%%%%%%%%%%%%%%%%%%%%%%%%%%%%%%%%
\sum\limits_{j=1}^{N-1}
\frac{\prod\limits_{\alpha=j}^{N-1} \g_{\alpha \to \alpha + 1} }{
\prod\limits_{\beta=j}^{N-1} \g_{\alpha \to \beta- 1} 
}
- \frac{\g_{N \to N-1}}{\g_{N \to N-1}} \, 
\sum\limits_{j=1}^{N}
\frac{\prod\limits_{\alpha=j}^{N-1} \g_{\alpha \to \alpha + 1} }{
\prod\limits_{\beta=j}^{N-1} \g_{\alpha \to \beta- 1} 
}
%%%%%%%%%%%%%%%%%%%%%%%%%%%%%%%%%%%%%%%%%%%%%
\\ &= 
%%%%%%%%%%%%%%%%%%%%%%%%%%%%%%%%%%%%%%%%%%%%%
\sum\limits_{j=1}^{N}
\frac{\prod\limits_{\alpha=j}^{N-1} \g_{\alpha \to \alpha + 1} }{
\prod\limits_{\beta=j}^{N-1} \g_{\alpha \to \beta- 1} 
}
%%%%%%%%%%%%%%%%%%%%%%%%%%%%%%%%%%%%%%%%%%%%%
\\ &= 
%%%%%%%%%%%%%%%%%%%%%%%%%%%%%%%%%%%%%%%%%%%%%
-1. 
\end{aligned}
\end{equation*}
%--------------------------------------------------------------------------------------------

\end{itemize}

\end{proof}

%%%%%%%%%%%%%%%%%%%%%%%%%%%%%%%%%%%%%%%%%%%%%%%%
%%%%%%%%%%%%%%%%%%%%%%%%%%%%%%%%%%%%%%%%%%%%%%%%
%%%%%%%%%%%%%%%%%%%%%%%%%%%%%%%%%%%%%%%%%%%%%%%%

%###########################################################################################################
% end of "\subsection{Auxiliary calculations}"

%###########################################################################################################
\subsection{Finite dimensional hypercubes}\label{Section_FiniteDimensionalHypercubes}

The network of an $N$-dimensional hypercube $Q_N$ are given by $\System_{[Q_N]} = (\Omega_{[Q_N]}, \Edge_{[Q_N]})$, with 
%--------------------------------------------------------------------------------------------
\begin{equation}\label{Eq_Network_NDimensionalHyperCube}
\begin{aligned}
\Omega_{[Q_N]} 
&=
\{0,1\}^N
%%%%%%%%%%%%%%%%%%
\\
%%%%%%%%%%%%%%%%%%
\Edge_{[Q_N]} 
&=
\{(\alpha, \beta) \in (\Omega_{[Q_N]})^{2}  \,:\, \alpha - \beta = \pm \be_i \text{ for some } i \in \{1, \dots, N\} \}
\end{aligned}
\end{equation}
%--------------------------------------------------------------------------------------------

Moreover, we assume that for some states $\alpha \in \Omega $ such that $\alpha + \be_n \in \Omega $ the sufficient conditions for detailed balance \eqref{Eq_SufficientConditionForDetailedBalance} is satisfied, i.e. there exists a sequence $ \bc = (c^{(n)}) \in \Order(n) $ such that

%--------------------------------------------------------------------------------------------
\begin{equation*}
\begin{aligned}
\frac{
    \g_{\alpha \to \alpha + \be_n} 
}{
    \g_{\alpha \gets \alpha + \be_n}
}
&=
q^{c^{(n)}}, \text{ for some } q \in (0,1)
\end{aligned}
\end{equation*}
%--------------------------------------------------------------------------------------------

Then the stationary solution of the master equation on the hypercube is given by 

%--------------------------------------------------------------------------------------------
\begin{equation}\label{Eq_StationarySolutionForFiniteDimensionalHypercube}
\begin{aligned}
\bp^{[Q_N]}
&=
\bigotimes\limits_{i=1}^{N} \frac{(1, q^{c^{(n)}})}{1+q^{c^{(n)}}}
=
\bigotimes\limits_{i=1}^{N}\frac{1}{1+q^{c^{(n)}}} \colvec{2}{1}{q^{c^{(n)}}}
\end{aligned}
\end{equation}
%--------------------------------------------------------------------------------------------

Let us consider the stationary solution of the $N+1$-dimensional hypercube. 
%--------------------------------------------------------------------------------------------
\begin{equation}
\begin{aligned}
\bp_*^{[Q_{N+1}]}
&=
\frac{1}{\Zen_{N+1}} 
\left(
q^{\bw_{N+1} \cdot \bc} 
\right)_{\bw_{N+1} \in \{0,1\}^{N+1} }
=
\frac{1}{\Zen_{N+1}} 
\left(
\bigl(
    q^{\bw_{N} \cdot \bc} 
\bigr)_{\bw_{N} \in \{0,1\}^{N} },
q^{c^{(N+1)}} 
\bigl(
q^{\bw_{N} \cdot \bc} 
\bigr)_{\bw_{N} \in \{0,1\}^{N} }
\right)
%%%%%%%%%%%%%%%%%%%%%%%%%
\\ &=
%%%%%%%%%%%%%%%%%%%%%%%%%
\frac{1}{\Zen_{N+1}}
\bigl(1, q^{c^{(N+1)}} \bigr)
\bigotimes 
\bigl(
q^{\bw_{N} \cdot \bc} 
\bigr)_{\bw_{N} \in \{0,1\}^{N} }
=
\frac{(1, q^{c^{(N+1)}})}{1+ q^{c^{(N+1)}}}
\bigotimes \bp_*^{[Q_N]}. 
\end{aligned}
\end{equation}
%--------------------------------------------------------------------------------------------

With $\bp_*^{[Q_{1}]} = \frac{(1,q^{c^{(1)}})}{1+q^{c^{(1)}}}$, the exact form of the stationary solution as shown in equation \eqref{Eq_StationarySolutionForFiniteDimensionalHypercube} follows.

%###########################################################################################################

%#################################################################################################################################################################################
\subsection{Relation between a CTMC and the embedded DTMC }\label{Sec_RelationBetweenACTMCAndTheEmbeddedDTMC}

\begin{MyDef}[recurrence] $ $ \\ 
A (discrete- or continuous-time) Markov chain is called \emph{recurrent}, if it almost surely returns to the initial state, that is if

%----------------------------------------------------------------------------------
\begin{equation}
\begin{aligned}
1
&=
\Prob\left(t_R < \infty \,|\, X_0 = \state{i} \right)
\\ &= 
\Prob\left(X_{n \atop t} = \state{i} \text{ for some } {n \in \N \atop t \in \R_{> 0}} \,|\, X_0 = \state{i} \right) 
%%%%%%%%%%%%%%%%%%%%%%%%%%%%%%%%%%%%%
\\ &= 
%%%%%%%%%%%%%%%%%%%%%%%%%%%%%%%%%%%%%
\begin{cases}
\sum\limits_{n \in \N } 
\Prob\left(X_{n } = \state{i}  \,|\, X_0 = \state{i} \right)  &\text{, for a DTMC }
%%%%%%%%
\\
%%%%%%%%
\int_0^\infty \rho_{t_R}(t) \d t  &\text{, for a DTMC. } 
\end{cases}
\end{aligned}
\end{equation}
%----------------------------------------------------------------------------------

\end{MyDef}

\begin{lemma}[Recurrence for Markov chains] $ $ \\ 

A CTMC is \emph{recurrent}, if and only if the embedded, DTMC is recurrent. 
\end{lemma}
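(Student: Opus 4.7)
The proof plan rests on the fact that the embedded DTMC $(Y_n)_{n \in \mathbb{N}_0}$ records precisely the sequence of states visited by the CTMC $(X_t)_{t\geq 0}$, while the holding times $\tau_0, \tau_1, \dots$ at each visited state are independent exponential random variables with rate $\gamma_{Y_n \to}$. Under the paper's standing assumption $\|\G\|_{1,1}^{(\text{op})} < \infty$, we have in particular $\gamma_{i \to} < \infty$ for every state $i \in \Omega$, so each $\tau_k$ is almost surely finite and strictly positive, and no explosion (infinitely many jumps in finite time) can occur. If the initial state $\state{\omega_0}$ is absorbing, i.e.\ $\gamma_{\omega_0 \to} = 0$, both processes stay at $\state{\omega_0}$ forever and are trivially recurrent; we therefore assume $\gamma_{\omega_0 \to} > 0$.

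The key identity I would establish first is that, up to a set of probability zero,
\begin{equation*}
\{t_R < \infty\} \;=\; \{T_R < \infty\},
\end{equation*}
where $t_R$ denotes the CTMC return time of Definition~\ref{Def_ReturnTimeForCTMC} and $T_R$ the DTMC return time of Definition~\ref{Def_ReturnTimeForDTMC} for the embedded chain. The inclusion ``$\{T_R < \infty\} \subseteq \{t_R < \infty\}$'' is immediate: if $T_R = n \in \mathbb{N}$, then $t_R = \sum_{k=0}^{n-1} \tau_k$ is a finite sum of a.s.\ finite positive random variables, hence $t_R < \infty$ almost surely. The converse inclusion uses non-explosion: on $\{t_R < \infty\}$, the CTMC is at state $\state{\omega_0}$ at some finite time after leaving it, which (because the trajectory is piecewise constant between jumps and there are only finitely many jumps on any finite interval) means that $Y_n = \state{\omega_0}$ for some $n \in \mathbb{N}$, so $T_R < \infty$.

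From the identity of events both directions of the lemma then follow at once:
\begin{equation*}
\Prob(t_R < \infty \mid X_0 = \state{\omega_0}) \;=\; \Prob(T_R < \infty \mid Y_0 = \state{\omega_0}),
\end{equation*}
so one side equals $1$ iff the other does. I would finish by remarking that this correspondence is only about the existence of a finite return; \emph{positive} recurrence is genuinely different for the two chains, as shown by the formula $\t{\omega} = \NN{\omega}/\gamma_{\omega \to}$ of Lemma~\ref{Lemma_ExpectedVisitingTimeOfCTMCAndExpectedVisitingNumberOfDTMC}, which can change finiteness of $\|\bT\|_1$ versus $\|\bN\|_1$.

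The main obstacle is the subtle issue of non-explosion in the backward direction: one must rule out the possibility that the CTMC accumulates infinitely many jumps in a finite time without ever landing back at $\state{\omega_0}$. Under the bounded generator assumption this is routine (the embedded holding times are bounded below in distribution by an exponential with rate $\|\G\|_{1,1}^{(\text{op})}$, so $\sum_{k} \tau_k = \infty$ a.s.), but it is the one place where the argument is not purely combinatorial and where the paper's standing hypothesis on $\G$ is essential.
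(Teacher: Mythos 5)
Your proposal is correct, and it reaches the equivalence by a genuinely pathwise route, whereas the paper argues at the level of distributions: the paper writes the density of the CTMC return time jointly in the number of jumps and the elapsed time, $\rho_{t_R}(t)=\sum_{n}\rho_{t_R}(n,t)$, integrates out the time variable by monotone convergence, and identifies the resulting sum over $n$ with the return probability of the embedded chain. You instead establish the almost-sure identity of events $\{t_R<\infty\}=\{T_R<\infty\}$, using finiteness of the holding times for one inclusion and non-explosion for the other, and then equate the probabilities. The two arguments encode the same idea (the CTMC returns in finite time iff the embedded chain returns at some finite jump index), but your version makes explicit the one genuinely delicate point — ruling out accumulation of infinitely many jumps before a return — which the paper's density decomposition uses implicitly without comment; under the standing hypothesis $\|\G\|_{1,1}^{(\text{op})}<\infty$ your uniform bound $\g_{i\to}\le\tfrac12\|\G\|_{1,1}^{(\text{op})}$ indeed forces $\sum_k\tau_k=\infty$ a.s., so the gap is closed cleanly. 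The paper's computation is shorter and dovetails with its later manipulations of $\rho_{t_R}(n,t)$, while yours is more robust and self-auditing. One negligible caveat: your aside that an absorbing initial state is "trivially recurrent" clashes with the paper's Definition \ref{Def_ReturnTimeForCTMC}, under which such a state has $t_R=\inf\emptyset=\infty$ (and the embedded chain is not even defined there); since the lemma is only invoked for irreducible chains with $\g_{\omega_0\to}>0$, this does not affect anything.
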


\begin{proof}

%----------------------------------------------------------------------------------
\begin{equation}
\begin{aligned}
\Prob\left(
t_R < \infty \,|\, X_0 = \state{\omega_0}
\right)
&=
\int_0^\infty
\underbrace{
\rho_{t_R}(t)
}_{
\sum\limits_{n \in \N_0} \rho_{t_R}(n, t)
}
\d t
=
\int_0^\infty 
\sum\limits_{n \in \N_0}
\rho_{t_R}(n, t)
\d t
%%%%%%%%%%%%%%%%%%%%%%%%%%%%%%%%%%%%
\\ &\xlongequal[\text{convergence}]{\text{monotone}}
%%%%%%%%%%%%%%%%%%%%%%%%%%%%%%%%%%%%
\sum\limits_{n \in \N_0}
\underbrace{
\int_0^\infty 
\rho_{t_R}(n, t)
\d t
}_{
\Prob 
\left(
X_{t_n} = \state{\omega_0} \, | \, X_0= \state{\omega_0}
\right)
}
%%%%%%%%%%%%%%%%%%%%%%%%%%%%%%%%%%%%
\\ &=
%%%%%%%%%%%%%%%%%%%%%%%%%%%%%%%%%%%%
\Prob \left(
\tilde{T}_R < \infty \, | \, X_0= \state{\omega_0}
\right), 
\end{aligned}
\end{equation}
%----------------------------------------------------------------------------------

where $\tilde{T}_R $ is the return time for the embedded, discrete-time Markov chain. 

\end{proof}

\begin{center}
\hspace*{-5mm}
%----------------------------------------------------------------------------
\begin{tikzpicture} 
\draw[very thick] (0.0, 0.0) --  (0.0,  -6.0) ; 
\draw[very thick] (3.5, 0.0) --  (3.5,  -6.0) ; 
\draw[very thick] (8.5, 0.0) --  (8.5,  -6.0) ; 
\draw[very thick] (13.5, 0.0) -- (13.5, -6.0) ; 
\draw[very thick] (19.0, 0.0) -- (19.0, -6.0) ; 
% %%%%%%%%%%%%%%%%%%%%%%%%%%%%
\draw[very thick] (0.0 , -0.0) -- (19.0, -0.0) ; 
% %%%%%%%%%%%%%%%%%%%%%%%%%%%%
\node[align=left] at (6.0, -0.5)   { DTMC } ;
\node[align=left] at (11.0, -0.5)   { CTMC }   ;
\node[align=left] at (16.0, -0.5)   { embedded DTMC  }   ;
% %%%%%%%%%%%%%%%%%%%%%%%%%%%%
\draw[very thick] (0.0 , -1.0) -- (19.0, -1.0) ; 
% %%%%%%%%%%%%%%%%%%%%%%%%%%%%
\node[align=left] at (1.5, -1.5)   { irreducible } ;
\node[align=left] at (6.0, -1.5)   { strongly \\ connected} ; % $\System$
\node[align=left] at (11.0, -1.5)   { strongly \\ connected }   ;
\node[align=left] at (16.5, -1.5)  { strongly \\  connected  }   ;
% %%%%%%%%%%%%%%%%%%%%%%%%%%%%
\draw[very thick] (0.0 , -2.0) -- (19.0, -2.0) ; 
% %%%%%%%%%%%%%%%%%%%%%%%%%%%%
\node[align=left] at (1.5, -2.5)   { recurrent } ;
\node[align=left] at (6.0, -2.5)   {$ \Prob(T_R < \infty) = 1 $  } ;
\node[align=left] at (10.5, -2.5)   { $ \Prob(t_R < \infty) = 1 $ }   ;
\node[align=left] at (16.0, -2.5)   { $ \Prob(\tilde{T}_R < \infty) = 1 $  }   ; 
% %%%%%%%%%%%%%%%%%%%%%%%%%%%%
\draw[very thick] (0.0 , -3.0) -- (19.0, -3.0) ;
% % %%%%%%%%%%%%%%%%%%%%%%%%%%%%
\node[align=left] at (1.7, -4.5)   { positive recurrent  } ;
\node[align=left] at (6.0, -4.5)   { 
$ \|\bN \|_1 = E[T_R] < \infty $  \\
$ \frac{\bN}{\| \bN \|_1}  \in \ProbStates $ \\ 
$ \frac{\bN}{\| \bN \|_1}  \in \Kern(\Id - Q) $ \\ 
$ \frac{\bN}{\| \bN \|_1 } > \bzero $ \\ 
} ;
\node[align=left] at (11.0, -4.5)   { 
$  \|\bT \|_1 = E[t_R] < \infty $ \\ 
$ \frac{\bT}{\| \bT \|_1}  \in \ProbStates $ \\ 
$ \frac{\bT}{\| \bT \|_1 }  \in \Kern(\G) $ \\ 
$ \frac{\bT}{\| \bT \|_1 }  > \bzero $ \\ 

}   ;
\node[align=left] at (16.2, -4.5)   { 
$  \| \tilde{\bN} \|_1 = E[\tilde{T_R}] < \infty $ \\
$ \frac{\tilde{\bN}}{\| \tilde{\bN} \|_1}  \in \ProbStates $ \\ 
$ \frac{\tilde{\bN}}{\| \tilde{\bN} \|_1}  \in \Kern(\Id - Q) $ \\ 
$ \frac{\tilde{\bN}}{\| \tilde{\bN} \|_1} > \bzero $ 
}   ; 
% %%%%%%%%%%%%%%%%%%%%%%%%%%%%
\draw[very thick] (0.0 , -6.0) -- (19.0, -6.0) ;
% % %%%%%%%%%%%%%%%%%%%%%%%%%%%%
\end{tikzpicture}
%----------------------------------------------------------------------------
\end{center}
%#################################################################################################################################################################################

%#################################################################################################################################################################################
\section{Data availability statement}

The authors declare that the data supporting our findings are available within the paper.

%#################################################################################################################################################################################

%#################################################################################################################################################################################
\section{Conflict of interests statement}

The authors declare that they have no known competing financial interests or personal relationships that could have appeared to influence the work reported in this paper.

%#################################################################################################################################################################################
\printbibliography
%#############################################################################################

\end{document}